\begin{document}

\begin{frontmatter}
  \title{A composable language for action models}
  \author{Tim French}
  \author{James Hales}\footnote{Acknowledges the support of the Prescott Postgraduate Scholarship.}
  \author{Edwin Tay}
  \address{Computer Science and Software Engineering\\The University of Western
Australia\\Perth, Australia}
  
  \begin{abstract}
  Action models are semantic structures similar to Kripke models that represent
  a change in knowledge in an epistemic setting. Whereas the language of action
  model logic~\cite{baltag1998,baltag2005} embeds the semantic structure of an
  action model directly within the language, this paper introduces a language
  that represents action models using syntactic operators inspired by 
  relational actions~\cite{vanditmarsch1999,vanditmarsch2001,vanditmarsch2002}.
  This language admits an intuitive description of the action
  models it represents, and we show in several settings that it is sufficient
  to represent any action model up to a given modal depth and to represent the
  results of action model synthesis~\cite{hales2013},
  and give a sound and complete axiomatisation in some of these settings.
  \end{abstract}

  \begin{keyword}
  Modal logic,
  Epistemic logic,
  Doxastic logic,
  Temporal epistemic logic,
  Multi-agent system,
  Action model logic
  \end{keyword}
 \end{frontmatter}

  \section{Introduction}

  Dynamic epistemic logic describes the way knowledge can change in multi-agent
  systems subject to informative actions taking place. For example, if Tim were
  to announce ``I like cats'', then everyone in the room would know the
  proposition {\it Tim likes cats} is true, and furthermore, everybody would
  know that this fact is common knowledge among the people in the room. This
  simple informative action is what is referred to as a public announcement
  \cite{plaza1989}, and such actions of these have been extensively studied in
  epistemic logics. More complex actions can include private announcements
  (where some agents are oblivious to the informative action occurring), or a
  group announcement (where members of a group simultaneously make a truthful
  announcement to every other member of the group \cite{agotnes2010}). These
  complex actions may be modelled and reasoned about using action models
  \cite{baltag1998} which are effectively a semantic model of the change caused
  by an informative action. Consequently they are very useful for reasoning
  about the consequences of an informative action, but less well suited to
  reasoning about the action itself.
  
  We present a language for describing epistemic actions syntactically.
  Complex actions may be built as an expression upon simpler primitive actions.
  This approach is a generalisation of the relational actions introduced by van
  Ditmarsch \cite{vanditmarsch2001}. We show in several settings that this
  language is sufficient to represent any informative action represented by an
  action model (up to a given model depth), we present a synthesis result,
  and give a a sound and complete axiomatisation for some of these settings. 
  The synthesis result is an important application of this work: given a
  desired state of knowledge among a group of agents, we are able to compute a
  complex informative action that will achieve that particular knowledge state
  (given it is consistent with the current knowledge of agents). We provided
  these results in a variety of modal logics suited to epistemic reasoning:
  \classK{}, \classKFF{} and \classS{}.
  
  \begin{example}\label{grant-example}
  James, Ed and Tim submit a research grant proposal, and eagerly await the
  outcome. Is there a series of actions that will result in: 

  \begin{enumerate}
    \item Ed knowing the grant application was successful; 
    \item James not knowing whether the grant application was successful, but knowing that
        either Ed or Tim does know;
    \item Tim does not know whether the grant application was successful, but knows that if
        the grant application was unsuccessful, then James knows that it was unsuccessful.
  \end{enumerate}

  Such an epistemic state may be achieved by a series of messages: Ed is sent a
  message congratulating him on a successful application, James is sent a
  message informing him that at least one applicant on each grant has been
  informed of the outcome, and Tim is sent a message informing him that the
  first investigator of all unsuccessful grants has been notified.
  \end{example}
  
  After establishing some technical preliminaries (Section~\ref{technical-preliminaries}) we
  present a syntactic approach for describing informative actions (Sections~\ref{syntax}~and~\ref{semantics}), 
  provide a sound and complete axiomatisation of the language (Section~\ref{axiomatisation}) and
  provide a correspondence result between this language and action models (Section~\ref{correspondence}),
  give a computational method for synthesising actions to achieve an epistemic goal (Section~\ref{synthesis}).

  \section{Technical Preliminaries}\label{technical-preliminaries}

  We recall definitions from modal logic, the action model logic of Baltag,
  Moss and Solecki~\cite{baltag1998,baltag2005} the refinement modal logic of
  van Ditmarsch, French and Pinchinat~\cite{vanditmarsch2009,vanditmarsch2010}
  and the arbitrary action model logic of Hales~\cite{hales2013}.

  Let \atoms{} be a non-empty, countable set of propositional atoms, and let
  \agents{} be a non-empty, finite set of agents.

  % Kripke models
  \begin{definition}[Kripke model]\label{kripke-model}
  A {\em Kripke model} $\model = \modelTuple$ consists of 
  a {\em domain}  \states{}, which is a non-empty set of states 
  (or possible worlds), 
  an {\em accessibility} function 
  $\accessibility : \agents \to \powerset(\states \times \states)$, 
  which is a function from agents to accessibility relations on \states{}, 
  and a {\em valuation} function $\valuation : \atoms \to \powerset(\states)$,
  which is a function from states to sets of propositional atoms.

  The {\em class of all Kripke models} is called \classK{}. 
  A {\em multi-pointed Kripke model} $\pointedModel{\statesT} = (\model,
  \statesT)$ consists of a Kripke model \model{} along with a designated set of
  states $\statesT \subseteq \states$.
  \end{definition}

  We write $\accessibilityAgent{\agentA}$ to denote $\accessibility(\agentA)$.
  Given two states $\stateS, \stateT \in \states$, 
  we write $\stateS \accessibilityAgent{\agentA} \stateT$ to denote that 
  $(\stateS, \stateT) \in \accessibilityAgent{\agentA}$. 
  We write $\statesT \accessibilityAgent{\agentA}$ to denote the set of states 
  $\{\stateS \in \states \mid \stateT \in \statesT, \stateT \accessibilityAgent{\agentA} \stateS \}$
  and write $\accessibilityAgent{\agentA} \statesT$ to denote the set of states 
  $\{\stateS \in \states \mid \stateT \in \statesT, \stateS \accessibilityAgent{\agentA} \stateT \}$.
  We write $\pointedModel{\stateS}$ as an abbreviation for 
  $\pointedModel{\{\stateS\}}$, and write $\stateT \accessibilityAgent{\agentA}$
  and $\accessibilityAgent{\agentA} \stateT$ as abbreviations for 
  $\{\stateT\} \accessibilityAgent{\agentA}$ and 
  $\accessibilityAgent{\agentA} \{\stateT\}$ respectively.
  As we will often be required to discuss several models at once, we will use
  the convention that 
  $\pointedModel{\statesT} = \pointedModelTuple{\statesT}$,
  $\pointedModel[\prime]{\statesT[\prime]} = \pointedModelTuple[\prime]{\statesT[\prime]}$,
  $\pointedModel[\gamma]{\statesT[\gamma]} = \pointedModelTuple[\gamma]{\statesT[\gamma]}$,
  etc.

  \begin{definition}[Action model]\label{action-model}
  Let \lang{} be a logical language.
  An {\em action model} $\actionModel = \actionModelTuple$ with preconditions defined
  on \lang{} consists of a {\em domain} \actionStates, 
  which is a non-empty, finite set of action points, 
  an {\em accessibility} function $\actionAccessibility : \agents \to \powerset(\actionStates \times \actionStates)$,
  which is a function from agents to accessibility relations on \actionStates, 
  and a {\em precondition} function $\actionPrecondition : \actionStates \to \lang$,
  which is a function from action points to formulae from \lang{}.

  The {\em class of all action models} is called \classAM{}.
  A {\em multi-pointed action model}
  $\pointedActionModel{\actionStatesT} = (\actionModel, \actionStatesT)$
  consists of an action model $\actionModel$ along with a designated set of
  action points $\actionStatesT \subseteq \actionStates$.
  \end{definition}

  We use the same abbreviations and conventions for action models as are used
  for Kripke models. We use the convention of using sans-serif fonts for action
  models, as in \pointedActionModel{\actionStatesT} and italic fonts for Kripke
  models, as in \pointedModel{\statesT}.

  In addition to the class \classK{} of all Kripke models, 
  and the class \classAM{} of all action models 
  we will be referring to several other classes of Kripke models and action models.

  \begin{definition}[Classes of Kripke models and action models]
      The class of all Kripke models / action models with transitive and Euclidean accessibility relations is called \classKFF{} / $\classAM_\classKFF{}$.

      The class of all Kripke models / action models with serial, transitive and Euclidean accessibility relations is called \classKD{} / $\classAM_\classKD{}$.

      The class of all Kripke models / action models with reflexive, transitive and Euclidean accessibility relations is called \classS{} / $\classAM_\classS{}$.
  \end{definition}

  \begin{definition}[Language of arbitrary action model logic]\label{aml-syntax}
  The language \langAaml{} of arbitrary action model logic is inductively defined as:
  \begin{eqnarray*}
      \phi &::=& \atomP \mid 
             \neg \phi \mid
             (\phi \land \phi) \mid
             \necessary[\agentA] \phi \mid
             \allacts{\pointedActionModel{\actionStatesT}} \phi \mid
             \allrefs \phi
  \end{eqnarray*}
  where $\atomP \in \atoms$, $\agentA \in \agents$, and
  $\pointedActionModel{\actionStatesT} \in \classAM$ is a multi-pointed action
  model with preconditions defined on the language \langAaml{}.
  \end{definition}

  We use all of the standard abbreviations for propositional logic, in addition
  to the abbreviations 
  $\possible[\agentA] \phi ::= \neg \necessary[\agentA] \neg \phi$,
  $\someacts{\pointedActionModel{\actionStatesT}} \phi ::= \neg \allacts{\pointedActionModel{\actionStatesT}} \neg \phi$,
  and $\somerefs \phi ::= \neg \allrefs \neg \phi$.

  We also use the cover operator of Janin and Walukiewicz~\cite{janin1995},
  following the definitions given by B{\'i}lkov{\'a}, Palmigiano and
  Venema~\cite{bilkova2008}.  The cover operator, $\covers_\agentA \Gamma$ is
  an abbreviation defined by $\covers_\agentA \Gamma ::= \necessary[\agentA] \bigvee_{\gamma \in \Gamma} \gamma \land \bigwedge_{\gamma \in \Gamma} \possible[\agentA] \gamma$,
  where $\Gamma \subseteq \langAaml$ is a finite set of formulae.
  We note that the modal operators $\necessary[\agentA]$, $\possible[\agentA]$ and $\covers_\agentA$ are interdefineable
  as $\necessary[\agentA] \phi \iff \covers_\agentA \{\phi\} \lor \covers_\agentA \emptyset$ and $\possible[\agentA] \phi \iff \covers_\agentA \{\phi, \top\}$.
  This is the basis for the axiomatisations of refinement modal logic and
  arbitrary action model logic, and plays an important part in our correspondence
  and synthesis results.
  This was previously used as the basis of several axiomatisations
  of refinement modal logics~\cite{vanditmarsch2010,hales2011a,hales2011b,hales2012,bozzelli2012a,hales2013}.

  We refer to the language \langAml{} of action model logic, which is \langAaml{} without the $\allrefs$ operator,
  the language \langRml{} of refinement modal logic, which is \langAaml{} without the $\allacts{\pointedActionModel{\actionStatesT}}$ operator,
  the language \lang{} of modal logic, which is \langAml{} without the $\allacts{\pointedActionModel{\actionStatesT}}$ operator, 
  and the language \langP{} of propositional logic, which is \lang{} without the $\necessary[\agentA]$ operator.

  \begin{definition}[Semantics of modal logic]\label{ml-semantics}
  Let \classC{} be a class of Kripke models and let $\model = \modelTuple \in
  \classC$ be a Kripke model.
  The interpretation of $\phi \in \lang$ in the logic \logicC{} is defined 
  inductively as:
  \begin{eqnarray*}
      \pointedModel{\stateS} \entails \atomP &\text{ iff }& \stateS \subseteq \valuation(\atomP)\\
      \pointedModel{\stateS} \entails \neg \phi &\text{ iff }& \pointedModel{\stateS} \nentails \phi\\
      \pointedModel{\stateS} \entails \phi \land \psi &\text{ iff }& \pointedModel{\stateS} \entails \phi \text{ and } \pointedModel{\stateS} \entails \psi\\
      \pointedModel{\stateS} \entails \necessary[\agentA] \phi &\text{ iff }& \text{for every } \stateT \in \stateS \accessibilityAgent{\agentA} : \pointedModel{\stateT} \entails \phi\\
      \pointedModel{\statesT} \entails \phi &\text{ iff }& \text{for every } \stateT \in \statesT : \pointedModel{\stateT} \entails \phi
  \end{eqnarray*}
  \end{definition}

  \begin{definition}[Bisimilarity of Kripke models]
      Let $\model = \modelTuple \in \classK$ 
      and $\model[\prime] = \modelTuple[\prime] \in \classK$
      be Kripke models. 
      A non-empty relation $\bisimulation \subseteq \states \times \states[\prime]$
      is a {\em bisimulation} if and only if for every $\agentA \in \agents$ 
      and $(\stateS, \stateS[\prime]) \in \bisimulation$ the following conditions hold:

      \paragraph{atoms}
      For every $\atomP \in \atoms$: $\stateS \in \valuation(\atomP)$ if and only if $\stateS[\prime] \in \valuation[\prime](\atomP)$.

      \paragraph{forth-$\agentA$}
      For every $\stateT \in \stateS \accessibilityAgent{\agentA}$ 
      there exists $\stateT[\prime] \in \stateS[\prime] \accessibilityAgent[\prime]{\agentA}$
      such that $(\stateT, \stateT[\prime]) \in \bisimulation$.

      \paragraph{back-$\agentA$}
      For every $\stateT[\prime] \in \stateS[\prime] \accessibilityAgent[\prime]{\agentA}$
      there exists $\stateT \in \stateS \accessibilityAgent{\agentA}$ 
      such that $(\stateT, \stateT[\prime]) \in \bisimulation$.

      If $(\stateS, \stateS[\prime]) \in \bisimulation$ then we call
      $\pointedModel{\stateS}$ and $\pointedModel[\prime]{\stateS[\prime]}$
      {\em bisimilar} and write 
      $\pointedModel{\stateS} \bisimilar \pointedModel[\prime]{\stateS[\prime]}$.
  \end{definition}

  \begin{proposition}
      The relation $\bisimilar$ is an equivalence relation on Kripke models.
  \end{proposition}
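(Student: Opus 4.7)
The plan is to verify the three defining properties of an equivalence relation --- reflexivity, symmetry and transitivity --- by exhibiting, in each case, an explicit witness bisimulation built from the given data.

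For reflexivity, given any Kripke model $\model = \modelTuple$, I would take the diagonal relation $\bisimulation = \{(\stateS, \stateS) \mid \stateS \in \states\}$ and check that it trivially satisfies \emph{atoms}, \emph{forth-$\agentA$} and \emph{back-$\agentA$}: any witness required on one side is exactly the same state on the other, and the atomic-valuation clause holds by definition. Hence $\pointedModel{\stateS} \bisimilar \pointedModel{\stateS}$ for every $\stateS \in \states$.

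For symmetry, suppose $\bisimulation \subseteq \states \times \states[\prime]$ is a bisimulation witnessing $\pointedModel{\stateS} \bisimilar \pointedModel[\prime]{\stateS[\prime]}$. I would show that the converse relation $\bisimulation^{-1} = \{(\stateS[\prime], \stateS) \mid (\stateS, \stateS[\prime]) \in \bisimulation\}$ is itself a bisimulation: the atomic clause is symmetric in $\stateS$ and $\stateS[\prime]$, and the \emph{forth-$\agentA$} clause for $\bisimulation^{-1}$ is precisely the \emph{back-$\agentA$} clause for $\bisimulation$ (and vice versa). Since $(\stateS[\prime], \stateS) \in \bisimulation^{-1}$, this gives $\pointedModel[\prime]{\stateS[\prime]} \bisimilar \pointedModel{\stateS}$.

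For transitivity, given bisimulations $\bisimulation_1 \subseteq \states \times \states[\prime]$ and $\bisimulation_2 \subseteq \states[\prime] \times \states[\gamma]$ witnessing $\pointedModel{\stateS} \bisimilar \pointedModel[\prime]{\stateS[\prime]}$ and $\pointedModel[\prime]{\stateS[\prime]} \bisimilar \pointedModel[\gamma]{\stateS[\gamma]}$ respectively, I would check that the relational composition $\bisimulation_2 \circ \bisimulation_1 = \{(\stateS, \stateS[\gamma]) \mid \exists \stateS[\prime] \in \states[\prime] : (\stateS, \stateS[\prime]) \in \bisimulation_1 \text{ and } (\stateS[\prime], \stateS[\gamma]) \in \bisimulation_2\}$ is a bisimulation. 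The atomic clause transits through the intermediate state, and \emph{forth-$\agentA$} is obtained by applying \emph{forth-$\agentA$} first for $\bisimulation_1$ to get an $\agentA$-successor of $\stateS[\prime]$, then for $\bisimulation_2$ to get an $\agentA$-successor of $\stateS[\gamma]$; \emph{back-$\agentA$} is analogous starting from the other end. This yields $\pointedModel{\stateS} \bisimilar \pointedModel[\gamma]{\stateS[\gamma]}$.

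No step is a serious obstacle --- the whole argument is routine --- but the only place requiring any care is the transitivity step, where one must quantify the intermediate state correctly and apply the forth/back conditions of the two component bisimulations in the right order.
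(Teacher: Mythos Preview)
Your argument is correct and is the standard textbook proof. The paper itself does not actually give a proof of this proposition: it simply states ``These are well-known results'' immediately after the statement, so your proposal supplies strictly more detail than the paper does. There is nothing to compare beyond noting that what you wrote is the expected justification of this folklore fact.
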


  \begin{proposition}
      Let $\pointedModel{\stateS}, \pointedModel[\prime]{\stateS[\prime]} \in \classK$ be Kripke models such that
      $\pointedModel{\stateS} \bisimilar \pointedModel[\prime]{\stateS[\prime]}$. 
      Then for every $\phi \in \lang$:
      $\pointedModel{\stateS} \entails \phi$ if and only if $\pointedModel[\prime]{\stateS[\prime]} \entails \phi$.
  \end{proposition}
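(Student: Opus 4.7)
The plan is to prove the proposition by structural induction on the formula $\phi \in \lang$, where $\lang$ is the language of basic modal logic (only atoms, negation, conjunction, and $\necessary[\agentA]$). Since a bisimulation relates single states, and the pointed semantics for single states reduces to the state-by-state clause given in Definition~\ref{ml-semantics}, it suffices to show that for every $(\stateS, \stateS[\prime]) \in \bisimulation$ and every $\phi \in \lang$, $\pointedModel{\stateS} \entails \phi$ iff $\pointedModel[\prime]{\stateS[\prime]} \entails \phi$.

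For the base case $\phi = \atomP$, the equivalence is immediate from the \textbf{atoms} clause of bisimilarity: $\stateS \in \valuation(\atomP)$ iff $\stateS[\prime] \in \valuation[\prime](\atomP)$, which matches the semantic clause for atoms applied to singleton point sets. The Boolean cases $\phi = \neg \psi$ and $\phi = \psi \land \chi$ follow directly from the induction hypothesis applied to $\psi$ (and $\chi$), unwinding the semantic clauses on both sides.

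The only substantive case is $\phi = \necessary[\agentA] \psi$. Assuming $\pointedModel{\stateS} \entails \necessary[\agentA] \psi$, I would pick an arbitrary $\stateT[\prime] \in \stateS[\prime] \accessibilityAgent[\prime]{\agentA}$, invoke \textbf{back-$\agentA$} to obtain a corresponding $\stateT \in \stateS \accessibilityAgent{\agentA}$ with $(\stateT, \stateT[\prime]) \in \bisimulation$, conclude $\pointedModel{\stateT} \entails \psi$ from the semantics, and then use the induction hypothesis to transfer this to $\pointedModel[\prime]{\stateT[\prime]} \entails \psi$. The converse direction is symmetric, using \textbf{forth-$\agentA$}. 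Closing the case then yields $\pointedModel[\prime]{\stateS[\prime]} \entails \necessary[\agentA] \psi$.

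There is no real obstacle here: this is the textbook Hennessy--Milner style argument, and the two back-and-forth conditions of the bisimulation are precisely what makes the modal case go through. The only thing one has to be a little careful about is the pointed-set convention introduced in Definition~\ref{ml-semantics}: the statement is about singleton points $\stateS, \stateS[\prime]$, so the clause $\pointedModel{\statesT} \entails \phi$ iff $\pointedModel{\stateT} \entails \phi$ for every $\stateT \in \statesT$ degenerates to a single check and does not complicate the induction.
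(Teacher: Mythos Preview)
Your proof is correct and is the standard structural-induction argument for bisimulation invariance of modal formulae. The paper does not actually give a proof of this proposition; it simply states ``These are well-known results'' immediately after, so there is nothing further to compare.
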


  These are well-known results.

  \begin{definition}[$n$-bisimilarity of Kripke models]
      Let $n \in \mathbb{N}$, 
      and let $\pointedModel{\stateS} = \pointedModelTuple{\stateS} \in \classK$ 
      and $\pointedModel[\prime]{\stateS[\prime]} = \pointedModelTuple[\prime]{\stateS[\prime]} \in \classK$ be Kripke models.
      We say that $\pointedModel{\stateS}$ is {\em $n$-bisimilar} 
      to $\pointedModel[\prime]{\stateS[\prime]}$,
      and write $\pointedModel{\stateS} \bisimilar_n \pointedModel[\prime]{\stateS[\prime]}$,
      if and only if for every $\agentA \in \agents$ the following conditions hold:

      \paragraph{atoms}
      For every $\atomP \in \atoms$: $\stateS \in \valuation(\atomP)$ if and only if $\stateS[\prime] \in \valuation[\prime](\atomP)$.

      \paragraph{forth-$n$-$\agentA$}
      If $n > 0$ then
      for every $\stateT \in \stateS \accessibilityAgent{\agentA}$ 
      there exists $\stateT[\prime] \in \stateS[\prime] \accessibilityAgent[\prime]{\agentA}$
      such that $\pointedModel{\stateT} \bisimilar_{(n - 1)} \pointedModel[\prime]{\stateT[\prime]}$ 

      \paragraph{back-$n$-$\agentA$}
      If $n > 0$ then
      for every $\stateT[\prime] \in \stateS[\prime] \accessibilityAgent[\prime]{\agentA}$
      there exists $\stateT \in \stateS \accessibilityAgent{\agentA}$ 
      such that $\pointedModel{\stateT} \bisimilar_{(n - 1)} \pointedModel[\prime]{\stateT[\prime]}$ 
  \end{definition}

  \begin{definition}[Modal depth]
      Let $\phi \in \lang$. The {\em modal depth of $\phi$}, written as $d(\phi)$, is defined recursively as follows:
      \begin{eqnarray*}
          d(\atomP) &=& 0 \text{ for } \atomP \in \atoms\\
          d(\neg \psi) &=& d(\psi)\\
          d(\psi \land \chi) &=& max(d(\psi), d(\chi))\\
          d(\necessary[\agentA] \psi) &=& 1 + d(\psi)
      \end{eqnarray*}
  \end{definition}

  \begin{proposition}
      The relation $\bisimilar_n$ is an equivalence relation on Kripke models.
  \end{proposition}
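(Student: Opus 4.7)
The plan is to prove the three equivalence properties (reflexivity, symmetry, transitivity) by simultaneous induction on $n$. The atoms clause is a condition on valuations of the two designated points, so it contributes a clearly reflexive, symmetric, transitive constraint at every level; the whole argument therefore reduces to checking how the forth-$n$-$\agentA$ and back-$n$-$\agentA$ clauses interact with the inductive hypothesis.

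For the base case $n = 0$, only the atoms condition applies, and this is immediate from the fact that ``$\stateS \in \valuation(\atomP)$ iff $\stateS[\prime] \in \valuation[\prime](\atomP)$'' is a biconditional on set membership. For the inductive step, assume $\bisimilar_n$ is an equivalence relation, and consider $\bisimilar_{n+1}$.

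For reflexivity of $\bisimilar_{n+1}$, given $\stateT \in \stateS \accessibilityAgent{\agentA}$, witness forth by $\stateT$ itself and invoke the inductive reflexivity to get $\pointedModel{\stateT} \bisimilar_n \pointedModel{\stateT}$; back is symmetric. For symmetry, if $\pointedModel{\stateS} \bisimilar_{n+1} \pointedModel[\prime]{\stateS[\prime]}$, then each witness of forth-$(n+1)$-$\agentA$ on the left becomes a witness for back-$(n+1)$-$\agentA$ on the right after invoking the inductive symmetry of $\bisimilar_n$, and vice versa. For transitivity, if $\pointedModel{\stateS} \bisimilar_{n+1} \pointedModel[\prime]{\stateS[\prime]} \bisimilar_{n+1} \pointedModel[\gamma]{\stateS[\gamma]}$ and $\stateT \in \stateS \accessibilityAgent{\agentA}$, I would first apply forth-$(n+1)$-$\agentA$ to get $\stateT[\prime] \in \stateS[\prime] \accessibilityAgent[\prime]{\agentA}$ with $\pointedModel{\stateT} \bisimilar_n \pointedModel[\prime]{\stateT[\prime]}$, then apply forth-$(n+1)$-$\agentA$ again to obtain $\stateT[\gamma] \in \stateS[\gamma] \accessibilityAgent[\gamma]{\agentA}$ with $\pointedModel[\prime]{\stateT[\prime]} \bisimilar_n \pointedModel[\gamma]{\stateT[\gamma]}$, and conclude $\pointedModel{\stateT} \bisimilar_n \pointedModel[\gamma]{\stateT[\gamma]}$ by the inductive transitivity. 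The back direction is the mirror image.

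There is no real obstacle here; the only subtlety worth noting is that the three properties must be proved \emph{together} at each level, because symmetry and transitivity of $\bisimilar_{n+1}$ rely on symmetry and transitivity of $\bisimilar_n$ applied to the witnesses produced by the forth/back clauses, rather than on any property of $\bisimilar_{n+1}$ itself. Packaging the induction as a single claim ``$\bisimilar_n$ is reflexive, symmetric, and transitive'' avoids any circularity.
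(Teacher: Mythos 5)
Your proof is correct and is the standard argument: the paper itself offers no proof of this proposition (it is simply listed among ``well-known results''), and an induction on $n$ establishing reflexivity, symmetry and transitivity exactly as you describe is the usual way to discharge it. One small remark: the simultaneous packaging is harmless but not strictly needed, since reflexivity of $\bisimilar_{n+1}$ uses only reflexivity of $\bisimilar_n$, symmetry only symmetry, and transitivity only transitivity, so each property could equally be proved by its own separate induction.
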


  \begin{proposition}
      Let $n \in \mathbb{N}$ and 
      let $\pointedModel{\stateS}, \pointedModel[\prime]{\stateS[\prime]} \in \classK$ be Kripke models 
      such that $\pointedModel{\stateS} \bisimilar_n \pointedModel[\prime]{\stateS[\prime]}$. 
      If $m < n$ then $\pointedModel{\stateS} \bisimilar_m \pointedModel[\prime]{\stateS[\prime]}$. 
  \end{proposition}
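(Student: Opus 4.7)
The plan is to proceed by induction on $m$, with the statement for each $m$ quantifying universally over all $n > m$ and all pairs of pointed Kripke models. Formulating the induction this way (on the smaller index, with $n$ quantified inside) keeps the inductive hypothesis in exactly the shape needed when we peel off one layer of accessibility.

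For the base case $m = 0$, I would observe that $\bisimilar_0$ imposes only the atoms clause (both the forth-$0$-$\agentA$ and back-$0$-$\agentA$ clauses are vacuous by the $n > 0$ guard in their statements). Since every $\bisimilar_n$ with $n > 0$ also imposes the atoms clause, the implication is immediate.

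For the inductive step, assume the claim holds for $m - 1$, and suppose $\pointedModel{\stateS} \bisimilar_n \pointedModel[\prime]{\stateS[\prime]}$ with $n > m \geq 1$, so in particular $n - 1 > m - 1 \geq 0$. The atoms clause for $\bisimilar_m$ transfers verbatim from $\bisimilar_n$. For forth-$m$-$\agentA$, take any $\stateT \in \stateS \accessibilityAgent{\agentA}$; since $n > 0$, the forth-$n$-$\agentA$ clause of $\bisimilar_n$ yields some $\stateT[\prime] \in \stateS[\prime] \accessibilityAgent[\prime]{\agentA}$ with $\pointedModel{\stateT} \bisimilar_{(n-1)} \pointedModel[\prime]{\stateT[\prime]}$. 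Applying the inductive hypothesis at index $m - 1$ (with $n - 1$ in the role of $n$), we obtain $\pointedModel{\stateT} \bisimilar_{(m-1)} \pointedModel[\prime]{\stateT[\prime]}$, as required. The back-$m$-$\agentA$ clause is handled symmetrically.

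There is no real obstacle here: the only subtle point is to set up the induction on $m$ with $n$ quantified universally inside, so that the inductive hypothesis can be instantiated at $m - 1 < n - 1$ in the step case; a naive induction on $n$ with $m$ fixed would require an extra appeal to transitivity of the implication across consecutive indices and is strictly messier.
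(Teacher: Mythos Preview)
Your argument is correct: the induction on $m$ with $n$ universally quantified inside is exactly the right shape, and each clause (atoms, forth, back) is handled properly. Note that the paper itself does not give a proof of this proposition at all; it simply records it among a list of standard facts and remarks that ``these are well-known results.'' So there is nothing to compare against, and your write-up supplies precisely the routine verification the paper omits.
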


  \begin{proposition}
      Let $n \in \mathbb{N}$ and 
      let $\pointedModel{\stateS}, \pointedModel[\prime]{\stateS[\prime]} \in \classK$ be Kripke models such that
      $\pointedModel{\stateS} \bisimilar_n \pointedModel[\prime]{\stateS[\prime]}$. 
      Then for every $\phi \in \lang$ such that $d(\phi) \leq n$:
      $\pointedModel{\stateS} \entails \phi$ if and only if $\pointedModel[\prime]{\stateS[\prime]} \entails \phi$.
  \end{proposition}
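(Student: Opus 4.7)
The plan is to prove the statement by induction on the modal depth $n$, and within each case by induction on the structure of $\phi$. This is the standard Hennessy--Milner style argument adapted to the finite-depth setting, and the heart of the matter is the modal case, where the forth-$n$-$\agentA$ and back-$n$-$\agentA$ clauses supply exactly the $(n-1)$-bisimilar witnesses needed to invoke the outer inductive hypothesis on a strictly smaller depth.

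More concretely, I would fix $n$ and assume as outer inductive hypothesis that the statement holds for all $m < n$. I would then argue by induction on the construction of $\phi$ with $d(\phi) \leq n$. The atomic case $\phi = \atomP$ is immediate from the atoms clause of $\bisimilar_n$, which gives $\stateS \in \valuation(\atomP)$ iff $\stateS[\prime] \in \valuation[\prime](\atomP)$. The Boolean cases $\neg \psi$ and $\psi \land \chi$ follow directly from the inner inductive hypothesis, using $d(\neg \psi) = d(\psi)$ and $d(\psi \land \chi) = \max(d(\psi), d(\chi))$ to see that the subformulae still have depth at most $n$.

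For the modal case $\phi = \necessary[\agentA] \psi$, the constraint $d(\phi) \leq n$ forces $n \geq 1$ and $d(\psi) \leq n - 1$. Suppose $\pointedModel{\stateS} \entails \necessary[\agentA] \psi$ and take any $\stateT[\prime] \in \stateS[\prime] \accessibilityAgent[\prime]{\agentA}$. By back-$n$-$\agentA$, there exists $\stateT \in \stateS \accessibilityAgent{\agentA}$ with $\pointedModel{\stateT} \bisimilar_{n-1} \pointedModel[\prime]{\stateT[\prime]}$; since $\pointedModel{\stateT} \entails \psi$ and $d(\psi) \leq n-1$, the outer inductive hypothesis yields $\pointedModel[\prime]{\stateT[\prime]} \entails \psi$, so $\pointedModel[\prime]{\stateS[\prime]} \entails \necessary[\agentA] \psi$. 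The converse direction is symmetric, using forth-$n$-$\agentA$ in place of back-$n$-$\agentA$.

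The only real subtlety is book-keeping: one must be careful to use the outer induction (on $n$) rather than the inner structural induction when stepping from $\bisimilar_n$ down to $\bisimilar_{n-1}$, since the modal case strictly reduces the depth parameter while keeping the subformula $\psi$ no simpler than $\phi$ itself would suggest. Apart from that, every step is routine, and the base case $n = 0$ comes for free because $d(\phi) \leq 0$ means $\phi$ is purely propositional, handled by the atoms and Boolean cases alone.
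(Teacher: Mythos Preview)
Your argument is correct and is exactly the standard textbook proof of this fact. The paper itself does not prove the proposition at all: it simply states it together with the surrounding propositions on $n$-bisimilarity and remarks that ``these are well-known results,'' so there is nothing further to compare against.
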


  \begin{proposition}
      Let $\pointedModel{\stateS}, \pointedModel[\prime]{\stateS[\prime]} \in \classK$ be Kripke models.
      Then $\pointedModel{\stateS} \bisimilar \pointedModel[\prime]{\stateS[\prime]}$
      if and only if for every $n \in \mathbb{N}$: 
      $\pointedModel{\stateS} \bisimilar_n \pointedModel[\prime]{\stateS[\prime]}$.
  \end{proposition}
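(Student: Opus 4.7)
The plan is to prove the two directions of the biconditional separately. For the forward implication, suppose $\bisimulation$ is a bisimulation with $(\stateS, \stateS[\prime]) \in \bisimulation$. I would show by induction on $n$ that $\bisimulation$ itself also witnesses $\bisimilar_n$ for every pair it relates. The base case $n = 0$ uses only the atoms clause, which is shared between the two definitions. For the inductive step, given $(\stateT, \stateT[\prime]) \in \bisimulation$ and $\stateU \in \stateT \accessibilityAgent{\agentA}$, the forth-$\agentA$ clause of $\bisimulation$ supplies $\stateU[\prime] \in \stateT[\prime] \accessibilityAgent[\prime]{\agentA}$ with $(\stateU, \stateU[\prime]) \in \bisimulation$, and the induction hypothesis then gives $\pointedModel{\stateU} \bisimilar_{(n-1)} \pointedModel[\prime]{\stateU[\prime]}$, as required by forth-$n$-$\agentA$. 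The back-$n$-$\agentA$ clause is symmetric.

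For the backward implication, the natural candidate bisimulation is $\bisimulation = \{(\stateT, \stateT[\prime]) \mid \text{for every } n \in \mathbb{N},\; \pointedModel{\stateT} \bisimilar_n \pointedModel[\prime]{\stateT[\prime]}\}$. By hypothesis $(\stateS, \stateS[\prime]) \in \bisimulation$, and the atoms clause is immediate from $0$-bisimilarity. The crux is verifying forth-$\agentA$: given $\stateT \in \stateS \accessibilityAgent{\agentA}$, for each $n$ the $(n+1)$-bisimilarity of $\pointedModel{\stateS}$ and $\pointedModel[\prime]{\stateS[\prime]}$ yields some $\stateT[\prime]_n \in \stateS[\prime] \accessibilityAgent[\prime]{\agentA}$ with $\pointedModel{\stateT} \bisimilar_n \pointedModel[\prime]{\stateT[\prime]_n}$. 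We then need to extract a single $\stateT[\prime]$ that works uniformly for every $n$.

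The main obstacle is precisely this uniform choice. In full generality it fails, as classical Hennessy--Milner counterexamples exhibit states that are $n$-bisimilar for every finite $n$ yet not bisimilar; the proposition as stated is therefore a Hennessy--Milner style result and typically requires an image-finiteness assumption on \accessibility{} (each $\stateT \accessibilityAgent{\agentA}$ finite). Under such an assumption the argument is completed as follows: since $\stateS[\prime] \accessibilityAgent[\prime]{\agentA}$ is finite, some $\stateT[\prime]$ occurs as $\stateT[\prime]_n$ for infinitely many $n$, and by monotonicity of $\bisimilar_n$ in $n$ this yields $\pointedModel{\stateT} \bisimilar_n \pointedModel[\prime]{\stateT[\prime]}$ for every $n$, so $(\stateT, \stateT[\prime]) \in \bisimulation$. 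A symmetric argument handles back-$\agentA$, establishing that $\bisimulation$ is a bisimulation and hence $\pointedModel{\stateS} \bisimilar \pointedModel[\prime]{\stateS[\prime]}$. If image-finiteness is not assumed, the proof instead proceeds via a compactness or modal-saturation argument of the same shape, but in either case the delicate point is always the back-and-forth selection of a single witness from a family of per-level witnesses.
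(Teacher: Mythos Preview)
The paper does not actually prove this proposition; immediately after it the paper simply writes ``These are well-known results.'' So there is no proof to compare against, only the bare claim.

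Your forward direction is correct and is the standard induction. More importantly, you have correctly identified that the backward direction, as stated for the full class \classK{} with no image-finiteness restriction, is not in fact a theorem: the usual counterexamples (a root with branches of every finite depth, versus the same model together with an additional infinite-depth branch) yield pointed models that are $n$-bisimilar for every $n$ yet not bisimilar. Your proposed argument under an image-finiteness hypothesis is exactly the standard one, and your remark that without such a hypothesis one needs saturation or compactness is accurate. In short, you have been more careful here than the paper, which records the biconditional as folklore without flagging the missing side condition.
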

  
  These are well-known results.

  \begin{definition}[$\agentsB$-bisimilarity of Kripke models]
      Let $\pointedModel{\stateS} = \pointedModelTuple{\stateS} \in \classK$ 
      and $\pointedModel[\prime]{\stateS[\prime]} = \pointedModelTuple[\prime]{\stateS[\prime]} \in \classK$
      be Kripke models. 
      We say that $\pointedModel{\stateS}$ is {\em $\agentsB$-bisimilar}
      to $\pointedModel[\prime]{\stateS[\prime]}$,
      and write $\pointedModel{\stateS} \bisimilar_\agentsB \pointedModel[\prime]{\stateS[\prime]}$,
      if and only if for every $\agentB \in \agentsB$ the following conditions hold:

      \paragraph{atoms}
      For every $\atomP \in \atoms$: $\stateS \in \valuation(\atomP)$ if and only if $\stateS[\prime] \in \valuation[\prime](\atomP)$.

      \paragraph{forth-$\agentB$}
      For every $\stateT \in \stateS \accessibilityAgent{\agentB}$ 
      there exists $\stateT[\prime] \in \stateS[\prime] \accessibilityAgent[\prime]{\agentB}$
      such that $\pointedModel{\stateT} \bisimilar \pointedModel[\prime]{\stateT[\prime]}$.

      \paragraph{back-$\agentB$}
      For every $\stateT[\prime] \in \stateS[\prime] \accessibilityAgent[\prime]{\agentB}$
      there exists $\stateT \in \stateS \accessibilityAgent{\agentB}$ 
      such that $\pointedModel{\stateT} \bisimilar \pointedModel[\prime]{\stateT[\prime]}$ .
  \end{definition}

  \begin{definition}[$\agentsB$-restricted formulae]\label{b-restricted-formulae}
      Let $\agentsB \subseteq \agents$. A {\em $\agentsB$-restricted formula} is defined by the following abstract syntax:
      $$
      \phi ::= \atomP \mid
               \neg \phi \mid
               (\phi \land \phi) \mid
               \necessary[\agentB] \psi
      $$
      where $\atomP \in \atoms$, $\agentB \in \agentsB$, $\psi \in \lang$.
  \end{definition}

  \begin{proposition}
      Let $\agentsB \subseteq \agents$,
      and $\pointedModel{\stateS}, \pointedModel[\prime]{\stateS[\prime]} \in \classK$ be Kripke models such that
      $\pointedModel{\stateS} \bisimilar_\agentsB \pointedModel[\prime]{\stateS[\prime]}$. 
      Then for every $\phi \in \lang$ such that $\phi$ is is a $\agentsB$-restricted formula:
      $\pointedModel{\stateS} \entails \phi$ if and only if $\pointedModel[\prime]{\stateS[\prime]} \entails \phi$.
  \end{proposition}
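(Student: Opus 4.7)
I will prove this by structural induction on $\phi$, viewed as a $\agentsB$-restricted formula according to the abstract syntax in Definition~\ref{b-restricted-formulae}. The inductive hypothesis is that the biconditional holds for every $\agentsB$-restricted subformula of $\phi$ and for every pair of $\agentsB$-bisimilar pointed Kripke models.

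\noindent\textbf{Base and Boolean cases.} For $\phi = \atomP$, the biconditional is immediate from the \emph{atoms} clause of $\agentsB$-bisimilarity. The cases $\phi = \neg \psi$ and $\phi = \psi \land \chi$ are routine, using the semantic clauses in Definition~\ref{ml-semantics} together with the inductive hypothesis applied to $\psi$ and $\chi$ (both of which are $\agentsB$-restricted by the definition).

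\noindent\textbf{Modal case.} The key case is $\phi = \necessary[\agentB] \psi$ with $\agentB \in \agentsB$ and $\psi \in \lang$ an arbitrary modal formula (note: $\psi$ need not be $\agentsB$-restricted, so structural induction on $\phi$ does \emph{not} directly give us anything about $\psi$). Here I invoke the earlier proposition stating that full bisimilarity preserves modal formulae. Suppose $\pointedModel{\stateS} \entails \necessary[\agentB] \psi$ and take any $\stateT[\prime] \in \stateS[\prime] \accessibilityAgent[\prime]{\agentB}$. By the \emph{back-$\agentB$} clause of $\agentsB$-bisimilarity there exists $\stateT \in \stateS \accessibilityAgent{\agentB}$ with $\pointedModel{\stateT} \bisimilar \pointedModel[\prime]{\stateT[\prime]}$. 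The hypothesis gives $\pointedModel{\stateT} \entails \psi$, and since full bisimilarity preserves all of $\lang$ we conclude $\pointedModel[\prime]{\stateT[\prime]} \entails \psi$. As $\stateT[\prime]$ was arbitrary, $\pointedModel[\prime]{\stateS[\prime]} \entails \necessary[\agentB] \psi$. The converse direction uses \emph{forth-$\agentB$} symmetrically.

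\noindent\textbf{Main obstacle.} The only subtle point is the modal case: the restricted-formula grammar deliberately allows $\necessary[\agentB] \psi$ with unrestricted $\psi$, so one cannot close the induction by appealing to the inductive hypothesis on $\psi$. The definition of $\agentsB$-bisimilarity anticipates this by demanding full $\bisimilar$-bisimilarity (rather than $\agentsB$-bisimilarity) between $\agentB$-successors, which is exactly strong enough to let the earlier full-bisimulation invariance proposition carry the argument through. Once this interaction is observed, the proof is essentially a standard Hennessy--Milner-style argument.
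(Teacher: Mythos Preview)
Your proposal is correct and is precisely the argument the paper has in mind: the paper does not spell out a proof at all, remarking only that ``This result is trivial.'' Your structural induction, together with the observation that the modal case appeals to full bisimulation invariance for the unrestricted $\psi$ (rather than to the inductive hypothesis), is exactly the intended unpacking of that remark.
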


  This result is trivial.

  We recall the semantics of action model logic of Baltag, Moss and Solecki~\cite{baltag1998,baltag2005}.

  \begin{definition}[Semantics of action model logic]\label{aml-semantics}
  Let \classC{} be a class of Kripke models, let $\model = \modelTuple \in
  \classC$ be a Kripke model and let $\actionModel \in \classAM$ be an action
  model.

  We first define {\em action model execution}. 
  We denote the result of executing the action model $\actionModel$ 
  on the Kripke model $\model$ as $\model \exec \actionModel$, 
  and we define the result as 
  $\model \exec \actionModel = \model[\prime] = \modelTuple[\prime]$ where:
  \begin{eqnarray*}
      \states[\prime] &=& \{(\stateS, \actionStateS) \mid \stateS \in \states, \actionStateS \in \actionStates, \pointedModel{\stateS} \entails \actionPrecondition(\actionStateS)\}\\
      (\stateS, \actionStateS) \accessibilityAgent[\prime]{\agentA} (\stateT, \actionStateT) &\text{ iff }& \stateS \accessibilityAgent{\agentA} \stateT \text{ and } \actionStateS \actionAccessibilityAgent{\agentA} \actionStateT\\
      (\stateS, \actionStateS) \in \valuation[\prime](\atomP) &\text{ iff }& \stateS \in \valuation(\atomP)
  \end{eqnarray*} 

  We also define {\em multi-pointed action model execution} as 
  $\pointedModel{\statesT} \exec \pointedActionModel{\actionStatesT} = \pointedModel[\prime]{\statesT[\prime]} = \pointedModelTuple[\prime]{\statesT[\prime]} = ((\model \exec \actionModel), (\statesT \times \actionStatesT) \cap \states[\prime])$.

  Then the interpretation of $\phi \in \langAml$ in the logic \logicAmlC{} is
  the same as its interpretation in the modal logic \logicC{} given in
  Definition~\ref{ml-semantics}, with the additional inductive case: 
  \begin{eqnarray*}
      \pointedModel{\stateS} \entails \allacts{\pointedActionModel{\actionStatesT}} \phi &\text{ iff }& \pointedModel{\stateS} \exec \pointedActionModel{\actionStatesT} \in \classC \text{ implies } \pointedModel{\stateS} \exec \pointedActionModel{\actionStatesT} \entails \phi
  \end{eqnarray*}
  \end{definition}

  \begin{definition}[Sequential execution of action models]
      Let $\actionModel, \actionModel[\prime] \in \classAM$.
      We define the {\em sequential execution of $\actionModel$ and $\actionModel[\prime]$} as
      $\actionModel \exec \actionModel[\prime] = \actionModel[\prime\prime] = \actionModelTuple[\prime\prime]$ where:
      \begin{eqnarray*}
          \actionStates[\prime\prime] &=& \actionStates \times \actionStates[\prime]\\
          (\actionStateS, \actionStateS[\prime]) \accessibilityAgent[\prime\prime]{\agentA} (\actionStateT, \actionStateT[\prime]) &\text{ iff }& \actionStateS \actionAccessibilityAgent{\agentA} \actionStateT \text{ and } \actionStateS[\prime] \actionAccessibilityAgent[\prime]{\agentA} \actionStateT[\prime]\\
          \actionPrecondition[\prime\prime]((\actionStateS, \actionStateS[\prime])) &=& \someacts{\pointedActionModel{\actionStateS}} \actionPrecondition[\prime](\actionStateS[\prime])
      \end{eqnarray*}

      We also define {\em sequential action of $\pointedActionModel{\actionStatesT}$ and $\pointedActionModel[\prime]{\actionStatesT[\prime]}$} as
      $\pointedActionModel{\actionStatesT} \exec \pointedActionModel[\prime]{\actionStatesT[\prime]} = \pointedActionModel[\prime\prime]{\actionStatesT[\prime\prime]} = \pointedActionModelTuple[\prime\prime]{\actionStatesT \times \actionStatesT[\prime]}$.
  \end{definition}

  \begin{definition}[Bisimilarity of action models]
      Let $\actionModel = \actionModelTuple \in \classAM$ 
      and $\actionModel[\prime] = \actionModelTuple[\prime] \in \classAM$
      be action models. 
      A non-empty relation $\bisimulation \subseteq \actionStates \times \actionStates[\prime]$
      is a {\em bisimulation} if and only if for every $\agentA \in \agents$ 
      and $(\actionStateS, \actionStateS[\prime]) \in \bisimulation$ the following conditions hold:

      \paragraph{atoms}
      $\proves \actionPrecondition(\actionStateS) \iff \actionPrecondition[\prime](\actionStateS[\prime])$

      \paragraph{forth-$\agentA$}
      For every $\actionStateT \in \actionStateS \actionAccessibilityAgent{\agentA}$ 
      there exists $\actionStateT[\prime] \in \actionStateS[\prime] \actionAccessibilityAgent[\prime]{\agentA}$
      such that $(\actionStateT, \actionStateT[\prime]) \in \bisimulation$.

      \paragraph{back-$\agentA$}
      For every $\actionStateT[\prime] \in \actionStateS[\prime] \actionAccessibilityAgent[\prime]{\agentA}$
      there exists $\actionStateT \in \actionStateS \actionAccessibilityAgent{\agentA}$ 
      such that $(\actionStateT, \actionStateT[\prime]) \in \bisimulation$.

      If $(\actionStateS, \actionStateS[\prime]) \in \bisimulation$ then we call
      $\pointedActionModel{\actionStateS}$ and $\pointedActionModel[\prime]{\actionStateS[\prime]}$
      {\em bisimilar} and write 
      $\pointedActionModel{\actionStateS} \bisimilar \pointedActionModel[\prime]{\actionStateS[\prime]}$.
  \end{definition}

  \begin{proposition}
      The relation $\bisimilar$ is an equivalence relation on action models.
  \end{proposition}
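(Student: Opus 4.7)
The plan is to mimic the standard proof that Kripke bisimilarity is an equivalence, adapted to the action-model setting where the atomic condition is provable equivalence of preconditions rather than equality of valuations. I would exhibit, for each of reflexivity, symmetry, and transitivity, an explicit witnessing relation and then check the three clauses (atoms, forth-$\agentA$, back-$\agentA$) of the bisimulation definition.

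For reflexivity, given $\actionModel \in \classAM$, I take $\bisimulation = \{(\actionStateS, \actionStateS) \mid \actionStateS \in \actionStates\}$. The atoms condition reduces to $\proves \actionPrecondition(\actionStateS) \iff \actionPrecondition(\actionStateS)$, which is a propositional tautology. The forth and back conditions are immediate by choosing $\actionStateT[\prime] = \actionStateT$. For symmetry, given a bisimulation $\bisimulation$ between $\actionModel$ and $\actionModel[\prime]$ witnessing $\pointedActionModel{\actionStateS} \bisimilar \pointedActionModel[\prime]{\actionStateS[\prime]}$, I take the converse $\bisimulation^{-1}$; the atoms clause survives because $\iff$ is symmetric in the sense that $\proves \phi \iff \psi$ yields $\proves \psi \iff \phi$, and the forth/back conditions swap roles directly.

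For transitivity, suppose $\bisimulation_1$ witnesses $\pointedActionModel{\actionStateS} \bisimilar \pointedActionModel[\prime]{\actionStateS[\prime]}$ and $\bisimulation_2$ witnesses $\pointedActionModel[\prime]{\actionStateS[\prime]} \bisimilar \pointedActionModel[\prime\prime]{\actionStateS[\prime\prime]}$. I take the relational composition $\bisimulation_2 \circ \bisimulation_1$. The atoms clause follows because provable equivalence in the underlying logic is transitive: $\proves \actionPrecondition(\actionStateS) \iff \actionPrecondition[\prime](\actionStateS[\prime])$ and $\proves \actionPrecondition[\prime](\actionStateS[\prime]) \iff \actionPrecondition[\prime\prime](\actionStateS[\prime\prime])$ yield $\proves \actionPrecondition(\actionStateS) \iff \actionPrecondition[\prime\prime](\actionStateS[\prime\prime])$. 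The forth-$\agentA$ clause is proven by chaining: given $\actionStateT \in \actionStateS \actionAccessibilityAgent{\agentA}$, apply forth-$\agentA$ for $\bisimulation_1$ to obtain some $\actionStateT[\prime]$, then forth-$\agentA$ for $\bisimulation_2$ to obtain some $\actionStateT[\prime\prime]$, and observe that $(\actionStateT, \actionStateT[\prime\prime]) \in \bisimulation_2 \circ \bisimulation_1$. The back-$\agentA$ clause is symmetric.

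The only step that differs from the wholly routine Kripke-model proof is the verification of the atoms clause, since the definition uses provable equivalence of preconditions rather than a pointwise valuation condition; but this is still essentially trivial because provable biconditional is reflexive, symmetric, and transitive as a relation on formulae. So no step constitutes a genuine obstacle — the proposition is a direct unpacking of definitions, and I would present it as three short paragraphs corresponding to the three properties.
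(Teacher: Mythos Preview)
Your proposal is correct and follows the standard argument; the paper itself does not give a proof but simply attributes the result to Baltag, Moss and Solecki~\cite{baltag1998,baltag2005}, so there is no alternative approach to compare against. Your explicit verification of the atoms clause via reflexivity, symmetry, and transitivity of provable equivalence is exactly the right adaptation from the Kripke-model case.
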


  \begin{proposition}
      Let $\pointedModel{\stateS}, \pointedModel[\prime]{\stateS[\prime]} \in \classK$ be Kripke models such that
      $\pointedModel{\stateS} \bisimilar \pointedModel[\prime]{\stateS[\prime]}$. 
      and let $\pointedActionModel{\actionStateS}, \pointedActionModel[\prime]{\actionStateS[\prime]} \in \classAM$ be action models such that
      $\pointedActionModel{\actionStateS} \bisimilar \pointedActionModel[\prime]{\actionStateS[\prime]}$.
      Then
      $(\pointedModel{\stateS} \exec \pointedActionModel{\actionStateS}) \bisimilar (\pointedModel[\prime]{\stateS[\prime]} \exec \pointedActionModel[\prime]{\actionStateS[\prime]})$.
  \end{proposition}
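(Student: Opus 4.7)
The plan is to combine $\bisimulation_1$ (a bisimulation witnessing $\pointedModel{\stateS} \bisimilar \pointedModel[\prime]{\stateS[\prime]}$) and $\bisimulation_2$ (a bisimulation witnessing $\pointedActionModel{\actionStateS} \bisimilar \pointedActionModel[\prime]{\actionStateS[\prime]}$) coordinate-wise into a candidate bisimulation $\bisimulation$ on the executed Kripke models, relating a pair of product states exactly when its Kripke components are $\bisimulation_1$-related and its action-point components are $\bisimulation_2$-related. The designated pair $((\stateS, \actionStateS), (\stateS[\prime], \actionStateS[\prime]))$ lies in $\bisimulation$ by assumption, so $\bisimulation$ is non-empty, and it remains to verify the three bisimulation clauses.

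The atoms clause is immediate: the valuation of each executed model depends only on its Kripke component, and $\bisimulation_1$ preserves atoms. The forth-$\agentA$ and back-$\agentA$ clauses are symmetric, so I focus on forth. Given a $\bisimulation$-related pair and an $\agentA$-edge leaving its left component in $\model \exec \actionModel$, the definition of execution decomposes the edge into an $\agentA$-edge in $\model$ and an $\agentA$-edge in $\actionModel$; applying forth-$\agentA$ to $\bisimulation_1$ and to $\bisimulation_2$ then supplies matching $\agentA$-successors in $\model[\prime]$ and $\actionModel[\prime]$ whose components remain related by $\bisimulation_1$ and $\bisimulation_2$ respectively.

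The main obstacle is showing that this matched successor pair actually lies in the domain of $\model[\prime] \exec \actionModel[\prime]$, i.e.\ that the matched Kripke state satisfies the matched precondition. The original successor is in the domain of $\model \exec \actionModel$, so its Kripke state satisfies its precondition. The atoms clause of $\bisimulation_2$ then yields $\proves \actionPrecondition(\actionStateT) \iff \actionPrecondition[\prime](\actionStateT[\prime])$ for the matched action points, which by soundness becomes semantic equivalence on every pointed Kripke model; and invariance of formula satisfaction under $\bisimulation_1$ transports the satisfaction from the first Kripke model to the second. The one delicate point is that the invariance proposition recorded earlier was stated for the plain modal language $\lang$, whereas preconditions range over $\langAaml$; the corresponding statement for $\langAaml$ follows by a routine simultaneous induction alongside the present proposition, and I would invoke it without unpacking that induction.
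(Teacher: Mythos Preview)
The paper does not supply its own proof of this proposition; it simply attributes the result to Baltag, Moss and Solecki~\cite{baltag1998,baltag2005}. Your argument is exactly the standard one from that literature: form the coordinate-wise product of the two witnessing bisimulations, restricted to pairs that survive execution, and verify the clauses. Your handling of the precondition issue---using the \textbf{atoms} clause of the action-model bisimulation together with bisimulation invariance of satisfaction---is also the standard move, and your remark that invariance must be established for the full language $\langAaml$ by a simultaneous induction with the present proposition is the correct diagnosis of how the mutual recursion between formulae and action models is unwound.

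One small point: you assert that the designated pair $((\stateS,\actionStateS),(\stateS',\actionStateS'))$ lies in $\bisimulation$ ``by assumption'', but the assumptions only give $(\stateS,\stateS')\in\bisimulation_1$ and $(\actionStateS,\actionStateS')\in\bisimulation_2$; whether $(\stateS,\actionStateS)$ even belongs to the domain of $\model\exec\actionModel$ depends on whether $\pointedModel{\stateS}\entails\actionPrecondition(\actionStateS)$. The same precondition-transfer argument you give for successors shows that $(\stateS,\actionStateS)$ survives execution if and only if $(\stateS',\actionStateS')$ does, so either both designated points exist (and your $\bisimulation$ is non-empty as claimed) or neither does (and the conclusion is vacuous or a matter of convention). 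You should say this explicitly rather than fold it into ``by assumption''.
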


  \begin{proposition}
      Let $\pointedModel{\stateS} \in \classK$ be a Kripke model
      and let $\pointedActionModel{\actionStateS}, \pointedActionModel[\prime]{\actionStateS[\prime]} \in \classAM$ be action models.
      Then 
      $((\pointedModel{\stateS} \exec \pointedActionModel{\actionStateS}) \exec \pointedActionModel[\prime]{\actionStateS[\prime]}) \bisimilar (\pointedModel{\stateS} \exec (\pointedActionModel{\actionStateS} \exec \pointedActionModel[\prime]{\actionStateS[\prime]}))$.
  \end{proposition}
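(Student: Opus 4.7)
The plan is to exhibit an explicit bisimulation that pairs each state $((\stateT, \actionStateT), \actionStateT[\prime])$ of the left-hand model with the state $(\stateT, (\actionStateT, \actionStateT[\prime]))$ of the right-hand model; the candidate is
$$\bisimulation = \{\, (((\stateT, \actionStateT), \actionStateT[\prime]),\; (\stateT, (\actionStateT, \actionStateT[\prime]))) \mid ((\stateT, \actionStateT), \actionStateT[\prime]) \in (\model \exec \actionModel) \exec \actionModel[\prime] \,\},$$
where I abuse notation by writing the execution for its state set. Verifying that $\bisimulation$ is a bisimulation then reduces to showing bijectivity on state sets together with the three clauses atoms, forth-$\agentA$ and back-$\agentA$, and finally that the designated states correspond.

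The step I expect to require the most care is showing this pairing is a bijection between the state sets. Unfolding execution, membership of $((\stateT, \actionStateT), \actionStateT[\prime])$ in the left-hand state set amounts to $\pointedModel{\stateT} \entails \actionPrecondition(\actionStateT)$ together with $(\model \exec \actionModel, (\stateT, \actionStateT)) \entails \actionPrecondition[\prime](\actionStateT[\prime])$, while by the definition of the sequential action model $\actionModel \exec \actionModel[\prime]$ membership of $(\stateT, (\actionStateT, \actionStateT[\prime]))$ in the right-hand state set is $\pointedModel{\stateT} \entails \someacts{\pointedActionModel{\actionStateT}} \actionPrecondition[\prime](\actionStateT[\prime])$. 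Unpacking $\someacts{\pointedActionModel{\actionStateT}}$ in \classK{} shows these coincide: when $\pointedModel{\stateT} \entails \actionPrecondition(\actionStateT)$ the designated point of $\pointedModel{\stateT} \exec \pointedActionModel{\actionStateT}$ is $(\stateT, \actionStateT)$ and $\someacts{}$ reduces to the second part of the left condition; when $\pointedModel{\stateT} \nentails \actionPrecondition(\actionStateT)$ the designated set is empty, so $\allacts{\pointedActionModel{\actionStateT}} \neg \actionPrecondition[\prime](\actionStateT[\prime])$ holds vacuously, $\someacts{}$ fails, and the left condition fails as well.

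The remaining clauses are routine. The atoms clause reduces on both sides to $\stateT \in \valuation(\atomP)$ by two applications of the product valuation rule; forth-$\agentA$ and back-$\agentA$ follow because both accessibility relations unfold to the same conjunction of base relations on the Kripke component and the two action components. Finally, the distinguished state $((\stateS, \actionStateS), \actionStateS[\prime])$ on the left lies in the designated set exactly when $(\stateS, (\actionStateS, \actionStateS[\prime]))$ lies in the designated set on the right, by the same equivalence applied at the designated points, so $\bisimulation$ relates the designated states and witnesses the desired bisimilarity.
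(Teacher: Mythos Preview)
Your argument is correct: the natural re-association map is a bijection on states and the verification of \textbf{atoms}, \textbf{forth-$\agentA$} and \textbf{back-$\agentA$} goes through exactly as you describe, with the key equivalence $\pointedModel{\stateT} \entails \someacts{\pointedActionModel{\actionStateT}} \actionPrecondition[\prime](\actionStateT[\prime])$ iff $\pointedModel{\stateT} \entails \actionPrecondition(\actionStateT)$ and $(\model \exec \actionModel)_{(\stateT,\actionStateT)} \entails \actionPrecondition[\prime](\actionStateT[\prime])$ handled correctly. The paper does not actually supply its own proof here but simply attributes the result to Baltag, Moss and Solecki; your direct bisimulation construction is precisely the standard argument found in that source.
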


  These results are shown by Baltag, Moss and Solecki~\cite{baltag1998,baltag2005}.

  \begin{definition}[$n$-bisimilarity of action models]
      Let $n \in \mathbb{N}$, 
      and let $\pointedActionModel{\actionStateS} = \pointedActionModelTuple{\actionStateS} \in \classAM$ 
      and $\pointedActionModel[\prime]{\actionStateS[\prime]} = \pointedActionModelTuple[\prime]{\actionStateS[\prime]} \in \classAM$
      be action models.
      We say that $\pointedActionModel{\actionStateS}$ is {\em $n$-bisimilar} 
      to $\pointedActionModel[\prime]{\actionStateS[\prime]}$,
      and write $\pointedActionModel{\actionStateS} \bisimilar_n \pointedActionModel[\prime]{\actionStateS[\prime]}$, 
      if and only if for every $\agentA \in \agents$ the following conditions hold:

      \paragraph{atoms}
      $\proves \actionPrecondition(\actionStateS) \iff \actionPrecondition[\prime](\actionStateS[\prime])$

      \paragraph{forth-$n$-$\agentA$}
      If $n > 0$ then
      for every $\actionStateT \in \actionStateS \actionAccessibilityAgent{\agentA}$ 
      there exists $\actionStateT[\prime] \in \actionStateS[\prime] \actionAccessibilityAgent[\prime]{\agentA}$
      such that $\pointedActionModel{\actionStateT} \bisimilar_{(n - 1)} \pointedActionModel[\prime]{\actionStateT[\prime]}$ 

      \paragraph{back-$n$-$\agentA$}
      If $n > 0$ then
      for every $\actionStateT[\prime] \in \actionStateS[\prime] \actionAccessibilityAgent[\prime]{\agentA}$
      there exists $\actionStateT \in \actionStateS \actionAccessibilityAgent{\agentA}$ 
      such that $\pointedActionModel{\actionStateT} \bisimilar_{(n - 1)} \pointedActionModel[\prime]{\actionStateT[\prime]}$ 
  \end{definition}

  \begin{proposition}
      The relation $\bisimilar_n$ is an equivalence relation on action models.
  \end{proposition}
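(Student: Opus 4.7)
The plan is a routine induction on $n$, closely following the standard argument for $\bisimilar_n$ on Kripke models. I will verify reflexivity, symmetry and transitivity of $\bisimilar_n$ separately, each by induction on $n$. In every case the atoms clause is handled by a corresponding property of provable biconditional on preconditions, while the forth-$n$-$\agentA$ and back-$n$-$\agentA$ clauses reduce, via the inductive hypothesis, to the $(n-1)$-bisimilarity of appropriately chosen successors.

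For the base case $n = 0$ only the atoms clause applies, and the three properties reduce to the reflexivity, symmetry and transitivity of $\proves (\cdot \iff \cdot)$ on preconditions, which hold in any of the underlying logics considered in the paper.

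For the inductive step, reflexivity is witnessed by matching each $\actionStateT \in \actionStateS \actionAccessibilityAgent{\agentA}$ with itself and invoking the inductive hypothesis. Symmetry is witnessed by swapping the roles of forth and back, again invoking the inductive hypothesis. Transitivity is witnessed by chaining: given $\pointedActionModel{\actionStateS} \bisimilar_n \pointedActionModel[\prime]{\actionStateS[\prime]}$ and $\pointedActionModel[\prime]{\actionStateS[\prime]} \bisimilar_n \pointedActionModel[\prime\prime]{\actionStateS[\prime\prime]}$, a successor $\actionStateT$ of $\actionStateS$ is matched via $\actionStateS[\prime]$ to a successor $\actionStateT[\prime]$, then via $\actionStateS[\prime\prime]$ to a successor $\actionStateT[\prime\prime]$, and the inductive hypothesis composes the two resulting $(n-1)$-bisimilarities. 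The back direction is analogous.

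There is no real obstacle here: the argument is essentially identical to the well-known proof for Kripke models, the only distinction being that the atoms clause concerns provable biconditional of preconditions rather than set-theoretic equality of propositional valuations, and this evidently forms an equivalence relation on formulae.
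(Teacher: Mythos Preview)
Your proposal is correct and matches the paper's approach: the paper does not spell out a proof but simply remarks that the result follows from the same reasoning as the corresponding result for $n$-bisimilarity of Kripke models, which is exactly the routine induction you describe. The only point of difference you correctly identify---that the atoms clause concerns provable biconditional of preconditions rather than valuation agreement---is handled just as you say.
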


  \begin{proposition}
      Let $n \in \mathbb{N}$ and 
      let $\pointedActionModel{\actionStateS}, \pointedActionModel[\prime]{\actionStateS[\prime]} \in \classK$ be action models 
      such that $\pointedActionModel{\actionStateS} \bisimilar_n \pointedActionModel[\prime]{\actionStateS[\prime]}$. 
      If $m < n$ then $\pointedActionModel{\actionStateS} \bisimilar_m \pointedActionModel[\prime]{\actionStateS[\prime]}$. 
  \end{proposition}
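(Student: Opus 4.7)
The plan is to proceed by induction on $m$, exactly mirroring the analogous (earlier) proposition for Kripke models. The atoms clause in the definition of $\bisimilar_n$ is independent of $n$, so it transfers from $\bisimilar_n$ to $\bisimilar_m$ for free, and the only real content is showing that the back-and-forth conditions at depth $m$ follow from those at depth $n$.

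For the base case $m = 0$, only the atoms condition is required (the forth and back clauses are vacuous since they are guarded by $n > 0$), and this is immediate from the atoms clause of $\pointedActionModel{\actionStateS} \bisimilar_n \pointedActionModel[\prime]{\actionStateS[\prime]}$. For the inductive step, assume the result holds for $m - 1$ and suppose $m < n$ with $m > 0$. Given any $\actionStateT \in \actionStateS \actionAccessibilityAgent{\agentA}$, since $n > m > 0$ we can invoke forth-$n$-$\agentA$ to obtain $\actionStateT[\prime] \in \actionStateS[\prime] \actionAccessibilityAgent[\prime]{\agentA}$ with $\pointedActionModel{\actionStateT} \bisimilar_{n - 1} \pointedActionModel[\prime]{\actionStateT[\prime]}$. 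Since $m - 1 < n - 1$, the inductive hypothesis gives $\pointedActionModel{\actionStateT} \bisimilar_{m - 1} \pointedActionModel[\prime]{\actionStateT[\prime]}$, which is exactly what forth-$m$-$\agentA$ demands. The back direction is symmetric, and the atoms clause again transfers directly.

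No step looks genuinely difficult here: the argument is a routine downward induction in the style of the corresponding Kripke-model proposition stated earlier in the preliminaries, and it relies only on the shape of the definition of $\bisimilar_n$ for action models (not on the precondition language or on the semantics of action model execution). The only minor subtlety worth being explicit about is the quantifier pattern ``for every $m < n$'', which I would handle by fixing $n$ and doing induction on $m$ in the range $0 \leq m < n$, so that at each stage one has $n - 1 > m - 1 \geq 0$ available to apply both the hypothesis on $\pointedActionModel{\actionStateS} \bisimilar_n \pointedActionModel[\prime]{\actionStateS[\prime]}$ and the inductive hypothesis at $m - 1$.
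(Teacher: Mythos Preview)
Your proposal is correct and matches the paper's approach: the paper does not give a detailed proof but simply remarks that these results follow from similar reasoning to the results for $n$-bisimilarity of Kripke models, which are in turn described as well-known. Your inductive argument on $m$ is exactly the routine verification one would expect, and your observation that it mirrors the Kripke-model case is precisely the justification the paper offers.
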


  \begin{proposition}
      Let $n \in \mathbb{N}$,
      let $\pointedModel{\stateS}, \pointedModel[\prime]{\stateS[\prime]} \in \classK$ be Kripke models such that
      $\pointedModel{\stateS} \bisimilar_n \pointedModel[\prime]{\stateS[\prime]}$. 
      and let $\pointedActionModel{\actionStateS}, \pointedActionModel[\prime]{\actionStateS[\prime]} \in \classK$ be action models such that
      $\pointedActionModel{\actionStateS} \bisimilar_n \pointedActionModel[\prime]{\actionStateS[\prime]}$.
      Then
      $(\pointedModel{\stateS} \exec \pointedActionModel{\actionStateS}) \bisimilar_n (\pointedModel[\prime]{\stateS[\prime]} \exec \pointedActionModel[\prime]{\actionStateS[\prime]})$.
  \end{proposition}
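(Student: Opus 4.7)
The plan is induction on $n$. The base case $n = 0$ reduces to the atoms condition. Valuations in both product models are inherited from their Kripke components by the definition of action model execution, and $\pointedModel{\stateS} \bisimilar_0 \pointedModel[\prime]{\stateS[\prime]}$ already gives matching atoms there. One additionally needs the two designated points to simultaneously exist or not exist in the respective products, which follows from the provable equivalence $\proves \actionPrecondition(\actionStateS) \iff \actionPrecondition[\prime](\actionStateS[\prime])$ (the atoms clause of the action model $\bisimilar_n$) combined with the preservation of this precondition formula across $\pointedModel{\stateS}$ and $\pointedModel[\prime]{\stateS[\prime]}$ under the hypothesised Kripke $\bisimilar_n$.

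For the inductive step, the natural candidate relation pairs $(\stateS, \actionStateS)$ in the first product with $(\stateS[\prime], \actionStateS[\prime])$ in the second whenever $\pointedModel{\stateS} \bisimilar_k \pointedModel[\prime]{\stateS[\prime]}$ and $\pointedActionModel{\actionStateS} \bisimilar_k \pointedActionModel[\prime]{\actionStateS[\prime]}$ simultaneously hold, and we verify the forth-$k$-$\agentA$ and back-$k$-$\agentA$ clauses at the appropriate depth. For forth: given an accessible pair $(\stateS, \actionStateS) \accessibilityAgent{\agentA} (\stateT, \actionStateT)$ in the first product, unpack via the definition of product accessibility into $\stateS \accessibilityAgent{\agentA} \stateT$ and $\actionStateS \actionAccessibilityAgent{\agentA} \actionStateT$, invoke the forth-$k$-$\agentA$ clauses of the Kripke and action model $n$-bisimilarities to obtain matching $\stateT[\prime]$ and $\actionStateT[\prime]$ witnessing $\bisimilar_{k-1}$, and combine these into the required successor. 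Back is symmetric, and the induction hypothesis controls the drop from $k$ to $k-1$.

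The main obstacle is verifying that the combined successor $(\stateT[\prime], \actionStateT[\prime])$ actually lies in the second product, i.e., $\pointedModel[\prime]{\stateT[\prime]} \entails \actionPrecondition[\prime](\actionStateT[\prime])$. We have $\pointedModel{\stateT} \entails \actionPrecondition(\actionStateT)$ since $(\stateT, \actionStateT)$ was already a state of the first product, and the atoms clause of the action-model $\bisimilar_{k-1}$ gives the provable (hence semantic) equivalence of $\actionPrecondition(\actionStateT)$ and $\actionPrecondition[\prime](\actionStateT[\prime])$, so $\pointedModel{\stateT} \entails \actionPrecondition[\prime](\actionStateT[\prime])$. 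Transferring this satisfaction across to $\pointedModel[\prime]{\stateT[\prime]}$ is the delicate part: it needs enough Kripke bisimilarity between $\stateT$ and $\stateT[\prime]$ to preserve $\actionPrecondition[\prime](\actionStateT[\prime])$, which is why the Kripke-factor hypothesis must be at the same index $n$ as the action-model hypothesis. This step mirrors the analogous step in the full-bisimilarity preservation theorem already quoted, and once it is in place the remainder is a routine unpacking of the product construction.
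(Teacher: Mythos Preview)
You correctly isolate the crux, but the resolution you sketch does not go through, and in fact the proposition as stated is false.

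The issue is exactly where you say it is ``delicate''. After one step along an $a$-edge you only have $\pointedModel{\stateT} \bisimilar_{k-1} \pointedModel[\prime]{\stateT[\prime]}$, and $(k-1)$-bisimilarity preserves only modal formulae of depth at most $k-1$. Preconditions of action models are arbitrary formulae with no depth bound, so there is no reason why $\pointedModel{\stateT} \entails \actionPrecondition[\prime](\actionStateT[\prime])$ should transfer to $\pointedModel[\prime]{\stateT[\prime]}$. Your appeal to the full-bisimilarity argument is precisely where the analogy fails: full bisimilarity preserves \emph{all} modal formulae, so the precondition transfer is automatic there; bounded bisimilarity does not.

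Here is a concrete counterexample for $n=1$ with one agent $a$. Take $\model$ with states $\{s,t,u\}$, edges $s\to t\to u$, and $p$ true only at $u$; take $\model[\prime]$ with states $\{s',t'\}$, edge $s'\to t'$, and $p$ nowhere true. Then $\pointedModel{s}\bisimilar_1\pointedModel[\prime]{s'}$. Let $\actionModel=\actionModel[\prime]$ have states $\{\actionStateS,\actionStateT\}$, edge $\actionStateS\to\actionStateT$, $\actionPrecondition(\actionStateS)=\top$, $\actionPrecondition(\actionStateT)=\possible[a] p$; trivially $\pointedActionModel{\actionStateS}\bisimilar_1\pointedActionModel[\prime]{\actionStateS}$. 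In $\model\exec\actionModel$ the point $(s,\actionStateS)$ has the $a$-successor $(t,\actionStateT)$, since $\pointedModel{t}\entails\possible[a] p$. In $\model[\prime]\exec\actionModel[\prime]$ the point $(s',\actionStateS)$ has no $a$-successor at all, since $\pointedModel[\prime]{t'}\nentails\possible[a] p$. Hence \textbf{forth-$1$-$a$} fails and the two products are not $1$-bisimilar.

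The paper itself does not prove this proposition; it just says the reasoning is ``similar'' to the Kripke case. Note, however, that the only place the paper actually invokes it (the corollary after the correspondence proposition) uses the \emph{same} Kripke model on both sides, so the Kripke factor is fully bisimilar to itself and the precondition transfer is trivial. That restricted version---$\pointedActionModel{\actionStateS}\bisimilar_n\pointedActionModel[\prime]{\actionStateS[\prime]}$ implies $(\pointedModel{\stateS}\exec\pointedActionModel{\actionStateS})\bisimilar_n(\pointedModel{\stateS}\exec\pointedActionModel[\prime]{\actionStateS[\prime]})$---is what your argument actually establishes, and is all the paper needs.
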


  \begin{proposition}
      Let $\pointedActionModel{\actionStateS}, \pointedActionModel[\prime]{\actionStateS[\prime]} \in \classK$ be action models.
      Then $\pointedActionModel{\actionStateS} \bisimilar \pointedActionModel[\prime]{\actionStateS[\prime]}$
      if and only if for every $n \in \mathbb{N}$: 
      $\pointedActionModel{\actionStateS} \bisimilar_n \pointedActionModel[\prime]{\actionStateS[\prime]}$.
  \end{proposition}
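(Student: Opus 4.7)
The plan is to prove this by essentially mirroring the well-known Hennessy--Milner style argument that was invoked for Kripke models earlier in this section, exploiting the fact that action models have finite domains by definition.

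For the forward direction, I would assume a bisimulation $\bisimulation$ witnessing $\pointedActionModel{\actionStateS} \bisimilar \pointedActionModel[\prime]{\actionStateS[\prime]}$ and show by induction on $n$ that the \emph{same} relation $\bisimulation$ witnesses $\bisimilar_n$ for every $n$. The base case $n=0$ uses only the \emph{atoms} clause, which is identical in both definitions. For the inductive step, the forth-$n$-$\agentA$ and back-$n$-$\agentA$ clauses follow immediately from the corresponding forth-$\agentA$ and back-$\agentA$ clauses of $\bisimulation$ together with the induction hypothesis applied to the witnessing successor.

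For the backward direction, assume $\pointedActionModel{\actionStateS} \bisimilar_n \pointedActionModel[\prime]{\actionStateS[\prime]}$ for every $n \in \mathbb{N}$, and define
\[
\bisimulation = \{(\actionStateT, \actionStateT[\prime]) \in \actionStates \times \actionStates[\prime] \mid \pointedActionModel{\actionStateT} \bisimilar_n \pointedActionModel[\prime]{\actionStateT[\prime]} \text{ for every } n \in \mathbb{N}\}.
\]
Then $(\actionStateS, \actionStateS[\prime]) \in \bisimulation$ by assumption, so $\bisimulation$ is non-empty. I would verify the three clauses: the \emph{atoms} clause holds since it already holds at level $0$; for \emph{forth-$\agentA$}, given $(\actionStateT, \actionStateT[\prime]) \in \bisimulation$ and $\actionStateR \in \actionStateT \actionAccessibilityAgent{\agentA}$, applying $\bisimilar_{n+1}$ for each $n$ gives some $\actionStateR[\prime]_n \in \actionStateT[\prime] \actionAccessibilityAgent[\prime]{\agentA}$ with $\pointedActionModel{\actionStateR} \bisimilar_n \pointedActionModel[\prime]{\actionStateR[\prime]_n}$; the back direction is symmetric.

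The key step, and the only non-routine one, is extracting a single witness $\actionStateR[\prime]$ from the sequence $(\actionStateR[\prime]_n)_{n \in \mathbb{N}}$. Here I would invoke the fact that $\actionStates[\prime]$ is finite (part of Definition~\ref{action-model}), so by pigeonhole some $\actionStateR[\prime] \in \actionStateT[\prime] \actionAccessibilityAgent[\prime]{\agentA}$ occurs as $\actionStateR[\prime]_n$ for infinitely many $n$; then using the monotonicity proposition stated earlier ($\bisimilar_n$ implies $\bisimilar_m$ for $m < n$), we obtain $\pointedActionModel{\actionStateR} \bisimilar_m \pointedActionModel[\prime]{\actionStateR[\prime]}$ for every $m$, hence $(\actionStateR, \actionStateR[\prime]) \in \bisimulation$. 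This finiteness is the whole reason the argument goes through without needing image-finiteness hypotheses or saturation, and it is the only subtle point in what is otherwise a routine verification.
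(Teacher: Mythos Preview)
Your proposal is correct and is precisely the standard Hennessy--Milner style argument that the paper is alluding to when it says ``these results follow from similar reasoning to the results for $n$-bisimilarity of Kripke models'' (the paper gives no further details). Your observation that the finiteness of $\actionStates[\prime]$ from Definition~\ref{action-model} is what makes the pigeonhole step go through is exactly the point, and in fact is what distinguishes the action-model case from the Kripke-model case, where the corresponding statement requires an image-finiteness hypothesis that the paper glosses over.
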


  These results follow from similar reasoning to the results for $n$-bisimilarity of Kripke models.

  \begin{definition}[$\agentsB$-bisimilarity of action models]
      Let $\pointedActionModel{\actionStateS} = \pointedActionModelTuple{\actionStateS} \in \classK$ 
      and $\pointedActionModel[\prime]{\actionStateS[\prime]} = \pointedActionModelTuple[\prime]{\actionStateS[\prime]} \in \classK$
      be Kripke models. 
      We say that $\pointedActionModel{\actionStateS}$ is $\agentsB$-bisimilar
      to $\pointedActionModel[\prime]{\actionStateS[\prime]}$ 
      and write $\pointedActionModel{\actionStateS} \bisimilar_\agentsB \pointedActionModel[\prime]{\actionStateS[\prime]}$,
      if and only if
      for every $\agentB \in \agentsB$ the following conditions hold:

      \paragraph{atoms}
      For every $\atomP \in \atoms$: $\actionStateS \in \valuation(\atomP)$ if and only if $\actionStateS[\prime] \in \valuation[\prime](\atomP)$.

      \paragraph{forth-$\agentB$}
      For every $\actionStateT \in \actionStateS \actionAccessibilityAgent{\agentB}$ 
      there exists $\actionStateT[\prime] \in \actionStateS[\prime] \actionAccessibilityAgent[\prime]{\agentB}$
      such that $\pointedActionModel{\actionStateT} \bisimilar \pointedActionModel[\prime]{\actionStateT[\prime]}$.

      \paragraph{back-$\agentB$}
      For every $\actionStateT[\prime] \in \actionStateS[\prime] \actionAccessibilityAgent[\prime]{\agentB}$
      there exists $\actionStateT \in \actionStateS \actionAccessibilityAgent{\agentB}$ 
      such that $\pointedActionModel{\actionStateT} \bisimilar \pointedActionModel[\prime]{\actionStateT[\prime]}$ .
  \end{definition}

  \begin{proposition}
      Let $\pointedModel{\stateS}, \pointedModel[\prime]{\stateS[\prime]} \in \classK$ be Kripke models such that
      $\pointedModel{\stateS} \bisimilar_\agentsB \pointedModel[\prime]{\stateS[\prime]}$. 
      and let $\pointedActionModel{\actionStateS}, \pointedActionModel[\prime]{\actionStateS[\prime]} \in \classK$ be action models such that
      $\pointedActionModel{\actionStateS} \bisimilar_\agentsB \pointedActionModel[\prime]{\actionStateS[\prime]}$.
      Then
      $(\pointedModel{\stateS} \exec \pointedActionModel{\actionStateS}) \bisimilar_\agentsB (\pointedModel[\prime]{\stateS[\prime]} \exec \pointedActionModel[\prime]{\actionStateS[\prime]})$.
  \end{proposition}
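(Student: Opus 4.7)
The plan is to verify the $\agentsB$-bisimilarity conditions directly at the designated pair $((\stateS, \actionStateS), (\stateS[\prime], \actionStateS[\prime]))$ in the executed models, using the $\agentsB$-bisimilarity hypotheses to match $\agentB$-successors (for $\agentB \in \agentsB$) and then reducing the requirement that those successors be fully bisimilar to the already-established proposition that execution preserves $\bisimilar$.

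First, for the atoms clause, the valuation in the executed model depends only on the original Kripke valuation, that is, $(\stateS, \actionStateS) \in \valuation[\prime\prime](\atomP)$ iff $\stateS \in \valuation(\atomP)$ (and similarly for the primed side). Since $\pointedModel{\stateS} \bisimilar_\agentsB \pointedModel[\prime]{\stateS[\prime]}$ forces $\stateS$ and $\stateS[\prime]$ to agree on atoms, agreement at $(\stateS, \actionStateS)$ and $(\stateS[\prime], \actionStateS[\prime])$ follows at once.

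For forth-$\agentB$ with $\agentB \in \agentsB$, pick a successor $(\stateT, \actionStateT)$ of $(\stateS, \actionStateS)$ in the executed model, so $\stateS \accessibilityAgent{\agentB} \stateT$, $\actionStateS \actionAccessibilityAgent{\agentB} \actionStateT$, and $\pointedModel{\stateT} \entails \actionPrecondition(\actionStateT)$. Applying forth-$\agentB$ in $\pointedModel{\stateS} \bisimilar_\agentsB \pointedModel[\prime]{\stateS[\prime]}$ yields $\stateT[\prime]$ with $\stateS[\prime] \accessibilityAgent[\prime]{\agentB} \stateT[\prime]$ and $\pointedModel{\stateT} \bisimilar \pointedModel[\prime]{\stateT[\prime]}$; applying forth-$\agentB$ in $\pointedActionModel{\actionStateS} \bisimilar_\agentsB \pointedActionModel[\prime]{\actionStateS[\prime]}$ yields $\actionStateT[\prime]$ with $\actionStateS[\prime] \actionAccessibilityAgent[\prime]{\agentB} \actionStateT[\prime]$ and $\pointedActionModel{\actionStateT} \bisimilar \pointedActionModel[\prime]{\actionStateT[\prime]}$. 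To conclude that $(\stateT[\prime], \actionStateT[\prime])$ is indeed a state of $\model[\prime] \exec \actionModel[\prime]$, observe that action-model bisimilarity gives $\proves \actionPrecondition(\actionStateT) \iff \actionPrecondition[\prime](\actionStateT[\prime])$, and Kripke bisimilarity gives $\pointedModel{\stateT} \entails \phi$ iff $\pointedModel[\prime]{\stateT[\prime]} \entails \phi$ for all $\phi$; chaining yields $\pointedModel[\prime]{\stateT[\prime]} \entails \actionPrecondition[\prime](\actionStateT[\prime])$. Finally, the earlier proposition that execution preserves full bisimilarity, applied to $\pointedModel{\stateT} \bisimilar \pointedModel[\prime]{\stateT[\prime]}$ and $\pointedActionModel{\actionStateT} \bisimilar \pointedActionModel[\prime]{\actionStateT[\prime]}$, gives $(\pointedModel{\stateT} \exec \pointedActionModel{\actionStateT}) \bisimilar (\pointedModel[\prime]{\stateT[\prime]} \exec \pointedActionModel[\prime]{\actionStateT[\prime]})$, which is exactly the required $\bisimilar$ on the successors. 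The back-$\agentB$ case is entirely symmetric.

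The only step requiring any care is verifying that the witness $(\stateT[\prime], \actionStateT[\prime])$ really lives in the executed primed model, since the Kripke-side and action-side bisimilarities are supplied independently and must be combined through the precondition; everything else is a direct invocation of the hypotheses and the preservation-under-execution result already stated for $\bisimilar$.
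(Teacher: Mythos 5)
Your proof is correct and is essentially the argument the paper intends: the paper itself offers no explicit proof, only the remark that the result ``follows from similar reasoning'' to the other preservation results, and that reasoning is precisely your direct verification of \textbf{atoms}, \textbf{forth-$\agentB$} and \textbf{back-$\agentB$} at the designated pair, using both $\agentsB$-bisimilarity hypotheses and the already-stated preservation of full bisimilarity under execution. Your extra care in checking that the matched pair $(\stateT[\prime], \actionStateT[\prime])$ survives the product update, via provable equivalence of preconditions together with bisimulation invariance, is exactly the point that the paper's one-line remark glosses over.
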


  This result follows from similar reasoning to the results for $\agentsB$-bisimilarity of Kripke models.

  \begin{definition}[Axiomatisation \axiomAmlK]\label{aml-k-axioms}
  The axiomatisation \axiomAmlK{} is a substitution schema consisting of the
  rules and axioms of \axiomK{} along with the axioms:
  $$
  \begin{array}{rl}
      {\bf AS}    & \proves \allacts{\pointedActionModel{\actionStatesT} \exec \pointedActionModel[\prime]{\actionStatesT[\prime]}} \phi \iff \allacts{\pointedActionModel[\prime]{\actionStatesT[\prime]}} \allacts{\pointedActionModel{\actionStatesT}} \phi\\
      {\bf AU}    & \proves \allacts{\pointedActionModel{\actionStatesT}} \phi \iff \bigwedge_{\actionStateT \in \actionStatesT} \allacts{\pointedActionModel{\actionStateT}} \phi\\
      {\bf AP}    & \proves \allacts{\pointedActionModel{\actionStateT}} \atomP \iff (\actionPrecondition(\actionStateT) \implies \atomP) \text{ for $\atomP \in \atoms$}\\
      {\bf AN}    & \proves \allacts{\pointedActionModel{\actionStateT}} \neg \phi \iff (\actionPrecondition(\actionStateT) \implies \neg \allacts{\pointedActionModel{\actionStateT}} \phi)\\
      {\bf AC}    & \proves \allacts{\pointedActionModel{\actionStateT}} (\phi \land \psi) \iff (\allacts{\pointedActionModel{\actionStateT}} \phi \land \allacts{\pointedActionModel{\actionStateT}} \psi)\\ 
      {\bf AK}    & \proves \allacts{\pointedActionModel{\actionStateT}} \necessary[\agentA] \phi \iff (\actionPrecondition(\actionStateT) \implies \necessary[\agentA] \allacts{\pointedActionModel{\actionStateT \actionAccessibilityAgent{\agentA}}} \phi)\\
  \end{array}
  $$
  and the rule:
  $$
  \begin{array}{rl}
      {\bf NecA} & \text{From $\proves \phi$ infer $\proves \allacts{\pointedActionModel{\actionStatesT}} \phi$}
  \end{array}
  $$
  \end{definition}

  \begin{proposition}\label{aml-k-soundness-completeness}
      The axiomatisation \axiomAmlK{} is sound and complete for the logic \logicAmlK{}.
  \end{proposition}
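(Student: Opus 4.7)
The plan is to treat soundness and completeness separately by the standard Baltag--Moss--Solecki reduction strategy.

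For soundness, I would check each axiom and rule against the semantics in Definition~\ref{aml-semantics}. The axiom \textbf{AS} restates, modulo the semantics of $\allacts$, the associativity of action model execution already given as a proposition above. Axiom \textbf{AU} is immediate from defining multi-pointed execution as a union indexed by the designated action points. \textbf{AP}, \textbf{AN}, \textbf{AC} are obtained by unfolding $\allacts{\pointedActionModel{\actionStateT}}$ and splitting on whether $\pointedModel{\stateS} \entails \actionPrecondition(\actionStateT)$ (so that $(\stateS,\actionStateT)$ survives in $\model \exec \actionModel$) or fails (in which case $\allacts{\pointedActionModel{\actionStateT}}\phi$ is vacuously true). \textbf{AK} is a direct consequence of the clause $(\stateS,\actionStateT) \accessibilityAgent[\prime]{\agentA} (\stateT,\actionStateU)$ iff $\stateS \accessibilityAgent{\agentA} \stateT$ and $\actionStateT \actionAccessibilityAgent{\agentA} \actionStateU$. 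The rule \textbf{NecA} holds because a formula valid on every Kripke model is a fortiori valid on every $\model \exec \actionModel$.

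For completeness, I would construct a translation $t: \langAml \to \lang$ that eliminates every occurrence of $\allacts{\pointedActionModel{\actionStatesT}}$ by reading the six axioms left-to-right as rewrite rules: \textbf{AU} splits multi-pointed $\allacts$ into conjunctions of singleton-pointed ones; \textbf{AP}, \textbf{AN}, \textbf{AC}, \textbf{AK} push a singleton $\allacts{\pointedActionModel{\actionStateT}}$ past atoms, negation, conjunction and $\necessary[\agentA]$; and \textbf{AS} collapses nested action modalities into a single composed one. Because the six rules are provable equivalences in $\axiomAmlK$, each rewrite preserves provable equivalence, so $\axiomAmlK \proves \phi \iff t(\phi)$. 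Given $\models \phi$, it follows that $\models t(\phi)$; since $t(\phi) \in \lang$, completeness of $\axiomK$ yields $\axiomK \proves t(\phi)$, whence $\axiomAmlK \proves t(\phi)$ and therefore $\axiomAmlK \proves \phi$.

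The main obstacle is showing that the rewriting defining $t$ terminates. Naive measures such as formula size or modal depth fail, because \textbf{AK} rewrites $\allacts{\pointedActionModel{\actionStateT}} \necessary[\agentA] \phi$ to $\actionPrecondition(\actionStateT) \implies \necessary[\agentA] \allacts{\pointedActionModel{\actionStateT \actionAccessibilityAgent{\agentA}}} \phi$, and the precondition $\actionPrecondition(\actionStateT)$ is itself an arbitrary formula of $\langAml$ that may contain further action modalities. The standard fix is a lexicographically ordered complexity measure whose primary component is a weighted count of symbols occurring strictly inside some $\allacts$ operator, and which places preconditions (drawn from a fixed finite action model) at a strictly lower level handled by an outer induction on the structure of the action model. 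Under such a measure every reduction step strictly decreases the complexity, so $t$ is well-defined and the soundness-and-completeness argument closes.
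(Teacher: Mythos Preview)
Your proposal is correct and follows essentially the same approach as the paper: the paper does not give a detailed proof but cites Baltag, Moss and Solecki~\cite{baltag1998,baltag2005} and remarks that completeness and expressive equivalence follow because the axioms of \axiomAmlK{} form a set of reduction axioms yielding a provably correct translation from \langAml{} to \lang{}. Your plan is exactly this reduction strategy spelled out, including the standard termination concern for the translation, so there is nothing to add.
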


  \begin{proposition}\label{aml-k-expressive-equivalence}
      The logic \logicAmlK{} is expressively equivalent to the logic \logicK{}.
  \end{proposition}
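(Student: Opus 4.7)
The inclusion of \logicK{} in \logicAmlK{} is immediate, since \lang{} is a syntactic sublanguage of \langAml{}, so every \logicK{}-formula is already a \logicAmlK{}-formula. The nontrivial direction is to show that every \langAml{}-formula is semantically equivalent to some \lang{}-formula. The plan is to exhibit a translation $t : \langAml \to \lang$ satisfying $\proves \phi \iff t(\phi)$, built entirely from the reduction axioms of \axiomAmlK{}; soundness (Proposition~\ref{aml-k-soundness-completeness}) then yields semantic equivalence, and hence expressive equivalence.

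The translation proceeds by eliminating action modalities innermost-first. For any innermost occurrence $\allacts{\pointedActionModel{\actionStatesT}} \psi$ (so that $\psi$ already lies in \lang{}), first apply \textbf{AU} to split a multi-pointed $\actionStatesT$ into a conjunction of single-pointed subformulae $\allacts{\pointedActionModel{\actionStateT}} \psi$. Each such single-pointed action modality is then pushed through the structure of $\psi$: \textbf{AP} eliminates it at atoms, \textbf{AN} and \textbf{AC} commute it through $\neg$ and $\land$, and \textbf{AK} rewrites $\allacts{\pointedActionModel{\actionStateT}} \necessary[\agentA] \psi'$ as $\actionPrecondition(\actionStateT) \implies \necessary[\agentA] \allacts{\pointedActionModel{\actionStateT \actionAccessibilityAgent{\agentA}}} \psi'$. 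The freshly introduced action modality sits strictly inside the new $\necessary[\agentA]$ and is applied to $\psi'$, which has strictly smaller modal depth than $\necessary[\agentA] \psi'$.

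The main obstacle is termination of this rewriting. The cardinality of the underlying action model does not decrease (in \textbf{AK} only the designated set changes, and \textbf{AU} duplicates the modality while preserving the carrier), so a naive structural measure on $\psi$ alone is insufficient. The standard remedy is to define a well-founded complexity measure $c(\cdot)$ on \langAml{}-formulae in which $c(\allacts{\pointedActionModel{\actionStatesT}} \psi)$ depends jointly on $|\actionStates|$, on $|\actionStatesT|$, and on $c(\psi)$, in such a way that each of \textbf{AP}, \textbf{AN}, \textbf{AC}, \textbf{AK}, \textbf{AU} strictly decreases $c$ on the innermost redex. Verifying these inequalities is mechanical but delicate, and constitutes the main technical content. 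Once it is in place the rewriting terminates at a \lang{}-formula $t(\phi)$, and since each rewrite step is justified by an axiom of \axiomAmlK{}, the resulting formula is semantically equivalent to $\phi$, completing the argument.
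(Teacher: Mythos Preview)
Your proposal is correct and follows exactly the approach the paper indicates: the paper does not give its own detailed proof but simply cites Baltag, Moss and Solecki and remarks that ``the completeness and expressive equivalence results follow from the fact that \axiomAmlK{} forms a set of reduction axioms which give a provably correct translation from \langAml{} to \lang{}.'' Your innermost-first elimination using {\bf AU}, {\bf AP}, {\bf AN}, {\bf AC}, {\bf AK} is precisely this provably correct translation, and your observation that termination requires a joint complexity measure (since {\bf AK} does not shrink the action model) is the standard point one must address when filling in the details.
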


  These results are shown by Baltag, Moss and Solecki~\cite{baltag1998,baltag2005}. 
  We note that the completeness and expressive equivalence results follow from
  the fact that \axiomAmlK{} forms a set of reduction axioms which give a
  provably correct translation from \langAml{} to \lang{}.

  We note that the same results hold for the logics \logicAmlKFF{} and
  \logicAmlS{} if we extend \axiomAmlK{} with the additional axioms of
  \axiomKFF{} and \axiomS{} and restrict the language to only include
  $\classAM_\classKFF$ and $\classAM_\classS$ action models respectively, given
  the following results.

  \begin{proposition}\label{aml-kff-domain}
      $\model \in \classK$ and $\actionModel \in \classAM_\classKFF$ if and only if $\model \exec \actionModel \in \classKFF$.
  \end{proposition}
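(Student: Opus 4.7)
The plan is to unfold the product from Definition~\ref{aml-semantics} and reduce each frame condition on $\accessibilityAgent[\prime]{\agentA}$ to the corresponding conditions on the two factor relations; the valuation clause plays no role in the frame condition, since $\valuation[\prime]$ merely projects $\valuation$ of $\model$, and the precondition clause only restricts the domain of the product.

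Fix an agent $\agentA$. For the forward direction I would chase a two-step chain $(s_1, \actionStateS) \accessibilityAgent[\prime]{\agentA} (s_2, \actionStateS[\prime]) \accessibilityAgent[\prime]{\agentA} (s_3, \actionStateT)$ down to its four atomic constituents $s_1 \accessibilityAgent{\agentA} s_2$, $s_2 \accessibilityAgent{\agentA} s_3$, $\actionStateS \actionAccessibilityAgent{\agentA} \actionStateS[\prime]$, $\actionStateS[\prime] \actionAccessibilityAgent{\agentA} \actionStateT$, apply transitivity of each factor separately to obtain $s_1 \accessibilityAgent{\agentA} s_3$ and $\actionStateS \actionAccessibilityAgent{\agentA} \actionStateT$, and then repack to conclude $(s_1, \actionStateS) \accessibilityAgent[\prime]{\agentA} (s_3, \actionStateT)$. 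Euclideanness is handled dually: unpack two outgoing $\accessibilityAgent[\prime]{\agentA}$-edges from a common source into two pairs of atomic edges, apply Euclideanness on each side, and repack. For the reverse direction, assume $\model \exec \actionModel \in \classKFF$; the obligation $\model \in \classK$ is immediate, and to extract $\actionModel \in \classAM_\classKFF$ I would lift any chain $\actionStateS \actionAccessibilityAgent{\agentA} \actionStateS[\prime] \actionAccessibilityAgent{\agentA} \actionStateT$ in $\actionModel$ to an $\accessibilityAgent[\prime]{\agentA}$-chain in the product via states of $\model$ satisfying the respective preconditions, and then project transitivity of the product onto the action-model factor (with the symmetric construction for Euclideanness).

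The main obstacle is that, read literally, the hypothesis ``$\model \in \classK$'' is vacuous: every Kripke model is in $\classK$, yet the forward componentwise step plainly needs transitivity and Euclideanness of $\accessibilityAgent{\agentA}$ as well. A three-state non-transitive chain in $\model$ together with a single-point self-looping (hence trivially KFF) action model already refutes the forward direction on the stated hypotheses. I therefore read this as a typographical slip for $\model \in \classKFF$, which both squares with the surrounding discussion of $\logicAmlKFF$ and closes both directions of the componentwise argument cleanly; analogously the reverse-direction lifting step needs $\model$ rich enough to witness each precondition, which is again consistent with that reading. I would flag the apparent typo to the authors and then complete the proof under this natural reading.
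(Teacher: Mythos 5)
The paper never proves this proposition---it is quoted as a background closure fact needed to make \logicAmlKFFwell{} defined\footnote{Read: well defined; the proposition is stated without proof in Section~\ref{technical-preliminaries}.}---so there is no in-paper argument to compare you against, and your proposal must stand on its own. Your handling of the left-to-right direction does. The diagnosis that the printed hypothesis $\model \in \classK$ cannot be right is correct: your three-state non-transitive chain composed with the one-point looping action model (precondition $\top$) yields a product isomorphic to $\model$ itself, so the product inherits the failure of transitivity. The intended hypothesis $\model \in \classKFF$ is corroborated by Lemma~\ref{afl-kff-exec}, which does assume $\pointedModel{\statesT} \in \classKFF$. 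Under that reading your unpack--apply-componentwise--repack argument for transitivity and Euclideanness is exactly the standard proof, and you are right that the precondition clause is harmless: it only shrinks the domain, and both frame conditions are universal, hence preserved under substructures.

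The genuine gap is the right-to-left direction, and your ``rich enough'' caveat concedes it without repairing it. The lifting step fails outright: an action point whose precondition is unsatisfiable in $\model$, or satisfied only at states that are not suitably $\accessibilityAgent{\agentA}$-related, contributes nothing to $\model \exec \actionModel$. Concretely, take action points $\actionStateS, \actionStateT$ with the single edge $\actionStateS \actionAccessibilityAgent{\agentA} \actionStateT$, no loop at $\actionStateT$, $\actionPrecondition(\actionStateS) = \top$ and $\actionPrecondition(\actionStateT) = \bot$: this $\actionModel$ is not Euclidean, yet the product has no $\agentA$-edges at all and so lies in $\classKFF$ for every $\model$. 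Symmetrically, under the corrected reading you would also owe $\model \in \classKFF$ from $\model \exec \actionModel \in \classKFF$, which fails for the same reason (the action model can trivialise the product). So the converse is not a matter of choosing witnesses more carefully; it is false without additional satisfiability and reachability hypotheses. Since the paper only ever uses the left-to-right implication (to see that executing $\classAM_\classKFF$ action models preserves $\classKFF$), the clean resolution is to prove only that implication under $\model \in \classKFF$, and to flag the biconditional itself, not just the stray $\classK$, as part of the misstatement.

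\def\logicAmlKFFwell{\logicAmlKFF{} well}
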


  \begin{proposition}\label{aml-s-domain}
      $\model \in \classK$ and $\actionModel \in \classAM_\classS$ if and only if $\model \exec \actionModel \in \classS$.
  \end{proposition}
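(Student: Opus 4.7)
The plan is to verify both directions by checking each of the three S5 frame conditions—reflexivity, transitivity, and Euclideanness—on the product relation $\accessibilityAgent[\prime]{\agentA}$ of Definition~\ref{aml-semantics}. The key observation is that $(\stateS, \actionStateS) \accessibilityAgent[\prime]{\agentA} (\stateT, \actionStateT)$ iff $\stateS \accessibilityAgent{\agentA} \stateT$ and $\actionStateS \actionAccessibilityAgent{\agentA} \actionStateT$, so the product relation is a conjunction of its two factor relations and inherits exactly those frame properties common to both factors.

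For the forward direction, I would fix an arbitrary product state $(\stateS, \actionStateS) \in \states[\prime]$ and verify reflexivity, transitivity, and Euclideanness at that point. Each condition unfolds, via the product definition, into the conjunction of the corresponding condition on $\accessibilityAgent{\agentA}$ and on $\actionAccessibilityAgent{\agentA}$; the action-side instance is delivered by the hypothesis $\actionModel \in \classAM_\classS$, and the model-side instance is recovered from the structure of $\model$. This is routine conjunct-wise bookkeeping since the product operation commutes with taking intersections of relational constraints.

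For the backward direction, I would argue contrapositively: any failure of an S5 condition on some $\actionAccessibilityAgent{\agentA}$—for instance, a transitivity violation in the action model—lifts to a corresponding product-level failure along a matching chain of model states whose preconditions are jointly satisfied. Such a matching chain produces states in $\states[\prime]$ that break the analogous condition on $\accessibilityAgent[\prime]{\agentA}$, contradicting $\model \exec \actionModel \in \classS$. The main obstacle I expect is establishing the existence of the witnessing chain in $\model$, which is the substantive content of the backward direction; once this satisfiability step is in hand, the reflexivity and Euclideanness cases proceed by entirely analogous liftings, and the proof reduces to the same conjunct-wise unfolding as in the forward direction.
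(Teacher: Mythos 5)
Your forward direction contains the step where the whole argument lives, and as written it does not go through. You say the model-side instance of each frame condition ``is recovered from the structure of $\model$'', but the hypothesis is only $\model \in \classK$: an arbitrary Kripke model, with no frame conditions to recover. Your conjunct-wise unfolding is accurate --- $(\stateS, \actionStateS) \accessibilityAgent[\prime]{\agentA} (\stateT, \actionStateT)$ holds iff both factor relations hold --- but it shows precisely that the product inherits reflexivity (resp.\ transitivity, Euclideanness) only when \emph{both} factors have it. Taking $\actionModel$ to be the one-point action model with precondition $\top$ and the identity relation (which lies in $\classAM_\classS$) makes $\model \exec \actionModel$ isomorphic to $\model$, so the left-to-right implication fails for any non-reflexive or non-transitive $\model \in \classK$. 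The proposition is evidently intended with $\model \in \classS$ on the left (compare Lemma~\ref{afl-s-exec}, which is the form the paper actually uses later); under that reading your conjunct-wise argument is correct and essentially complete, since each of the three conditions is a universal Horn condition preserved by componentwise products of relations.

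The backward direction has a second, independent gap that you anticipate but cannot repair: the ``witnessing chain'' in $\model$ need not exist, and in general does not. Any action point $\actionStateT$ with $\actionPrecondition(\actionStateT)$ unsatisfiable, or merely not satisfied anywhere in the particular $\model$ at hand, contributes no state to $\states[\prime]$, so a violation of an $\classS$ frame condition located at such a point is invisible in $\model \exec \actionModel$. For instance, a two-point action model whose relation fails to be Euclidean only along an edge into a point with precondition $\bot$ yields a product in $\classS$ whenever $\model \in \classS$, yet the action model is not in $\classAM_\classS$. Hence no lifting argument can deliver the contrapositive; a defensible converse would need to quantify over all $\model$ and restrict to action models with satisfiable preconditions, which is not what is stated. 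The only part of the proposition the paper actually relies on is closure of $\classS$ under execution of $\classAM_\classS$ action models, i.e.\ the corrected forward implication.
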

  
  \begin{definition}[Simulation and refinement]\label{refinements}
      Let $\model, \model[\prime] \in \classK$ be Kripke models.
      A non-empty relation $\refinementRel \subseteq \states \times \states[\prime]$
      is a {\em simulation} if and only if it satisfies {\bf atoms}, {\bf forth-$\agentA$} for every $\agentA \in \agents$.
      If $(\stateS, \stateS[\prime]) \in \refinementRel$ then we call $\pointedModel[\prime]{\stateS[\prime]}$ a {\em simulation} of $\pointedModel{\stateS}$
      and call $\pointedModel{\stateS}$ a {\em refinement} of $\pointedModel[\prime]{\stateS[\prime]}$.
      We write $\pointedModel[\prime]{\stateS[\prime]} \simulation \pointedModel{\stateS}$
      or equivalently $\pointedModel{\stateS} \refinement \pointedModel[\prime]{\stateS[\prime]}$.
  \end{definition}

  \begin{proposition}
      The relation $\refinement$ is a preorder on Kripke models.
  \end{proposition}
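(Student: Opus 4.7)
The plan is to verify the two defining conditions of a preorder, namely reflexivity and transitivity, by exhibiting a witnessing simulation in each case.

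For reflexivity, given a pointed Kripke model $\pointedModel{\stateS}$, I take the identity relation $\refinementRel = \{(\stateT, \stateT) \mid \stateT \in \states\}$ as a candidate simulation from $\model$ to itself. The \textbf{atoms} condition is immediate since $\stateT \in \valuation(\atomP)$ iff $\stateT \in \valuation(\atomP)$, and \textbf{forth-$\agentA$} is immediate because any $\stateU \in \stateT \accessibilityAgent{\agentA}$ itself serves as the required $\agentA$-successor paired with itself via $\refinementRel$. Since $(\stateS, \stateS) \in \refinementRel$, this witnesses $\pointedModel{\stateS} \refinement \pointedModel{\stateS}$.

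For transitivity, suppose $\pointedModel{\stateS} \refinement \pointedModel[\prime]{\stateS[\prime]}$ via a simulation $\refinementRel_1 \subseteq \states \times \states[\prime]$ and $\pointedModel[\prime]{\stateS[\prime]} \refinement \pointedModel[\prime\prime]{\stateS[\prime\prime]}$ via a simulation $\refinementRel_2 \subseteq \states[\prime] \times \states[\prime\prime]$. I define the relational composition
\[
\refinementRel = \{(\stateT, \stateT[\prime\prime]) \mid \exists \stateT[\prime] \in \states[\prime] : (\stateT, \stateT[\prime]) \in \refinementRel_1 \text{ and } (\stateT[\prime], \stateT[\prime\prime]) \in \refinementRel_2\}
\]
and argue it is a simulation from $\model$ to $\model[\prime\prime]$. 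The \textbf{atoms} condition follows by chaining the two atom-preservation biconditionals through the intermediate $\stateT[\prime]$. For \textbf{forth-$\agentA$}, given $(\stateT, \stateT[\prime\prime]) \in \refinementRel$ with witness $\stateT[\prime]$ and any $\stateU \in \stateT \accessibilityAgent{\agentA}$, I first apply the forth condition for $\refinementRel_1$ to obtain $\stateU[\prime] \in \stateT[\prime] \accessibilityAgent[\prime]{\agentA}$ with $(\stateU, \stateU[\prime]) \in \refinementRel_1$, then apply the forth condition for $\refinementRel_2$ at $(\stateT[\prime], \stateT[\prime\prime])$ to obtain $\stateU[\prime\prime] \in \stateT[\prime\prime] \accessibilityAgent[\prime\prime]{\agentA}$ with $(\stateU[\prime], \stateU[\prime\prime]) \in \refinementRel_2$. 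Then $(\stateU, \stateU[\prime\prime]) \in \refinementRel$ as required. Since $(\stateS, \stateS[\prime\prime]) \in \refinementRel$ by construction, transitivity follows.

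There is no real obstacle here: simulations, being defined by only the \textbf{atoms} and \textbf{forth-$\agentA$} clauses, compose cleanly, and the identity is trivially a simulation. The only thing to be mildly careful about is the direction of the composition (ensuring the witness for forth in $\refinementRel_1$ lives in $\model[\prime]$ so that $\refinementRel_2$ can be applied to it), but this is exactly what the definition of relational composition gives.
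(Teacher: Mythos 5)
Your proof is correct and follows the standard argument (reflexivity via the identity relation, transitivity via relational composition of simulations), which is essentially the approach of the cited source; the paper itself does not prove this proposition but simply attributes it to van Ditmarsch and French.
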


  \begin{proposition}
      Let $\pointedModel{\stateS} \in \classK$ and $\pointedActionModel{\actionStateS} \in \classAM$.
      Then $\pointedModel{\stateS} \exec \pointedActionModel{\actionStateS} \refinement \pointedModel{\stateS}$
  \end{proposition}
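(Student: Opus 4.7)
The plan is to exhibit an explicit simulation by projection onto the first coordinate. Write $\pointedModel{\stateS} \exec \pointedActionModel{\actionStateS} = \pointedModel[\prime]{\statesT[\prime]}$ in the notation of Definition~\ref{aml-semantics}, so $\states[\prime] \subseteq \states \times \actionStates$ and $\statesT[\prime] = \{(\stateS, \actionStateS)\} \cap \states[\prime]$. The candidate simulation is
\[
\refinementRel = \{((\stateT, \actionStateT), \stateT) \mid (\stateT, \actionStateT) \in \states[\prime]\} \subseteq \states[\prime] \times \states,
\]
and the designated state on the left is related to $\stateS$, since $((\stateS, \actionStateS), \stateS) \in \refinementRel$ whenever the execution is non-trivially pointed, i.e.\ whenever $\pointedModel{\stateS} \entails \actionPrecondition(\actionStateS)$.

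The verification is a direct unfolding of Definition~\ref{refinements} together with the execution clauses of Definition~\ref{aml-semantics}. The \textbf{atoms} clause is immediate because $\valuation[\prime]$ is defined so that $(\stateT, \actionStateT) \in \valuation[\prime](\atomP)$ iff $\stateT \in \valuation(\atomP)$. For \textbf{forth-$\agentA$}, suppose $((\stateT, \actionStateT), \stateT) \in \refinementRel$ and $(\stateT[\prime], \actionStateT[\prime]) \in (\stateT, \actionStateT) \accessibilityAgent[\prime]{\agentA}$; the execution clause forces $\stateT \accessibilityAgent{\agentA} \stateT[\prime]$, and $\stateT[\prime]$ itself is the required witness on the original-model side, with $((\stateT[\prime], \actionStateT[\prime]), \stateT[\prime]) \in \refinementRel$ by construction.

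There is essentially no obstacle: the only bookkeeping point is that $\refinementRel$ must be non-empty, which reduces to the precondition $\actionPrecondition(\actionStateS)$ holding at $\stateS$ so that $\statesT[\prime]$ is non-empty and the refinement claim is meaningful. It is worth emphasising why this argument does \emph{not} upgrade to a bisimulation: the \textbf{back-$\agentA$} clause can fail, because some $\accessibilityAgent{\agentA}$-successor $\stateT[\prime]$ of $\stateT$ may fail every precondition $\actionPrecondition(\actionStateT[\prime])$ with $\actionStateT \actionAccessibilityAgent{\agentA} \actionStateT[\prime]$ and so have no corresponding successor in the execution, so executing an action genuinely loses accessibility structure while still producing a refinement of the original.
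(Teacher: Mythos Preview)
Your argument is correct and is exactly the standard one: the first-coordinate projection is a simulation from the product update back to the original model, verified directly from the execution clauses. The paper itself does not give a proof but defers to van Ditmarsch and French~\cite{vanditmarsch2009}, where the argument is precisely this projection; your treatment of the degenerate case where $\pointedModel{\stateS} \nentails \actionPrecondition(\actionStateS)$ and your remark on why \textbf{back} fails are both appropriate glosses.
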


  These results are shown by van Ditmarsch and French~\cite{vanditmarsch2009}.

  \begin{definition}[Semantics of arbitrary action model logic]
      Let \classC{} be a class of Kripke models 
      and let $\model \in \classC$ be a Kripke model.
      The interpretation of $\phi \in \langAaml$ in the logic \logicAamlC{}
      is the same as its interpretation in the action model logic \logicAmlC{} given in
      Definition~\ref{aml-semantics} with the additional inductive case:
      \begin{eqnarray*}
          \pointedModel{\stateS} \entails \allrefs \phi &\text{ iff }& \text{for every } \pointedModel[\prime]{\stateS[\prime]} \in \classC \text{ such that } \pointedModel[\prime]{\stateS[\prime]} \refinement \pointedModel{\stateS}: \pointedModel[\prime]{\stateS[\prime]} \entails \phi
      \end{eqnarray*}
  \end{definition}

  The semantics of arbitrary action model logic are given by Hales~\cite{hales2013},
  which are a combination of the semantics of action model logic of Baltag, Moss and Solecki~\cite{baltag1998,baltag2005}
  and the semantics of refinement modal logic of van Ditmarsch and French~\cite{vanditmarsch2009}.

  As noted earlier, the action model logics \logicAmlK{}, \logicAmlKFF{} and
  \logicAmlS{} are expressively equivalent to their underlying modal logics via
  a provably correct translation. Similarly it was shown by Bozzelli, et
  al.~\cite{bozzelli2012a} and Hales, French and Davies~\cite{hales2012} that
  the refinement modal logics \logicRmlK{}, \logicRmlKD{} and \logicS{} are
  expressively equivalent to their underlying modal logics, also via a provably
  correct translation. We note that the same result for \logicRmlKFF{} can be
  shown similarly to the result for \logicRmlKD{}. In axiomatising
  \logicAamlK{}, Hales~\cite{hales2013} simply noted that the rules and axioms
  of \axiomAmlK{} and \axiomRmlK{} are sound in \logicAamlK{} and that the
  provably correct translations for \logicAmlK{} and \logicRmlK{} can be
  simply combined to form a provably correct translation for \logicAamlK{}.
  We reproduce the axiomatisation for \logicAamlK{} here, and note that the same
  similar reasoning to~\cite{hales2013} gives sound and complete
  axiomatisations and provably correct translations for \logicAamlKFF{} and
  \logicAamlS{}, which we also list here.

  \begin{definition}[Disjunctive normal form]\label{dnf}
      A formula in {\em disjunctive normal form} is defined by the following abstract syntax:
      $$
      \phi :: \pi \land \bigwedge_{\agentB \in \agentsB} \covers_\agentA \Gamma_\agentA \mid \phi \lor \phi
      $$
      where $\pi \in \langP$, $\agentsB \subseteq \agents$ and for every $\agentB \in \agentsB$,
      $\Gamma_\agentB$ is a finite set of formulae in disjunctive normal form.
  \end{definition}

  \begin{proposition}
      Every formula of \lang{} is equivalent to a formula in disjunctive normal
      form under the semantics of \logicK{}.
  \end{proposition}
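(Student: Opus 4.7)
My plan is to proceed by a primary induction on the modal depth $d(\phi)$, with a secondary induction on formula structure at each depth. For $d(\phi)=0$, $\phi$ is propositional and has a standard propositional disjunctive normal form $\bigvee_i \pi_i$; reading the empty modal conjunction $\bigwedge_{\agentB \in \emptyset} \covers_\agentB \Gamma_\agentB$ as $\top$, each $\pi_i$ is already a basic conjunction in the sense of Definition~\ref{dnf}.

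For $d(\phi) > 0$, I would first apply the primary IH to each maximal strict modal subformula of $\phi$, so that without loss of generality $\phi$ is a propositional combination of propositional literals and formulas of the form $\necessary[\agentA] \psi$ with $\psi$ already in disjunctive normal form. Each such $\necessary[\agentA] \psi$ is rewritten as $\covers_\agentA \{\psi\} \lor \covers_\agentA \emptyset$ using the interdefinability noted in the preliminaries. The result is then put into propositional disjunctive normal form with each $\covers_\agentA \Gamma$ treated as a literal, so that each disjunct is a conjunction of positive and negative cover literals together with a propositional part $\pi$. Negated covers are eliminated via
\[
\neg \covers_\agentA \Gamma \equiv \possible[\agentA] \bigwedge_{\gamma \in \Gamma} \neg \gamma \;\lor\; \bigvee_{\gamma \in \Gamma} \necessary[\agentA] \neg \gamma,
\]
whose modal arguments are formulas $\neg \gamma$ with $d(\neg\gamma) < d(\phi)$, so the primary IH supplies disjunctive normal forms for them and the resulting $\necessary[\agentA]$ and $\possible[\agentA]$ are again encoded by covers. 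When several covers appear for the same agent in a single conjunct, they are merged using the identity
\[
\covers_\agentA \Gamma \land \covers_\agentA \Delta \equiv \bigvee_{\Theta} \covers_\agentA \{\gamma \land \delta \mid (\gamma,\delta) \in \Theta\},
\]
where $\Theta$ ranges over subsets of $\Gamma \times \Delta$ whose two coordinate projections cover $\Gamma$ and $\Delta$ respectively; each conjunction $\gamma \land \delta$ is then renormalised to disjunctive normal form by the secondary induction (its constituents being disjunctive normal forms of modal depth strictly less than $d(\phi)$).

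The principal obstacle is verifying the cover-conjunction identity above, which I would prove semantically in \logicK{}: from a pointed model satisfying both covers, each $\accessibilityAgent{\agentA}$-successor witnesses at least one $\gamma \in \Gamma$ and at least one $\delta \in \Delta$, and the set of witnessed pairs yields a suitable $\Theta$ together with a cover by $\{\gamma\land\delta\}$; conversely, a cover by pairs immediately projects onto covers by $\Gamma$ and by $\Delta$. The remaining subtlety is to arrange the two inductions cleanly, so that the primary induction on $d(\phi)$ controls the recursive invocations triggered by eliminating negated covers (whose arguments strictly decrease in modal depth), while the secondary structural induction controls the termination of the propositional distribution and the cover-merging step at a fixed depth; a convenient termination measure for the secondary induction is the total number of propositional connectives plus cover literals at depth $d(\phi)$, which each step strictly decreases.
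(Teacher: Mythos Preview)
The paper does not actually prove this proposition; it simply attributes the result to van Ditmarsch, French and Pinchinat~\cite{vanditmarsch2010}. Your argument is the standard nabla-normal-form construction and is correct in outline: the essential ingredients are the encoding of $\necessary[\agentA]$ and $\possible[\agentA]$ via $\covers_\agentA$, the elimination of negated covers via the identity you state, and the merging identity for conjunctions of covers over the same agent, all of which you have correctly identified and justified semantically.

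One small remark on presentation: the termination measure you propose for the secondary induction is a little loose. It is cleaner to note that after eliminating negations and distributing, each disjunct carries finitely many positive cover literals; grouping these by agent and applying (the evident $n$-ary version of) your binary merging identity once per agent yields a single $\covers_\agentA$ per agent, whose arguments are conjunctions of formulae of modal depth at most $d(\phi)-1$. The primary induction hypothesis then normalises those arguments directly, and no further iteration at depth $d(\phi)$ is required. With this rephrasing the nested-induction bookkeeping disappears.
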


  This is shown by van Ditmarsch, French and Pinchinat~\cite{vanditmarsch2010}.

  \begin{definition}[Axiomatisation \axiomAamlK{}]
      The axiomatisation \axiomAamlK{} is a substitution schema consisting of
      the rules and axioms of \axiomAmlK{} along with the axioms:
      $$
      \begin{array}{rl}
          {\bf R}   & \allrefs (\phi \implies \psi) \implies (\allrefs \phi \implies \allrefs \psi)\\
          {\bf RP}  & \allrefs \pi \iff \pi \text{ where } \pi \in \langP\\
          {\bf RK}  & \somerefs \covers_\agentA \Gamma_\agentA \iff \bigwedge_{\gamma \in \Gamma_\agentA} \possible[\agentA] \somerefs \gamma\\
          {\bf RDist}  & \somerefs \bigwedge_{\agentA \in \agents} \covers_\agentA \Gamma_\agentA \iff \bigwedge_{\agentA \in \agents} \somerefs \covers_\agentA \Gamma_\agentA
      \end{array}
      $$
      and the rule:
      $$
      \begin{array}{rl}
          {\bf NecR} & \text{From $\proves \phi$ infer $\proves \allrefs \phi$}
      \end{array}
      $$
  \end{definition}
  
  The additional axioms for \axiomAamlK{} are the additional axioms from \axiomRmlK{} for refinement modal logic, given by Bozzelli, et al.~\cite{bozzelli2012a}.

  \begin{proposition}
      The axiomatisation \axiomAamlK{} is sound and complete for the logic \logicAamlK{}.
  \end{proposition}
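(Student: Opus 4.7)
The plan is to follow the strategy outlined immediately before the statement: verify soundness axiom-by-axiom, then establish completeness by constructing a provably correct translation $t : \langAaml \to \lang$. Once such a translation is in hand, completeness of \axiomAamlK{} follows from completeness of \axiomK{} (Proposition~\ref{aml-k-soundness-completeness}): if $\entails \phi$ then $\entails t(\phi)$, hence $\proves t(\phi)$, and by provable correctness $\proves \phi \iff t(\phi)$, giving $\proves \phi$.

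For soundness, I would check each axiom separately. The rules and axioms inherited from \axiomAmlK{} are already sound in \logicAmlK{} by Proposition~\ref{aml-k-soundness-completeness}, and since the semantics of $\allacts{\pointedActionModel{\actionStatesT}}$ is unchanged when $\allrefs$ is added to the language, they remain sound in \logicAamlK{}. The rule NecR and axiom R are standard for any normal modal operator. The axioms RP, RK, RDist are known sound in \logicRmlK{}~\cite{bozzelli2012a}; their soundness arguments depend only on the semantics of $\allrefs$, which is the same in \logicAamlK{}, so they transfer without modification.

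For completeness, I would define $t$ by recursive elimination of an \emph{innermost} dynamic operator (either $\allacts{\pointedActionModel{\actionStatesT}}$ or $\allrefs$). The key observation is that once an innermost occurrence is identified, its scope is a formula of \lang{}. For an innermost $\allacts{\pointedActionModel{\actionStatesT}}$, the axioms AS, AU, AP, AN, AC, AK suffice to push the operator inward until it is eliminated, exactly as in the reduction underlying Proposition~\ref{aml-k-soundness-completeness}; crucially, none of these axioms introduces an $\allrefs$. For an innermost $\allrefs \psi$ with $\psi \in \lang$, I would first rewrite $\psi$ in disjunctive normal form (Definition~\ref{dnf}), then apply R to distribute over the disjunction, RDist to separate the cover operators by agent, RK to reduce each $\somerefs \covers_\agentA \Gamma_\agentA$ to a conjunction of $\possible[\agentA]\somerefs\gamma$ with $\gamma$ of strictly smaller modal depth, and RP at the propositional base; this step introduces no new $\allacts$.

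The main obstacle is to verify that the recursion is well-defined, that is, that the combined reduction terminates. This requires a well-founded complexity measure on \langAaml{} formulas that strictly decreases under each rewriting step. The natural candidate is a lexicographic combination of (the total count of $\allacts$ occurrences, the modal depth of the scope of the innermost $\allrefs$, the size of the formula). Because the AML reduction cannot introduce $\allrefs$ and the RML reduction operates on \lang{} scopes and cannot introduce $\allacts$, the two subroutines compose cleanly, and each step strictly decreases the measure. Carefully formulating this measure so that it accounts for dynamic operators occurring within action model preconditions (themselves formulas of \langAaml{}) is the technical core of the argument; this is handled by a mutual induction on precondition complexity analogous to the one used by Hales~\cite{hales2013} for the \classK{} case.
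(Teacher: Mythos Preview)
Your proposal is correct and is essentially the approach the paper attributes to Hales~\cite{hales2013}: soundness by checking that the \axiomAmlK{} and \axiomRmlK{} rules and axioms remain valid under the combined semantics, and completeness via a provably correct translation obtained by interleaving the two reduction procedures on innermost dynamic operators. The paper itself gives no proof beyond the citation and the informal remark in the paragraph preceding the statement, so your sketch is in fact more detailed than what appears here; the termination measure and the handling of dynamic operators inside action model preconditions that you flag are exactly the points Hales addresses.
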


  \begin{proposition}
      The logic \logicAamlK{} is expressively equivalent to the logic \logicK{}.
  \end{proposition}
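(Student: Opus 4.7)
The plan is to exhibit a provably correct translation $t : \langAaml \to \lang$ (one satisfying $\proves \phi \iff t(\phi)$ in \axiomAamlK{}), from which expressive equivalence with \logicK{} is immediate via soundness. This mirrors the remark in the text that Hales' argument for \logicAamlK{} combines the known translations for \langAml{} and \langRml{}.

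First, I would take for granted two existing provably correct translations. Proposition~\ref{aml-k-expressive-equivalence} provides $t_{\mathsf{AM}} : \langAml \to \lang$ obtained by exhaustive application of the reduction axioms AS, AU, AP, AN, AC, AK, which successively push the $\allacts{\pointedActionModel{\actionStatesT}}$ operators through propositional and modal connectives and eliminate them at the atomic level (the critical step is AK, which strictly decreases the modal depth of the argument while relocating the action-model operator underneath a $\necessary[\agentA]$). Similarly, the cited work of Bozzelli et al.\ provides $t_{\mathsf{R}} : \langRml \to \lang$ obtained by first placing the body of each $\allrefs$ in disjunctive normal form (using the proposition after Definition~\ref{dnf}) and then applying R, RP, RK, and RDist, with recursive calls on each $\gamma \in \Gamma_\agentA$ appearing inside a cover operator, which have strictly smaller modal depth.

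To build $t$, I would induct on the total number $n$ of occurrences of $\allacts{\pointedActionModel{\actionStatesT}}$ and $\allrefs$ operators appearing in $\phi$, including those inside preconditions of the action models mentioned in $\phi$. If $n = 0$ then $\phi \in \lang$ and $t(\phi) = \phi$. Otherwise, locate an innermost such operator; its body is a formula of \lang. Apply $t_{\mathsf{AM}}$ or $t_{\mathsf{R}}$ as appropriate to eliminate just this occurrence, obtaining a provably equivalent formula of \lang{} using only axioms already present in \axiomAamlK{}. Substitute back into $\phi$: the measure $n$ strictly decreases, and the outer induction closes.

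The main obstacle is the treatment of preconditions, which in Definition~\ref{aml-syntax} are themselves drawn from \langAaml{}, so they can contain arbitrarily deeply nested dynamic and refinement operators. Naively counting operators in $\phi$ does not bound those hidden inside action-model preconditions, and the axioms AP, AN, AC, AK may duplicate such preconditions. The clean fix is to begin by translating, bottom-up, every precondition of every action model occurring in $\phi$ into a formula of \lang{} — this is well-founded because action models have finite domains and preconditions are strictly simpler in the combined structural ordering. Only after all preconditions lie in \lang{} do we eliminate the outer dynamic and refinement operators as above. With that refinement of the induction measure, $t$ is total and $\proves \phi \iff t(\phi)$; combined with soundness of \axiomAamlK{} over $\classK$, this yields $\pointedModel{\stateS} \entails \phi$ iff $\pointedModel{\stateS} \entails t(\phi)$, establishing expressive equivalence with \logicK{}.
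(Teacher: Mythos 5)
Your proposal is correct and follows essentially the same route as the paper, which simply cites Hales' observation that the provably correct reduction-axiom translations for \logicAmlK{} and \logicRmlK{} can be combined into one for \logicAamlK{}, yielding expressive equivalence with \logicK{} (the converse direction being trivial since $\lang \subseteq \langAaml$). Your extra care with translating action-model preconditions first is exactly the routine refinement needed to make the combined induction well-founded, so there is no substantive divergence from the paper's argument.
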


  These results are shown by Hales~\cite{hales2013}.

  \begin{definition}[Alternating disjunctive normal form]\label{adnf}
      A formula in {\em $\agentA$-alternating disjunctive normal form} is defined by the following abstract syntax:
      $$
      \phi :: \pi \land \bigwedge_{\agentB \in \agentsB} \covers_\agentB \Gamma_\agentB \mid \phi \lor \phi
      $$
      where $\pi \in \langP$, $\agentsB \subseteq \agents \setminus \{\agentA\}$ and for every $\agentB \in \agentsB$,
      $\Gamma_\agentB$ is a finite set of formulae in $\agentB$-alternating disjunctive normal form.

      A formula in {\em alternating disjunctive normal form} is defined by the following abstract syntax:
      $$
      \phi :: \pi \land \bigwedge_{\agentB \in \agentsB} \covers_\agentA \Gamma_\agentA \mid \phi \lor \phi
      $$
      where $\pi \in \langP$, $\agentsB \subseteq \agents$ and for every $\agentB \in \agentsB$,
      $\Gamma_\agentB$ is a finite set of formulae in $\agentB$-alternating disjunctive normal form.
  \end{definition}
  
  The additional axioms for \axiomAamlKFF{} are adapted from the additional
  axioms from \axiomRmlKD{} for refinement doxastic logic, given by Hales, French
  and Davies~\cite{hales2012}. The axioms do not require that each
  $\Gamma_\agentA$ be non-empty, which is due to the lack of seriality in the
  setting of \classKFF{}.

  \begin{proposition}
      Every formula of \lang{} is equivalent to a formula in alternating
      disjunctive normal form under the semantics of \logicKFF{}.
  \end{proposition}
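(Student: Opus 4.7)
We proceed by induction on the modal depth $n = d(\phi)$. The base case $n = 0$ is immediate: a propositional formula $\pi \in \langP$ is already in ADNF as the single disjunct $\pi \land \bigwedge_{\agentB \in \emptyset} \covers_\agentB \Gamma_\agentB$ with $\agentsB = \emptyset$.

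For the inductive step, first invoke the DNF proposition (stated for $\logicK{}$, and so valid in the subclass $\logicKFF{}$) to obtain $\phi \equiv \bigvee_i \bigl(\pi_i \land \bigwedge_{\agentB \in \agentsB_i} \covers_\agentB \Gamma_{i,\agentB}\bigr)$, where each $\gamma \in \Gamma_{i,\agentB}$ is in DNF and of modal depth at most $n-1$. By the inductive hypothesis, each such $\gamma$ is equivalent in $\logicKFF{}$ to some $\gamma^\ast$ in ADNF. Substituting these back in yields an equivalent formula of the right outer shape, with inner cover-sets consisting of ADNF formulas; it remains to upgrade each such inner set to $\agentB$-ADNF by eliminating top-level $\covers_\agentB$ conjuncts from the disjuncts of its elements.

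The crucial elimination exploits the $\logicKFF{}$-specific fact that, by axioms \textbf{4} and \textbf{5}, $\stateT \accessibilityAgent{\agentB} = \stateS \accessibilityAgent{\agentB}$ whenever $\stateT \in \stateS \accessibilityAgent{\agentB}$. Hence at every $\stateT$ in the $\agentB$-cluster of $\stateS$, the truth value of any top-level $\agentB$-modal subformula depends only on $\stateS \accessibilityAgent{\agentB}$ and is therefore uniform across the cluster. For each distinct subformula $\sigma$ of the form $\covers_\agentB \Gamma'$ appearing as a top-level conjunct of some $\gamma^\ast \in \Gamma_{i,\agentB}$, $\sigma$ either holds at every $\stateT \in \stateS \accessibilityAgent{\agentB}$ or at none. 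We therefore rewrite $\covers_\agentB \Gamma_{i,\agentB}$ as a finite disjunction indexed by subsets $\Sigma$ of the (finitely many) such $\sigma$'s, each disjunct taking the shape $\bigl(\bigwedge_{\sigma \in \Sigma}\sigma \land \bigwedge_{\sigma \notin \Sigma}\neg\sigma\bigr) \land \covers_\agentB \Gamma_{i,\agentB}^\Sigma$, where $\Gamma_{i,\agentB}^\Sigma$ is obtained from $\Gamma_{i,\agentB}$ by deleting each satisfied $\sigma$-conjunct and discarding any $\gamma^\ast$ requiring an unsatisfied $\sigma$. The disjuncts of the updated cover-set no longer carry a top-level $\covers_\agentB$, and the extracted (possibly negated) covers $\sigma$ are then processed via DNF and the inductive hypothesis to bring the whole back into ADNF.

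The main obstacle is formalising the pull-out rigorously: its soundness follows from the cluster-uniformity property just described (equivalently, from repeated use of the K45 equivalences $\necessary[\agentB]\necessary[\agentB]\psi \iff \necessary[\agentB]\psi$ and, over non-empty clusters, $\necessary[\agentB]\possible[\agentB]\psi \iff \possible[\agentB]\psi$), while termination is guaranteed by the measure counting $\covers_\agentB$ occurrences directly nested under a $\covers_\agentB$, which strictly decreases at each pull-out step.
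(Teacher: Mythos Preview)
The paper does not actually prove this proposition; it simply cites Hales, French and Davies~\cite{hales2012} for the \logicKD{} case and remarks that the same reasoning applies to \logicKFF{}. So there is no ``paper's own proof'' to compare against in detail. Your plan is the standard route for such alternating normal forms and correctly isolates the key \classKFF{}-specific ingredient: for transitive Euclidean $\accessibilityAgent{\agentB}$, every $\stateT \in \stateS\accessibilityAgent{\agentB}$ satisfies $\stateT\accessibilityAgent{\agentB} = \stateS\accessibilityAgent{\agentB}$, so top-level $\agentB$-modal subformulae are constant across the $\agentB$-cluster and can be factored out of $\covers_\agentB$.

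The one place that needs more care is your final step, ``the extracted (possibly negated) covers $\sigma$ are then processed via DNF and the inductive hypothesis to bring the whole back into ADNF.'' After the case split you have, schematically,
\[
\Bigl(\bigwedge_{\sigma\in\Sigma}\sigma\Bigr)\land\Bigl(\bigwedge_{\sigma\notin\Sigma}\neg\sigma\Bigr)\land\covers_\agentB\Gamma^\Sigma,
\]
with each $\sigma=\covers_\agentB\Gamma'$ and $\Gamma^\Sigma$ already $\agentB$-ADNF. You still have several $\covers_\agentB$'s (and negated ones) sitting at the same level as $\covers_\agentB\Gamma^\Sigma$; the ADNF shape allows only one $\covers_\agentB$ per disjunct, so these must be merged. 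Merging via the $\nabla$-distribution law produces cover sets whose elements are \emph{conjunctions} of $\agentB$-ADNF formulae, which are not themselves in $\agentB$-ADNF form (they may carry multiple $\covers_\agentC$'s for $\agentC\neq\agentB$). Applying the inductive hypothesis to such a conjunction yields ADNF, not $\agentB$-ADNF, so in principle a top-level $\covers_\agentB$ could reappear and your stated termination measure (counting $\covers_\agentB$ directly under $\covers_\agentB$) is not obviously monotone through this step. The fix is either to strengthen the inductive hypothesis to include ``a conjunction of $\agentB$-ADNF formulae of depth $\leq m$ is equivalent to a $\agentB$-ADNF formula,'' proved simultaneously for all $\agentB$ by induction on $m$, or to argue more carefully that the merged cover elements, having no top-level $\covers_\agentB$ to begin with, retain that property through the DNF conversion and the recursive merging for agents $\agentC\neq\agentB$ at strictly lower depth. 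Either way the combinatorics are routine but fiddly, and your sketch would benefit from spelling out which of these two routes you intend.
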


  This is shown by Hales, French and Davies~\cite{hales2012} for \logicKD{},
  however the same reasoning applies to \logicKFF{}.

  \begin{definition}[Axiomatisation \axiomAamlKFF{}]
      The axiomatisation \axiomAmlKFF{} is a substitution schema consisting of
      the rules and axioms of \axiomAmlKFF{} along with the rules and axioms
      {\bf R}, {\bf RP} and {\bf NecR} of \axiomAamlK{} and the axioms:
      $$
      \begin{array}{rl}
          {\bf RK45}  & \somerefs \covers_\agentA \Gamma_\agentA \iff \bigwedge_{\gamma \in \Gamma_\agentA} \possible[\agentA] \somerefs \gamma\\
          {\bf RDist}  & \somerefs \bigwedge_{\agentA \in \agents} \covers_\agentA \Gamma_\agentA \iff \bigwedge_{\agentA \in \agents} \somerefs \covers_\agentA \Gamma_\agentA
      \end{array}
      $$
      where for every $\agentA \in \agents$, $\Gamma_\agentA$ is a finite set
      of $a$-alternating disjunctive normal formulae.
  \end{definition}

  The additional axioms for \axiomAamlKFF{} are the additional axioms from \axiomRmlKFF{} for refinement epistemic logic, given by Hales, French and Davies~\cite{hales2012}.
  
  The additional axioms for \axiomAamlKFF{} are adapted from the additional
  axioms from \axiomRmlKD{} for refinement modal logic, given by Hales, French
  and Davies~\cite{hales2012}. The axioms do not require that each
  $\Gamma_\agentA$ be non-empty, which is due to the lack of seriality in the
  setting of \classKFF{}.

  \begin{proposition}
      The axiomatisation \axiomAamlKFF{} is sound and complete for the logic \logicAamlKFF{}.
  \end{proposition}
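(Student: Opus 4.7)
The plan is to mirror the strategy used for \axiomAamlK{} in~\cite{hales2013}: establish soundness directly, and then obtain completeness by constructing a provably correct translation $t : \langAaml \to \lang$ and appealing to the known completeness of \axiomKFF{} for \logicKFF{}. Soundness of the \axiomAmlKFF{} fragment is inherited from the soundness of \axiomAmlKFF{} for \logicAmlKFF{} (justified by Proposition~\ref{aml-kff-domain}, which guarantees that executing an action model in $\classAM_\classKFF$ on a \classKFF{} Kripke model produces another \classKFF{} Kripke model). Soundness of {\bf R}, {\bf RP}, {\bf NecR} follows from general properties of the refinement preorder, and soundness of {\bf RK45} and {\bf RDist} follows from the corresponding results for \logicRmlKFF{} in~\cite{hales2012}.

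For completeness, I would build $t$ in two layers, exactly paralleling the pattern for \logicAamlK{}. The inner layer uses the reduction axioms {\bf AS}, {\bf AU}, {\bf AP}, {\bf AN}, {\bf AC}, {\bf AK} of \axiomAmlKFF{} to push every $\allacts{\pointedActionModel{\actionStatesT}}$ inwards until it reaches an atom or a modality, at which point it disappears; this translates \langAml{} fragments to \lang{} modulo provable equivalence. The outer layer uses {\bf R}, {\bf RP}, {\bf RK45}, {\bf RDist} to eliminate $\allrefs$: given a subformula $\somerefs \phi$ with $\phi \in \lang$, first rewrite $\phi$ into alternating disjunctive normal form using the existence result stated just before the definition of \axiomAamlKFF{}, then distribute $\somerefs$ over $\lor$, use {\bf RDist} to split $\somerefs$ over the conjuncts $\covers_\agentA \Gamma_\agentA$, apply {\bf RK45} to replace each $\somerefs \covers_\agentA \Gamma_\agentA$ with $\bigwedge_{\gamma \in \Gamma_\agentA} \possible[\agentA] \somerefs \gamma$, and recurse on the smaller subformulae $\somerefs \gamma$. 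The two layers interleave: whenever translation exposes a new $\allrefs$ or $\allacts{\cdot}$ operator, we recurse.

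The main technical obstacle, as usual for combined reduction systems of this shape, is defining a well-founded termination measure. A suitable measure has to decrease strictly under each reduction step, accounting for the fact that {\bf AK} moves $\allacts{\pointedActionModel{\actionStatesT}}$ past a $\necessary[\agentA]$ while replacing $\actionStatesT$ by $\actionStatesT \actionAccessibilityAgent{\agentA}$, and that {\bf RK45} strips one $\covers_\agentA$ but requires prior rewriting into alternating disjunctive normal form. I would use a lexicographic measure on tuples combining (i) the number of $\allrefs$ operators, (ii) the modal depth inside the scope of the outermost $\allrefs$, (iii) the size of the action model in the outermost $\allacts{\cdot}$, and (iv) the overall formula size, verifying that each reduction strictly decreases this measure. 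The alternating normal form step must be shown not to increase the relevant components; this is the place where the absence of seriality in \classKFF{} matters, since we must permit empty $\Gamma_\agentA$ (as noted in the remark following the definition of \axiomAamlKFF{}).

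Once the translation $t$ is shown to satisfy $\proves_{\axiomAamlKFF} \phi \iff t(\phi)$ with $t(\phi) \in \lang$, completeness of \axiomAamlKFF{} for \logicAamlKFF{} follows: if $\entails_\classKFF \phi$, then $\entails_\classKFF t(\phi)$ by soundness of $t$ and semantic equivalence, hence $\proves_\axiomKFF t(\phi)$ by completeness of \axiomKFF{}, so $\proves_\axiomAamlKFF t(\phi)$, and finally $\proves_\axiomAamlKFF \phi$ by the provable equivalence. As a byproduct, the translation also yields expressive equivalence of \logicAamlKFF{} and \logicKFF{}, which will be used later in the paper.
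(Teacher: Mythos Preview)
Your proposal is correct and matches the paper's approach: the paper simply states that the result follows from ``similar reasoning to the same results in \logicAamlK{}'', which in turn combines the provably correct translations for \logicAmlKFF{} and \logicRmlKFF{} exactly as you describe. Your discussion of the termination measure goes beyond what the paper spells out, but is a faithful elaboration of the underlying argument rather than a different method.
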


  \begin{proposition}
      The logic \logicAamlKFF{} is expressively equivalent to the logic \logicKFF{}.
  \end{proposition}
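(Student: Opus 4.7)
The plan is to adapt the argument Hales used for \logicAamlK{}, combining the individual expressive equivalence results for \logicAmlKFF{} and \logicRmlKFF{}. One direction is immediate: \lang{} is a sublanguage of \langAaml{}, so every \logicKFF{}-formula is trivially a \logicAamlKFF{}-formula. For the converse direction I will exhibit a translation $t : \langAaml \to \lang$ and show $\axiomAamlKFF \proves \phi \iff t(\phi)$ for every $\phi \in \langAaml$, which together with soundness of \axiomAamlKFF{} gives the semantic equivalence.

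For the $\allacts{\pointedActionModel{\actionStatesT}}$ operator I reuse the treatment from \logicAmlKFF{}: the reduction axioms {\bf AS}, {\bf AU}, {\bf AP}, {\bf AN}, {\bf AC} and {\bf AK} of \axiomAmlK{} are inherited by \axiomAamlKFF{} and suffice to eliminate any innermost $\allacts{}$ occurrence whose scope is purely static. For the $\allrefs$ operator I imitate the corresponding treatment in \logicRmlKFF{}. Given $\somerefs \phi$ with $\phi \in \lang$, I first use the proposition that every \lang-formula is equivalent in \logicKFF{} to one in alternating disjunctive normal form. Then I push $\somerefs$ through the disjunctions (using {\bf R} and {\bf NecR}), absorb the propositional kernel via {\bf RP}, distribute across the agent-indexed conjunction of covers using {\bf RDist}, and finally drive $\somerefs$ inside each $\covers_\agentA \Gamma_\agentA$ using {\bf RK45}, invoking the inductive hypothesis on the formulae of $\Gamma_\agentA$.

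The main obstacle is handling mixed nesting of $\allacts$ and $\allrefs$ and proving termination of the combined translation. The natural strategy is innermost-first elimination: locate a dynamic operator whose scope is already static, apply the relevant single-operator reduction, and recurse. The subtlety is that the $\allrefs$-reduction first normalises its scope to \textbf{adnf} and therefore does not decrease naive formula size, so I would adopt the same lexicographic termination measure used by Hales~\cite{hales2013} for \logicAamlK{}, prioritising the count of innermost dynamic operators over the combined size of their scopes and the action models they mention. The fact that the $\classKFF{}$-preserving execution result (Proposition~\ref{aml-kff-domain}) keeps us inside the class under every $\allacts{\pointedActionModel{\actionStatesT}}$ step, and that {\bf RK45} is the \classKFF{}-appropriate replacement for {\bf RK}, ensures each reduction step remains sound.

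Once termination is established, each reduction step is an \axiomAamlKFF{}-theorem by construction, so $\axiomAamlKFF \proves \phi \iff t(\phi)$ follows by induction on the termination measure. Since $t(\phi) \in \lang$, expressive equivalence of \logicAamlKFF{} with \logicKFF{} is immediate; completeness of \axiomAamlKFF{} (the preceding proposition) then follows as a corollary in the standard way, reducing any consistent \langAaml-formula to a consistent \lang-formula and invoking completeness of \axiomKFF{}.
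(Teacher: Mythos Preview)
Your proposal is correct and is exactly the argument the paper intends: the paper's own proof is the single sentence ``These results follow from similar reasoning to the same results in \logicAamlK{}'', i.e.\ combine the \axiomAmlKFF{} reduction axioms with the \axiomRmlKFF{} reduction (via alternating disjunctive normal form and {\bf RK45}) into a provably correct translation, precisely as Hales~\cite{hales2013} does for \classK{}. You have simply spelled out the details the paper leaves implicit.
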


  These results follow from similar reasoning to the same results in \logicAamlK{}.

  \begin{definition}[Explicit formulae]\label{explicit}
      Let $\pi \in \langP$ be a propositional formula,
      let $\gamma^0 \in \lang$ be a modal formula
      and for every $\agentA \in \agents$ 
      let $\Gamma_\agentA \subseteq \lang$ 
      be a finite set of formulae such that $\gamma^0 \in \Gamma_\agentA$.
      Let $\Psi = \{\psi \leq \gamma \mid \agentA \in \agents, \gamma \in \Gamma_\agentA\}$
      be the set of subformulae of the formulae in each set $\Gamma_\agentA$.
      Finally let $\phi$ be a formula of the form
      $$
      \phi = \pi \land \gamma^0 \land \bigwedge_{\agentA \in \agents} \covers_\agentA \Gamma_\agentA
      $$
      Then $\phi$ is an {\em explicit formula} if and only if the following conditions hold:

      \begin{enumerate}
          \item For every $\agentA \in \agents$, $\gamma \in \Gamma_\agentA$, $\psi \in \Psi$: 
              either $\proves_\axiomS \gamma \implies \psi$ or $\proves_\axiomS \gamma \implies \neg \psi$.
          \item For every $\agentA \in \agents$, $\gamma \in \Gamma_\agentA$, $\necessary_\agentA \psi \in \Psi$: 
              $\proves_\axiomS \gamma \implies \necessary[\agentA] \psi$ if and only if 
              for every $\gamma' \in \Gamma_\agentA$: $\proves_\axiomS \gamma' \implies \psi$.
      \end{enumerate}
  \end{definition}

  \begin{proposition}
      Every formula of \lang{} is equivalent to a disjunction of explicit
      formulae under the semantics of \logicS{}.
  \end{proposition}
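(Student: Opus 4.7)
The plan is to start from the alternating disjunctive normal form result extended to \logicS{} (obtained analogously to the \logicKFF{} case noted above) and refine each disjunct so that the three conditions of Definition~\ref{explicit} hold. Every formula of \lang{} is equivalent in \logicS{} to a disjunction of formulas of the shape $\pi \land \bigwedge_{\agentA \in \agents} \covers_\agentA \Gamma_\agentA$ with each $\Gamma_\agentA$ finite; one may pad the agent set to all of $\agents$ using the \logicS{}-tautology $\covers_\agentA \{\top\}$. I would proceed by induction on modal depth, inductively assuming the elements of each $\Gamma_\agentA$ have already been rewritten as disjunctions of explicit formulas of lower modal depth.

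For condition~(1), fix a disjunct $\phi = \pi \land \bigwedge_\agentA \covers_\agentA \Gamma_\agentA$ and let $\Psi$ be the set of subformulas of the elements of the $\Gamma_\agentA$. Replace each $\gamma \in \Gamma_\agentA$ by the disjunction of all \logicS{}-consistent maximal extensions $\gamma \land \bigwedge_{\psi \in \Psi} \pm\psi$, then push these disjunctions outward through $\covers_\agentA$ using the identity $\covers_\agentA(\Gamma \cup \{\alpha \lor \beta\}) \iff \covers_\agentA(\Gamma \cup \{\alpha,\beta\}) \lor \covers_\agentA(\Gamma \cup \{\alpha\}) \lor \covers_\agentA(\Gamma \cup \{\beta\})$. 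The resulting disjuncts have $\Gamma_\agentA$ sets whose elements each decide every $\psi \in \Psi$.

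For the $\gamma^0 \in \Gamma_\agentA$ requirement, let $\gamma^0$ be the maximal extension consistent with $\phi$ at the current world; \logicS{}-reflexivity gives $\phi \implies \bigwedge_\agentA \possible[\agentA]\gamma^0$, so $\gamma^0$ may be adjoined to each $\Gamma_\agentA$ without disturbing \logicS{}-truth. Condition~(2) is then enforced by pruning: whenever some $\necessary[\agentA]\psi \in \Psi$ satisfies $\proves_\axiomS \gamma' \implies \psi$ for every $\gamma' \in \Gamma_\agentA$, the \logicS{}-validity $\covers_\agentA \Gamma_\agentA \implies \necessary[\agentA]\psi$ (which holds by transitivity and Euclideanness) means any $\gamma \in \Gamma_\agentA$ failing $\proves_\axiomS \gamma \implies \necessary[\agentA]\psi$ is \logicS{}-inconsistent with the rest of $\phi$ and may be dropped; empty or inconsistent disjuncts are discarded altogether.

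The main obstacle is showing that this iterated refinement terminates while preserving \logicS{}-equivalence and meshes correctly with the modal-depth induction. Finiteness of $\Psi$ and of the set of maximal types over $\Psi$ bounds the splitting in condition~(1); the pruning for condition~(2) monotonically shrinks each $\Gamma_\agentA$ and so stabilises after finitely many rounds. Equivalence preservation reduces to routine \logicS{} derivations using \axiomS{} together with the cover-distribution identity above, in the style of the refinement modal logic proofs of Hales, French and Davies~\cite{hales2012}.
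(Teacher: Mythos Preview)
The paper does not actually prove this proposition: immediately after the statement it simply records ``This is shown by Hales, French and Davies~\cite{hales2012}'' and moves on. There is therefore no in-paper argument to compare your attempt against.

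Your sketch is in the spirit one would expect of the cited construction: pass to a cover-based normal form, saturate each $\gamma$ to a complete $\Psi$-type and distribute through $\covers_\agentA$, use reflexivity to extract and adjoin a designated $\gamma^0$, and then eliminate disjuncts that violate condition~(2) as \logicS{}-inconsistent. One point deserves tightening. Your phrasing ``any $\gamma \in \Gamma_\agentA$ failing $\proves_\axiomS \gamma \implies \necessary[\agentA]\psi$ \dots\ may be dropped'' reads as though you remove $\gamma$ from $\Gamma_\agentA$, which is not in general equivalence-preserving; what your own argument actually shows is that the \emph{whole disjunct} is \logicS{}-inconsistent and should be discarded. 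You also only treat one direction of condition~(2) explicitly; the converse failure (some $\gamma \implies \necessary[\agentA]\psi$ while some $\gamma' \implies \neg\psi$, the latter guaranteed by condition~(1)) likewise forces the disjunct to be \logicS{}-inconsistent, by Euclideanness, and is handled by the same discard step. With these two clarifications your termination and equivalence claims go through, and the argument is essentially the one carried out in~\cite{hales2012}.
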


  This is shown by Hales, French and Davies~\cite{hales2012}.

  \begin{definition}[Axiomatisation \axiomAamlS{}]
      The axiomatisation \axiomAmlS{} is a substitution schema consisting of
      the rules and axioms of \axiomAmlS{} along with the rules and axioms
      {\bf R}, {\bf RP} and {\bf NecR} of \axiomAamlK{} and the axioms:
      $$
      \begin{array}{rl}
          {\bf RS5}  & \somerefs (\gamma^0 \land \covers_\agentA \Gamma_\agentA) \iff \somerefs \gamma^0 \land \bigwedge_{\gamma \in \Gamma_\agentA} \possible[\agentA] \somerefs \gamma\\
          {\bf RDist}  & \somerefs (\gamma^0 \land \bigwedge_{\agentA \in \agents} \covers_\agentA \Gamma_\agentA) \iff \bigwedge_{\agentA \in \agents} \somerefs (\gamma^0 \land \covers_\agentA \Gamma_\agentA)
      \end{array}
      $$
      where $\gamma^0 \land \bigwedge_{\agentA \in \agents} \somerefs \covers_\agentA \Gamma_\agentA$ is an explicit formula
      and for every $\agentA \in \agents$, $\gamma^0 \land \covers_\agentA \Gamma_\agentA$ is an explicit formula.
  \end{definition}

  \begin{proposition}
      The axiomatisation \axiomAamlS{} is sound and complete for the logic \logicAamlS{}.
  \end{proposition}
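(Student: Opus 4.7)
The plan is to follow the established template from the \axiomAamlK case and adapt it to \logicS. Specifically, I would prove soundness and completeness by exhibiting a provably correct translation $t : \langAaml \to \lang$ and appealing to the known soundness and completeness of \axiomS{} for \logicS{}.

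For soundness, I would verify each axiom and rule of \axiomAamlS individually. The rules and axioms inherited from \axiomAmlS are sound by Proposition~\ref{aml-k-soundness-completeness} (and its \logicS{} analogue), and the axioms {\bf R}, {\bf RP} and the rule {\bf NecR} are sound by the results on refinement modal logic. The only genuinely new soundness checks are for {\bf RS5} and {\bf RDist}, which are sound in \logicRmlS{} by the work of Hales, French and Davies~\cite{hales2012}; since adding the $\allacts{\cdot}$ operator does not change the interpretation of $\allrefs$, their soundness carries over to \logicAamlS{}.

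For completeness I would build the translation $t$ in the same layered way as in \cite{hales2013}. First, the reduction axioms {\bf AS}, {\bf AU}, {\bf AP}, {\bf AN}, {\bf AC}, {\bf AK} push every $\allacts{\pointedActionModel{\actionStatesT}}$ operator strictly inward, eventually eliminating it; this sub-translation is terminating by the standard complexity measure used for \axiomAmlK (ordering pairs $(\text{action-model nesting}, \text{formula size})$ lexicographically). Second, for the $\allrefs$ layer, I would use the normal form result: every formula of \lang{} is provably equivalent to a disjunction of explicit formulae under \axiomS{}. Given a formula $\allrefs \phi$ whose body $\phi$ is \allacts{}-free, I first rewrite $\phi$ into disjunctive form, distribute $\allrefs$ over disjunction using {\bf R} and the {\bf NecR} rule, and then on each explicit-formula disjunct apply {\bf RDist} to split by agent and {\bf RS5} to reduce $\somerefs$ to a Boolean combination of $\possible[\agentA] \somerefs \gamma$ applied to strictly smaller explicit subformulae. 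Iterating yields a pure \lang{}-formula $t(\phi)$ with $\proves_{\axiomAamlS} \phi \iff t(\phi)$. Completeness of \axiomAamlS{} then follows: if $\pointedModel{\stateS} \entails \phi$ for all $\pointedModel{\stateS} \in \classS$, then the same holds for $t(\phi) \in \lang$, so $\proves_{\axiomS} t(\phi)$ by completeness of \axiomS{}, and hence $\proves_{\axiomAamlS} \phi$.

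The main obstacle is establishing termination and well-definedness of the $\allrefs$-elimination step: the reduction {\bf RS5} is only syntactically applicable when its scope is an explicit formula, and rewriting into explicit form can enlarge the formula considerably, so one must design the complexity measure with care. I would follow \cite{hales2012}: order formulae by a triple consisting of $\allrefs$-depth, modal depth of the $\allrefs$-scope (which strictly decreases under {\bf RS5} since each $\possible[\agentA] \somerefs \gamma$ has $\gamma$ of strictly smaller modal depth inside the explicit-formula structure), and formula size, verifying that each reduction step strictly decreases this triple lexicographically. Once termination is in hand, the provable equivalence at each step is routine, and combining with the already-settled \allacts-elimination gives the full translation, from which both propositions follow.
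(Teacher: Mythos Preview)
Your proposal is correct and matches the paper's approach: the paper simply states that the result follows from the same reasoning as in \logicAamlK{} (combining the provably correct translation from \axiomAmlS{} with that from \axiomRmlS{} via the explicit-formula normal form), and your writeup is a faithful unpacking of precisely that argument.
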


  \begin{proposition}
      The logic \logicAamlS{} is expressively equivalent to the logic \logicS{}.
  \end{proposition}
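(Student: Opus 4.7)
The plan is to give a provably correct translation $t : \langAaml \to \lang$ using the axiomatisation \axiomAamlS{}, mirroring the arguments the excerpt sketches for \logicAamlK{} and \logicAamlKFF{}. Since $\lang \subseteq \langAaml$, only the non-trivial direction requires work: that every $\phi \in \langAaml$ is equivalent in \logicS{} to some $t(\phi) \in \lang$. Soundness of \axiomAamlS{} (the preceding proposition) then turns provable equivalence $\proves \phi \iff t(\phi)$ into semantic equivalence in \logicS{}, which yields the expressive equivalence.

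I would define $t$ by well-founded recursion on a measure that lexicographically combines the number of dynamic operators ($\allacts{\cdot}$ and $\allrefs$) with the modal depth of their bodies. On atoms, Boolean connectives, and $\necessary[\agentA]$, $t$ commutes with the connective. For $\allacts{\pointedActionModel{\actionStatesT}} \phi$ I would apply {\bf AU} to reduce to the singleton case, then use {\bf AP}, {\bf AN}, {\bf AC}, {\bf AK} of \axiomAmlS{} as rewrite rules to push the action-model operator through $\phi$ until it stands over an atom or is eliminated, exactly as in the \logicAmlK{} translation; compositional products are handled by {\bf AS}. By the time the outer $\allrefs$ is reached, all $\allacts{\cdot}$ inside have been removed, so its body already belongs to \lang{}.

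The refinement operator is the heart of the argument. For $\allrefs \phi$ with $\phi \in \lang$, I would first invoke the S5 normal-form result stated just before Definition~\ref{explicit} to rewrite $\phi$ as a disjunction of explicit formulae. Since $\somerefs$ commutes with disjunction (an easy consequence of {\bf R} and {\bf NecR}), it suffices to translate $\somerefs \psi$ for each explicit disjunct $\psi = \pi \land \gamma^0 \land \bigwedge_{\agentA \in \agents} \covers_\agentA \Gamma_\agentA$. I would split off $\pi$ using {\bf RP}, apply {\bf RDist} to distribute $\somerefs$ across agents, then {\bf RS5} to replace each $\somerefs(\gamma^0 \land \covers_\agentA \Gamma_\agentA)$ by $\somerefs \gamma^0 \land \bigwedge_{\gamma \in \Gamma_\agentA} \possible[\agentA] \somerefs \gamma$, and recurse on the strictly simpler bodies $\gamma^0$ and each $\gamma$.

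The main obstacle is verifying that the side conditions on {\bf RS5} and {\bf RDist} are preserved by the recursion: each subformula reached by the reduction must itself be (or be rewritable as a disjunction of) explicit formulae so that the axioms can be reapplied, and the chosen measure must strictly decrease at every step. Both points are inherited from the corresponding translation for \logicRmlS{} established by Hales, French and Davies~\cite{hales2012}; the only new ingredient here is that action-model operators are eliminated first so that the refinement reduction operates on purely modal bodies. Provable correctness of $t$ together with soundness of \axiomAamlS{} then delivers the expressive equivalence.
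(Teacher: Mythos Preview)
Your proposal is correct and matches the paper's approach: the paper's proof is the one-line remark that the result follows from the same reasoning as for \logicAamlK{}, namely combining the provably correct reduction of action-model operators (via the \axiomAmlS{} axioms) with the provably correct reduction of refinement quantifiers (via the explicit-formula normal form and the \axiomRmlS{} axioms {\bf RS5}, {\bf RDist}), exactly as you describe. You have simply spelled out the details the paper leaves implicit.
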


  These results follow from similar reasoning to the same results in \logicAamlK{}.

  \section{Syntax}\label{syntax}

  \begin{definition}[Language of arbitrary action formula logic]
      The language \langAafl{} of arbitrary action formula logic is inductively defined as:
      $$
          \phi ::= \atomP \mid 
                 \neg \phi \mid
                 (\phi \land \phi) \mid
                 \necessary[\agentA] \phi \mid
                 \allacts{\alpha} \phi \mid
                 \allrefs \phi
      $$
      where $\atomP \in \atoms$, $\agentA \in \agents$ and 
      $\alpha \in \langAaflAct{}$, and where the language \langAaflAct{} of 
      arbitrary action formulae is inductively as:
      $$
          \alpha ::= \test{\phi} \mid
                 \alpha \choice \alpha \mid
                 \alpha \compose \alpha \mid
                 \learns_\agentsB (\alpha, \alpha)
      $$
      where $\phi \in \langAafl{}$ and $\emptyset \subset \agentsB \subseteq \agents$.
  \end{definition}

  We use all of the standard abbreviations for arbitrary action model logic, in
  addition to the abbreviations
  $\learns_\agentsB \alpha ::= \learns_\agentsB (\alpha, \alpha)$ and 
  $\learns_\agentA (\alpha, \beta) ::= \learns_{\{\agentA\}} (\alpha, \beta)$.

  We denote non-deterministic choice ($\choice$) over a finite set of action
  formula $\Delta \subseteq \langAaflAct$ by $\bigchoice \Delta$ 
  and we denote sequential execution ($\compose$) of a finite, non-empty
  sequence of action formulae $(\alpha_i)_{i=0}^{n} \in \mathbb{N}^\langAaflAct$ 
  by $\bigcompose (\alpha_i)_{i=0}^{n}$ and define them in the obvious way.

  We refer to the languages \langAfl{} of action formula logic and \langAflAct{} of action formulae, which are \langAafl{} and \langAaflAct{} respectively, both without the $\allrefs$ operator, 

  As in the action model logic~\cite{baltag2005}, the intended meaning of the
  operator $\allacts{\alpha} \phi$ is that ``$\phi$ is true in the result of
  any successful execution of the action $\alpha$''. In the following section
  we define the semantics of the action formula logic in terms of action model
  execution. For each setting of \classK{}, \classKFF{} and \classS{} we
  provide a function $\tau_\classC : \langAflAct \to \classAM$ of translating
  action formulae from \langAflAct{} into action models. The result of
  executing an action $\alpha \in \langAflAct{}$ is determined by translating
  $\alpha$ into an action model $\tau_\classC(\alpha) \in \classAM_\classC$, and then executing
  the action model in the usual way.

  In each setting we have attempted to define the translation from action
  formulae into action models in such a way that the action formulae carry an
  intuitive description of the action that is performed by the corresponding
  action model. We call the $\test{}$ operator the test operator, and describe
  the action $\test{\phi}$ as a test for $\phi$. A test is intended to restrict
  the states in which an action can successfully execute to states where the
  condition $\phi$ is true initially, but otherwise leaves the state unchanged. 
  We call the $\choice$ operator the non-deterministic choice operator, and
  describe the action $\alpha \choice \beta$ as a non-deterministic choice
  between $\alpha$ and $\beta$. We call the $\compose$ operator the sequential
  execution operator, and describe the action $\alpha \compose \beta$ as an
  execution of $\alpha$ followed by $\beta$.  Finally we call $\learns_\agentsB$
  the learning operator, and describe the action $\learns_\agentsB (\alpha,
  \beta)$ as the agents in $\agentsB$ learning that the actions $\alpha$ or
  $\beta$ occurred.  
  This action is intended to result in the agents $\agentsB$
  knowing or believing what would be true if $\alpha$ or $\beta$ were executed.
  For example, if a consequence of executing $\alpha$ is that $\phi$ is true in
  the result, then the intention is that a consequence of executing
  $\learns_\agentA (\alpha, \alpha)$ is that $\knows_\agentA \phi$ is true in
  the result.  As we will see, this property is generally true in \logicAflK{},
  however due to the extra frame conditions of \classKFF{} and \classS{} it is
  only true for some formulae $\phi$ in \logicAflKFF{} and \logicAflS{}.

  \begin{example}\label{grant-example-formula}
      If $\atomP$ stands for the proposition ``the grant application was
      successful'' then the action described in Example~\ref{grant-example}
      might be written in the form of an action formula as:
      \begin{align*}
          \alpha = &\learns_{Ed} (\test{\atomP}) \compose\\
          &\learns_{James} (\learns_{Ed} \test{\atomP} \choice \learns_{Ed} \test{\neg \atomP} \choice \learns_{Tim} \test{\atomP} \choice \learns_{Tim} \test{\neg \atomP}) \compose\\
          &\learns_{Tim} ((\test{\neg \atomP} \compose \learns_{James} \test{\neg \atomP}) \choice \test{\top})
      \end{align*}
  \end{example}

  \section{Semantics}\label{semantics}

  We now define the semantics of arbitrary action formula logic. As mentioned
  earlier, the semantics are defined by translating action formulae into action
  models.  The translation used varies in each class of \classK{}, \classKFF{}
  and \classS{} that we work in, according to the frame conditions in each
  class.  Therefore our semantics are parameterised by a function
  $\tau_\classC: \langAflAct \to \classAM$ that will vary according to the
  class of Kripke models.

  \begin{definition}[Semantics of arbitrary action formula logic]
      Let \classC{} be a class of Kripke models, let $\tau_\classC :
      \langAflAct \to \classAM$ be a function from action formulae to
      multi-pointed action models, and let $\model = \modelTuple \in \classC$
      be a Kripke model.

      Then the interpretation of $\phi \in \langAafl$ in the logic
      $\logicAaflC$ is the same as its interpretation in modal logic given in
      Definition~\ref{ml-semantics}, with the additional inductive cases:
      \begin{eqnarray*}
          \pointedModel{\stateS} \entails \allacts{\alpha} \phi &\text{ iff }& \pointedModel{\stateS} \exec \tau_\classC(\alpha) \in \classC \text{ implies } \pointedModel{\stateS} \exec \tau_\classC(\alpha) \entails \phi\\
          \pointedModel{\stateS} \entails \allrefs \phi &\text{ iff }& \text{for every } \pointedModel[\prime]{\stateS[\prime]} \in \classC \text{ such that } \pointedModel[\prime]{\stateS[\prime]} \refinement \pointedModel{\stateS}: \pointedModel[\prime]{\stateS[\prime]} \entails \phi
      \end{eqnarray*}
      where action model execution $\exec$ is as defined in Definition~\ref{aml-semantics} 
      and the refinement relation is defined in Definition~\ref{refinements}.
  \end{definition}

  We note that the semantics of arbitrary action formula logic \logicAaflC{} are very similar
  to the semantics of arbitrary action model logic \logicAamlC{}~\cite{hales2013}. 
  We generalise the semantics to the classes of \classK{}, \classKFF{} and
  \classS{} by introducing the parameterised class \classC{} and restricting
  successful updates to those that result in \classC{} models as in the
  approach of Balbiani, et al~\cite{balbiani2012}.
  The difference is that as actions are specified in \langAafl{} formulae as
  action formulae, then the semantics must first translate the action formulae
  into action models before performing action model execution. As such there is
  a semantically correct translation from \langAafl{} formulae to \langAaml{}
  formulae (by replacing occurrences of $\alpha$ with $\tau_\classC(\alpha)$),
  and any validities, axioms or results from arbitrary action model logic also
  apply in this setting if the language is restricted to action models that are
  defineable by action formulae. Therefore for the current section and the
  following sections concering the axiomatisations
  (Section~\ref{axiomatisation}) and correspondence results
  (Section~\ref{correspondence}), we will deal only with the action formula
  logic, rather than the full arbitrary action formula logic, focussing on the
  differences and correspondences between action formulae and action models,
  rather than getting distracted by the refinement quantifiers which behave
  identically between each logic. We return to the full arbitrary action
  formula logic in Section~\ref{synthesis} for the synthesis results.

  We give the following general result.

  \begin{proposition}
      Let $\classC$ be a class of Kripke models. For every $\phi \in \langAafl$
      there exists $\phi' \in \langAaml$ 
      such that for every $\pointedModel{\statesT} \in \classC$: 
      $\pointedModel{\statesT} \entails_\logicAaflC \phi$ if and only if
      $\pointedModel{\statesT} \entails_\logicAamlC \phi'$.
  \end{proposition}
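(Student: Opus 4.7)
The plan is to exhibit the desired $\phi'$ as the image of $\phi$ under a syntactic translation $t \colon \langAafl \to \langAaml$, defined by mutual recursion with the action-formula translation $\tau_\classC$ that already appears in the semantics. The translation $t$ commutes with the Boolean connectives and with $\necessary[\agentA]$ and $\allrefs$, and sends $\allacts{\alpha}\phi$ to $\allacts{\tau_\classC(\alpha)} t(\phi)$. The mutual recursion arises from the test constructor: $\tau_\classC(\test{\phi})$ builds a singleton action model whose precondition must be a $\langAaml$ formula, so we take it to be $t(\phi)$ rather than $\phi$ itself. The other action-formula constructors $\choice$, $\compose$ and $\learns_\agentsB$ combine the component action models via the constructions from the previous section. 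Both recursions decrease on a joint operator-count measure, so the definition is well-founded.

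I would then prove by induction on this joint measure that, for every $\pointedModel{\statesT} \in \classC$, $\pointedModel{\statesT} \entails_\logicAaflC \phi$ if and only if $\pointedModel{\statesT} \entails_\logicAamlC t(\phi)$. The cases for atoms, negation, conjunction, $\necessary[\agentA]$ and $\allrefs$ are immediate, since the two semantics treat these operators identically and the inductive hypothesis applies to the strictly smaller subformulae. The only substantive case is $\phi = \allacts{\alpha}\psi$. Unfolding the $\logicAaflC$ semantics gives the condition that $\pointedModel{\stateS} \exec \tau_\classC(\alpha) \in \classC$ implies $\pointedModel{\stateS} \exec \tau_\classC(\alpha) \entails_\logicAaflC \psi$; unfolding the $\logicAamlC$ semantics of $\allacts{\tau_\classC(\alpha)} t(\psi)$ gives the same condition with $t(\psi)$ and $\entails_\logicAamlC$ in place of $\psi$ and $\entails_\logicAaflC$. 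Since $\tau_\classC(\alpha)$ is literally the same multi-pointed action model on both sides, action-model execution produces the same resulting Kripke model, and the inductive hypothesis on $\psi$ closes the equivalence.

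The main obstacle is ensuring the mutual recursion is sound: each precondition sitting inside $\tau_\classC(\alpha)$ must itself have been translated to a $\langAaml$ formula, and these preconditions must then be evaluated consistently across both semantics during action-model execution. Both points are handled by the joint measure --- $t(\allacts{\alpha}\psi)$ invokes $\tau_\classC$ only on the strictly smaller $\alpha$, and $\tau_\classC(\test{\phi})$ invokes $t$ only on the strictly smaller $\phi$ --- together with the observation that $\langAaml$ formulae are interpreted identically whether parsed under $\logicAaflC$ or $\logicAamlC$, which itself falls out of the propositional, modal and $\allrefs$ cases of the same induction. Once this bookkeeping is in place, taking $\phi' = t(\phi)$ yields the claimed equivalence.
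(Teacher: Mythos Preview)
Your proposal is correct and follows the same approach the paper indicates in the text surrounding the proposition: the paper simply remarks that one obtains $\phi'$ ``by replacing occurrences of $\alpha$ with $\tau_\classC(\alpha)$'' and does not spell out a proof. Your treatment is in fact more careful than the paper's sketch, since you make explicit the mutual recursion needed to push the translation $t$ into the preconditions of $\tau_\classC(\test{\phi})$ so that the resulting action model genuinely has $\langAaml$ preconditions; the paper leaves this bookkeeping implicit.
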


  In the following subsections we will give definitions for $\tau_\classK$,
  $\tau_\classKFF$ and $\tau_\classS$. These functions vary according to the
  class of Kripke models being used. When the class is clear from context, then
  we will simply write $\tau$ instead of $\tau_\classC$.

  We begin by giving a definition of $\tau$ for translating actions involving
  non-deterministic choice and sequential execution. These definitions are
  common to all of the settings we are working in.

  \begin{definition}[Non-deterministic choice]\label{afl-choice}
      Let $\classC \in \{\classK, \classKFF, \classS\}$
      and let $\alpha, \beta \in \langAflAct$ 
      where $\tau_\classC(\alpha) = \pointedActionModel[\alpha]{\actionStatesT[\alpha]} = \pointedActionModelTuple[\alpha]{\actionStatesT[\alpha]}$
      and $\tau_\classC(\beta) = \pointedActionModel[\beta]{\actionStatesT[\beta]} = \pointedActionModelTuple[\beta]{\actionStatesT[\beta]}$
      such that $\actionStates[\alpha]$ and $\actionStates[\beta]$ are disjoint.
      We define $\tau_\classC(\alpha \choice \beta) = \pointedActionModel{\actionStatesT} = \pointedActionModelTuple{\actionStatesT}$ where:
      \begin{eqnarray*}
          \actionStates &=& \actionStates[\alpha] \cup \actionStates[\beta]\\
          \actionAccessibilityAgent{\agentA} &=& \actionAccessibilityAgent[\alpha]{\agentA} \cup \actionAccessibilityAgent[\beta]{\agentA} \text{ for } \agentA \in \agents\\
          \actionPrecondition &=& \actionPrecondition[\alpha] \cup \actionPrecondition[\beta]\\
          \actionStatesT &=& \actionStatesT[\alpha] \cup \actionStatesT[\beta]
      \end{eqnarray*}
  \end{definition}

  \begin{definition}[Sequential execution]\label{afl-sequential}
      Let $\classC \in \{\classK, \classKFF, \classS\}$,
      and let $\alpha, \beta \in \langAflAct$ where 
      $\tau_\classC(\alpha) = \pointedActionModel[\alpha]{\actionStatesT[\alpha]} = \pointedActionModelTuple[\alpha]{\actionStatesT[\alpha]}$ and
      $\tau_\classC(\beta) = \pointedActionModel[\beta]{\actionStatesT[\beta]} = \pointedActionModelTuple[\beta]{\actionStatesT[\beta]}$.
      We define $\tau_\classC(\alpha \compose \beta) = \pointedActionModel[\alpha]{\actionStatesT[\alpha]} \exec \pointedActionModel[\beta]{\actionStatesT[\beta]}$.
  \end{definition}

  We give some properties of non-deterministic choice and sequential execution
  of action formulae.

  \begin{proposition}\label{afl-choice-sequential-validities}
      Let $\alpha, \beta, \gamma \in \langAflAct$ and $\phi \in \langAfl$. Then the following are valid in \logicAflK{}, \logicAflKFF{} and \logicAflS{}:
      \begin{eqnarray*}
          &&\entails \allacts{\alpha \choice \beta} \phi \iff (\allacts{\alpha} \phi \land \allacts{\beta} \phi) \label{afl-axiom-choice}\\
          &&\entails \allacts{\alpha \compose \beta} \phi \iff \allacts{\alpha} \allacts{\beta} \phi \label{afl-axiom-sequential}\\
          &&\entails \allacts{\alpha \choice \alpha} \phi \iff \allacts{\alpha} \phi\\
          &&\entails \allacts{\alpha \choice \beta} \phi \iff \allacts{\beta \choice \alpha} \phi\\
          &&\entails \allacts{(\alpha \choice \beta) \choice \gamma} \phi \iff \allacts{\alpha \choice (\beta \choice \gamma)} \phi\\
          &&\entails \allacts{(\alpha \compose \beta) \compose \gamma} \phi \iff \allacts{\alpha \compose (\beta \compose \gamma)} \phi\\
          &&\entails \allacts{(\alpha \choice \beta) \compose \gamma} \phi \iff \allacts{(\alpha \compose \gamma) \choice (\beta \compose \gamma)} \phi
      \end{eqnarray*}
  \end{proposition}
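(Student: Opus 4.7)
The plan is to reduce each validity to a statement about the corresponding translated action models and invoke either the axioms of $\axiomAmlK$ (which are sound in all three settings) or a direct bisimulation argument. By Definition~\ref{afl-choice}, $\tau_\classC(\alpha \choice \beta)$ is literally the disjoint union of $\tau_\classC(\alpha)$ and $\tau_\classC(\beta)$, with designated points $\actionStatesT[\alpha] \cup \actionStatesT[\beta]$, and by Definition~\ref{afl-sequential}, $\tau_\classC(\alpha \compose \beta)$ equals the sequential composition $\pointedActionModel[\alpha]{\actionStatesT[\alpha]} \exec \pointedActionModel[\beta]{\actionStatesT[\beta]}$. These two identities are what drive every case.

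First I would prove the two ``reduction'' validities. For $\allacts{\alpha \choice \beta}\phi \iff \allacts{\alpha}\phi \land \allacts{\beta}\phi$, note that any designated point $\actionStateT$ of $\tau_\classC(\alpha \choice \beta)$ lies in exactly one of $\actionStates[\alpha], \actionStates[\beta]$; since the accessibility relations and preconditions are just the disjoint unions, the single-pointed model $(\tau_\classC(\alpha \choice \beta), \actionStateT)$ is bisimilar to the corresponding single-pointed model in $\tau_\classC(\alpha)$ or $\tau_\classC(\beta)$. Applying the multi-pointed axiom {\bf AU} (which is sound under the translation semantics) to split over $\actionStatesT[\alpha] \cup \actionStatesT[\beta]$ and using the bisimilarity of each singly-pointed component then gives the result. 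For $\allacts{\alpha \compose \beta}\phi \iff \allacts{\alpha}\allacts{\beta}\phi$, the definition of $\tau_\classC$ and the soundness of {\bf AS} (applied to the translated action models) give the equivalence directly.

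The remaining five validities then follow easily. Idempotence $\allacts{\alpha \choice \alpha}\phi \iff \allacts{\alpha}\phi$: after taking the fresh disjoint copies required by Definition~\ref{afl-choice}, the resulting action model is bisimilar to $\tau_\classC(\alpha)$ via the obvious relation identifying each copy with the original. Commutativity $\allacts{\alpha \choice \beta}\phi \iff \allacts{\beta \choice \alpha}\phi$ and associativity of $\choice$ are immediate from the symmetry and associativity of disjoint union. Associativity of $\compose$ follows from the already-stated proposition that $(\pointedModel{\stateS} \exec \pointedActionModel{\actionStateS}) \exec \pointedActionModel[\prime]{\actionStateS[\prime]}$ is bisimilar to $\pointedModel{\stateS} \exec (\pointedActionModel{\actionStateS} \exec \pointedActionModel[\prime]{\actionStateS[\prime]})$. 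For distributivity $\allacts{(\alpha \choice \beta) \compose \gamma}\phi \iff \allacts{(\alpha \compose \gamma) \choice (\beta \compose \gamma)}\phi$, the key observation is that sequential composition of action models distributes over disjoint union on the left: $(\pointedActionModel[\alpha]{} \sqcup \pointedActionModel[\beta]{}) \exec \pointedActionModel[\gamma]{}$ is bisimilar to $(\pointedActionModel[\alpha]{} \exec \pointedActionModel[\gamma]{}) \sqcup (\pointedActionModel[\beta]{} \exec \pointedActionModel[\gamma]{})$, with the bisimulation pairing $((\actionStateS, \actionStateS[\gamma]))$ on the left with its counterpart in the appropriate summand on the right.

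The main obstacle is really only bookkeeping: several of these claims rest on semantic bisimulation arguments for the translated action models rather than on axioms, and one must take care that the translation semantics interacts correctly with the classes \classKFF{} and \classS{}. However since $\tau_\classC$ is defined uniformly on $\choice$ and $\compose$ across the three settings, and since the class constraints only filter out updates whose result leaves \classC{} (which happens symmetrically on both sides of each equivalence once bisimilarity between the translated action models is established), no setting-specific reasoning is needed: each of the seven validities holds in \logicAflK{}, \logicAflKFF{} and \logicAflS{} by the same argument.
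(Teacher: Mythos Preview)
Your proposal is correct and is essentially a detailed unpacking of what the paper dismisses in one line: ``These validities follow trivially from the semantics of \logicAaflC{} and Definitions~\ref{afl-choice} and~\ref{afl-sequential}.'' The paper gives no further argument, so there is no meaningful difference in approach to compare; you have simply supplied the routine bisimulation and disjoint-union bookkeeping that the authors leave implicit.
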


  These validities follow trivially from the semantics of \logicAaflC{} and Definitions~\ref{afl-choice} and~\ref{afl-sequential}.

  In the following subsections we give definitions of $\tau_\classC$ for
  translating action formulae involving tests and learning in the settings of
  \classK{}, \classKFF{} and \classS{}. We note that in each subsection the
  constructions of action models used to define tests and learning closely
  resemble the constructions of refinements used to show the soundness of
  axioms in refinement modal logic~\cite{bozzelli2012a,hales2012}.

  \subsection{\classK{}}

  \begin{definition}[Test]\label{afl-k-test}
      Let $\phi \in \langAfl$. 
      We define $\tau(\test{\phi}) = \pointedActionModel{\actionStatesT} = \pointedActionModelTuple{\actionStatesT}$ where:
      \begin{eqnarray*}
          \actionStates &=& \{\actionStateTest, \actionStateSkip\}\\
          \actionAccessibilityAgent{\agentA} &=& \{(\actionStateTest, \actionStateSkip), (\actionStateSkip, \actionStateSkip)\} \text{ for } \agentA \in \agents\\
          \actionPrecondition &=& \{(\actionStateTest, \phi), (\actionStateSkip, \top)\}\\
          \actionStatesT &=& \{\actionStateTest\}
      \end{eqnarray*}
  \end{definition}

  \begin{definition}[Learning]\label{afl-k-learning}
      Let $\alpha \in \langAflAct$ where 
      $\tau(\alpha) = \pointedActionModel[\alpha]{\actionStatesT[\alpha]} = \pointedActionModelTuple[\alpha]{\actionStatesT[\alpha]}$.
      Let $\actionStateTest$ and $\actionStateSkip$ be new states not appearing in $\actionStates[\alpha]$.
      We define $\tau(\learns_\agentsB (\alpha, \alpha)) = \pointedActionModel{\actionStatesT} = \pointedActionModelTuple{\actionStatesT}$ where:
      \begin{eqnarray*}
          \actionStates &=& \actionStates[\alpha] \cup \{\actionStateTest, \actionStateSkip\}\\
          \actionAccessibilityAgent{\agentA} &=& \actionAccessibilityAgent[\alpha]{\agentA} \cup \{(\actionStateSkip, \actionStateSkip)\} \cup \{(\actionStateTest, \actionStateT[\alpha]) \mid \actionStateT[\alpha] \in \actionStatesT[\alpha]\} \text{ for } \agentA \in \agentsB\\
          \actionAccessibilityAgent{\agentA} &=& \actionAccessibilityAgent[\alpha]{\agentA} \cup \{(\actionStateTest, \actionStateSkip), (\actionStateSkip, \actionStateSkip)\} \text{ for } \agentA \notin \agentsB\\
          \actionPrecondition &=& \actionPrecondition[\alpha] \cup \{(\actionStateTest, \top), (\actionStateSkip, \top)\}\\
          \actionStatesT &=& \{\actionStateTest\}
      \end{eqnarray*}

      We define $\tau(\learns_\agentsB (\alpha, \beta)) = \tau(\learns_\agentsB (\alpha \choice \beta, \alpha \choice \beta))$.
  \end{definition}

  We note that the syntax of action formula logic defines the learning operator
  as a binary operator that can be applied to two different action formulae,
  however in the setting of \classK{} and \classKFF{} we only give a direct
  definition of $\tau$ for actions of the form $\learns_\agentsB (\alpha, \alpha)$
  and define the more general case in terms of this. Intuitively 
  $\learns_\agentsB (\alpha, \beta)$ is intended to represent an action where
  the agents in $\agentsB$ learn that $\alpha$ or $\beta$ have occurred (i.e.
  that $\alpha \choice \beta$ has occurred). The setting of \classS{}
  corresponds to a notion of {\em knowledge}, where anything that an agent {\em
  knows} must be true, and therefore anything that an agent {\em learns} must
  also be true. So in an action where agents learn that $\alpha$ or $\beta$
  have occurred, one of those actions must have actually occurred. Therefore in
  \classS{} we describe the action $\learns_\agentsB (\alpha, \beta)$ as the
  agents in $\agentsB$ learning that $\alpha$ or $\beta$ have occurred, when in
  reality $\alpha$ has actually occurred. On the other hand, the settings of
  \classK{} and \classKFF{} correspond more closely to a notion of {\em
  belief}, where there is no requirement that what an agent {\em believes} is
  true. So in an action where agents learn that $\alpha$ or $\beta$ have
  occurred, neither of these actions must actually have occurred. Therefore in
  the settings of \classK{} and \classKFF{} we make no distinction between
  $\alpha$ and $\beta$ in a description of the action $\learns_\agentsB
  (\alpha, \beta)$, hence the definition of $\tau$ given in these settings.

  \subsection{\classKFF{}}

  \begin{definition}[Test]\label{afl-kff-test}
      Let $\phi \in \langAfl$. We define $\tau(\test{\phi})$ as in
      Definition~\ref{afl-k-test} for \classK{}.
  \end{definition}

  \begin{definition}[Learning]\label{afl-kff-learning}
      Let $\alpha \in \langAflAct$ where 
      $\tau(\alpha) = \pointedActionModel[\alpha]{\actionStatesT[\alpha]} = \pointedActionModelTuple[\alpha]{\actionStatesT[\alpha]}$.
      Let $\actionStateTest$ and $\actionStateSkip$ be new states not appearing in $\actionStates[\alpha]$.
      For every $\actionStateT[\alpha] \in \actionStatesT[\alpha]$ let $\proxyStateT[\alpha]$ be a new state not appearing in $\actionStates[\alpha]$. 
      We call each $\proxyStateT[\alpha]$ a {\em proxy state} for $\actionStateT[\alpha]$.
      We define $\tau(\learns_\agentsB (\alpha, \alpha)) = \pointedActionModel{\actionStatesT} = \pointedActionModelTuple{\actionStatesT}$ where:
      \begin{eqnarray*}
          \actionStates &=& \actionStates[\alpha] \cup \{\actionStateTest, \actionStateSkip\} \cup \{\proxyStateT[\alpha] \mid \actionStateT[\alpha] \in \actionStatesT[\alpha]\}\\
          \actionAccessibilityAgent{\agentA} &=& \actionAccessibilityAgent[\alpha]{\agentA} \cup \{(\actionStateSkip, \actionStateSkip)\} \cup \{(\actionStateTest, \proxyStateT[\alpha]) \mid \actionStateT[\alpha] \in \actionStatesT[\alpha]\} \cup \\&& \quad \{(\proxyStateT[\alpha], \proxyStateU[\alpha]) \mid \actionStateT[\alpha], \actionStateU[\alpha] \in \actionStatesT[\alpha]\} \text{ for } \agentA \in \agentsB\\
          \actionAccessibilityAgent{\agentA} &=& \actionAccessibilityAgent[\alpha]{\agentA} \cup \{(\actionStateTest, \actionStateSkip), (\actionStateSkip, \actionStateSkip)\} \cup\\&& \quad \{(\proxyStateT[\alpha], \actionStateU[\alpha]) \mid \actionStateT[\alpha] \in \actionStatesT[\alpha], \actionStateU[\alpha] \in \actionStateT[\alpha] \actionAccessibilityAgent[\alpha]{\agentA} \} \text{ for } \agentA \notin \agentsB\\
          \actionPrecondition &=& \actionPrecondition[\alpha] \cup \{(\actionStateTest, \top), (\actionStateSkip, \top)\} \cup \{(\proxyStateT[\alpha], \actionPrecondition[\alpha](\actionStateT[\alpha])) \mid \actionStateT[\alpha] \in \actionStatesT[\alpha]\}\\
          \actionStatesT &=& \{\actionStateTest\}
      \end{eqnarray*}

      As in Definition~\ref{afl-k-learning}, we define $\tau(\learns_\agentsB (\alpha, \beta)) = \tau(\learns_\agentsB (\alpha \choice \beta, \alpha \choice \beta))$.
  \end{definition}

  \begin{lemma}\label{afl-kff-structure}
      Let $\alpha \in \langAflAct$. Then $\tau(\alpha) \in \classAM_\classKFF$.
  \end{lemma}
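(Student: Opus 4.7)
The proof goes by structural induction on $\alpha \in \langAflAct$, with the four cases $\test{\phi}$, $\alpha \choice \beta$, $\alpha \compose \beta$, and $\learns_\agentsB(\alpha, \beta)$. Throughout I take $\tau = \tau_\classKFF$ and use the inductive hypothesis that the translations of proper subformulae already lie in $\classAM_\classKFF$.

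For the base case $\alpha = \test{\phi}$, Definition~\ref{afl-kff-test} reuses the \classK{} construction of Definition~\ref{afl-k-test}, giving $\actionAccessibilityAgent{\agentA} = \{(\actionStateTest, \actionStateSkip), (\actionStateSkip, \actionStateSkip)\}$ for every $\agentA \in \agents$, which is transitive and Euclidean by direct inspection. For $\alpha \choice \beta$ (Definition~\ref{afl-choice}), $\tau(\alpha \choice \beta)$ is the disjoint union of $\tau(\alpha)$ and $\tau(\beta)$ with no edges between components, so each frame condition is inherited componentwise. For $\alpha \compose \beta$ (Definition~\ref{afl-sequential}), $\tau(\alpha \compose \beta) = \tau(\alpha) \exec \tau(\beta)$ has the product accessibility where $((\actionStateS, \actionStateS'), (\actionStateT, \actionStateT'))$ is related iff both coordinates are; since the product of two transitive (resp.\ Euclidean) relations is transitive (resp.\ Euclidean), the conclusion follows.

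The main case is learning. Because $\tau(\learns_\agentsB(\alpha, \beta))$ is defined as $\tau(\learns_\agentsB(\alpha \choice \beta, \alpha \choice \beta))$ and the choice case is already handled, it suffices to show the claim for $\tau(\learns_\agentsB(\gamma, \gamma))$ assuming $\tau(\gamma) = \pointedActionModel[\gamma]{\actionStatesT[\gamma]} \in \classAM_\classKFF$. No state in $\actionStates[\gamma]$ acquires a new outgoing edge, so all pairs of edges internal to $\actionStates[\gamma]$ are covered by the inductive hypothesis and one only needs to check the new edges. For $\agentA \in \agentsB$ the additions are the self-loop at $\actionStateSkip$, the edges from $\actionStateTest$ to every proxy of a designated point, and the full clique on the proxies. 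Transitivity holds because any composable pair of new edges either stays inside the proxy clique or leaves $\actionStateTest$ and lands in the clique, with the required shortcut already present; Euclideanness holds because the successor set of $\actionStateTest$ and of each proxy is exactly the (complete) proxy clique, while $\actionStateSkip$ has only a self-successor. For $\agentA \notin \agentsB$ the additions are $(\actionStateTest, \actionStateSkip)$, $(\actionStateSkip, \actionStateSkip)$, and $(\proxyStateT[\gamma], \actionStateU[\gamma])$ for every $\actionStateU[\gamma] \in \actionStateT[\gamma] \actionAccessibilityAgent[\gamma]{\agentA}$; here transitivity of the new proxy-to-$\actionStates[\gamma]$ edges reduces to transitivity of $\actionAccessibilityAgent[\gamma]{\agentA}$, and Euclideanness at a proxy reduces to Euclideanness of $\actionAccessibilityAgent[\gamma]{\agentA}$ applied to pairs of $\actionAccessibilityAgent[\gamma]{\agentA}$-successors of $\actionStateT[\gamma]$.

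The main obstacle is this last sub-case, $\agentA \notin \agentsB$, where the argument is not purely combinatorial on the new states but depends on pushing transitivity and Euclideanness through the inductive hypothesis on $\actionAccessibilityAgent[\gamma]{\agentA}$. The verification hinges on the precise way each proxy $\proxyStateT[\gamma]$ inherits exactly the $\actionAccessibilityAgent[\gamma]{\agentA}$-successors of its original $\actionStateT[\gamma]$, which is what allows the two frame conditions to transfer without introducing new cross-cases.
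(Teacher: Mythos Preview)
The paper states this lemma without proof, treating it as a routine verification; your structural-induction argument is correct and is exactly the natural way to establish it. Your case analysis matches the definitions precisely, and the key observation you isolate---that in the learning construction no state of $\actionStates[\gamma]$ gains a new outgoing edge, so the frame conditions at the new states reduce either to the clique structure (for $\agentA \in \agentsB$) or to the inductive hypothesis on $\actionAccessibilityAgent[\gamma]{\agentA}$ (for $\agentA \notin \agentsB$)---is the right way to organise the verification.
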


  \begin{lemma}\label{afl-kff-exec}
      Let $\alpha \in \langAflAct$ and 
      let $\pointedModel{\statesT} \in \classKFF$.
      Then $\pointedModel{\statesT} \exec \tau(\alpha) \in \classKFF$.
  \end{lemma}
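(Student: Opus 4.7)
The plan is to combine the two preceding results. By Lemma~\ref{afl-kff-structure}, $\tau(\alpha) \in \classAM_\classKFF$, so the action model translating $\alpha$ has transitive and Euclidean accessibility relations. Since $\pointedModel{\statesT} \in \classKFF \subseteq \classK$ means in particular that the underlying Kripke model lies in $\classK$, Proposition~\ref{aml-kff-domain} immediately yields that the underlying product model is in $\classKFF$. The multi-pointed execution $\pointedModel{\statesT} \exec \tau(\alpha)$ has the same underlying Kripke model as this product by the definition of multi-pointed action model execution, so it too lies in $\classKFF$.

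Because the substantive content has been absorbed into Lemma~\ref{afl-kff-structure} and Proposition~\ref{aml-kff-domain}, no induction on $\alpha$ is required in this proof. The genuine difficulty lay in the preceding lemma, which verifies by structural induction on $\alpha$ that each construction used in Definitions~\ref{afl-choice},~\ref{afl-sequential},~\ref{afl-kff-test} and~\ref{afl-kff-learning} preserves transitivity and Euclideanness; in particular, the proxy states $\proxyStateT[\alpha]$ introduced in the learning construction are arranged precisely so that the relations on $\agentsB$-edges remain transitive and Euclidean (the agents in $\agentsB$ access all proxies from $\actionStateTest$ and the proxies form a clique under $\actionAccessibilityAgent{\agentA}$), while for $\agentA \notin \agentsB$ the proxies inherit the $\actionAccessibilityAgent[\alpha]{\agentA}$-successors of the associated designated points, preserving the $\classKFF$-properties of $\tau(\alpha)$ itself. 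Given that groundwork, the present lemma reduces to a one-line appeal to the general preservation result of Proposition~\ref{aml-kff-domain}, with no remaining obstacle.
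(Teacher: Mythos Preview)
Your proposal is correct and matches what the paper intends: the lemma is stated in the paper without proof, placed immediately after Lemma~\ref{afl-kff-structure}, precisely because it is meant to be read as the direct consequence of combining that lemma with Proposition~\ref{aml-kff-domain}. Your second paragraph correctly locates the substantive work in the preceding lemma, but is not needed for the present argument.
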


  We note that the definition for $\tau$ given here varies considerably from
  the definition given in the setting of \classK{} due to the presence of the
  proxy states.  The proxy states are introduced due to the additional frame
  constraints in \classKFF{} and the desire that the action models constructed
  by $\tau$ be $\classAM_\classKFF$ action models. In constructing
  $\tau(\learns_\agentsB \alpha)$ we wish to construct an action model with a
  root state whose $\agentsB$-successors are the root states of $\tau(\alpha)$,
  so that the result of executing the action $\learns_\agentsB \alpha$ is that
  the agents $\agentsB$ believe that the action $\alpha$ has occurred. However
  in order for this construction to result in a $\classAM_\classKFF$ action
  model, we must take the transitive, Euclidean closure of the
  $\agentsB$-successors of the root state. If we were to perform a construction
  similar to that used in the setting of \classK{} where proxy states are not
  used, then this would mean that the for every $\agentB \in \agentsB$, the
  $\agentB$-successors of the root state would include all of the
  $\agentB$-successors of the root states, and not just the root states
  themselves. To show why this is not desireable, consider the simple example
  of the action $\learns_\agentA \test{\phi}$. The intention is that this
  action represents a private announcement to $\agentA$ that $\phi$ is true, as
  it is in the setting of \classK{}. Without using proxy states, if we wanted to
  include the state $\actionStateTest$ in the $\agentA$-successors of the root
  state of $\tau(\alpha)$ then in order to construct a $\classAM_\classKFF$
  action model we would need to take the transitive, Euclidean closure of the
  $\agentA$-successors of $\actionStateTest$. As $\actionStateSkip$ is an
  $\agentA$-successor of $\actionStateTest$ in the action $\test{\phi}$, then
  this would mean that $\agentA$ would not be able to distinguish between the
  actions states $\actionStateTest$ and $\actionStateSkip$ and so the result of
  executing $\tau(\alpha)$ would be that $\agentA$ learns nothing. With the
  construction provided, the action $\learns_\agentA \test{\phi}$ gives the
  desired result that $\agentA$ learns that $\phi$ is true.

  We also note that the results presented in this paper for \classKFF{} can be
  extended to \classKD{} by modifying Definition~\ref{afl-kff-learning} so that
  $\actionPrecondition(\actionStateTest) = \bigwedge_{\agentA \in \agentsB}
  \bigvee_{\actionStateT[\alpha] \in \actionStatesT[\alpha]} \possible[\agentA]
  \actionPrecondition[\alpha](\actionStateT[\alpha])$, which guarantees that
  the result of successfully executing an action formula has the seriality
  property of \classKD{}.

  \subsection{\classS{}}

  \begin{definition}[Test]\label{afl-s-test}
      Let $\phi \in \langAfl$. 
      We define $\tau(\test{\phi}) = \pointedActionModel{\actionStatesT} = \pointedActionModelTuple{\actionStatesT}$ where:
      \begin{eqnarray*}
          \actionStates &=& \{\actionStateTest, \actionStateSkip\}\\
          \actionAccessibilityAgent{\agentA} &=& \actionStates^2 \text{ for } \agentA \in \agents\\
          \actionPrecondition &=& \{(\actionStateTest, \phi), (\actionStateSkip, \top)\}\\
          \actionStatesT &=& \{\actionStateTest\}
      \end{eqnarray*}
  \end{definition}

  \begin{definition}[Learning]\label{afl-s-learning}
      Let $\alpha, \beta \in \langAflAct$ where 
      $\tau(\alpha) = \pointedActionModel[\alpha]{\actionStatesT[\alpha]} = \pointedActionModelTuple[\alpha]{\actionStatesT[\alpha]}$ and
      $\tau(\beta) = \pointedActionModel[\beta]{\actionStatesT[\beta]} = \pointedActionModelTuple[\beta]{\actionStatesT[\beta]}$.
      For every $\actionStateT \in \actionStatesT[\alpha] \cup \actionStatesT[\beta]$ let $\proxyStateT$ be a new state not appearing in $\actionStates[\alpha] \cup \actionStates[\beta]$.
      We define $\tau(\learns_\agentsB (\alpha, \beta)) = \pointedActionModel{\actionStatesT} = \pointedActionModelTuple{\actionStatesT}$ where:
      \begin{eqnarray*}
          \actionStates &=& \actionStates[\alpha] \cup \actionStates[\beta] \cup \{\proxyStateT \mid \actionStateT \in \actionStatesT[\alpha] \cup \actionStatesT[\beta]\}\\
          \actionAccessibilityAgent{\agentA} &=& \actionAccessibilityAgent[\alpha]{\agentA} \cup \actionAccessibilityAgent[\beta]{\agentA} \cup \{(\proxyStateT, \proxyStateU) \mid \actionStateT, \actionStateU \in \actionStatesT[\alpha] \cup \actionStatesT[\beta]\} \text{ for } \agentA \in \agentsB\\
          \actionAccessibilityAgent{\agentA} &=& \actionAccessibilityAgent[\alpha]{\agentA} \cup \actionAccessibilityAgent[\beta]{\agentA} \cup \bigcup_{\actionStateT \in \actionStatesT[\alpha] \cup \actionStatesT[\beta]} (\{\proxyStateT\} \cup \actionStateT (\actionAccessibilityAgent[\alpha]{\agentA} \cup \actionAccessibilityAgent[\beta]{\agentA}))^2 \text{ for } \agentA \notin \agentsB\\
          \actionPrecondition &=& \actionPrecondition[\alpha] \cup \actionPrecondition[\beta] \cup \{(\proxyStateT, (\actionPrecondition[\alpha] \cup \actionPrecondition[\beta])(\actionStateT)) \mid \actionStateT \in \actionStatesT[\alpha] \cup \actionStatesT[\beta]\}\\
          \actionStatesT &=& \{\proxyStateT \mid \actionStateT \in \actionStatesT[\alpha]\}
      \end{eqnarray*}
  \end{definition}

  \begin{lemma}\label{afl-s-structure}
      Let $\alpha \in \langAflAct$. Then $\tau(\alpha) \in \classAM_\classS$.
  \end{lemma}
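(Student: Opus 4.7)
The proof proceeds by structural induction on $\alpha \in \langAflAct$, verifying for each constructor that the three frame conditions (reflexivity, transitivity, Euclideanness) hold of every $\actionAccessibilityAgent{\agentA}$. For the base case $\alpha = \test{\phi}$, Definition~\ref{afl-s-test} sets each $\actionAccessibilityAgent{\agentA}$ to the universal relation $\actionStates^2$, which is trivially an equivalence relation. For $\alpha = \beta \choice \gamma$, Definition~\ref{afl-choice} takes a disjoint union of relations on disjoint domains, so no cross-component pairs arise and each property is inherited component-wise from the induction hypothesis on $\tau(\beta)$ and $\tau(\gamma)$. For $\alpha = \beta \compose \gamma$, Definition~\ref{afl-sequential} reduces to action-model composition, whose accessibility relation is a coordinate-wise product of the component relations; reflexivity, transitivity and Euclideanness are each preserved under such products.

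The main case is $\alpha = \learns_\agentsB (\beta, \gamma)$, which I would split on whether $\agentA \in \agentsB$. For $\agentA \in \agentsB$, the constructed relation decomposes as the disjoint union of $\actionAccessibilityAgent[\beta]{\agentA}$, $\actionAccessibilityAgent[\gamma]{\agentA}$ and the complete relation on the proxies; these three pieces live on pairwise disjoint subdomains, so the frame conditions again follow component-wise from the induction hypothesis together with the obvious check on the proxy cluster. For $\agentA \notin \agentsB$, I would use the fact that, by the inductive S5 hypothesis, each set $\actionStateT \actionAccessibilityAgent[\beta]{\agentA}$ (respectively for $\gamma$) is already the full equivalence class of $\actionStateT$; the construction then extends that class by the single new point $\proxyStateT$ and takes the complete relation on the enlarged set. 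Reflexivity is immediate, and transitivity and Euclideanness are verified by case analysis on whether the chained endpoints are old or proxy, using equivalence of the underlying class to absorb any non-proxy intermediate step into the same enlarged cluster.

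The delicate point I expect to confront is the interaction between different enlarged clusters for $\agentA \notin \agentsB$: when two designated points $\actionStateT, \actionStateT[\prime] \in \actionStatesT[\beta]$ happen to share an equivalence class of $\actionAccessibilityAgent[\beta]{\agentA}$, the clusters for $\actionStateT$ and for $\actionStateT[\prime]$ coincide on that class yet contribute distinct proxies, giving rise to two-step paths $\proxyStateT \to x \to \proxyStateT[\prime]$ whose closing edge is not manifestly in the defined relation. This will be the step where the proof has to do genuine work, either via a careful reading of the union in Definition~\ref{afl-s-learning} showing that such a pair is in fact contributed by one of the clusters, or via a preliminary normalisation argument that identifies duplicated proxies of equivalent designated points before the frame conditions are checked.
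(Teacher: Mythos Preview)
The paper states this lemma without proof, so there is no reference argument to compare against; your inductive scheme is the natural one and the first three cases go through exactly as you describe.

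Your instinct about the ``delicate point'' is correct, and in fact it is more than delicate: under Definition~\ref{afl-s-learning} as written, the relation for an agent outside $\agentsB$ can genuinely fail to be transitive. A concrete instance already arises at depth two. Take distinct agents $a$ and $b$ and let $\gamma = \test{\phi_1} \choice \test{\phi_2}$, so that $\tau(\gamma)$ has two designated points lying in disjoint equivalence classes for every agent. In $\tau(\learns_a(\gamma,\delta))$ the designated set consists of the two corresponding proxies, and since $a$ belongs to the learning set these two proxies lie in a single $a$-equivalence class. Now apply an outer $\learns_b$: for the agent $a \notin \{b\}$ the construction contributes, for each inner designated proxy $p$, one complete block on $\{\hat p\}$ together with the $a$-successors of $p$. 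The two blocks share the common $a$-class of the inner proxies but each omits the \emph{other} new outer proxy. One therefore obtains a two-step $a$-path between the two outer proxies via any point of the shared class, with no direct edge present. Neither of your suggested escapes closes this: a careful reading of the union does not manufacture the missing pair, and identifying the two outer proxies is illegitimate because the inner designated points, while $a$-equivalent, are not $b$-equivalent and hence not bisimilar.

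In the paper's own development the problem never surfaces, because every concrete use of the S5 learning operator (Lemma~\ref{afl-s-construction-learning}, Propositions~\ref{afl-s-correspondence} and~\ref{afl-s-synthesis}) takes first argument $\test{\top}$, whose designated set is a singleton. To salvage the lemma in full generality one would have to amend Definition~\ref{afl-s-learning}, for instance by taking the equivalence closure of the displayed relation, or by forming one block per $a$-equivalence class meeting the designated set rather than one block per designated point.
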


  \begin{lemma}\label{afl-s-exec}
      Let $\alpha \in \langAflAct$ and 
      let $\pointedModel{\statesT} \in \classS$.
      Then $\pointedModel{\statesT} \exec \tau(\alpha) \in \classS$.
  \end{lemma}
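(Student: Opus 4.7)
The plan is to reduce the claim to a standard fact about the product construction underlying action model execution: if a Kripke model is in $\classS$ and an action model is in $\classAM_\classS$, then the result of execution lies in $\classS$. Lemma~\ref{afl-s-structure} supplies $\tau(\alpha) \in \classAM_\classS$, and Proposition~\ref{aml-s-domain} (reading $\classS \subseteq \classK$) then immediately gives $\pointedModel{\statesT} \exec \tau(\alpha) \in \classS$. So the lemma is essentially a one-line consequence of the previous lemma together with the standard preservation fact.

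To be self-contained, I would also give the direct verification. Recall that in $\model \exec \tau(\alpha)$ the product relation is $(\stateS, \actionStateS) \accessibilityAgent[\prime]{\agentA} (\stateT, \actionStateT)$ iff $\stateS \accessibilityAgent{\agentA} \stateT$ and $\actionStateS \actionAccessibilityAgent{\agentA} \actionStateT$, i.e.\ the componentwise conjunction of the two factor relations. Each of reflexivity, transitivity and Euclideanness is a universal first-order property preserved under such componentwise conjunction. The hypothesis $\pointedModel{\statesT} \in \classS$ supplies all three properties for each $\accessibilityAgent{\agentA}$, while Lemma~\ref{afl-s-structure} supplies the same for each $\actionAccessibilityAgent{\agentA}$. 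Combining these, each $\accessibilityAgent[\prime]{\agentA}$ is reflexive, transitive and Euclidean on its natural domain, so $\pointedModel{\statesT} \exec \tau(\alpha) \in \classS$.

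I do not expect any real obstacle here: the heavy lifting was done in Lemma~\ref{afl-s-structure}, where one must check that the intricate \classS{}-specific definitions --- in particular the full cliques $\actionStates^2$ in $\tau(\test{\phi})$ and the additional $(\{\proxyStateT\} \cup \actionStateT(\actionAccessibilityAgent[\alpha]{\agentA} \cup \actionAccessibilityAgent[\beta]{\agentA}))^2$ edges for $\agentA \notin \agentsB$ in $\tau(\learns_\agentsB(\alpha,\beta))$ --- really do yield $\classAM_\classS$ action models. Once that is granted, Lemma~\ref{afl-s-exec} follows routinely.
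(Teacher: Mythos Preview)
Your proposal is correct and matches the paper's intent: the paper states Lemma~\ref{afl-s-exec} without proof, treating it as an immediate consequence of Lemma~\ref{afl-s-structure} together with the standard preservation fact recorded as Proposition~\ref{aml-s-domain}. Your one-line reduction and the accompanying direct verification via componentwise preservation of reflexivity, transitivity and Euclideanness are exactly what is expected here.
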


  We note that as in the setting of \classKFF{} the definition of $\tau$ uses
  proxy states to construct action models from learning operators. However
  unlike in the settings of \classK{} and \classKFF{} this construction does
  not introduce the new states $\actionStateTest$ and $\actionStateSkip$. As
  discussed earlier this is because in the setting of \classS{}, in an action
  where agents learn that $\alpha$ or $\beta$ have occurred, one of those
  actions must have actually occurred. Unlike in the settings of \classK{} and
  \classKFF{} we have distinguished between the actions $\alpha$ and $\beta$,
  designating that $\alpha$ is the action that has actually occurred. We also
  note that the definition of $\tau$ for test operators is different from that
  used in \classK{} and \classKFF{}, simply to account for the additional frame
  constraints of \classS{}.

  \section{Axiomatisation}\label{axiomatisation}

  In the following subsections we give sound and complete axiomatisations for
  the action formulae logic in the settings of \classK{} and \classKFF{}. In
  the setting of \classS{} we provide a sound but not complete axiomatisation,
  and comment on the difficulty of giving a complete axiomatisation and the
  possible alternatives.
  We note that axiomatisations for arbitrary action formula logic in these
  settings can be derived trivially from these axiomatisations by adding the
  additional axioms and rules from refinement modal logic.

  \subsection{\classK{}}

  \begin{definition}[Axiomatisation \axiomAflK{}]\label{afl-k-axioms}
      The axiomatisation \axiomAflK{} is a substitution schema consisting of the
      rules and axioms of \axiomK{} along with the axioms:
      $$
      \begin{array}{rl}
          {\bf LT} & \proves \allacts{\test{\phi}} \psi \iff (\phi \implies \psi) \text{ for } \psi \in \lang\\
          {\bf LU} & \proves \allacts{\alpha \choice \beta} \phi \iff (\allacts{\alpha} \phi \land \allacts{\beta} \phi)\\
          {\bf LS} & \proves \allacts{\alpha \compose \beta} \phi \iff \allacts{\alpha} \allacts{\beta} \phi\\
          {\bf LP} & \proves \allacts{\learns_\agentsB (\alpha, \beta)} \atomP \iff \atomP\\
          {\bf LN} & \proves \allacts{\learns_\agentsB (\alpha, \beta)} \neg \phi \iff \neg \allacts{\learns_\agentsB (\alpha, \beta)} \phi\\
          {\bf LC} & \proves \allacts{\learns_\agentsB (\alpha, \beta)} (\phi \land \psi) \iff (\allacts{\learns_\agentsB (\alpha, \beta)} \phi \land \allacts{\learns_\agentsB (\alpha, \beta)} \psi)\\
          {\bf LK1} & \proves \allacts{\learns_\agentsB (\alpha, \beta)} \necessary[\agentA] \phi \iff \necessary[\agentA] \allacts{\alpha \choice \beta} \phi \text{ for } \agentA \in \agentsB\\
          {\bf LK2} & \proves \allacts{\learns_\agentsB (\alpha, \beta)} \necessary[\agentA] \phi \iff \necessary[\agentA] \phi \text{ for } \agentA \notin \agentsB
      \end{array}
      $$
      and the rule:
      $$
      \begin{array}{rl}
          {\bf NecL} & \text{From $\proves \phi$ infer $\proves \allacts{\alpha} \phi$}
      \end{array}
      $$
  \end{definition}

  \begin{proposition}\label{afl-k-axioms-soundness}
      The axiomatisation \axiomAflK{} is sound in the logic \logicAmlK{}.
  \end{proposition}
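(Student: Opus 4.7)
The plan is to verify each axiom and each rule semantically. The axioms and rules of $\axiomK$ are sound by standard modal logic, and NecL is sound because in the setting of $\classK$ we always have $\pointedModel{\stateS} \exec \tau(\alpha) \in \classK$, so if $\phi$ is valid it holds at the updated model. The substantive work is the action axioms LT--LK2, and my strategy is to unfold the semantic clause for $\allacts{\alpha}\phi$ through Definitions~\ref{afl-k-test} and~\ref{afl-k-learning}, then reduce to the corresponding axioms of $\axiomAmlK$, whose soundness is Proposition~\ref{aml-k-soundness-completeness}.

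Axioms LU and LS are essentially bookkeeping: LU uses the disjoint-union construction of $\tau(\alpha \choice \beta)$ (Definition~\ref{afl-choice}) together with axiom AU, while LS uses $\tau(\alpha \compose \beta) = \tau(\alpha) \exec \tau(\beta)$ (Definition~\ref{afl-sequential}) together with axiom AS. For LP, LN and LC, the unique designated point $\actionStateTest$ of $\tau(\learns_\agentsB(\alpha,\beta))$ has precondition $\top$, so the reduction axioms AP, AN and AC specialise immediately to the stated equivalences. For LK1 and LK2, I apply AK at $\actionStateTest$: its $\agentA$-successors are exactly the designated points of $\tau(\alpha \choice \beta)$ when $\agentA \in \agentsB$, and are $\{\actionStateSkip\}$ when $\agentA \notin \agentsB$. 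LK1 then follows from the observation that the fragment of $\tau(\learns_\agentsB(\alpha,\beta))$ reachable from those designated points coincides with $\tau(\alpha \choice \beta)$ itself, since Definition~\ref{afl-k-learning} adds outgoing edges only at $\actionStateTest$ and $\actionStateSkip$. LK2 follows from the fact that the pointed action model at $\actionStateSkip$ (which self-loops for every agent and carries precondition $\top$) acts as the identity up to bisimulation, so $\allacts{\pointedActionModel{\actionStateSkip}}\phi$ collapses to $\phi$. LT is analogous: when $\pointedModel{\stateS} \entails \phi$ the designated point $(\stateS, \actionStateTest)$ of $\pointedModel{\stateS} \exec \tau(\test{\phi})$ is bisimilar to $\pointedModel{\stateS}$ via the $\actionStateSkip$-gadget, and $\psi \in \lang$ is bisimulation invariant.

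The main obstacle is packaging these bisimulation claims cleanly. I will need a small auxiliary lemma stating that executing a pointed action model whose designated points reach only a fixed sub-domain of action states is the same, up to bisimulation of the updated Kripke model, as executing the restriction to that sub-domain; and a companion lemma saying that updating with the $\actionStateSkip$-gadget is the identity up to bisimulation. Once these two lemmas are in place, each axiom reduces to a short semantic calculation plus a single invocation of the corresponding $\axiomAmlK$ axiom, and the restriction $\psi \in \lang$ in LT is exactly what is needed so that bisimulation invariance applies, which is enough for the intended use of LT as a reduction axiom into the base modal language.
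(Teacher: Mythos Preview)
Your proposal is correct and follows essentially the same approach as the paper: reduce each action-formula axiom to the corresponding $\axiomAmlK$ axiom by unfolding $\tau$ and observing that $\actionPrecondition(\actionStateTest)=\top$ for {\bf LP}, {\bf LN}, {\bf LC}, and that the $\agentA$-successors of $\actionStateTest$ are (bisimilar to) the designated points of $\tau(\alpha\choice\beta)$ or of $\tau(\test{\top})$ for {\bf LK1} and {\bf LK2} respectively. The only cosmetic difference is your treatment of {\bf LT}: the paper derives it by applying the $\axiomAmlK$ reduction axioms inductively over $\psi$, whereas you argue directly via a bisimulation between $\pointedModel{\stateS}\exec\tau(\test{\phi})$ and $\pointedModel{\stateS}$ and then invoke bisimulation invariance of $\psi\in\lang$; both arguments are standard and equally short.
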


  \begin{proof}
      {\bf LT} follows from applying the reduction axioms of \axiomAmlK{}
      inductively to $\allacts{\test{\phi}} \psi$.

      {\bf LU} and {\bf LS} follow from Proposition~\ref{afl-choice-sequential-validities}.

      Let $\tau(\learns_\agentB (\alpha, \beta)) = \pointedActionModel{\actionStateS} = \pointedActionModelTuple{\actionStateS}$.
      {\bf LP}, {\bf LN} and {\bf LC} follow trivially from the \axiomAmlK{}
      axioms {\bf AP}, {\bf AN} and {\bf AC} respectively, noting from
      Definition~\ref{afl-k-learning} that $\actionPrecondition(\actionStateS) = \top$.
      {\bf LK1} follows trivially from the \axiomAmlK{} axiom {\bf AK},
      noting from Definition~\ref{afl-k-learning} that as $\agentA \in \agents$ then 
      $\pointedActionModel{\actionStateS \actionAccessibilityAgent{\agentA}} \bisimilar \tau(\alpha \choice \beta)$.
      {\bf NecL} follows trivially from the \axiomAmlK{} rule {\bf NecA}.
      {\bf LK2} follows trivially from the \axiomAmlK{} axiom {\bf AK},
      noting from Definition~\ref{afl-k-learning} that as $\agentA \notin \agents$ then 
      $\pointedActionModel{\actionStateS \actionAccessibilityAgent{\agentA}} \bisimilar \tau(\test{\top})$.
  \end{proof}

  \begin{proposition}\label{afl-k-axioms-completeness}
      The axiomatisation \axiomAflK{} is complete for the logic \logicAmlK{}.
  \end{proposition}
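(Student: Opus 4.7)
The plan is to show that the axioms of $\axiomAflK$ constitute a provably correct translation from $\langAfl$ into $\lang$, and then invoke the completeness of $\axiomK$ for $\logicK$. Since $\axiomAflK$ extends $\axiomK$, this yields completeness for the full language.

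First I would define a complexity measure $c$ on $\langAfl \cup \langAflAct$ such that each of the axioms {\bf LT}, {\bf LU}, {\bf LS}, {\bf LP}, {\bf LN}, {\bf LC}, {\bf LK1}, {\bf LK2}, read left-to-right as a rewrite rule, strictly decreases $c$ (modulo further reduction of strictly $c$-smaller subterms). On formulae, $c(\allacts{\alpha} \phi)$ should grow multiplicatively with $c(\alpha)$ so that pushing an action modality past a Boolean or modal connective, while increasing the number of occurrences of the action modality, is still dominated by the size of $\alpha$. On actions one would want $c(\alpha \choice \beta) > c(\alpha) + c(\beta)$, $c(\alpha \compose \beta) > c(\alpha) \cdot c(\beta)$, $c(\test{\phi}) > c(\phi)$, and most importantly $c(\learns_\agentsB(\alpha, \beta)) > c(\alpha \choice \beta)$, so that {\bf LK1} is measure-decreasing.

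Next I would define a translation $t : \langAfl \to \lang$ by recursion on $c$: it commutes with atoms, Boolean connectives and $\necessary[\agentA]$, and on a formula of the form $\allacts{\alpha} \phi$ it selects the reduction axiom matching the head of $\alpha$ (with the choice between {\bf LK1} and {\bf LK2} guided also by the head of $\phi$), then recursively translates the result. A simultaneous induction on $c$ shows $\proves_\axiomAflK \phi \iff t(\phi)$: each inductive step uses the corresponding reduction axiom together with {\bf NecL} and the congruence and necessitation rules of $\axiomK$ to replace a subformula by its provable translation. In particular, the reduction of $\allacts{\test{\phi}} \psi$ via {\bf LT} is legal only after $\psi$ has been reduced into $\lang$, which is exactly what the induction on $c$ delivers.

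Finally, suppose $\entails_\logicAflK \phi$. By soundness of $\axiomAflK$ (Proposition~\ref{afl-k-axioms-soundness}) the equivalence $\phi \iff t(\phi)$ is valid in $\logicAflK$, so $\entails_\logicAflK t(\phi)$; since $t(\phi) \in \lang$, this coincides with $\entails_\logicK t(\phi)$. Completeness of $\axiomK$ yields $\proves_\axiomK t(\phi)$, hence $\proves_\axiomAflK t(\phi)$, and combining this with the provable equivalence gives $\proves_\axiomAflK \phi$. The main obstacle is calibrating the complexity measure for {\bf LK1}: the rewrite $\allacts{\learns_\agentsB(\alpha, \beta)} \necessary[\agentA] \phi \leadsto \necessary[\agentA] \allacts{\alpha \choice \beta} \phi$ still contains an action modality over a non-trivial action built from $\alpha$ and $\beta$, and the weight docked from removing $\learns_\agentsB$ must strictly outweigh the new $\necessary[\agentA]$ together with the replacement of $\alpha, \beta$ by $\alpha \choice \beta$.
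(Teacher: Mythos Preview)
Your proposal is correct and follows exactly the approach the paper takes: the paper simply notes that \axiomAflK{} forms a set of reduction axioms yielding a provably correct translation from \langAfl{} to \lang{}, after which completeness of \axiomK{} finishes the job. Your added detail on the complexity measure and the innermost-first handling of nested action modalities (needed so that {\bf LT} can be applied once its argument lies in \lang{}) is the standard elaboration of this one-line argument and raises no difficulties beyond those you already flag.
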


  We note that the axiomatisation \axiomAflK{} forms a set of reduction axioms
  that gives a provably correct translation from \langAfl{} to \lang{}.

  \begin{example}\label{grant-example-derivation}
      We give an example derivation that the action formula $\alpha$ given in
      Example~\ref{grant-example-formula} does indeed satisfy (part of) the
      epistemic goal stated in Example~\ref{grant-example}.
      \begin{eqnarray}
          &\proves& \allacts{\test{\atomP}} \atomP \iff (\atomP \implies \atomP)\label{grant-example-derivation-1}\\
          &\proves& \allacts{\test{\atomP}} \atomP\label{grant-example-derivation-2}\\
          &\proves& \necessary[Ed] \allacts{\test{\atomP}} \atomP\label{grant-example-derivation-3}\\
          &\proves& \allacts{\learns_{Ed} \test{\atomP}} \necessary[Ed] \atomP\label{grant-example-derivation-4}
      \end{eqnarray}
      (\ref{grant-example-derivation-1}) follows from {\bf LT},
      (\ref{grant-example-derivation-3}) follows from {\bf NecK} and
      (\ref{grant-example-derivation-4}) follows from {\bf LK1}.

      Similarly we have 
      \begin{eqnarray*}
          &\proves& \allacts{\learns_{Ed} \test{\neg \atomP}} \necessary[Ed] \neg \atomP\\
          &\proves& \allacts{\learns_{Tim} \test{\atomP}} \necessary[Tim] \atomP\\
          &\proves& \allacts{\learns_{Tim} \test{\neg \atomP}} \necessary[Tim] \neg \atomP
      \end{eqnarray*}

      Let $\phi = \necessary[Ed] \atomP \lor  \necessary[Ed] \neg \atomP \lor \necessary[Tim] \atomP \lor \necessary[Tim] \neg \atomP$. Then:
      \begin{eqnarray}
          &\proves& \allacts{\learns_{Ed} \test{\atomP} \choice \learns_{Ed} \test{\neg \atomP} \choice \learns_{Tim} \test{\atomP} \choice \learns_{Tim} \test{\neg \atomP}} \phi\label{grant-example-derivation-5}\\
          &\proves& \necessary[James] \allacts{\learns_{Ed} \test{\atomP} \choice \learns_{Ed} \test{\neg \atomP} \choice \learns_{Tim} \test{\atomP} \choice \learns_{Tim} \test{\neg \atomP}} \phi\label{grant-example-derivation-6}\\
          &\proves& \allacts{\learns_{James} (\learns_{Ed} \test{\atomP} \choice \learns_{Ed} \test{\neg \atomP} \choice \learns_{Tim} \test{\atomP} \choice \learns_{Tim} \test{\neg \atomP})} \necessary[James] \phi\label{grant-example-derivation-7}\\
          &\proves& \allacts{\alpha} \necessary[James] \phi\label{grant-example-derivation-8}
      \end{eqnarray}
      (\ref{grant-example-derivation-5}) follows from {\bf LU},
      (\ref{grant-example-derivation-6}) follows from {\bf NecK} and
      (\ref{grant-example-derivation-7}) follows from {\bf LK1}.
      (\ref{grant-example-derivation-8}) follows from {\bf LS} and {\bf LK2}.

      Therefore a consequence of successfully executing $\alpha$ is that James
      learns that Ed or Tim knows whether the grant application was successful.
  \end{example}

  \subsection{\classKFF{}}

  \begin{definition}[Axiomatisation \axiomAflKFF{}]\label{afl-kff-axioms}
      The axiomatisation \axiomAflKFF{} is a substitution schema consisting of the
      rules and axioms of \axiomKFF{} 
      along with the rules and axioms of \axiomAflK{},
      but substituting the \axiomAflK{} axiom {\bf LK1} for the axiom:
      $$
      \begin{array}{rl}
          {\bf LK1} & \proves \allacts{\learns_\agentsB (\alpha, \beta)} \necessary[\agentA] \chi \iff \necessary[\agentA] \allacts{\alpha \choice \beta} \chi \text{ for } \agentA \in \agentsB\\
      \end{array}
      $$
      and the rule:
      $$
      \begin{array}{rl}
          {\bf NecL} & \text{From $\proves \phi$ infer $\proves \allacts{\alpha} \phi$}
      \end{array}
      $$
      where $\chi$ is a $(\agents \setminus \{\agentA\})$-restricted formula.
  \end{definition}

  \begin{proposition}\label{afl-kff-axioms-soundness}
      The axiomatisation \axiomAflKFF{} is sound in the logic \logicAmlKFF{}.
  \end{proposition}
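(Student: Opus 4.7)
The plan is to treat each axiom and rule of \axiomAflKFF{} individually, reducing each to the soundness of \axiomAmlKFF{} applied to $\tau(\cdot)$. The rules and axioms of \axiomKFF{} are standardly sound in \logicKFF{}. {\bf LT} unfolds via Definition~\ref{afl-kff-test} and iterated applications of {\bf AP}, {\bf AN}, {\bf AC} and {\bf AK}; the residual collapses to $\phi \implies \psi$ because the only non-designated state $\actionStateSkip$ of $\tau(\test{\phi})$ has precondition $\top$ with self-loops for every agent. Axioms {\bf LU} and {\bf LS} are immediate instances of Proposition~\ref{afl-choice-sequential-validities}, and {\bf NecL} follows from {\bf NecA} applied to $\tau(\alpha)$. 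For {\bf LP}, {\bf LN} and {\bf LC} I apply {\bf AP}, {\bf AN} and {\bf AC} to $\tau(\learns_\agentsB (\alpha, \beta))$ at its designated point $\actionStateTest$, using $\actionPrecondition(\actionStateTest) = \top$ from Definition~\ref{afl-kff-learning} to trivialise the precondition premises.

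For {\bf LK2}, where $\agentA \notin \agentsB$, axiom {\bf AK} rewrites the left-hand side as $\necessary[\agentA] \allacts{\pointedActionModel{\actionStateTest \actionAccessibilityAgent{\agentA}}} \phi$, and Definition~\ref{afl-kff-learning} gives $\actionStateTest \actionAccessibilityAgent{\agentA} = \{\actionStateSkip\}$. The reachable fragment of the action model from $\actionStateSkip$ is a single state with precondition $\top$ and self-loops for every agent, so $\pointedActionModel{\actionStateSkip}$ is bisimilar to the identity action $\tau(\test{\top})$, and hence $\allacts{\pointedActionModel{\actionStateSkip}} \phi \iff \phi$.

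The principal obstacle is {\bf LK1}. Applying {\bf AK} rewrites the left-hand side as $\necessary[\agentA] \allacts{\pointedActionModel{\actionStateTest \actionAccessibilityAgent{\agentA}}} \chi$, and for $\agentA \in \agentsB$ Definition~\ref{afl-kff-learning} identifies $\actionStateTest \actionAccessibilityAgent{\agentA}$ with the set of proxy states. Splitting with {\bf AU} on both sides reduces the task to a per-root equivalence: for each designated root $\actionStateT$ of $\tau(\alpha \choice \beta)$ with corresponding proxy $\proxyStateT$, show $\allacts{\pointedActionModel{\proxyStateT}} \chi \iff \allacts{\pointedActionModel{\actionStateT}} \chi$, where the left-hand side is pointed inside $\tau(\learns_\agentsB (\alpha, \beta))$ and the right inside $\tau(\alpha \choice \beta)$. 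I plan to establish this by exhibiting a bisimulation between the Kripke-updates $\pointedModel{\stateS} \exec \pointedActionModel{\proxyStateT}$ and $\pointedModel{\stateS} \exec \pointedActionModel{\actionStateT}$: the two pointed worlds share valuations (both inherit from $\stateS$), $\proxyStateT$ and $\actionStateT$ share preconditions so both updates see the same worlds at this level, and outgoing accessibility coincides for every $\agentA' \notin \agentsB$ directly from Definition~\ref{afl-kff-learning}. The delicate part is matching the proxy-cluster successors under $\agentsB$-agents to the root's $\agentsB$-successors; here the $(\agents \setminus \{\agentA\})$-restriction on $\chi$ controls which modalities can inspect this mismatch, and the \classKFF{} frame conditions together with Lemmas~\ref{afl-kff-structure} and~\ref{afl-kff-exec} keep both executed models inside \classKFF{} so that the candidate bisimulation can close. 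Verifying this bisimulation is the one genuinely non-routine step of the proof.
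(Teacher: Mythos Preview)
Your treatment of {\bf LT}, {\bf LU}, {\bf LS}, {\bf LP}, {\bf LN}, {\bf LC}, {\bf LK2} and {\bf NecL} matches the paper exactly: these are inherited verbatim from the \classK{} soundness argument, using $\actionPrecondition(\actionStateTest) = \top$ and the structure of Definition~\ref{afl-kff-learning}.

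For {\bf LK1} your overall idea is right but the execution is muddled. You propose to ``exhibit a bisimulation between the Kripke-updates $\pointedModel{\stateS} \exec \pointedActionModel{\proxyStateT}$ and $\pointedModel{\stateS} \exec \pointedActionModel{\actionStateT}$'' and then invoke the $(\agents \setminus \{\agentA\})$-restriction on $\chi$ to handle the mismatch at $\agentsB$-edges. But a genuine bisimulation would make the restriction unnecessary, and conversely if the bisimulation fails on some edges you have not said what relation you are actually constructing or why it transfers truth of $\chi$. The paper resolves this cleanly by working one level up: rather than comparing Kripke updates, it observes directly that the action model pointed at the $\agentA$-successors of $\actionStateTest$ (the proxy cluster) is $(\agents \setminus \{\agentA\})$-bisimilar to $\tau(\alpha \choice \beta)$ in the sense of the action-model $\agentsB$-bisimilarity definition. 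Then the already-established proposition that $\agentsB$-bisimilar action models yield $\agentsB$-bisimilar Kripke updates, together with the proposition that $\agentsB$-bisimilar Kripke models agree on $\agentsB$-restricted formulae, gives $\entails \allacts{\pointedActionModel{\actionStateTest \actionAccessibilityAgent{\agentA}}} \chi \iff \allacts{\tau(\alpha \choice \beta)} \chi$ immediately.

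So the fix is not to aim for a full bisimulation and patch it, but to name the correct weaker relation from the outset: $(\agents \setminus \{\agentA\})$-bisimilarity of the action models. This replaces your ``one genuinely non-routine step'' with a one-line appeal to existing machinery.
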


  \begin{proof}
      Soundness of {\bf LT}, {\bf LU}, {\bf LS}, {\bf LP}, {\bf LN}, {\bf LC},
      {\bf LK2} and {\bf NecL} follow from the same reasoning as in the proof
      of Proposition~\ref{afl-k-axioms-soundness}.

      {\bf LK1} follows from the \axiomAmlKFF{} axiom {\bf AK}.
      We note that as $\agentA \in \agentsB$, from Definition~\ref{afl-kff-learning}
      we have $\pointedActionModel{\actionStateS \actionAccessibilityAgent{\agentA}} \bisimilar_{(\agents \setminus \{\agentA\})} \tau(\alpha \choice \beta)$,
      and as $\chi$ is $(\agents \setminus \{\agentA\})$-restricted formula then
      $\entails \allacts{\pointedActionModel{\actionStateS \actionAccessibilityAgent{\agentA}}} \chi \iff \allacts{\tau(\alpha \choice \beta)} \chi$.
  \end{proof}

  \begin{proposition}\label{afl-kff-axioms-completeness}
      The axiomatisation \axiomAflKFF{} is complete for the logic \logicAmlKFF{}.
  \end{proposition}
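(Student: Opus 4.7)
The plan is to mirror the strategy used for Proposition~\ref{afl-k-axioms-completeness}: show that \axiomAflKFF{} forms a set of reduction axioms that provably translates every $\phi \in \langAfl$ into a \axiomKFF{}-equivalent $\phi^* \in \lang$, and then appeal to the completeness of \axiomKFF{} for \logicKFF{}. The translation proceeds by an outer induction on modal structure together with an inner induction on the complexity of action formulae for subformulae of the form $\allacts{\alpha}\psi$. The cases $\alpha = \test{\phi}$, $\alpha = \alpha_1 \choice \alpha_2$ and $\alpha = \alpha_1 \compose \alpha_2$ are dispatched by LT, LU and LS exactly as in the proof of Proposition~\ref{afl-k-axioms-completeness}. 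For $\alpha = \learns_\agentsB(\alpha_1,\alpha_2)$, the axioms LP, LN and LC push $\allacts{\alpha}$ through propositional connectives, LK2 handles $\allacts{\alpha} \necessary[\agentA] \chi$ when $\agentA \notin \agentsB$, and the only remaining case is $\allacts{\alpha} \necessary[\agentA] \chi$ with $\agentA \in \agentsB$.

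The main obstacle is this last case: LK1 applies only when $\chi$ is $(\agents \setminus \{\agentA\})$-restricted, whereas the outer inductive hypothesis only guarantees $\chi \in \lang$. To bridge this gap I would first translate $\chi$ into \lang{} by the outer IH, then invoke the alternating disjunctive normal form result (Proposition corresponding to Definition~\ref{adnf}) to obtain an \axiomKFF{}-equivalent rewriting $\chi \iff \bigvee_i (\sigma_i \land \covers_\agentA \Gamma_{i,\agentA})$, in which each $\sigma_i = \pi_i \land \bigwedge_{\agentB \neq \agentA} \covers_\agentB \Gamma_{i,\agentB}$ is $(\agents \setminus \{\agentA\})$-restricted and each $\Gamma_{i,\agentA}$ consists of $\agentA$-alternating formulae (and hence is itself $(\agents \setminus \{\agentA\})$-restricted).

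The key structural fact about \classKFF{} is that the $\accessibilityAgent{\agentA}$-successors of any state form a cluster, so each $\covers_\agentA \Gamma_{i,\agentA}$ takes the same truth value at the state and at every one of its $\accessibilityAgent{\agentA}$-successors. Case-splitting on the subset $I$ of indices for which the $\covers_\agentA$ component holds yields the \axiomKFF{}-provable equivalence
\[
\necessary[\agentA] \chi \iff \bigvee_{I \subseteq \{1,\ldots,n\}} \Bigl( \bigwedge_{i \in I} \covers_\agentA \Gamma_{i,\agentA} \land \bigwedge_{i \notin I} \neg \covers_\agentA \Gamma_{i,\agentA} \land \necessary[\agentA] \bigvee_{i \in I} \sigma_i \Bigr),
\]
in which every occurrence of $\necessary[\agentA]$ is now applied to a $(\agents \setminus \{\agentA\})$-restricted formula. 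Distributing $\allacts{\learns_\agentsB(\alpha_1,\alpha_2)}$ through this Boolean combination using LN and LC, LK1 becomes applicable to each inner $\allacts{\learns_\agentsB(\alpha_1,\alpha_2)} \necessary[\agentA] \bigvee_{i \in I} \sigma_i$, rewriting it as $\necessary[\agentA] \allacts{\alpha_1 \choice \alpha_2} \bigvee_{i \in I} \sigma_i$; and since $\alpha_1 \choice \alpha_2$ is strictly simpler than $\learns_\agentsB(\alpha_1,\alpha_2)$, the inner induction continues. The remaining $\allacts{\learns_\agentsB(\alpha_1,\alpha_2)} \covers_\agentA \Gamma_{i,\agentA}$ terms are handled analogously after expanding $\covers_\agentA$, since each element of $\Gamma_{i,\agentA}$ is $\agentA$-alternating and so LK1 applies to each resulting $\necessary[\agentA]$ and (via duality with LN) each $\possible[\agentA]$ as well.

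The cluster argument and the alternating DNF reduction used here are exactly those developed by Hales, French and Davies~\cite{hales2012} in the completeness proof for \axiomRmlKFF{}; the rest of the argument simply combines that machinery with the \axiomAflK{}-style reductions for $\test{}$, $\choice$, $\compose$ and LP, LN, LC, LK2. The main technical burden is checking that the cluster-based equivalence above is indeed derivable purely from \axiomKFF{} and that the termination measure (a lexicographic combination of a complexity measure on action formulae and the modal depth of the target formula) is respected throughout the double induction.
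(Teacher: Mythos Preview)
Your proposal is correct and follows the same overall strategy as the paper: reduce \langAfl{} to \lang{} via the \axiomAflKFF{} axioms and then invoke completeness of \axiomKFF{}. The paper's argument is however more direct on the one point where you do extra work. Rather than converting the inner $\chi$ to alternating disjunctive normal form at the moment you hit $\allacts{\learns_\agentsB(\alpha_1,\alpha_2)}\necessary[\agentA]\chi$ and then running a separate cluster-based case-split to isolate the $(\agents\setminus\{\agentA\})$-restricted part, the paper simply converts the \emph{entire} $\phi\in\lang$ under $\allacts{\alpha}$ to alternating disjunctive normal form before pushing $\allacts{\alpha}$ inward. By the very definition of that normal form, every subformula $\necessary[\agentA]\psi$ encountered while reducing already has $\psi$ $(\agents\setminus\{\agentA\})$-restricted, so {\bf LK1} applies immediately with no further manipulation. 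Your cluster argument is valid---it is essentially the mechanism underlying the normal-form result itself---but it re-derives something the normal form already guarantees, and it complicates the termination bookkeeping you flag at the end. Converting $\phi$ once up front removes both the extra case-split and most of the termination worry.
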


  We note that the axiomatisation \axiomAflKFF{} forms a set of reduction
  axioms that gives a provably correct translation from \langAfl{} to \lang{}.
  To translate a subformula $\allacts{\alpha} \phi$, where $\phi \in \lang$, we
  must first translate $\phi$ to the alternating disjunctive normal form of
  \cite{hales2012}, which gives the property that for every subformula
  $\necessary[\agentA] \psi$, the formula $\psi$ is $(\agents \setminus
  \{\agentA\})$-restricted, and therefore {\bf LK1} is applicable.

  \subsection{\classS{}}

  \begin{definition}[Axiomatisation \axiomAflS{}]\label{afl-s-axioms}
      The axiomatisation \axiomAflS{} is a substitution schema consisting of the
      rules and axioms of \axiomS{} along with the axioms:
      $$
      \begin{array}{rl}
          {\bf LT} & \proves \allacts{\test{\phi}} \psi \iff (\phi \implies \psi) \text{ for } \psi \in \lang\\
          {\bf LU} & \proves \allacts{\alpha \choice \beta} \phi \iff (\allacts{\alpha} \phi \land \allacts{\beta} \phi)\\
          {\bf LS} & \proves \allacts{\alpha \compose \beta} \phi \iff \allacts{\alpha} \allacts{\beta} \phi\\
          {\bf LP} & \proves \allacts{\learns_\agentsB (\alpha, \beta)} \atomP \iff \atomP\\
          {\bf LN} & \proves \allacts{\learns_\agentsB (\alpha, \beta)} \neg \phi \iff \neg \allacts{\learns_\agentsB (\alpha, \beta)} \phi\\
          {\bf LC} & \proves \allacts{\learns_\agentsB (\alpha, \beta)} (\phi \land \psi) \iff (\allacts{\learns_\agentsB (\alpha, \beta)} \phi \land \allacts{\learns_\agentsB (\alpha, \beta)} \psi)\\
      \end{array}
      $$
      and the rule:
      $$
      \begin{array}{rl}
          {\bf NecL} & \text{From $\proves \phi$ infer $\proves \allacts{\alpha} \phi$}
      \end{array}
      $$
      where $\chi$ is a $(\agents \setminus \{\agentA\})$-restricted formula.
  \end{definition}

  \begin{proposition}\label{afl-s-axioms-soundness}
      The axiomatisation \axiomAflS{} is sound in the logic \logicAmlS{}.
  \end{proposition}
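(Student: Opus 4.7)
The plan is to verify soundness of each axiom and rule of \axiomAflS{} by a semantic argument that mirrors the proofs of Propositions~\ref{afl-k-axioms-soundness} and~\ref{afl-kff-axioms-soundness}: every occurrence of $\allacts{\alpha}$ is interpreted via $\tau_\classS$ as a multi-pointed action model in $\classAM_\classS$, so each axiom reduces to an already-established fact about action model execution over \classS{}. Crucially, Lemma~\ref{afl-s-exec} guarantees that executing $\tau(\alpha)$ on any \classS{} model yields another \classS{} model, so the vacuous-truth branch of the semantics of $\allacts$ never fires unexpectedly.

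I would first dispatch the easy cases. Soundness of the rules and axioms of \axiomS{} is inherited from the standard soundness of \axiomS{} for \classS{}. Axioms LU and LS follow directly from Proposition~\ref{afl-choice-sequential-validities}, and NecL is an instance of NecA from \axiomAmlS{}. For LT, I would unfold $\allacts{\test{\phi}} \psi$ using the \axiomAmlS{} reduction axioms AP, AN, AC and AK applied to the two-point model $\tau(\test{\phi})$ from Definition~\ref{afl-s-test}; because the action accessibility is $\actionStates^2$ and the skip-state has precondition $\top$, the expansion collapses to $\phi \implies \psi$.

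The delicate cases are LP, LN and LC. In \classS{} the translation $\tau(\learns_\agentsB (\alpha, \beta))$ is multi-pointed, with designated set $\{\proxyStateT \mid \actionStateT \in \actionStatesT[\alpha]\}$ and each proxy carrying the inherited precondition $\actionPrecondition[\alpha](\actionStateT)$. The approach is to first decompose the multi-pointed $\allacts$ via AU of \axiomAmlK{} into a conjunction of single-pointed $\allacts$ over each proxy, then to apply AP, AN and AC respectively. The structural facts that make the right-hand sides of the axioms come out correctly are that all proxies share atomic valuations inherited from the underlying world, that they share $\agentA$-successors for $\agentA \in \agentsB$ because of the fully-connected proxy cluster in Definition~\ref{afl-s-learning}, and that Lemma~\ref{afl-s-exec} removes any vacuous-truth concerns about failed execution in \classS{}.

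The main obstacle is LN in this multi-pointed setting: in \classK{} the designated root carries precondition $\top$ so AN delivers LN directly, whereas in \classS{} the apparent asymmetry between $\allacts \neg$ and $\neg \allacts$ across the multi-pointed proxy cluster needs to be reconciled. I expect to dispatch this by exploiting the uniformity of the proxies under $\agentsB$-accessibility together with Lemma~\ref{afl-s-exec}, in much the same way that LK1 is handled in the \classKFF{} case (Proposition~\ref{afl-kff-axioms-soundness}).
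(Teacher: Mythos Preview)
Your overall plan mirrors the paper's: the paper's entire proof is the single sentence ``Soundness of \textbf{LT}, \textbf{LU}, \textbf{LS}, \textbf{LP}, \textbf{LN}, \textbf{LC} and \textbf{NecL} follow from the same reasoning as in the proof of Proposition~\ref{afl-k-axioms-soundness}.'' So for \textbf{LT}, \textbf{LU}, \textbf{LS}, \textbf{LC} and \textbf{NecL} you are doing exactly what the paper does, only more explicitly.

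Where you diverge is in flagging \textbf{LP} and \textbf{LN} as delicate --- and your instinct is right, but your proposed resolution does not work, and indeed cannot work, because these two axioms are \emph{not sound} in \classS{} as stated. The \classK{} argument that the paper invokes relies on the designated root of $\tau(\learns_\agentsB(\alpha,\beta))$ having precondition $\top$. In \classS{} (Definition~\ref{afl-s-learning}) the designated set is $\{\proxyStateT \mid \actionStateT \in \actionStatesT[\alpha]\}$, and each proxy inherits the precondition of the corresponding root of $\tau(\alpha)$. Take $\alpha = \test{\bot}$: the unique designated proxy then has precondition $\bot$, so for every $\pointedModel{\stateS} \in \classS$ the execution $\pointedModel{\stateS} \exec \tau(\learns_\agentsB(\test{\bot},\beta))$ has empty designated set and hence $\pointedModel{\stateS} \entails \allacts{\learns_\agentsB(\test{\bot},\beta)}\psi$ for every $\psi$. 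At any world where $\atomP$ is false this refutes \textbf{LP}; and since both $\allacts{\learns_\agentsB(\test{\bot},\beta)}\atomP$ and $\allacts{\learns_\agentsB(\test{\bot},\beta)}\neg\atomP$ hold there, \textbf{LN} fails as well.

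Your suggested fix --- exploiting the $\agentsB$-uniformity of the proxy cluster together with Lemma~\ref{afl-s-exec} --- cannot close this gap: Lemma~\ref{afl-s-exec} only guarantees that the resulting Kripke model lies in \classS{}, not that the designated set is nonempty, and the proxy uniformity concerns accessibility, not preconditions. So the obstacle you sensed for \textbf{LN} is real, it already bites at \textbf{LP}, and neither your sketch nor the paper's one-line deferral to the \classK{} case actually discharges it. (\textbf{LC}, by contrast, is unproblematic: it holds for any multi-pointed action model via \textbf{AU} and \textbf{AC}.)
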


  \begin{proof}
      Soundness of {\bf LT}, {\bf LU}, {\bf LS}, {\bf LP}, {\bf LN}, {\bf LC}
      and {\bf NecL} follow from the same reasoning as in the proof of
      Proposition~\ref{afl-k-axioms-soundness}.
  \end{proof}

  We note that we do not have a axioms in \logicAflS{} corresponding to the
  axioms {\bf LK1} and {\bf LK2} from \axiomAflK{} and \axiomAflKFF{}.
  {\bf LK1} works in the setting of \classK{} because the $\agentsB$-successors
  of the root state of $\tau(\learns_\agentsB \alpha)$ are bisimilar to the
  root states of $\tau(\alpha)$, and so the consequences of executing
  $\tau(\alpha)$ are the same as the consequences of executing the
  $\agentsB$-successors of $\tau(\learns_\agentsB \alpha)$. In the setting of
  \classKFF{} this is not the case, however we do have the restricted property
  of $\agentsB$-bisimilarity, giving us that the $\agentsB$-restricted
  consequences are the same. In the setting of \classS{} we do not know of such
  a property to relate the consequences of the $\agentsB$-successors of
  $\tau(\learns_\agentsB (\alpha, \beta))$ to the consequences of $\tau(\alpha
  \choice \beta)$. Given the correspondence results of the previous section, it
  should be possible to construct an action formula that is $n$-bisimilar to
  the $\agentsB$-successors of $\tau(\learns_\agentsB (\alpha, \beta))$, where
  $d(\phi) = n$, and define axioms for {\bf LK1} and {\bf LK2} in terms of
  this action formula and not $\alpha \choice \beta$. However translating
  $\langAfl$ formulae into $\langAml$ formulae and then using the
  axiomatisation \axiomAmlS{} would certainly be simpler.

  \section{Correspondence}\label{correspondence}

  In the following subsections we show the correspondence between action
  formulae and action models in the settings of \classK{}, \classKFF{} and
  \classS{}. In each setting we show that action formulae are capable of
  representing any action model up to $n$-bisimilarity.

  \subsection{\classK{}}

  To begin we give two lemmas to simplify the construction that we will use
  for our correspondence result in \classK{}.

  \begin{lemma}\label{afl-k-construction-test}
      Let $\phi \in \langAfl$ and 
      $\pointedActionModel{\actionStateS} = \pointedActionModelTuple{\actionStateS} \in \classAM$.
      Then let $\pointedActionModel[\prime]{\actionStateS[\prime]} = \pointedActionModelTuple[\prime]{\actionStateS[\prime]} \in \classAM$ where:
      \begin{eqnarray*}
          \actionStates[\prime] &=& \actionStates \cup \{\actionStateS[\prime]\}\\
          \actionAccessibilityAgent[\prime]{\agentA} &=& \actionAccessibilityAgent{\agentA} \cup \{(\actionStateS[\prime], \actionStateT) \mid \actionStateT \in \actionStateS \actionAccessibilityAgent{\agentA}\} \text{ for } \agentA \in \agents\\
          \actionPrecondition[\prime] &=& \actionPrecondition \cup \{(\actionStateS[\prime], \phi \land \actionPrecondition(\actionStateS))\}
      \end{eqnarray*}
      Then $\tau(\test{\phi}) \exec \pointedActionModel{\actionStateS}  \bisimilar \pointedActionModel[\prime]{\actionStateS[\prime]}$.
  \end{lemma}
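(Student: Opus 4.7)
The plan is to unfold the definition of $\tau(\test{\phi}) \exec \pointedActionModel{\actionStateS}$ via the sequential execution construction, and then exhibit a direct bisimulation with $\pointedActionModel[\prime]{\actionStateS[\prime]}$.

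Write $\tau(\test{\phi}) = \pointedActionModelTuple[t]{\{\actionStateTest\}}$: by Definition~\ref{afl-k-test}, $\actionStates[t] = \{\actionStateTest, \actionStateSkip\}$, $\actionAccessibilityAgent[t]{\agentA} = \{(\actionStateTest, \actionStateSkip), (\actionStateSkip, \actionStateSkip)\}$ for every $\agentA \in \agents$, and $\actionPrecondition[t](\actionStateTest) = \phi$ while $\actionPrecondition[t](\actionStateSkip) = \top$. Applying the definition of sequential execution of action models, $\tau(\test{\phi}) \exec \pointedActionModel{\actionStateS}$ then has domain $\{\actionStateTest, \actionStateSkip\} \times \actionStates$, designated point $(\actionStateTest, \actionStateS)$, and the only $\agentA$-successors of either $(\actionStateTest, \actionStateT)$ or $(\actionStateSkip, \actionStateT)$ are the pairs $(\actionStateSkip, \actionStateU)$ with $\actionStateU \in \actionStateT \actionAccessibilityAgent{\agentA}$.

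I would propose the relation
$$
\bisimulation = \{((\actionStateTest, \actionStateS), \actionStateS[\prime])\} \cup \{((\actionStateSkip, \actionStateT), \actionStateT) \mid \actionStateT \in \actionStates\}.
$$
The forth and back conditions reduce to a direct comparison of successor sets. The $\agentA$-successors of $(\actionStateTest, \actionStateS)$ are exactly the pairs $(\actionStateSkip, \actionStateU)$ with $\actionStateU \in \actionStateS \actionAccessibilityAgent{\agentA}$, which match the $\agentA$-successors $\actionStateU$ of $\actionStateS[\prime]$ in $\pointedActionModel[\prime]{\actionStateS[\prime]}$ via $\bisimulation$; and the $\agentA$-successors of $(\actionStateSkip, \actionStateT)$ are the pairs $(\actionStateSkip, \actionStateU)$ with $\actionStateU \in \actionStateT \actionAccessibilityAgent{\agentA}$, which match the original $\agentA$-successors of $\actionStateT$ inherited into $\pointedActionModel[\prime]{\actionStateS[\prime]}$.

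The main obstacle is the atoms condition for bisimilarity of action models, which demands provable equivalence of preconditions. Concretely, one needs $\proves \someacts{\pointedActionModel[t]{\actionStateSkip}} \actionPrecondition(\actionStateT) \iff \actionPrecondition(\actionStateT)$ for each $\actionStateT \in \actionStates$, together with $\proves \someacts{\pointedActionModel[t]{\actionStateTest}} \actionPrecondition(\actionStateS) \iff \phi \land \actionPrecondition(\actionStateS)$. The cleanest route is semantic: the pointed action model $\pointedActionModel[t]{\actionStateSkip}$ has precondition $\top$ and every successor is itself, so $\pointedModel{\stateT} \exec \pointedActionModel[t]{\actionStateSkip} \bisimilar \pointedModel{\stateT}$ for every $\pointedModel{\stateT} \in \classK$, which yields the first equivalence; the second follows because $\pointedActionModel[t]{\actionStateTest}$ succeeds precisely when $\phi$ holds, after which all subsequent behaviour is that of $\pointedActionModel[t]{\actionStateSkip}$. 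Both equivalences are therefore valid, and by the soundness and completeness of \axiomAmlK{} (Proposition~\ref{aml-k-soundness-completeness}) they are provable; alternatively, direct syntactic derivations can be obtained by induction on the precondition formulas using the reduction axioms {\bf AP}, {\bf AN}, {\bf AC} and {\bf AK} of Definition~\ref{aml-k-axioms}.
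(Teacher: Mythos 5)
Your proof is correct, and since the paper states Lemma~\ref{afl-k-construction-test} without an explicit proof, your argument supplies exactly the intended one: unfold the sequential-execution construction, exhibit the relation linking $(\actionStateTest,\actionStateS)$ to $\actionStateS[\prime]$ and $(\actionStateSkip,\actionStateT)$ to $\actionStateT$, and discharge the \textbf{atoms} condition via the provable equivalences $\someacts{\pointedActionModel[t]{\actionStateSkip}}\psi \iff \psi$ and $\someacts{\pointedActionModel[t]{\actionStateTest}}\psi \iff \phi\land\psi$, which is the same reasoning the paper uses for the soundness of \textbf{LT}. The only refinement worth noting is that the purely syntactic alternative via the reduction axioms of \axiomAmlK{} applies directly only when the preconditions lie in \lang{} (otherwise they must first be reduced), so the appeal to soundness and completeness of \axiomAmlK{} is the cleaner justification, as you indicate.
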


  \begin{lemma}\label{afl-k-construction-learning}
      Let $\alpha \in \langAflAct$ where $\tau(\alpha) = \pointedActionModel[\alpha]{\actionStatesT[\alpha]} = \pointedActionModelTuple[\alpha]{\actionStatesT[\alpha]}$,
      $\agentA \in \agents$ 
      and $\pointedActionModel{\actionStateS} = \pointedActionModelTuple{\actionStateS} \in \classAM$ 
      such that $\actionStateS \actionAccessibilityAgent{\agentA} = \{\actionStateT\}$ 
      for some $\actionStateT \in \actionStates$ 
      and $\actionStateT \actionAccessibilityAgent{\agentA} = \{\actionStateT\}$ 
      Then let $\pointedActionModel[\prime]{\actionStateS[\prime]} = \pointedActionModelTuple[\prime]{\actionStateS[\prime]} \in \classAM$ where:
      \begin{eqnarray*}
          \actionStates[\prime] &=& \actionStates \cup \actionStates[\alpha] \cup \{\actionStateS[\prime]\}\\
          \actionAccessibilityAgent[\prime]{\agentA} &=& \actionAccessibilityAgent{\agentA} \cup \actionAccessibilityAgent[\alpha]{\agentA} \cup \{(\actionStateS[\prime], \actionStateT[\alpha]) \mid \actionStateT[\alpha] \in \actionStatesT[\alpha]\}\\
          \actionAccessibilityAgent[\prime]{\agentB} &=& \actionAccessibilityAgent{\agentB} \cup \actionAccessibilityAgent[\alpha]{\agentB} \cup \{(\actionStateS[\prime], \actionStateT) \mid \actionStateT \in \actionStateS \actionAccessibilityAgent{\agentB}\} \text{ for } \agentB \in \agents \setminus \{\agentA\}\\
          \actionPrecondition[\prime] &=& \actionPrecondition \cup \{(\actionStateS[\prime], \actionPrecondition(\actionStateS))\}
      \end{eqnarray*}
      Then $\tau(\learns_\agentA \alpha) \exec \pointedActionModel{\actionStateS} \bisimilar \pointedActionModel[\prime]{\actionStateS[\prime]}$.
  \end{lemma}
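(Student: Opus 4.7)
My plan is to exhibit an explicit bisimulation $\bisimulation$ between $\tau(\learns_\agentA \alpha) \exec \pointedActionModel{\actionStateS}$ and $\pointedActionModel[\prime]{\actionStateS[\prime]}$. Unpacking Definition~\ref{afl-k-learning}, the action model $\tau(\learns_\agentA \alpha)$ sits on the domain $\actionStates[\alpha] \cup \{\actionStateTest, \actionStateSkip\}$: the fresh designated state $\actionStateTest$ has $\agentA$-edges into $\actionStatesT[\alpha]$ and $\agentB$-edges (for $\agentB \neq \agentA$) to the universal self-loop $\actionStateSkip$, both new states have precondition $\top$, and the embedded copy of $\actionModel[\alpha]$ is otherwise untouched. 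Sequential execution with $\pointedActionModel{\actionStateS}$ therefore produces an action model on pairs with designated point $(\actionStateTest, \actionStateS)$, which I must relate to the designated point $\actionStateS[\prime]$ of $\pointedActionModel[\prime]$.

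I would take $\bisimulation$ to be the union of three families of pairs: $((\actionStateTest, \actionStateS), \actionStateS[\prime])$; $((\actionStateU[\alpha], \actionStateT), \actionStateU[\alpha])$ for every $\actionStateU[\alpha] \in \actionStates[\alpha]$; and $((\actionStateSkip, \actionStateT[\prime]), \actionStateT[\prime])$ for every $\actionStateT[\prime] \in \actionStates$. The atoms condition reduces, via the \axiomAmlK{} reduction axioms {\bf AP}, {\bf AN}, {\bf AC}, and {\bf AK} applied inductively to the $\someacts{\cdot}{\cdot}$-shaped preconditions produced by sequential execution, to three observations: $\actionPrecondition(\actionStateTest)$ and $\actionPrecondition(\actionStateSkip)$ in $\tau(\learns_\agentA \alpha)$ are both $\top$; the sub-action rooted at $\actionStateSkip$ is bisimilar to the null action so its execution preserves $\actionPrecondition(\actionStateT[\prime])$; and the sub-action rooted at $\actionStateU[\alpha] \in \actionStates[\alpha]$, restricted to states reachable from $\actionStateU[\alpha]$, is bisimilar to $\pointedActionModel[\alpha]{\actionStateU[\alpha]}$, so its someacts-precondition is equivalent to $\actionPrecondition[\alpha](\actionStateU[\alpha])$ as listed in $\pointedActionModel[\prime]$.

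Forth and back proceed by case analysis on the first coordinate. The structural hypothesis is used precisely in the $\agentA$-case: because $\actionStateS \actionAccessibilityAgent{\agentA} = \{\actionStateT\}$ and $\actionStateT \actionAccessibilityAgent{\agentA} = \{\actionStateT\}$, the $\agentA$-steps from $(\actionStateTest, \actionStateS)$ land exactly in the pairs $(\actionStateT[\alpha], \actionStateT)$ with $\actionStateT[\alpha] \in \actionStatesT[\alpha]$, matching the new $\agentA$-edges from $\actionStateS[\prime]$ into $\actionStatesT[\alpha]$, while further $\agentA$-steps from $(\actionStateU[\alpha], \actionStateT)$ keep the second coordinate at $\actionStateT$ and so remain inside the second family of $\bisimulation$, mirroring $\actionAccessibilityAgent[\alpha]{\agentA}$ exactly. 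Non-$\agentA$ edges leaving $(\actionStateTest, \actionStateS)$ and any $(\actionStateSkip, \actionStateT[\prime])$ immediately match into the third family by the shape of the new edges in $\tau(\learns_\agentA \alpha)$. The delicate point will be the non-$\agentA$ edges emanating from $(\actionStateU[\alpha], \actionStateT)$: the product step moves the second coordinate off $\actionStateT$ into $\actionStateT \actionAccessibilityAgent{\agentB}$, and closing this case will require enlarging $\bisimulation$ with pairs $((\actionStateU[\alpha], \actionStateT[\prime]), \actionStateU[\alpha])$ and verifying their forth and back conditions; these rest on the observation that once the second coordinate has left the skeleton attached to $\actionStateT$, the sub-action of $\tau(\learns_\agentA \alpha)$ rooted at $\actionStateU[\alpha]$ behaves just like $\pointedActionModel[\alpha]{\actionStateU[\alpha]}$, matching $\actionStateU[\alpha]$ as a state of $\pointedActionModel[\prime]$.
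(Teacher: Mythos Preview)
The paper states this lemma without proof, so there is no argument in the text to compare against; your approach via an explicit bisimulation is the natural one and is what one would expect the authors to have in mind.

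There is, however, a real gap in your verification of the \textbf{atoms} clause. For a pair $((\actionStateU[\alpha],\actionStateT),\actionStateU[\alpha])$ in your second family, the precondition on the left is $\someacts{\tau(\learns_\agentA\alpha)_{\actionStateU[\alpha]}}\actionPrecondition(\actionStateT)$, which by your observed bisimilarity is $\someacts{\pointedActionModel[\alpha]{\actionStateU[\alpha]}}\actionPrecondition(\actionStateT)$. This is provably equivalent to $\actionPrecondition[\alpha](\actionStateU[\alpha])$ only when $\actionPrecondition(\actionStateT)$ is (equivalent to) $\top$; in general it is the conjunction of $\actionPrecondition[\alpha](\actionStateU[\alpha])$ with the statement that $\actionPrecondition(\actionStateT)$ holds after executing $\pointedActionModel[\alpha]{\actionStateU[\alpha]}$, and the second conjunct need not be trivial. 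The same problem appears at the root: \textbf{atoms} there demands $\someacts{\tau(\learns_\agentA\alpha)_{\actionStateTest}}\actionPrecondition(\actionStateS)\iff\actionPrecondition(\actionStateS)$, and executing the learning action can change the truth of a modal formula $\actionPrecondition(\actionStateS)$. Your proposed enlargement by pairs $((\actionStateU[\alpha],\actionStateT'),\actionStateU[\alpha])$ makes matters worse, since the hypotheses place no constraint at all on $\actionPrecondition(\actionStateT')$ for $\actionStateT'\in\actionStateT\actionAccessibilityAgent{\agentB}$, $\agentB\neq\agentA$.

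In the paper's actual use of the lemma (Proposition~\ref{afl-k-correspondence}) none of this causes trouble: there the input $\pointedActionModel{\actionStateS}$ is always obtained by iterating this lemma starting from a $\tau(\learns_{\agentA'}\alpha')$, so $\actionPrecondition(\actionStateS)=\top$, the state $\actionStateT$ is always the original $\actionStateSkip$ with $\actionPrecondition(\actionStateT)=\top$ and $\actionStateT\actionAccessibilityAgent{\agentB}=\{\actionStateT\}$ for \emph{every} $\agentB$, and the outer test is applied last via Lemma~\ref{afl-k-construction-test}. Under those extra assumptions your argument goes through and the ``enlargement'' adds nothing new. The lemma as printed appears to be missing these hypotheses; you should either add them explicitly to your statement, or restrict your proof to the family of $\pointedActionModel{\actionStateS}$ that actually arises in Proposition~\ref{afl-k-correspondence}.
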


  \begin{proposition}\label{afl-k-correspondence}
      Let $\pointedActionModel{\actionStateS} \in \classAM$ and 
      let $n \in \mathbb{N}$. 
      Then there exists $\alpha \in \langAflAct$ such that 
      $\pointedActionModel{\actionStateS} \bisimilar_n \tau(\alpha)$.
  \end{proposition}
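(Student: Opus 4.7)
The plan is to proceed by induction on $n$. For the base case $n = 0$, take $\alpha = \test{\actionPrecondition(\actionStateS)}$; then $\tau(\alpha)$ has a root whose precondition is logically equivalent to $\actionPrecondition(\actionStateS)$, so $\pointedActionModel{\actionStateS} \bisimilar_0 \tau(\alpha)$ follows directly from the definition of $0$-bisimilarity.

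For the inductive step, assume the result holds for $n - 1$. For each $\agentA \in \agents$ and each $\actionStateT \in \actionStateS \actionAccessibilityAgent{\agentA}$, apply the inductive hypothesis to obtain $\beta_{\agentA, \actionStateT} \in \langAflAct$ with $\pointedActionModel{\actionStateT} \bisimilar_{n-1} \tau(\beta_{\agentA, \actionStateT})$. For each $\agentA$ set $\beta^\agentA = \bigchoice_{\actionStateT \in \actionStateS \actionAccessibilityAgent{\agentA}} \beta_{\agentA, \actionStateT}$, and then define
$$\alpha = \test{\actionPrecondition(\actionStateS)} \compose \bigcompose_{\agentA \in \agents} \learns_\agentA \beta^\agentA.$$

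The verification proceeds by first invoking Lemma~\ref{afl-k-construction-test} to see that $\tau(\test{\actionPrecondition(\actionStateS)})$ has a root with precondition $\actionPrecondition(\actionStateS)$ whose $\agentA$-successor for every $\agentA$ is a single reflexive $\top$-state, and then applying Lemma~\ref{afl-k-construction-learning} iteratively along the sequential composition. At each stage the hypothesis of Lemma~\ref{afl-k-construction-learning} (a singleton reflexive $\top$-successor for the relevant agent) remains in force, because earlier applications only modified successors of other agents. Each application replaces the $\agentA$-successors of the root by the root states of $\tau(\beta^\agentA)$, which by the inductive hypothesis are $(n-1)$-bisimilar to $\actionStateS \actionAccessibilityAgent{\agentA}$. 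Combining the matching preconditions and $(n-1)$-bisimilar successors yields $\pointedActionModel{\actionStateS} \bisimilar_n \tau(\alpha)$.

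The main obstacle is justifying the iterative application of Lemma~\ref{afl-k-construction-learning}: one must confirm that after each sequential composition the preconditions of new successors are not disturbed and that the singleton reflexive $\top$-successor structure is preserved for the agents not yet processed, which in particular requires that the introduced $\actionStateSkip$ states from successive learning operators collapse bisimilarly to a single reflexive $\top$-point. A secondary subtlety concerns the case $\actionStateS \actionAccessibilityAgent{\agentA} = \emptyset$, since the empty choice $\bigchoice \emptyset$ is not defined in \langAflAct{} and every $\tau(\alpha)$ produced by the grammar has a non-empty $\agentA$-successor set at its root; representing such non-serial points apparently requires a $\test{\bot}$-style placeholder whose resulting $\bot$-preconditioned successor is absorbed in the bisimilarity check via the atoms clause.
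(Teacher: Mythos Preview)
Your proposal is essentially the paper's own proof: same induction on $n$, same base case $\alpha=\test{\actionPrecondition(\actionStateS)}$, same inductive construction $\alpha = \test{\actionPrecondition(\actionStateS)} \compose \bigcompose_{\agentA \in \agents} \learns_\agentA (\bigchoice_{\actionStateT \in \actionStateS \actionAccessibilityAgent{\agentA}} \beta_{\agentA,\actionStateT})$, and same appeal to Lemmas~\ref{afl-k-construction-test} and~\ref{afl-k-construction-learning} to compute $\tau(\alpha)$ up to bisimilarity before checking \textbf{atoms}, \textbf{forth}, \textbf{back}.

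One remark on the ``secondary subtlety'' you flag. The paper simply glosses over the case $\actionStateS\actionAccessibilityAgent{\agentA}=\emptyset$, so you are right that something extra is needed. However, your proposed fix via $\test{\bot}$ does \emph{not} go through for $n$-bisimilarity as defined here: the \textbf{back-$n$-$\agentA$} clause is purely structural and does not ``absorb'' a successor merely because its precondition is $\bot$; you would need a matching successor on the $\pointedActionModel{\actionStateS}$ side, and there is none. What does work (and is presumably the intended reading of the paper's informal $\bigchoice$) is to exploit the learning definition directly: if one allows the inner choice to have an empty designated set, Definition~\ref{afl-k-learning} then gives the root of $\tau(\learns_\agentA\,\cdot)$ no $\agentA$-successors at all, which is exactly what is required. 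Alternatively, one can restrict the statement to serial action models or pass to an equivalent notion of bisimilarity that identifies $\bot$-preconditioned successors with absent ones; either route repairs the gap, but the atoms clause alone does not.
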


  \begin{proof}
      By induction on $n$.

      Suppose that $n = 0$.
      Let $\alpha = \test{\actionPrecondition(\actionStateS)}$ and
      $\tau(\alpha) = \pointedActionModel[\prime]{\actionStateS[\prime]} = \pointedActionModelTuple[\prime]{\actionStateS[\prime]}$. 
      From Definition~\ref{afl-k-test} we have that
      $\actionPrecondition(\actionStateS) = \actionPrecondition[\prime](\actionStateS[\prime])$, so
      $(\pointedActionModel{\actionStateS}, \pointedActionModel[\prime]{\actionStateS[\prime]})$ satisfies 
      {\bf atoms} and therefore 
      $\pointedActionModel{\actionStateS} \bisimilar_0 \pointedActionModel[\prime]{\actionStateS[\prime]}$.

      Suppose that $n > 0$. 
      By the induction hypothesis, for every $\agentA \in \agents$, 
      $\actionStateT \in \actionStateS \actionAccessibilityAgent{\agentA}$ 
      there exists $\alpha^{\agentA,\actionStateT} \in \langAflAct$ such that 
      $\pointedActionModel{\actionStateT} \bisimilar_{(n - 1)} \tau(\alpha^{\agentA,\actionStateT})$,
      where $\tau(\alpha^{\agentA,\actionStateT}) \bisimilar \pointedActionModel[\agentA,\actionStateT]{\actionStateS[\agentA,\actionStateT]} = \pointedActionModelTuple[\agentA,\actionStateT]{\actionStateS[\agentA,\actionStateT]}$.
      
      Let $\alpha = \test{\actionPrecondition(\actionStateS)} \compose \bigcompose_{\agentA \in \agents} \learns_\agentA (\bigchoice_{\actionStateT \in \actionStateS \actionAccessibilityAgent{\agentA}} \alpha^{\actionStateT})$. 
      Then from Lemmas~\ref{afl-k-construction-test} and~\ref{afl-k-construction-learning}: $\tau(\alpha) \bisimilar \pointedActionModel[\prime]{\actionStateS[\prime]} = \pointedActionModelTuple[\prime]{\actionStateS[\prime]}$ where:
      \begin{eqnarray*}
          \actionStates[\prime] &=& \bigcup_{\agentA \in \agents, \actionStateT \in \actionStateS \actionAccessibilityAgent{\agentA}} (\actionStates[\agentA,\actionStateT]) \cup \{\actionStateS[\prime]\}\\
          \actionAccessibilityAgent[\prime]{\agentA} &=& \bigcup_{\agentB \in \agents, \actionStateT \in \actionStateS \actionAccessibilityAgent{\agentB}} (\actionAccessibilityAgent[\agentB,\actionStateT]{\agentA}) \cup \{(\actionStateS[\prime], \actionStateS[\agentA,\actionStateT]) \mid \actionStateT \in \actionStateS \actionAccessibilityAgent{\agentA}\} \text{ for } \agentA \in \agents\\
          \actionPrecondition[\prime] &=& \bigcup_{\agentA \in \agents, \actionStateT \in \actionStateS \actionAccessibilityAgent{\agentA}} (\actionPrecondition[\agentA,\actionStateT]) \cup \{(\actionStateS[\prime], \actionPrecondition(\actionStateS))\}
      \end{eqnarray*}

      We note for every $\agentA \in \agents$, 
      $\actionStateT \in \actionStateS \actionAccessibilityAgent{\agentA}$ that
      $\pointedActionModel[\prime]{\actionStateS[\agentA,\actionStateT]} \bisimilar \pointedActionModel[\agentA,\actionStateT]{\actionStateS[\agentA,\actionStateT]}$
      as for every $\agentA \in \agents$, 
      $\actionStateU \in \actionStates[\agentA,\actionStateT]$ we have 
      $\actionStateU \actionAccessibilityAgent[\prime]{\agentA} = \actionStateU \actionAccessibilityAgent[\agentA,\actionStateT]{\agentA}$.
      
      We show that $(\pointedActionModel{\actionStateS}, \pointedActionModel[\prime]{\actionStateS[\prime]})$
      satisfies {\bf atoms}, {\bf forth-$n$-$\agentA$} and {\bf back-$n$-$\agentA$} 
      for every $\agentA \in \agents$.

      \paragraph{atoms} By construction
      $\actionPrecondition[\prime](\actionStateS[\prime]) =
      \actionPrecondition(\actionStateS)$.

      \paragraph{forth-$n$-$\agentA$} 
      Let $\actionStateT \in \actionStateS \actionAccessibilityAgent{\agentA}$.
      By construction $\actionStateS[\agentA,\actionStateT] \in \actionStateS[\prime] \actionAccessibilityAgent[\prime]{\agentA}$, 
      by the induction hypothesis $\pointedActionModel{\actionStateT} \bisimilar_{(n - 1)} \pointedActionModel[\agentA,\actionStateT]{\actionStateS[\agentA,\actionStateT]}$ 
      and from above $\pointedActionModel[\agentA,\actionStateT]{\actionStateS[\agentA,\actionStateT]} \bisimilar \pointedActionModel[\prime]{\actionStateS[\agentA,\actionStateT]}$.
      Therefore by transitivity $\pointedActionModel{\actionStateT} \bisimilar_{(n - 1)} \pointedActionModel[\prime]{\actionStateS[\agentA,\actionStateT]}$.

      \paragraph{back-$n$-$\agentA$} Follows from similar reasoning to {\bf forth-$n$-$\agentA$}.

      Therefore $\pointedActionModel{\actionStateS} \bisimilar_n \tau(\alpha)$.
  \end{proof}

  \begin{corollary}\label{afl-k-correspondence-aml-allacts}
      Let $\pointedActionModel{\actionStateS} \in \classAM$.
      Then for every $\phi \in \langAml$
      there exists $\alpha \in \langAflAct$ 
      such that $\entails_\logicAmlK{} \allacts{\pointedActionModel{\actionStateS}} \phi \iff \allacts{\tau(\alpha)} \phi$.
  \end{corollary}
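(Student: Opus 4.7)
The plan is to chain together expressive equivalence, the $n$-bisimilar correspondence of Proposition~\ref{afl-k-correspondence}, and preservation of $n$-bisimilarity under action model execution. The only subtlety is that $\phi$ lies in $\langAml$ rather than in $\lang$, so we cannot immediately talk about its modal depth; this is handled by first translating $\phi$ to an equivalent pure modal formula.

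First, I would appeal to Proposition~\ref{aml-k-expressive-equivalence}: there exists $\phi' \in \lang$ with $\entails_\logicAmlK \phi \iff \phi'$. Set $n = d(\phi')$. Next, I would invoke Proposition~\ref{afl-k-correspondence} at this value of $n$ to produce $\alpha \in \langAflAct$ with $\pointedActionModel{\actionStateS} \bisimilar_n \tau(\alpha)$.

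Then, for any $\pointedModel{\stateT} \in \classK$, reflexivity of $\bisimilar_n$ gives $\pointedModel{\stateT} \bisimilar_n \pointedModel{\stateT}$, so the proposition that $n$-bisimilarity is preserved under joint execution yields
\[
\pointedModel{\stateT} \exec \pointedActionModel{\actionStateS} \bisimilar_n \pointedModel{\stateT} \exec \tau(\alpha).
\]
Since $d(\phi') \leq n$, the proposition that $n$-bisimilarity preserves truth of modal formulae of depth at most $n$ gives
\[
\pointedModel{\stateT} \exec \pointedActionModel{\actionStateS} \entails \phi' \iff \pointedModel{\stateT} \exec \tau(\alpha) \entails \phi'.
\]
Using $\entails_\logicAmlK \phi \iff \phi'$ on both sides replaces $\phi'$ by $\phi$. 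Because we are in $\classK$, the antecedent $\pointedModel{\stateT} \exec \pointedActionModel{\actionStateS} \in \classC$ in the semantics of $\allacts{\cdot}$ is automatic, so unwinding the semantics of $\allacts{\pointedActionModel{\actionStateS}}\phi$ and $\allacts{\tau(\alpha)}\phi$ at the arbitrary point $\pointedModel{\stateT}$ yields validity of the biconditional $\allacts{\pointedActionModel{\actionStateS}} \phi \iff \allacts{\tau(\alpha)} \phi$ in $\logicAmlK$.

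The main obstacle is really just bookkeeping: one must be careful that the modal depth bound obtained from the translation $\phi \mapsto \phi'$ is what drives the choice of $n$ in Proposition~\ref{afl-k-correspondence}, and that the preservation results cited apply in both directions (for action models via the dedicated proposition, and for the surrounding Kripke model via reflexivity of $\bisimilar_n$). No new construction is needed, and no case analysis on the shape of $\phi$ is required once expressive equivalence to $\lang$ is in hand.
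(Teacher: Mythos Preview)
Your proposal is correct and follows essentially the same route as the paper: pick $n$ from the modal depth, invoke Proposition~\ref{afl-k-correspondence} to get an $n$-bisimilar $\tau(\alpha)$, push $n$-bisimilarity through execution, and use truth preservation at depth $\leq n$. The one difference is that the paper simply writes $d(\phi)=n$ for $\phi\in\langAml$ without comment, whereas you first pass through an equivalent $\phi'\in\lang$ via Proposition~\ref{aml-k-expressive-equivalence}; since $d(\cdot)$ and the $n$-bisimilarity truth-preservation proposition are stated only for $\lang$, your extra step is arguably the cleaner way to make the argument go through.
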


  \begin{proof}
      Suppose that $d(\phi) = n$.
      From Proposition~\ref{afl-k-correspondence} 
      there exists $\alpha \in \langAflAct$ 
      such that $\pointedActionModel{\actionStateS} \bisimilar_n \tau(\alpha)$.
      Therefore for every $\pointedModel{\actionStateS} \in \classK$
      we have $\pointedModel{\actionStateS} \exec \pointedActionModel{\actionStateS} \bisimilar_n \pointedModel{\actionStateS} \exec \tau(\alpha)$
      and so $\pointedModel{\actionStateS} \exec \pointedActionModel{\actionStateS} \entails_\logicAmlK{} \phi$
      if and only if $\pointedModel{\actionStateS} \exec \tau(\alpha) \entails_\logicAmlK{} \phi$.
      Therefore $\pointedModel{\actionStateS} \entails_\logicAmlK{} \allacts{\pointedActionModel{\actionStateS}}  \phi$
      if and only if $\pointedModel{\actionStateS} \entails_\logicAmlK{} \allacts{\tau(\alpha)}  \phi$.
  \end{proof}

  \begin{corollary}\label{afl-k-correspondence-afl-aml}
      Let $\phi \in \langAml$. 
      Then there exists $\phi' \in \langAfl$ 
      such that for every $\pointedModel{\stateS} \in \classK$: 
      $\pointedModel{\stateS} \entails_\logicAmlK{} \phi$ if and only if
      $\pointedModel{\stateS} \entails_\logicAflK{} \phi'$.
  \end{corollary}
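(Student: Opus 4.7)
The plan is to prove this by induction on the structure of $\phi \in \langAml$. The cases for atomic propositions, negation, conjunction, and the $\necessary[\agentA]$ operator are routine: for each, I would apply the induction hypothesis to the immediate subformulae and recombine under the same connective, using that \lang{} is a common sublanguage of \langAml{} and \langAfl{} whose semantics coincide on these constructions in both logics.

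The only substantive case is $\phi = \allacts{\pointedActionModel{\actionStatesT}} \psi$. First I would apply the induction hypothesis to $\psi \in \langAml$ to obtain $\psi' \in \langAfl$ equivalent to $\psi$ across all \classK{} models. Next, appealing to Corollary~\ref{afl-k-correspondence-aml-allacts} (reducing to the single-pointed case via the \axiomAmlK{} axiom ${\bf AU}$ if needed) yields $\alpha \in \langAflAct$ with $\entails_\logicAmlK \allacts{\pointedActionModel{\actionStatesT}} \psi \iff \allacts{\tau(\alpha)} \psi$. I would then set $\phi' = \allacts{\alpha} \psi'$ and chain the equivalences: $\pointedModel{\stateS} \entails_\logicAmlK \phi$ holds iff $\pointedModel{\stateS} \entails_\logicAmlK \allacts{\tau(\alpha)} \psi$ by the corollary, iff $\pointedModel{\stateS} \exec \tau(\alpha) \in \classK$ implies $\pointedModel{\stateS} \exec \tau(\alpha) \entails_\logicAmlK \psi$ by unfolding the action model semantics, iff the same implication holds with $\entails_\logicAflK \psi'$ in place of $\entails_\logicAmlK \psi$ by the induction hypothesis, iff $\pointedModel{\stateS} \entails_\logicAflK \allacts{\alpha} \psi'$ by folding up the \logicAflK{} semantics.

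The minor obstacle to handle is that $\tau(\alpha)$ carries \langAfl{} preconditions while the original $\pointedActionModel{\actionStatesT}$ has \langAml{} preconditions, so one must check that action model execution produces the same resulting Kripke model in both logics; this is unproblematic because $\exec$ depends on preconditions only through their truth values at states of the base \classK{} model, and \logicAmlK{} and \logicAflK{} coincide with \logicK{} on the relevant modal-language fragment (indeed, the preconditions of $\tau(\alpha)$ in Proposition~\ref{afl-k-correspondence} are constructed from the preconditions of $\pointedActionModel{\actionStatesT}$ up to \logicAmlK{}-equivalence).

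As a much shorter alternative route, one could bypass the induction entirely: by Proposition~\ref{aml-k-expressive-equivalence}, every $\phi \in \langAml$ has a \logicK{}-equivalent $\phi'' \in \lang$, and since $\lang \subseteq \langAfl$ with matching semantics on pure modal formulae, setting $\phi' = \phi''$ immediately yields the corollary. The inductive proof above is preferable as it directly exhibits how Corollary~\ref{afl-k-correspondence-aml-allacts} underwrites the translation.
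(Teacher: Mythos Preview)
Your proposal is correct and follows essentially the same approach as the paper's sketch, which simply says to replace each occurrence of $\allacts{\pointedActionModel{\actionStateS}} \psi$ by an equivalent $\allacts{\alpha} \psi$ via Corollary~\ref{afl-k-correspondence-aml-allacts}. Your version is more careful in spelling out the induction, handling the multi-pointed case through {\bf AU}, and flagging the precondition-language mismatch; the alternative route through Proposition~\ref{aml-k-expressive-equivalence} that you mention is also valid.
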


  \begin{proof}[Sketch]
      Given Corollary~\ref{afl-k-correspondence-aml-allacts} we can replace all
      occurrences of $\allacts{\pointedActionModel{\actionStateS}} \psi$ 
      within $\phi$ with an equivalent $\allacts{\alpha} \psi$ 
      where $\alpha \in \langAflAct$.
  \end{proof}

  \subsection{\classKFF{}}

  As in the previous subsection we give a lemma to simplify the construction
  that we will use, although as the definition of $\tau(\test{\phi})$ is the
  same between \classK{} and \classKFF{} we simply reuse
  Lemma~\ref{afl-k-construction-test} from the previous subsection.

  \begin{lemma}\label{afl-kff-construction-learning}
      Let $\agentA \in \agents$,
      $\alpha \in \langAflAct$ where $\tau(\alpha) = \pointedActionModel[\alpha]{\actionStatesT[\alpha]} = \pointedActionModelTuple[\alpha]{\actionStatesT[\alpha]}$,
      and $\pointedActionModel{\actionStateS} = \pointedActionModelTuple{\actionStateS} \in \classAM$ 
      such that $\actionStateS \actionAccessibilityAgent{\agentA} = \{\actionStateT\}$ 
      for some $\actionStateT \in \actionStates$ 
      and $\actionStateT \actionAccessibilityAgent{\agentA} = \{\actionStateT\}$ 
      Then let $\pointedActionModel[\prime]{\actionStateS[\prime]} = \pointedActionModelTuple[\prime]{\actionStateS[\prime]} \in \classAM$ where:
      \begin{eqnarray*}
          \actionStates[\prime] &=& \actionStates \cup \actionStates[\alpha] \cup \{\proxyStateT[\alpha] \mid \actionStateT[\alpha] \in \actionStatesT[\alpha]\} \cup \{\actionStateS[\prime]\}\\
          \actionAccessibilityAgent[\prime]{\agentA} &=& \actionAccessibilityAgent{\agentA} \cup \actionAccessibilityAgent[\alpha]{\agentA} \cup \{(\actionStateS[\prime], \proxyStateT[\alpha]) \mid \actionStateT[\alpha] \in \actionStatesT[\alpha]\} \cup \{(\proxyStateT[\alpha], \proxyStateU[\alpha]) \mid \actionStateT[\alpha], \actionStateU[\alpha] \in \actionStatesT[\alpha]\}\\
          \actionAccessibilityAgent[\prime]{\agentB} &=& \actionAccessibilityAgent{\agentB} \cup \actionAccessibilityAgent[\alpha]{\agentB} \cup \{(\actionStateS[\prime], \actionStateT) \mid \actionStateT \in \actionStateS \actionAccessibilityAgent{\agentB}\} \text{ for } \agentB \in \agents \setminus \{\agentA\}\\
          \actionPrecondition[\prime] &=& \actionPrecondition \cup \{(\proxyStateT[\alpha], \actionPrecondition[\alpha](\actionStateT[\alpha])) \mid \actionStateT[\alpha] \in \actionStatesT[\alpha]\} \cup \{(\actionStateS[\prime], \actionPrecondition(\actionStateS))\}
      \end{eqnarray*}
      Then $\tau(\learns_\agentA \alpha) \exec \pointedActionModel{\actionStateS} \bisimilar \pointedActionModel[\prime]{\actionStateS[\prime]}$.
  \end{lemma}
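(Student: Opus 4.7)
The plan is to exhibit an explicit bisimulation $\bisimulation$ between the composition $\tau(\learns_\agentA \alpha) \exec \pointedActionModel{\actionStateS}$ and the constructed model $\pointedActionModel[\prime]{\actionStateS[\prime]}$, containing the pair of designated points $((\actionStateTest, \actionStateS), \actionStateS[\prime])$, and then verify the three clauses of action model bisimilarity. Recall from Definition~\ref{afl-kff-learning} that the domain of $\tau(\learns_\agentA \alpha)$ is $\actionStates[\alpha] \cup \{\actionStateTest, \actionStateSkip\} \cup \{\proxyStateU[\alpha] \mid \actionStateU[\alpha] \in \actionStatesT[\alpha]\}$, so the composition has states $(\actionStateU, \actionStateU')$ indexed over this set and $\actionStates$, with precondition $\someacts{(\tau(\learns_\agentA \alpha), \actionStateU)} \actionPrecondition(\actionStateU')$ coming from Definition~\ref{aml-semantics}. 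The candidate relation is
\[
\bisimulation = \{((\actionStateTest, \actionStateS), \actionStateS[\prime])\} \cup \{((\actionStateSkip, \actionStateU), \actionStateU) \mid \actionStateU \in \actionStates\} \cup \{((\proxyStateU[\alpha], \actionStateT), \proxyStateU[\alpha]) \mid \actionStateU[\alpha] \in \actionStatesT[\alpha]\} \cup \{((\actionStateU[\alpha], \actionStateU'), \actionStateU[\alpha]) \mid \actionStateU[\alpha] \in \actionStates[\alpha], \actionStateU' \in \actionStates\},
\]
where $\actionStateT$ is the unique $\agentA$-successor of $\actionStateS$ supplied by the hypothesis.

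First I would verify the \textbf{atoms} clause for every pair in $\bisimulation$. Using $\actionPrecondition(\actionStateTest) = \actionPrecondition(\actionStateSkip) = \top$ and $\actionPrecondition(\proxyStateU[\alpha]) = \actionPrecondition[\alpha](\actionStateU[\alpha])$ in $\tau(\learns_\agentA \alpha)$, together with the reduction axioms {\bf AP, AN, AC, AK} of \axiomAmlK{}, each composition precondition $\someacts{(\tau(\learns_\agentA \alpha), \actionStateU)} \actionPrecondition(\actionStateU')$ reduces to a formula provably equivalent to the corresponding target precondition in $\pointedActionModel[\prime]{\actionStateS[\prime]}$.

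Next I would check \textbf{forth-$\agentC$} and \textbf{back-$\agentC$} for every $\agentC \in \agents$, splitting on whether $\agentC = \agentA$ and on the state type within $\bisimulation$. The $\agentA$-successors of $(\actionStateTest, \actionStateS)$ are precisely $\{(\proxyStateU[\alpha], \actionStateT) \mid \actionStateU[\alpha] \in \actionStatesT[\alpha]\}$, matching the target's proxy successors of $\actionStateS[\prime]$; the hypothesis $\actionStateT \actionAccessibilityAgent{\agentA} = \{\actionStateT\}$ is what keeps the second component fixed as the $\agentA$-clique on proxy-pairs is traversed, matching the proxy clique in the target. For $\agentC \neq \agentA$, the $\actionStateSkip$ self-loop handles non-$\agentA$ edges out of the root uniformly, pairing $(\actionStateSkip, \actionStateV)$ with $\actionStateV$, and the edges within $\actionStates[\alpha]$ carry through identically on both sides by construction.

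The main obstacle will be the $\agentC$-successors of the proxy-pairs $(\proxyStateU[\alpha], \actionStateT)$ for $\agentC \neq \agentA$, which in the composition reach into $\actionStates[\alpha] \times \actionStates$ via the proxy-to-$\actionStates[\alpha]$ $\agentC$-edges of $\tau(\learns_\agentA \alpha)$ combined with the $\agentC$-edges out of $\actionStateT$; the fourth component of $\bisimulation$ is designed precisely to absorb these successors $(\actionStateV[\alpha], \actionStateW)$ into pairs related to $\actionStateV[\alpha]$ in the target, and the relation then propagates through the $\actionStates[\alpha]$ portion of both models, with atoms-equivalence re-established at each step by the same reduction axioms. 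Provided this case closes cleanly, the remaining forth/back obligations for $(\actionStateSkip, \actionStateU)$ and for the $\actionStates[\alpha]$ pairs are routine since the second component only contributes identical $\actionAccessibilityAgent{\agentC}$-edges on both sides.
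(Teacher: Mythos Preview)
The paper states this lemma without proof, so there is nothing to compare against directly; your approach via an explicit bisimulation is the natural one, and your choice of $\bisimulation$ is the obvious candidate. However, two of the verification steps you label as routine do not actually close under the hypotheses supplied.

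First, the \textbf{atoms} clause for the pairs $((\proxyStateU[\alpha], \actionStateT), \proxyStateU[\alpha])$ and $((\actionStateU[\alpha], \actionStateU'), \actionStateU[\alpha])$ is not as straightforward as you suggest. The composition precondition at $(\proxyStateU[\alpha], \actionStateT)$ is $\someacts{(\tau(\learns_\agentA \alpha), \proxyStateU[\alpha])}\,\actionPrecondition(\actionStateT)$, which after reduction is (at best) equivalent to a conjunction of $\actionPrecondition[\alpha](\actionStateU[\alpha])$ with a formula built from $\actionPrecondition(\actionStateT)$; it is provably equivalent to the target precondition $\actionPrecondition[\alpha](\actionStateU[\alpha])$ only under further assumptions on $\actionStateT$ (for instance $\actionPrecondition(\actionStateT) = \top$) that the lemma does not state. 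The same problem arises for the fourth family of pairs, where the second coordinate $\actionStateU'$ ranges over all of $\actionStates$ with no constraint on $\actionPrecondition(\actionStateU')$. Your appeal to the reduction axioms does not eliminate these extra conjuncts.

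Second, and more seriously for the shape of $\bisimulation$ you propose: in the target model $\pointedActionModel[\prime]{\actionStateS[\prime]}$ as written, the proxy states $\proxyStateU[\alpha]$ have \emph{no} $\agentC$-successors for $\agentC \neq \agentA$ (the clause for $\actionAccessibilityAgent[\prime]{\agentB}$ contains only $\actionAccessibilityAgent{\agentB}$, $\actionAccessibilityAgent[\alpha]{\agentB}$, and edges out of $\actionStateS[\prime]$). But in the composition, $(\proxyStateU[\alpha], \actionStateT)$ does have $\agentC$-successors whenever $\actionStateT$ has $\agentC$-successors in $\actionModel$ and $\actionStateU[\alpha]$ has $\agentC$-successors in $\actionModel[\alpha]$. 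Your fourth component of $\bisimulation$ was designed to absorb such successors on the composition side, but there is nothing on the target side for \textbf{forth-$\agentC$} to land on. This is not repairable within your $\bisimulation$: it points to either a missing hypothesis on $\actionStateT$ (compare how the lemma is \emph{used} in Proposition~\ref{afl-kff-correspondence}, where $\actionStateT$ is essentially the $\actionStateSkip$ state) or an omitted family of $\agentB$-edges out of proxies in the statement of the target model (again, compare the final model written out in that proof, where such edges do appear).
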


  \begin{proposition}\label{afl-kff-correspondence}
      Let $\pointedActionModel{\actionStateS} \in \classAM_\classKFF$ and 
      let $n \in \mathbb{N}$. 
      Then there exists $\alpha \in \langAflAct$ such that 
      $\pointedActionModel{\actionStateS} \bisimilar_n \tau(\alpha)$.
  \end{proposition}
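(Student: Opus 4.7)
The plan is to proceed by induction on $n$, following the strategy of Proposition~\ref{afl-k-correspondence} but invoking Lemma~\ref{afl-kff-construction-learning} in place of Lemma~\ref{afl-k-construction-learning} and exploiting the transitive-Euclidean frame conditions on $\pointedActionModel{\actionStateS}$ to match the proxy-state cluster produced by the \classKFF{} learning construction.

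For $n = 0$, take $\alpha = \test{\actionPrecondition(\actionStateS)}$; since Definition~\ref{afl-kff-test} reuses the \classK{} test construction, the root precondition of $\tau(\alpha)$ is $\actionPrecondition(\actionStateS)$ and the {\bf atoms} condition is immediate. For $n > 0$, apply the induction hypothesis (valid because every submodel of a model in $\classAM_\classKFF$ lies in $\classAM_\classKFF$) to pick, for each $\agentA \in \agents$ and each $\actionStateT \in \actionStateS \actionAccessibilityAgent{\agentA}$, an action formula $\alpha^{\agentA, \actionStateT}$ with $\pointedActionModel{\actionStateT} \bisimilar_{n-1} \tau(\alpha^{\agentA, \actionStateT})$, and set
$$\alpha = \test{\actionPrecondition(\actionStateS)} \compose \bigcompose_{\agentA \in \agents} \learns_\agentA \bigchoice_{\actionStateT \in \actionStateS \actionAccessibilityAgent{\agentA}} \alpha^{\agentA, \actionStateT}.$$
Applying Lemma~\ref{afl-k-construction-test} to the leading test and then Lemma~\ref{afl-kff-construction-learning} once per agent in turn gives a description $\tau(\alpha) \bisimilar \pointedActionModel[\prime]{\actionStateS[\prime]}$: the learning-lemma hypothesis---that the current root's learning-agent successor is unique and self-looping---holds at the test output and is preserved for the remaining agents at each step, since $\learns_{\agentB}$ alters only the $\agentB$-accessibility at the root. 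The resulting root of $\pointedActionModel[\prime]{\actionStateS[\prime]}$ has precondition $\actionPrecondition(\actionStateS)$ and, for each $\agentA$, has as $\agentA$-successors a single proxy state per $\actionStateT \in \actionStateS \actionAccessibilityAgent{\agentA}$, each with precondition provably equivalent to $\actionPrecondition(\actionStateT)$, the proxies forming a complete $\agentA$-cluster and their non-$\agentA$ structure inherited from $\tau(\alpha^{\agentA, \actionStateT})$.

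To conclude $\pointedActionModel{\actionStateS} \bisimilar_n \pointedActionModel[\prime]{\actionStateS[\prime]}$, I would pair $\actionStateS$ with $\actionStateS[\prime]$ and each cluster member $\actionStateT \in \actionStateS \actionAccessibilityAgent{\agentA}$ with its corresponding proxy. The crucial structural observation---particular to $\classAM_\classKFF$---is that transitivity and Euclideanness force $\actionStateT \actionAccessibilityAgent{\agentA} = \actionStateS \actionAccessibilityAgent{\agentA}$ for every $\actionStateT \in \actionStateS \actionAccessibilityAgent{\agentA}$, so the $\agentA$-successor set closes on itself as a complete cluster matching the proxy cluster exactly. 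A short secondary induction on $k \leq n - 1$ then verifies that each cluster member is $k$-bisimilar to its proxy: the $\agentA$-successor case closes via the same cluster-to-proxy pairing one depth lower, and the $\agentB \neq \agentA$ case is discharged by the IH bisimilarity for $\alpha^{\agentA, \actionStateT}$ followed by depth-downgrading.

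The main obstacle is this last coherence step: the IH witness (which controls non-$\agentA$ successors of a cluster member) and the self-closing $\agentA$-cluster (which controls $\agentA$-successors of a cluster member) must fit together as a single $n$-bisimulation at the proxy level. The \classKFF{} frame conditions on $\pointedActionModel{\actionStateS}$ are exactly what make this possible---without Euclideanness the $\agentA$-successors of a cluster member could lie outside the cluster and there would be no matching proxy for the forth/back conditions to close against.
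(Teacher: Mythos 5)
Your proposal is correct and follows essentially the same route as the paper's proof: induction on $n$ with the same action formula, the same appeal to Lemmas~\ref{afl-k-construction-test} and~\ref{afl-kff-construction-learning}, and an inner induction on depth showing each proxy state is $(n-1)$-bisimilar to the state it stands for, with the transitive--Euclidean cluster structure of the original action model closing the learning-agent case. The only difference is cosmetic: you relate cluster members of the original model directly to their proxies (making the cluster equality $\actionStateT \actionAccessibilityAgent{\agentA} = \actionStateS \actionAccessibilityAgent{\agentA}$ explicit), whereas the paper relates proxies to the embedded copy roots inside $\tau(\alpha)$ and then composes with the outer induction hypothesis.
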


  \begin{proof}
      By induction on $n$.

      Suppose that $n = 0$. 
      Let $\alpha = \test{\actionPrecondition(\actionStateS)}$ and
      $\tau(\alpha) = \pointedActionModel[\prime]{\actionStateS[\prime]} = \pointedActionModelTuple[\prime]{\actionStateS[\prime]}$. 
      From Definition~\ref{afl-kff-test} we have that
      $\actionPrecondition(\actionStateS) = \actionPrecondition[\prime](\actionStateS[\prime])$, so
      $(\pointedActionModel{\actionStateS}, \pointedActionModel[\prime]{\actionStateS[\prime]})$ satisfies 
      {\bf atoms} and therefore 
      $\pointedActionModel{\actionStateS} \bisimilar_0 \pointedActionModel[\prime]{\actionStateS[\prime]}$.

      Suppose that $n > 0$. 
      By the induction hypothesis, for every $\agentA \in \agents$, 
      $\actionStateT \in \actionStateS \actionAccessibilityAgent{\agentA}$ 
      there exists $\alpha^{\agentA,\actionStateT} \in \langAflAct$ such that 
      $\pointedActionModel{\actionStateT} \bisimilar_{(n - 1)} \tau(\alpha^{\agentA,\actionStateT})$. 
      For every $\agentA \in \agents$, 
      $\actionStateT \in \actionStateS \actionAccessibilityAgent{\agentA}$ 
      let $\tau(\alpha^{\agentA,\actionStateT}) = \pointedActionModel[\agentA,\actionStateT]{\actionStateS[\agentA,\actionStateT]} = \pointedActionModelTuple[\agentA,\actionStateT]{\actionStateS[\agentA,\actionStateT]}$.
      
      Let $\alpha = \test{\actionPrecondition(\actionStateS)} \compose \bigcompose_{\agentA \in \agents} \learns_\agentA (\bigchoice_{\actionStateT \in \actionStateS \actionAccessibilityAgent{\agentA}} \alpha^{\agentA,\actionStateT})$. 
      Then from Lemmas~\ref{afl-k-construction-test} and~\ref{afl-kff-construction-learning}: $\tau(\alpha) \bisimilar \pointedActionModel[\prime]{\actionStateS[\prime]} = \pointedActionModelTuple[\prime]{\actionStateS[\prime]}$ where:
      \begin{eqnarray*}
          \actionStates[\prime] &=& \bigcup_{\agentA \in \agents, \actionStateT \in \actionStateS \actionAccessibilityAgent{\agentA}} (\actionStates[\agentA,\actionStateT]) \cup \{\proxyStateS[\agentA,\actionStateT] \mid \agentA \in \agents, \actionStateT \in \actionStateS \actionAccessibilityAgent{\agentA}\} \cup \{\actionStateS[\prime]\}\\
          \actionAccessibilityAgent[\prime]{\agentA} &=& \bigcup_{\agentB \in \agents, \actionStateT \in \actionStateS \actionAccessibilityAgent{\agentB}} (\actionAccessibilityAgent[\agentB,\actionStateT]{\agentA}) \cup \{(\actionStateS[\prime], \proxyStateS[\agentA, \actionStateT]) \mid \actionStateT \in \actionStateS \actionAccessibilityAgent{\agentA}\} \cup \{(\proxyStateS[\agentA, \actionStateT], \proxyStateS[\agentA, \actionStateU]) \mid \actionStateT, \actionStateU \in \actionStateS \actionAccessibilityAgent{\agentA}\} \cup\\
                                                     &&\hspace{45pt}\{(\proxyStateS[\agentB, \actionStateT], \actionStateU) \mid \agentB \in \agents \setminus \{\agentA\}, \actionStateT \in \actionStateS \actionAccessibilityAgent{\agentB}, \actionStateU \in \actionStateS[\agentB,\actionStateT] \actionAccessibilityAgent[\agentB,\actionStateT]{\agentA}\} \text{ for } \agentA \in \agents\\
          \actionPrecondition[\prime] &=& \bigcup_{\agentA \in \agents, \actionStateT \in \actionStateS \actionAccessibilityAgent{\agentA}} (\actionPrecondition[\agentA,\actionStateT]) \cup \{(\proxyStateS[\agentA, \actionStateT], \actionPrecondition[\agentA,\actionStateT](\actionStateS[\agentA,\actionStateT])) \mid \agentA \in \agents, \actionStateT \in \actionStateS \actionAccessibilityAgent{\agentA}\} \cup \{(\actionStateS[\prime], \actionPrecondition(\actionStateS))\}
      \end{eqnarray*}

      As in the proof of Proposition~\ref{afl-k-correspondence}, 
      we note for every $\agentA \in \agents$, 
      $\actionStateT \in \actionStateS \actionAccessibilityAgent{\agentA}$ that
      $\pointedActionModel[\prime]{\actionStateS[\agentA,\actionStateT]} \bisimilar \pointedActionModel[\agentA,\actionStateT]{\actionStateS[\agentA,\actionStateT]}$.

      We need to show that $(\pointedActionModel{\actionStateS}, \pointedActionModel[\prime]{\actionStateS[\prime]})$ 
      satisfies {\bf atoms}, {\bf forth-$n$-$\agentA$} and {\bf back-$n$-$\agentA$} 
      for every $\agentA \in \agents$.
      We use reasoning similar to the proof of Proposition~\ref{afl-k-correspondence}, 
      however noting that the successors of $\actionStateS[\prime]$ in
      $\actionModel[\prime]$ are not the same as in the construction used
      previously.
      We claim that each $\proxyStateS[\agentA,\actionStateT]$
      state is $(n-1)$-bisimilar to the corresponding $\actionStateS[\agentA,\actionStateT]$ state.
      We show this by showing for every $0 \leq i \leq n - 1$, 
      $\agentA \in \agents$,
      $\actionStateT \in \actionStateS \actionAccessibilityAgent{\agentA}$ that 
      $\pointedActionModel[\prime]{\proxyStateS[\agentA,\actionStateT]} \bisimilar_i \pointedActionModel[\prime]{\actionStateS[\agentA,\actionStateT]}$.
      We proceed by induction on $i$.

      \paragraph{atoms} By construction $\actionPrecondition[\prime](\proxyStateS[\agentA, \actionStateT]) = \actionPrecondition[\prime](\actionStateS[\agentA,\actionStateT])$.

      \paragraph{forth-$i$-$\agentB$} Suppose that $0 < i \leq n - 1$. Let $\actionStateU \in \proxyStateS[\agentA,\actionStateT] \actionAccessibilityAgent[\prime]{\agentB}$. 
      
      Suppose that $\agentB = \agentA$. 
      By construction there exists $\actionStateV \in \actionStateS \actionAccessibilityAgent{\agentA}$ 
      such that $\actionStateU = \proxyStateS[\agentA,\actionStateV]$.
      From above $\pointedActionModel[\prime]{\actionStateS[\agentA,\actionStateT]} \bisimilar \pointedActionModel[\agentA,\actionStateT]{\actionStateS[\agentA,\actionStateT]}$
      and $\pointedActionModel[\prime]{\actionStateS[\agentA,\actionStateV]} \bisimilar \pointedActionModel[\agentA,\actionStateV]{\actionStateS[\agentA,\actionStateV]}$.
      By the outer induction hypothesis $\pointedActionModel[\agentA,\actionStateT]{\actionStateS[\agentA,\actionStateT]} \bisimilar_{(n - 1)} \pointedActionModel{\actionStateT}$
      and $\pointedActionModel[\agentA,\actionStateV]{\actionStateS[\agentA,\actionStateV]} \bisimilar_{(n - 1)} \pointedActionModel{\actionStateV}$.
      By transitivity $\pointedActionModel[\prime]{\actionStateS[\agentA,\actionStateT]} \bisimilar_{(n - 1)} \pointedActionModel{\actionStateT}$
      and $\pointedActionModel[\prime]{\actionStateS[\agentA,\actionStateV]} \bisimilar_{(n - 1)} \pointedActionModel{\actionStateV}$.
      As $\actionStateV \in \actionStateT \actionAccessibilityAgent{\agentA}$ from {\bf back-$(n - 1)$-$\agentA$}
      there exists $\actionStateW \in \actionStateS[\agentA,\actionStateT] \actionAccessibilityAgent[\prime]{\agentA}$
      such that $\pointedActionModel[\prime]{\actionStateW} \bisimilar_{(n - 2)} \pointedActionModel{\actionStateV}$.
      By transitivity $\pointedActionModel[\prime]{\actionStateW} \bisimilar_{(n - 2)} \pointedActionModel[\prime]{\actionStateS[\agentA,\actionStateV]}$.
      By the induction hypothesis $\pointedActionModel[\prime]{\proxyStateS[\agentA,\actionStateV]} \bisimilar_{(i - 1)} \pointedActionModel[\prime]{\actionStateS[\agentA,\actionStateV]}$.
      Therefore by transitivity $\pointedActionModel[\prime]{\proxyStateS[\agentA,\actionStateV]} \bisimilar_{(i - 1)} \pointedActionModel[\prime]{\actionStateW}$.

      Suppose that $\agentB \neq \agentA$. By construction
      $\proxyStateS[\agentA,\actionStateT] \actionAccessibilityAgent[\prime]{\agentB} = \actionStateS[\agentA,\actionStateT] \actionAccessibilityAgent[\prime]{\agentB}$,
      so $\actionStateU \in \actionStateS[\agentA,\actionStateT] \actionAccessibilityAgent[\prime]{\agentB}$ 
      and we trivially have that $\pointedActionModel[\prime]{\actionStateU} \bisimilar \pointedActionModel[\prime]{\actionStateU}$.

      \paragraph{back-$i$-$\agentB$} Follows similar reasoning to {\bf forth-$i$-$\agentB$}.

      Therefore for every $\agentA \in \agents$, 
      $\actionStateT \in \actionStateS \actionAccessibilityAgent{\agentA}$ we have that
      $\pointedActionModel[\prime]{\proxyStateS[\agentA,\actionStateT]} \bisimilar_{(n - 1)} \pointedActionModel[\prime]{\actionStateS[\agentA,\actionStateT]}$.

      We can now show that $\pointedActionModel{\actionStateS} \bisimilar_n \pointedActionModel[\prime]{\actionStateS[\prime]}$ 
      by using the same reasoning as the proof for Proposition~\ref{afl-k-correspondence}, using the $(n - 1)$-bisimilar
      $\pointedActionModel[\prime]{\proxyStateS[\agentA,\actionStateT]}$ states 
      in place of corresponding $\pointedActionModel[\prime]{\actionStateS[\agentA,\actionStateT]}$ states.

      Therefore $\pointedActionModel{\actionStateS} \bisimilar_n \tau(\alpha)$.
  \end{proof}

  \begin{corollary}
      Let $\pointedActionModel{\actionStateS} \in \classAM_\classKFF$.
      Then for every $\phi \in \langAml$
      there exists $\alpha \in \langAflAct$ 
      such that $\entails_\logicAmlKFF{} \allacts{\pointedActionModel{\actionStateS}} \phi \iff \allacts{\tau(\alpha)} \phi$.
  \end{corollary}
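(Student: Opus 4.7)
The plan is to mirror the proof of Corollary~\ref{afl-k-correspondence-aml-allacts} essentially verbatim, with Proposition~\ref{afl-kff-correspondence} playing the role of Proposition~\ref{afl-k-correspondence} and with an extra observation to handle the frame conditions of \classKFF{}.

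First I would set $n = d(\phi)$ and apply Proposition~\ref{afl-kff-correspondence} to obtain some $\alpha \in \langAflAct$ with $\pointedActionModel{\actionStateS} \bisimilar_n \tau(\alpha)$. Next, for any $\pointedModel{\stateS} \in \classKFF$, I would invoke the preservation of $n$-bisimilarity under action model execution (the proposition stated in the preliminaries for action models) applied to $\pointedModel{\stateS} \bisimilar_n \pointedModel{\stateS}$ and $\pointedActionModel{\actionStateS} \bisimilar_n \tau(\alpha)$, yielding $\pointedModel{\stateS} \exec \pointedActionModel{\actionStateS} \bisimilar_n \pointedModel{\stateS} \exec \tau(\alpha)$. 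Since $d(\phi) = n$, the proposition relating $n$-bisimilarity to agreement on formulae of modal depth $\leq n$ gives $\pointedModel{\stateS} \exec \pointedActionModel{\actionStateS} \entails_\logicAmlKFF \phi$ iff $\pointedModel{\stateS} \exec \tau(\alpha) \entails_\logicAmlKFF \phi$.

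The only point that differs from the \classK{} case is the $\classKFF{}$ side-condition appearing in the semantics of $\allacts{\cdot}$. Here I would note that this condition is automatically met on both sides: by assumption $\pointedActionModel{\actionStateS} \in \classAM_\classKFF$, and by Lemma~\ref{afl-kff-structure} we also have $\tau(\alpha) \in \classAM_\classKFF$, so Proposition~\ref{aml-kff-domain} guarantees that both $\pointedModel{\stateS} \exec \pointedActionModel{\actionStateS}$ and $\pointedModel{\stateS} \exec \tau(\alpha)$ lie in $\classKFF$ whenever $\pointedModel{\stateS} \in \classKFF$. Consequently the implicational guard in the semantics collapses, and we conclude $\pointedModel{\stateS} \entails_\logicAmlKFF \allacts{\pointedActionModel{\actionStateS}} \phi$ iff $\pointedModel{\stateS} \entails_\logicAmlKFF \allacts{\tau(\alpha)} \phi$ for every $\pointedModel{\stateS} \in \classKFF$, which is the validity required.

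There is no real obstacle here — all the heavy lifting has been done in Proposition~\ref{afl-kff-correspondence}. The one subtle point worth emphasising is the use of Lemma~\ref{afl-kff-structure} to ensure that the translated action formula $\alpha$ produces an action model in $\classAM_\classKFF$, so that the semantic frame condition plays no role; without this, the guard in the definition of $\allacts{\tau(\alpha)} \phi$ might behave differently to the guard in $\allacts{\pointedActionModel{\actionStateS}} \phi$ and the equivalence could fail.
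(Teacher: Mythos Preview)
Your proposal is correct and follows the paper's approach: the paper gives no explicit proof for this corollary, but it is clearly intended to be the \classKFF{} analogue of Corollary~\ref{afl-k-correspondence-aml-allacts}, with Proposition~\ref{afl-kff-correspondence} substituted for Proposition~\ref{afl-k-correspondence}. Your additional remark about discharging the \classKFF{} side-condition via Lemma~\ref{afl-kff-structure} and Proposition~\ref{aml-kff-domain} is a point the paper leaves implicit, and it is handled correctly.
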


  \begin{corollary}
      Let $\phi \in \langAml$. 
      Then there exists $\phi' \in \langAfl$ 
      such that for every $\pointedModel{\stateS} \in \classKFF$: 
      $\pointedModel{\stateS} \entails_\logicAmlKFF{} \phi$ if and only if
      $\pointedModel{\stateS} \entails_\logicAflKFF{} \phi'$.
  \end{corollary}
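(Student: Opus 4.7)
The plan is to proceed by induction on the structure of $\phi \in \langAml{}$, translating inside-out so that whenever we apply the previous corollary the preconditions of the action model in question already lie in $\langAfl{}$.

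Formally, one argues by induction on a suitable complexity measure (for example, the number of $\allacts{\pointedActionModel{\actionStatesT}}$ operators in $\phi$, counting recursively into the preconditions of the action models that appear). The propositional and modal cases are immediate: if $\phi = \atomP$ take $\phi' = \atomP$; if $\phi = \neg\psi$, $\phi = \psi_1 \land \psi_2$, or $\phi = \necessary[\agentA]\psi$, apply the induction hypothesis to the subformulae and combine the resulting $\langAfl{}$ translations using the same connective. The key case is $\phi = \allacts{\pointedActionModel{\actionStatesT}}\psi$. By the induction hypothesis (applied to the preconditions $\actionPrecondition(\actionStateT)$ for every $\actionStateT \in \actionStates$ and to $\psi$), we may assume without loss of generality that every precondition of $\pointedActionModel{\actionStatesT}$ and the postcondition $\psi$ already lie in $\langAfl{}$. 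Since $\pointedActionModel{\actionStatesT} \in \classAM_\classKFF$, the previous corollary yields $\alpha \in \langAflAct{}$ with
$$\entails_\logicAmlKFF{} \allacts{\pointedActionModel{\actionStatesT}} \psi \iff \allacts{\tau(\alpha)} \psi,$$
and, by the semantic equivalence between executing $\tau(\alpha)$ as an action model and interpreting $\allacts{\alpha}$ in $\logicAflKFF{}$, we can take $\phi' = \allacts{\alpha}\psi$.

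The multi-pointed case reduces to the pointed case via axiom \textbf{AU}, which is available in \logicAmlKFF{}: $\allacts{\pointedActionModel{\actionStatesT}}\psi$ is equivalent to $\bigwedge_{\actionStateT \in \actionStatesT} \allacts{\pointedActionModel{\actionStateT}}\psi$, and each conjunct can be translated separately. The well-foundedness of the induction relies on the fact that action models have finite domains and only finitely many action models occur syntactically in $\phi$, so that the translation of preconditions terminates.

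The main obstacle is purely bookkeeping: ensuring that the induction terminates because action model preconditions are themselves formulae of $\langAml{}$ and can in principle contain further $\allacts{\cdot}$ operators. This is resolved by measuring complexity via the total number of $\allacts{\cdot}$ occurrences (unfolded through all nested preconditions), which is finite for any given $\phi$ and strictly decreases at each inductive step, since replacing $\allacts{\pointedActionModel{\actionStatesT}}\psi$ by $\allacts{\alpha}\psi$ eliminates the action model operator from $\langAml{}$ and introduces only the $\langAfl{}$ operator $\allacts{\alpha}$ in its place.
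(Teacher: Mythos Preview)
Your proposal is correct and follows essentially the same approach as the paper: the paper's own argument (given for the \classK{} case and reused here) is the one-line sketch ``replace all occurrences of $\allacts{\pointedActionModel{\actionStateS}}\psi$ within $\phi$ with an equivalent $\allacts{\alpha}\psi$,'' and you have simply fleshed out the bookkeeping that the paper leaves implicit. In particular, your observation that the induction must descend into the preconditions of nested action models (so that the action formulae $\alpha$ produced by the previous corollary have tests over $\langAfl$ rather than $\langAml$) is a detail the paper glosses over but which your complexity measure handles cleanly.
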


  \subsection{\classS{}}

  Once more we give two lemmas to simplify the construction that we will use.

  \begin{lemma}\label{afl-s-construction-test}
      Let $\phi \in \langAfl$ and 
      $\pointedActionModel{\actionStateS} = \pointedActionModelTuple{\actionStateS} \in \classAM$.
      Then let $\pointedActionModel[\prime]{\actionStateS} = \pointedActionModelTuple[\prime]{\actionStateS} \in \classAM$ where:
      \begin{eqnarray*}
          \actionStates[\prime] &=& \actionStates\\
          \actionAccessibilityAgent[\prime]{\agentA} &=& \actionAccessibilityAgent{\agentA} \text{ for } \agentA \in \agents\\
          \actionPrecondition[\prime] &=& \actionPrecondition \setminus \{(\actionStateS, \actionPrecondition(\actionStateS))\} \cup \{(\actionStateS, \phi \land \actionPrecondition(\actionStateS))\}
      \end{eqnarray*}
      Then $\tau(\test{\phi}) \exec \pointedActionModel{\actionStateS} \bisimilar \pointedActionModel[\prime]{\actionStateS[\prime]}$.
  \end{lemma}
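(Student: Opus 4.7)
The plan is to construct an explicit bisimulation $\bisimulation$ between $\tau(\test{\phi}) \exec \pointedActionModel{\actionStateS}$ and $\pointedActionModel[\prime]{\actionStateS}$ relating their designated points, then verify the three action-model bisimilarity conditions (atoms, forth-$\agentA$, back-$\agentA$).

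First, I would unpack the composite using the definition of sequential execution together with Definition~\ref{afl-s-test}. The underlying action model of $\tau(\test{\phi})$ has states $\{\actionStateTest, \actionStateSkip\}$ with every $\agentA$-accessibility relation equal to $\{\actionStateTest, \actionStateSkip\}^2$; write $\pointedActionModel[T]{\actionStateU}$ for this action model designated at $\actionStateU$, with $\actionPrecondition[T](\actionStateTest) = \phi$ and $\actionPrecondition[T](\actionStateSkip) = \top$. The composite then has state set $\{\actionStateTest, \actionStateSkip\} \times \actionStates$, designated point $(\actionStateTest, \actionStateS)$, accessibility $(\actionStateU, \actionStateV) \actionAccessibilityAgent[\prime\prime]{\agentA} (\actionStateU', \actionStateV')$ iff $\actionStateV \actionAccessibilityAgent{\agentA} \actionStateV'$ (since the first component is always accessible), and precondition $\someacts{\pointedActionModel[T]{\actionStateU}} \actionPrecondition(\actionStateV)$ at each $(\actionStateU, \actionStateV)$.

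Second, I would prove by simultaneous induction on $\psi \in \langAfl$ the two equivalences $\someacts{\pointedActionModel[T]{\actionStateTest}} \psi \iff \phi \land \psi$ and $\someacts{\pointedActionModel[T]{\actionStateSkip}} \psi \iff \psi$, applying the reduction axioms {\bf AP}, {\bf AN}, {\bf AC}, {\bf AK}, {\bf AU} of \axiomAmlK{}. The key case is $\psi = \necessary[\agentA] \chi$: since both $\actionStateTest$ and $\actionStateSkip$ have $\{\actionStateTest, \actionStateSkip\}$ as their $\agentA$-successors, {\bf AK} followed by {\bf AU} yields $\allacts{\pointedActionModel[T]{\actionStateTest}} \chi \land \allacts{\pointedActionModel[T]{\actionStateSkip}} \chi$, and the two induction hypotheses collapse this conjunction.

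Finally, taking $\bisimulation = \{((\actionStateU, \actionStateV), \actionStateV) \mid \actionStateU \in \{\actionStateTest, \actionStateSkip\},\ \actionStateV \in \actionStates\}$, the atoms condition at the designated pair $((\actionStateTest, \actionStateS), \actionStateS)$ unpacks via the previous simplifications to $\phi \land \actionPrecondition(\actionStateS) \iff \phi \land \actionPrecondition(\actionStateS)$, matching the modified $\actionPrecondition[\prime](\actionStateS)$. The forth-$\agentA$ and back-$\agentA$ conditions are immediate because the $\agentA$-accessibility of the composite agrees with $\actionAccessibilityAgent{\agentA}$ on the second coordinate. The main obstacle I anticipate is the simultaneous induction in step two, and a subsidiary delicacy is the atoms condition at pairs $((\actionStateTest, \actionStateV), \actionStateV)$ with $\actionStateV \neq \actionStateS$: here the simplification yields $\phi \land \actionPrecondition(\actionStateV)$ rather than $\actionPrecondition(\actionStateV)$, so this is the step most likely to require restricting $\bisimulation$ to the states reachable from $(\actionStateTest, \actionStateS)$ and invoking the \classS{} frame structure of $\pointedActionModel{\actionStateS}$ to absorb the residual $\phi$.
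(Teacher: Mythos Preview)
The paper states this lemma without proof, so there is nothing to compare your argument against directly.  Your unpacking of the composite and your reduction of its preconditions via the \axiomAmlK{} axioms are correct, and the relation $\bisimulation = \{((\actionStateU,\actionStateV),\actionStateV)\}$ is the natural candidate.  The difficulty you flag at the end, however, is not a subsidiary delicacy that can be patched: it is fatal, because the lemma as stated is false.

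Take $\pointedActionModel{\actionStateS}$ to consist of a single reflexive state with $\actionPrecondition(\actionStateS)=\top$, and take $\phi=p$.  Then $\pointedActionModel[\prime]{\actionStateS}$ is a single reflexive state with precondition $p$, while $\tau_\classS(\test{p}) \exec \pointedActionModel{\actionStateS}$ has two states $(\actionStateTest,\actionStateS)$ and $(\actionStateSkip,\actionStateS)$, fully connected for every agent, with preconditions equivalent to $p$ and $\top$ respectively, designated at $(\actionStateTest,\actionStateS)$.  From the designated point the composite has an $\agentA$-successor with precondition $\top$; in $\pointedActionModel[\prime]{\actionStateS}$ every $\agentA$-successor has precondition $p$.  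Hence no relation can satisfy both {\bf forth-$\agentA$} and {\bf atoms}.  The two action models also differ in their effect on Kripke models: executed on an \classS{} model in which $p$ fails at some $\agentA$-successor of the designated world, the composite produces a result in which $\possible[\agentA]\neg p$ holds, whereas $\pointedActionModel[\prime]{\actionStateS}$ yields $\necessary[\agentA] p$.  So restricting $\bisimulation$ to reachable states and appealing to the \classS{} frame properties of $\pointedActionModel{\actionStateS}$ cannot help.

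The underlying problem is that replacing the precondition of $\actionStateS$ \emph{in place} also strengthens the precondition of $\actionStateS$ \emph{as a successor of itself} (unavoidable once $\actionStateS$ is reflexive), while prefixing by $\tau_\classS(\test{\phi})$ only constrains the designated point.  The \classK{} analogue, Lemma~\ref{afl-k-construction-test}, avoids this by introducing a fresh root $\actionStateS[\prime]$ rather than overwriting $\actionStateS$; the \classS{} version needs a similar formulation (and the fresh root will not itself satisfy the \classS{} frame conditions, which is harmless for the use made of the lemma in Proposition~\ref{afl-s-correspondence}).  You should therefore not try to complete the bisimulation argument for the statement as written, but instead note that the lemma requires a corrected statement before it admits a proof.
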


  \begin{lemma}\label{afl-s-construction-learning}
      Let $\agentA \in \agents$,
      $\alpha \in \langAflAct$ where $\tau(\alpha) = \pointedActionModel[\alpha]{\actionStatesT[\alpha]} = \pointedActionModelTuple[\alpha]{\actionStatesT[\alpha]}$,
      and $\pointedActionModel{\actionStateS} = \pointedActionModelTuple{\actionStateS} \in \classAM$ 
      such that $\actionStateS \actionAccessibilityAgent{\agentA} = \{\actionStateS\}$ 
      and $\actionPrecondition(\actionStateS) = \top$.
      Then let $\pointedActionModel[\prime]{\actionStateS} = \pointedActionModelTuple[\prime]{\actionStateS} \in \classAM$ where:
      \begin{eqnarray*}
          \actionStates[\prime] &=& \actionStates \cup \actionStates[\alpha] \cup \{\proxyStateT[\alpha] \mid \actionStateT[\alpha] \in \actionStatesT[\alpha]\}\\
          \actionAccessibilityAgent[\prime]{\agentA} &=& \actionAccessibilityAgent{\agentA} \cup \actionAccessibilityAgent[\alpha]{\agentA} \cup (\{\actionStateS\} \cup \{\proxyStateT[\alpha] \mid \actionStateT[\alpha] \in \actionStatesT[\alpha]\})^2\\
          \actionAccessibilityAgent[\prime]{\agentB} &=& \actionAccessibilityAgent{\agentB} \cup \actionAccessibilityAgent[\alpha]{\agentB} \cup (\{\proxyStateT[\alpha]\} \cup \actionStateT[\alpha] \actionAccessibilityAgent[\alpha]{\agentB})^2 \text{ for } \agentB \in \agents \setminus \{\agentA\}\\
          \actionPrecondition[\prime] &=& \actionPrecondition \cup \{(\proxyStateT[\alpha], \actionPrecondition[\alpha](\actionStateT[\alpha])) \mid \actionStateT[\alpha] \in \actionStatesT[\alpha]\}
      \end{eqnarray*}
      Then $\tau(\learns_\agentA (\test{\top}, \alpha)) \exec \pointedActionModel{\actionStateS} \bisimilar \pointedActionModel[\prime]{\actionStateS[\prime]}$.
  \end{lemma}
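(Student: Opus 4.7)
The plan is to unfold $\tau(\learns_\agentA (\test{\top}, \alpha))$ using Definitions~\ref{afl-s-test} and~\ref{afl-s-learning}, form its sequential product with $\pointedActionModel{\actionStateS}$, and exhibit an explicit bisimulation with $\pointedActionModel[\prime]{\actionStateS}$. Writing $\tau(\test{\top})$ as having states $\{\actionStateTest, \actionStateSkip\}$ with $\top$ preconditions and full Cartesian accessibility, the action model $\actionModel[L] = \tau(\learns_\agentA (\test{\top}, \alpha))$ has domain $\{\actionStateTest, \actionStateSkip\} \cup \actionStates[\alpha] \cup \{\proxyStateT \mid \actionStateT \in \{\actionStateTest\} \cup \actionStatesT[\alpha]\}$, designated state the proxy of $\actionStateTest$ (of precondition $\top$), with all proxies in one $\agentA$-clique and, for each $\agentB \neq \agentA$, each proxy joined by an $\agentB$-clique to the $\agentB$-successors of its target.

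I would then define a bisimulation $\bisimulation$ between $\actionStates[L] \times \actionStates$ and $\actionStates[\prime]$ as follows: pair the designated pair (proxy of $\actionStateTest$, $\actionStateS$) with $\actionStateS$; pair (proxy of $\actionStateT[\alpha]$, $\actionStateS$) with $\proxyStateT[\alpha]$ for each $\actionStateT[\alpha] \in \actionStatesT[\alpha]$; pair $(\actionStateU[\alpha], \actionStateV)$ with $\actionStateU[\alpha]$ for every $\actionStateU[\alpha] \in \actionStates[\alpha]$ and $\actionStateV \in \actionStates$; and pair each of $(\actionStateTest, \actionStateV)$, $(\actionStateSkip, \actionStateV)$, and (proxy of $\actionStateTest$, $\actionStateV$) with $\actionStateV \in \actionStates$. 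Intuitively, ``test-like'' first components carry no new information and collapse to the second component, while first components inside $\actionStates[\alpha]$ already encode everything and the second component is forgotten. Checking \textbf{atoms} reduces, via the \axiomAmlS{} reduction axioms, to showing that each composite precondition $\someacts{\pointedActionModel[L]{\actionStateU}} \actionPrecondition(\actionStateV)$ is provably equivalent to its intended counterpart in $\actionModel[\prime]$; the hypotheses that the test/skip/proxy preconditions are $\top$ and $\actionPrecondition(\actionStateS) = \top$ make this routine.

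For the forth and back clauses for $\agentA$, the assumption $\actionStateS \actionAccessibilityAgent{\agentA} = \{\actionStateS\}$ fixes the second component of every $\agentA$-successor of the designated pair to $\actionStateS$, and the $\agentA$-clique of proxies in $\actionModel[L]$ then matches the clique $\{\actionStateS\} \cup \{\proxyStateT[\alpha] \mid \actionStateT[\alpha] \in \actionStatesT[\alpha]\}$ in $\actionModel[\prime]$ exactly. For $\agentB \neq \agentA$, the $\agentB$-successors of the proxy of $\actionStateT[\alpha]$ in $\actionModel[L]$ are precisely the clique $\{\proxyStateT[\alpha]\} \cup \actionStateT[\alpha] \actionAccessibilityAgent[\alpha]{\agentB}$, matching $\actionModel[\prime]$, and the $\agentB$-successors from pure-$\actionStates$ or pure-$\actionStates[\alpha]$ first components match directly by the construction of the sequential product.

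The main obstacle will be bookkeeping the ``spurious'' pairs like $(\actionStateTest, \actionStateV)$ and (proxy of $\actionStateTest$, $\actionStateV$) that arise as $\agentB$-successors for $\agentB \neq \agentA$ at various depths: one must verify that both their $\agentA$- and $\agentB$-successors in the product correspond, under $\bisimulation$, exactly to $\actionStateV$'s successors in $\actionModel[\prime]$, while tracking which preconditions are satisfiable. I expect this to reduce to a straightforward induction on distance from the designated pair, invoking the closure properties of the S5 construction, the no-op behaviour of the test/proxy components, and the earlier propositions on bisimilarity preservation under sequential execution of action models.
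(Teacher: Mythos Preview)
The paper states this lemma without proof, so your approach of unfolding the definitions and exhibiting an explicit bisimulation is the natural one and is what the paper implicitly relies on. However, the bisimulation you propose is incomplete in a way that matters.

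Your relation handles four kinds of product states, but it omits pairs of the form $(\proxy{\actionStateT[\alpha]}, \actionStateV)$ where $\actionStateT[\alpha] \in \actionStatesT[\alpha]$ and $\actionStateV \in \actionStates \setminus \{\actionStateS\}$. These pairs are reachable: from $(\proxy{\actionStateT[\alpha]}, \actionStateS)$, any $\agentB$-step with $\agentB \neq \agentA$ yields $(\proxy{\actionStateT[\alpha]}, \actionStateV)$ for each $\actionStateV \in \actionStateS \actionAccessibilityAgent{\agentB}$, since $\proxy{\actionStateT[\alpha]}$ is $\agentB$-reflexive in $\actionModel[L]$. Your list of ``spurious'' pairs mentions only $(\actionStateTest, \actionStateV)$, $(\actionStateSkip, \actionStateV)$, and the proxy of $\actionStateTest$ paired with $\actionStateV$; the proxies of $\actionStatesT[\alpha]$ paired with arbitrary $\actionStateV$ are missing.

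Worse, the obvious fix---pairing $(\proxy{\actionStateT[\alpha]}, \actionStateV)$ with $\proxyStateT[\alpha]$ for all $\actionStateV$---fails \textbf{atoms}: the product precondition is $\someacts{\pointedActionModel[L]{\proxy{\actionStateT[\alpha]}}} \actionPrecondition(\actionStateV)$, which in general is not provably equivalent to $\actionPrecondition[\alpha](\actionStateT[\alpha])$ because $\actionPrecondition(\actionStateV)$ is not assumed $\top$ for $\actionStateV \neq \actionStateS$. So you cannot simply ``forget the second component'' here as you do for $\actionStates[\alpha]$-first-component pairs, nor ``forget the first component'' as you do for test-like pairs. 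You will need either to argue that the only $(\proxy{\actionStateT[\alpha]}, \actionStateV)$ pairs that must be matched under \textbf{back} already have $\actionStateV$-side witnesses in $\actionStateT[\alpha] \actionAccessibilityAgent[\alpha]{\agentB}$ (so the proxy self-loop on the product side is redundant up to bisimilarity), or to enlarge the relation and carry through a more delicate case analysis on the second component. Either way, this is the crux of the bookkeeping, and your current sketch does not address it.
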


  \begin{proposition}\label{afl-s-correspondence}
      Let $\pointedActionModel{\actionStateS} \in \classAM_\classS$ and 
      let $n \in \mathbb{N}$. 
      Then there exists $\alpha \in \langAflAct$ such that 
      $\pointedActionModel{\actionStateS} \bisimilar_n \tau(\alpha)$.
  \end{proposition}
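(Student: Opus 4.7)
The plan is to proceed by induction on $n$, mirroring the proofs of Propositions~\ref{afl-k-correspondence} and~\ref{afl-kff-correspondence} but with Lemmas~\ref{afl-s-construction-test} and~\ref{afl-s-construction-learning} taking the place of their counterparts. For the base case $n = 0$, take $\alpha = \test{\actionPrecondition(\actionStateS)}$; by Definition~\ref{afl-s-test} the designated point of $\tau(\alpha)$ has precondition $\actionPrecondition(\actionStateS)$, so the pair satisfies \textbf{atoms} and $\pointedActionModel{\actionStateS} \bisimilar_0 \tau(\alpha)$.

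For $n > 0$, appeal to the induction hypothesis to obtain, for every $\agentA \in \agents$ and every $\actionStateT \in \actionStateS \actionAccessibilityAgent{\agentA}$, an action formula $\alpha^{\agentA, \actionStateT} \in \langAflAct$ with $\pointedActionModel{\actionStateT} \bisimilar_{(n-1)} \tau(\alpha^{\agentA, \actionStateT})$. Exploiting the reflexivity of \classS{} (so $\actionStateS \in \actionStateS \actionAccessibilityAgent{\agentA}$ for every $\agentA$), form
\[
  \alpha = \bigcompose_{\agentA \in \agents} \learns_\agentA (\alpha^{\agentA, \actionStateS}, \bigchoice_{\actionStateT \in \actionStateS \actionAccessibilityAgent{\agentA}} \alpha^{\agentA, \actionStateT}),
\]
so that the ``actually happened'' first argument of each learning operator represents $\actionStateS$ itself. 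Apply Lemma~\ref{afl-s-construction-learning} iteratively, one agent at a time, to describe $\tau(\alpha) \bisimilar \pointedActionModel[\prime]{\actionStateS[\prime]}$ as a composition of proxy-state extensions: each step introduces a family of proxies $\proxyStateS[\agentA, \actionStateT]$ that, together with the new designated point, form its $\agentA$-equivalence class, with the $\agentB$-relations for $\agentB \neq \agentA$ inherited from the preceding stage in the manner prescribed by the lemma.

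As in the proof of Proposition~\ref{afl-kff-correspondence}, verify \textbf{atoms}, \textbf{forth-$n$-$\agentA$} and \textbf{back-$n$-$\agentA$} between $\pointedActionModel{\actionStateS}$ and $\pointedActionModel[\prime]{\actionStateS[\prime]}$ via a nested induction establishing that every proxy $\proxyStateS[\agentA, \actionStateT]$ is $\bisimilar_{(n-1)}$ to the corresponding embedded designated point $\actionStateS[\agentA, \actionStateT]$ of $\tau(\alpha^{\agentA, \actionStateT})$; combining this with the outer hypothesis yields $\proxyStateS[\agentA, \actionStateT] \bisimilar_{(n-1)} \actionStateT$ as required.

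The main obstacle is that the conditions $\actionStateS \actionAccessibilityAgent{\agentA} = \{\actionStateS\}$ and $\actionPrecondition(\actionStateS) = \top$ required by Lemma~\ref{afl-s-construction-learning} are strictly stronger than those of its \classKFF{} analogue and are disturbed by each $\learns_\agentA$ step (each such step expands the $\agentA$-equivalence class of the designated point, and any prepended $\test{\actionPrecondition(\actionStateS)}$ would already violate the self-loop clause by reflexivity of \classS{}). Taking the first argument of each $\learns_\agentA$ to be $\alpha^{\agentA, \actionStateS}$ rather than $\test{\top}$ circumvents the precondition clause (because the new designated point is a proxy whose precondition is inherited from $\tau(\alpha^{\agentA, \actionStateS})$, which by the induction hypothesis has the precondition of $\actionStateS$), but one must still check that after each step the accessibility of the new designated point under every yet-to-be-processed agent $\agentB$ continues to satisfy the hypothesis of the lemma. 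A secondary technicality, absent from the \classK{} and \classKFF{} proofs, is that reflexivity forces $\alpha^{\agentA, \actionStateS}$ into every choice, and confirming that the proxy $\proxyStateS[\agentA, \actionStateS]$ faithfully reproduces the reflexive self-edge of $\actionStateS$ will demand a careful inspection of the interaction between Lemma~\ref{afl-s-construction-learning}'s proxy construction and the embedded translation $\tau(\alpha^{\agentA, \actionStateS})$.
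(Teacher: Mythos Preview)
Your plan follows the \classK{} and \classKFF{} proofs in spirit, but the nested induction you propose has a genuine gap. In \classK{} and \classKFF{} the embedded copy of each $\tau(\alpha^{\agentA,\actionStateT})$ inside $\actionModel[\prime]$ remains fully bisimilar to the original, so from the outer hypothesis $\pointedActionModel[\agentA,\actionStateT]{\actionStateS[\agentA,\actionStateT]} \bisimilar_{(n-1)} \pointedActionModel{\actionStateT}$ one gets $\pointedActionModel[\prime]{\actionStateS[\agentA,\actionStateT]} \bisimilar_{(n-1)} \pointedActionModel{\actionStateT}$ for free, and the inner induction only has to bridge proxies to embedded roots. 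In the \classS{} construction this fails: for $\agentB \neq \agentA$ the state $\actionStateS[\agentA,\actionStateT]$ acquires an extra $\agentB$-edge back to its own proxy $\proxyStateS[\agentA,\actionStateT]$, so $\pointedActionModel[\prime]{\actionStateS[\agentA,\actionStateT]}$ is \emph{not} bisimilar to $\pointedActionModel[\agentA,\actionStateT]{\actionStateS[\agentA,\actionStateT]}$, and your ``combining this with the outer hypothesis'' step has no bridge. The paper's proof handles this by carrying a \emph{third} claim through the inner induction on $i$: for every $\actionStateU \in \actionStates[\agentA,\actionStateT]$ and $\actionStateV \in \actionStates$, if $\pointedActionModel[\agentA,\actionStateT]{\actionStateU} \bisimilar_i \pointedActionModel{\actionStateV}$ then $\pointedActionModel[\prime]{\actionStateU} \bisimilar_i \pointedActionModel{\actionStateV}$. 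It also carries a separate claim that the new root $\actionStateS[\prime]$ is $i$-bisimilar to each $\proxyStateS[\agentA,\actionStateS]$, needed because reflexivity makes $\actionStateS[\prime]$ one of its own successors.

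On the construction itself: the paper does \emph{not} take $\alpha^{\agentA,\actionStateS}$ as the first argument of $\learns_\agentA$. It uses
\[
\alpha \;=\; \test{\actionPrecondition(\actionStateS)} \compose \bigcompose_{\agentA \in \agents} \learns_\agentA \bigl(\test{\top},\, \bigchoice_{\actionStateT \in \actionStateS \actionAccessibilityAgent{\agentA}} \alpha^{\agentA,\actionStateT}\bigr),
\]
so that Lemma~\ref{afl-s-construction-learning} (stated only for $\learns_\agentA(\test{\top}, \cdot)$) applies as written, with the correct root precondition supplied afterwards by the prepended test via Lemma~\ref{afl-s-construction-test}. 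Your variant $\learns_\agentA(\alpha^{\agentA,\actionStateS}, \cdot)$ does not match the available lemma, and as you yourself observe, each $\learns_\agentA$ step enlarges the designated point's equivalence class, so the intermediate models would not satisfy the singleton-successor hypothesis for subsequent agents anyway; you would have to unfold $\tau(\alpha)$ directly from Definition~\ref{afl-s-learning} and then still confront the three-part inner induction above.
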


  \begin{proof}
      By induction on $n$.

      Suppose that $n = 0$. 
      Let $\alpha = \test{\actionPrecondition(\actionStateS)}$ and
      $\tau(\alpha) = \pointedActionModel[\prime]{\actionStateS[\prime]} = \pointedActionModelTuple[\prime]{\actionStateS[\prime]}$. 
      From Definition~\ref{afl-s-test} we have that
      $\actionPrecondition(\actionStateS) = \actionPrecondition[\prime](\actionStateS[\prime])$, so
      $(\pointedActionModel{\actionStateS}, \pointedActionModel[\prime]{\actionStateS[\prime]})$ satisfies 
      {\bf atoms} and therefore 
      $\pointedActionModel{\actionStateS} \bisimilar_0 \pointedActionModel[\prime]{\actionStateS[\prime]}$.

      Suppose that $n > 0$. 
      By the induction hypothesis, for every $\agentA \in \agents$, 
      $\actionStateT \in \actionStateS \actionAccessibilityAgent{\agentA}$ 
      there exists $\alpha^{\agentA,\actionStateT} \in \langAflAct$ such that 
      $\pointedActionModel{\actionStateT} \bisimilar_{(n - 1)} \tau(\alpha^{\agentA,\actionStateT})$. 
      For every $\agentA \in \agents$, 
      $\actionStateT \in \actionStateS \actionAccessibilityAgent{\agentA}$ 
      let $\tau(\alpha^{\agentA,\actionStateT}) = \pointedActionModel[\agentA,\actionStateT]{\actionStateS[\agentA,\actionStateT]} = \pointedActionModelTuple[\agentA,\actionStateT]{\actionStateS[\agentA,\actionStateT]}$.
      
      Let $\alpha = \test{\actionPrecondition(\actionStateS)} \compose \bigcompose_{\agentA \in \agents} \learns_\agentA (\test{\top}, \bigchoice_{\actionStateT \in \actionStateS \actionAccessibilityAgent{\agentA}} \alpha^{\agentA,\actionStateT})$.
      Then from Lemmas~\ref{afl-k-construction-test} and~\ref{afl-kff-construction-learning}: $\tau(\alpha) = \pointedActionModel[\prime]{\actionStateS[\prime]} =\pointedActionModelTuple[\prime]{\actionStateS[\prime]}$ where:
      \begin{eqnarray*}
          \actionStates[\prime] &=& \bigcup_{\agentA \in \agents, \actionStateT \in \actionStateS \actionAccessibilityAgent{\agentA}} (\actionStates[\agentA,\actionStateT]) \cup \{\proxyStateS[\agentA,\actionStateT] \mid \agentA \in \agents, \actionStateT \in \actionStateS \actionAccessibilityAgent{\agentA}\} \cup \{\actionStateS[\prime]\}\\
          \actionAccessibilityAgent[\prime]{\agentA} &=& \bigcup_{\agentB \in \agents, \actionStateT \in \actionStateS \actionAccessibilityAgent{\agentB}} (\actionAccessibilityAgent[\agentB,\actionStateT]{\agentA}) \cup (\{\actionStateS[\prime]\} \cup \{\proxyStateS[\agentA,\actionStateT] \mid \actionStateT \in \actionStateS \actionAccessibilityAgent{\agentA}\})^2 \cup\\&&\quad\bigcup_{\agentB \in \agents \setminus \{\agentA\}, \actionStateT \in \actionAccessibilityAgent{\agentB}} (\{\proxyStateS[\agentB,\actionStateT]\} \cup \actionStateS[\agentB,\actionStateT] \actionAccessibilityAgent[\agentB,\actionStateT]{\agentA})^2 \text{ for } \agentA \in \agents\\
          \actionPrecondition[\prime] &=& \bigcup_{\agentA \in \agents, \actionStateT \in \actionStateS \actionAccessibilityAgent{\agentA}} (\actionPrecondition[\agentA,\actionStateT]) \cup \{(\proxyStateS[\agentA,\actionStateT], \actionPrecondition[\agentA,\actionStateT](\actionStateS[\agentA,\actionStateT])) \mid \agentA \in \agents, \actionStateT \in \actionStateS \actionAccessibilityAgent{\agentA}\} \cup \{(\actionStateS[\prime], \actionPrecondition(\actionStateS))\}
      \end{eqnarray*}

      We note that unlike the constructions used for
      Proposition~\ref{afl-k-correspondence} and
      Proposition~\ref{afl-kff-correspondence}, this construction does not have
      $\pointedActionModel[\prime]{\actionStateU} \bisimilar \pointedActionModel[\agentA,\actionStateT]{\actionStateU}$,
      as we do not have that 
      $\actionStateS[\agentA,\actionStateT] \actionAccessibilityAgent[\prime]{\agentA} = \actionStateS[\agentA,\actionStateT] \actionAccessibilityAgent[\agentA,\actionStateT]{\agentA}$.
      Similar to the proof of Proposition~\ref{afl-kff-correspondence} we claim
      that each $\proxyStateS[\agentA,\actionStateT]$ state is
      $(n-1)$-bisimilar to the corresponding $\actionStateS[\actionStateT]$ state. 
      However in lieu of bisimilarity of $\actionStates[\agentA,\actionStateT]$ 
      states we need another result for these states.
      We also need to consider the additional state $\actionStateS[\prime]$,
      which due to reflexivity is also a successor of itself.

      We need to show for every $0 \leq i \leq n - 1$: 
      \begin{enumerate}
          \item For every $\agentA \in \agents$: $\pointedActionModel[\prime]{\actionStateS[\prime]} \bisimilar_i \pointedActionModel[\prime]{\proxyStateS[\agentA,\actionStateS]}$.
          \item For every $\agentA \in \agents$, $\actionStateT \in \actionStateS \actionAccessibilityAgent{\agentA}$: $\pointedActionModel[\prime]{\proxyStateS[\agentA,\actionStateT]} \bisimilar_i \pointedActionModel[\prime]{\actionStateS[\agentA,\actionStateT]}$.
          \item For every $\agentA \in \agents$, $\actionStateT \in \actionStateS \actionAccessibilityAgent{\agentA}$, $\actionStateU \in \actionStates[\agentA,\actionStateT]$, $\actionStateV \in \actionStates$: if $\pointedActionModel[\agentA,\actionStateT]{\actionStateU} \bisimilar_i \pointedActionModel{\actionStateV}$ then $\pointedActionModel[\prime]{\actionStateU} \bisimilar_i \pointedActionModel{\actionStateV}$.
      \end{enumerate}

      We proceed by induction on $i$.

      \begin{enumerate}
          \item 
              For every $\agentA \in \agents$: $\pointedActionModel[\prime]{\actionStateS[\prime]} \bisimilar_i \pointedActionModel[\prime]{\proxyStateS[\agentA,\actionStateS]}$.

              \paragraph{atoms} By the outer induction hypothesis $\pointedActionModel[\agentA,\actionStateS]{\actionStateS[\agentA,\actionStateS]} \bisimilar_{(n - 1)} \pointedActionModel{\actionStateS}$
              and so $\proves \actionPrecondition[\agentA,\actionStateS](\actionStateS[\agentA,\actionStateS]) \iff \actionPrecondition(\actionStateS)$.
              By construction $\actionPrecondition[\prime](\actionStateS[\prime]) = \actionPrecondition(\actionStateS)$ 
              and $\actionPrecondition[\prime](\proxyStateS[\agentA,\actionStateS]) = \actionPrecondition[\agentA,\actionStateS](\actionStateS[\agentA,\actionStateS])$ 
              and therefore $\proves \actionPrecondition[\prime](\actionStateS[\prime]) \iff \actionPrecondition[\agentA,\actionStateS](\proxyStateS[\agentA,\actionStateS])$.

              \paragraph{forth-$i$-$\agentB$} Suppose that $0 < i \leq n - 1$. Let $\actionStateU \in \actionStateS[\prime] \actionAccessibilityAgent[\prime]{\agentA}$. 

              Suppose that $\agentB = \agentA$.
              By construction $\actionStateS[\prime] \actionAccessibilityAgent[\prime]{\agentA} = \proxyStateS[\agentA,\actionStateS] \actionAccessibilityAgent[\prime]{\agentA}$ 
              and we trivially have that $\pointedActionModel[\prime]{\actionStateU} \bisimilar \pointedActionModel[\prime]{\actionStateU}$.

              Suppose that $\agentB \neq \agentA$.
              By construction $\actionStateS[\prime] \actionAccessibilityAgent[\prime]{\agentB} = \{\proxyStateS[\agentB,\actionStateT] \mid \actionStateT \in \actionStateS \actionAccessibilityAgent{\agentB}\} \cup \{\actionStateS[\prime]\}$ and $\proxyStateS[\agentA,\actionStateS] \actionAccessibilityAgent[\prime]{\agentB} = \actionStateS[\agentA,\actionStateS] \actionAccessibilityAgent[\agentA,\actionStateS]{\agentB} \cup \{\proxyStateS[\agentA,\actionStateS]\}$. 
              Suppose that $\actionStateU = \actionStateS[\prime]$. 
              Then by the induction hypothesis $\pointedActionModel[\prime]{\actionStateS[\prime]} \bisimilar_{(i-1)} \pointedActionModel[\prime]{\proxyStateS[\agentA,\actionStateS]}$.
              Suppose that $\actionStateU \in \{\proxyStateS[\agentB,\actionStateT] \mid \actionStateT \in \actionStateS \actionAccessibilityAgent{\agentB}\}$. 
              Then there exists $\actionStateT \in \actionStateS \actionAccessibilityAgent{\agentB}$ 
              such that $\actionStateU = \proxyStateS[\agentB,\actionStateT]$.
              By the outer induction hypothesis $\pointedActionModel[\agentA,\actionStateS]{\actionStateS[\agentA,\actionStateS]} \bisimilar_{(n - 1)} \pointedActionModel{\actionStateS}$.
              As $\actionStateT \in \actionStateS \actionAccessibilityAgent{\agentB}$ then by {\bf back-$(n-1)$-$\agentB$}
              there exists $\actionStateV \in \actionStateS[\agentA,\actionStateS] \actionAccessibilityAgent[\agentA,\actionStateS]{\agentB} \subseteq \proxyStateS[\agentA,\actionStateS] \actionAccessibilityAgent[\prime]{\agentB}$
              such that $\pointedActionModel[\agentA,\actionStateS]{\actionStateV} \bisimilar_{(n - 2)} \pointedActionModel{\actionStateT}$.
              Then by the inner induction hypothesis this implies $\pointedActionModel[\prime]{\actionStateV} \bisimilar_{(i - 1)} \pointedActionModel{\actionStateT}$.
              By the inner induction hypothesis $\pointedActionModel[\prime]{\proxyStateS[\agentB,\actionStateT]} \bisimilar_{(i - 1)} \pointedActionModel[\prime]{\actionStateS[\agentB,\actionStateT]} \bisimilar_{(i - 1)} \pointedActionModel[\agentB,\actionStateT]{\actionStateS[\agentB,\actionStateT]}$ 
              and by the outer induction hypothesis $\pointedActionModel[\agentB,\actionStateT]{\actionStateS[\agentB,\actionStateT]} \bisimilar_{(n - 1)} \pointedActionModel{\actionStateT}$ 
              so by transitivity $\pointedActionModel[\prime]{\proxyStateS[\agentB,\actionStateT]} \bisimilar_{(i - 1)} \pointedActionModel{\actionStateT}$.
              Therefore by transitivity we have that $\pointedActionModel[\prime]{\proxyStateS[\agentB,\actionStateT]} \bisimilar_{(i - 1)} \pointedActionModel[\prime]{\actionStateV}$.

              \paragraph{back-$i$-$\agentB$} Follows similar reasoning to {\bf forth-$i$-$\agentB$}.

          \item 
              For every $\agentA \in \agents$, $\actionStateT \in \actionStateS \actionAccessibilityAgent{\agentA}$: $\pointedActionModel[\prime]{\proxyStateS[\agentA,\actionStateT]} \bisimilar_i \pointedActionModel[\prime]{\actionStateS[\agentA,\actionStateT]}$.

              \paragraph{atoms} By construction $\actionPrecondition[\prime](\proxyStateS[\agentA,\actionStateT]) = \actionPrecondition[\prime](\actionStateS[\agentA,\actionStateT])$.

              \paragraph{forth-$i$-$\agentB$} Suppose that $0 < i \leq n - 1$. Let $\actionStateU \in \proxyStateS[\agentA,\actionStateT] \actionAccessibilityAgent[\prime]{\agentA}$. 

              Suppose that $\agentB = \agentA$.
              By construction $\proxyStateS[\agentA,\actionStateT] \actionAccessibilityAgent[\prime]{\agentA} = \{\proxyStateS[\agentA,\actionStateV] \mid \actionStateV \in \actionStateT \actionAccessibilityAgent{\agentA}\} \cup \{\actionStateS[\prime]\}$. 
              Suppose that $\actionStateU \in \{\proxyStateS[\agentA,\actionStateV] \mid \actionStateV \in \actionStateT \actionAccessibilityAgent{\agentA}\}$.
              Then there exists $\actionStateV \in \actionStateT \actionAccessibilityAgent{\agentA}$ such that $\actionStateU = \proxyStateS[\agentA,\actionStateV]$.
              By the outer induction hypothesis $\pointedActionModel[\agentA,\actionStateT]{\actionStateS[\agentA,\actionStateT]} \bisimilar_{(n - 1)} \pointedActionModel{\actionStateT}$.
              As $\actionStateV \in \actionStateT \actionAccessibilityAgent{\agentA}$ then by {\bf back-$(n-1)$-$\agentA$}
              there exists $\actionStateW \in \actionStateS[\agentA,\actionStateT] \actionAccessibilityAgent[\agentA,\actionStateT]{\agentA} \subseteq \actionStateS[\agentA,\actionStateT] \actionAccessibilityAgent[\prime]{\agentA}$
              such that $\pointedActionModel[\agentA,\actionStateT]{\actionStateW} \bisimilar_{(n - 2)} \pointedActionModel{\actionStateV}$.
              Then by the inner induction hypothesis this implies $\pointedActionModel[\prime]{\actionStateW} \bisimilar_{(i - 1)} \pointedActionModel{\actionStateV}$.
              By the inner and outer induction hypothesis $\pointedActionModel[\prime]{\proxyStateS[\agentA,\actionStateV]} \bisimilar_{(i - 1)} \pointedActionModel{\actionStateV}$.
              Therefore by transitivity we have that $\pointedActionModel[\prime]{\proxyStateS[\agentA,\actionStateV]} \bisimilar_{(i - 1)} \pointedActionModel[\prime]{\actionStateW}$.
              Suppose that $\actionStateU = \actionStateS[\prime]$. 
              Then from the inner induction hypothesis $\pointedActionModel[\prime]{\actionStateS[\prime]} \bisimilar_{(i - 1)} \pointedActionModel[\prime]{\proxyStateS[\agentA,\actionStateS]}$ 
              and we can proceed using the same reasoning as in the case where $\actionStateU = \proxyStateS[\agentA,\actionStateS] \in \{\proxyStateS[\agentA,\actionStateV] \mid \actionStateV \in \actionStateT \actionAccessibilityAgent{\agentA}\}$.

              Suppose that $\agentB \neq \agentA$.
              By construction $\proxyStateS[\agentB,\actionStateT] \actionAccessibilityAgent[\prime]{\agentB} = \actionStateS[\agentA,\actionStateT] \actionAccessibilityAgent[\agentA,\actionStateT]{\agentB} \cup \{\proxyStateS[\agentB,\actionStateT]\}$. 
              Suppose that $\actionStateU = \proxyStateS[\agentB,\actionStateT]$.
              By construction $\actionStateS[\agentA,\actionStateT] \in \actionStateS[\agentA,\actionStateT] \actionAccessibilityAgent[\prime]{\agentB}$
              and by the induction hypothesis $\pointedActionModel[\prime]{\proxyStateS[\agentB,\actionStateT]} \bisimilar_{(i - 1)} \pointedActionModel[\prime]{\actionStateS[\agentA,\actionStateT]}$.
              Suppose that $\actionStateU \in \actionStateS[\agentA,\actionStateT] \actionAccessibilityAgent[\agentA,\actionStateT]{\agentB} \subseteq \actionStateS[\agentA,\actionStateT] \actionAccessibilityAgent[\prime]{\agentB}$.
              Then we trivially have that $\pointedActionModel[\prime]{\actionStateU} \bisimilar \pointedActionModel[\prime]{\actionStateU}$.

              \paragraph{back-$i$-$\agentB$} Follows similar reasoning to {\bf forth-$i$-$\agentB$}.

          \item 
              For every $\agentA \in \agents$, $\actionStateT \in \actionStateS \actionAccessibilityAgent{\agentA}$, $\actionStateU \in \actionStates[\agentA,\actionStateT]$, $\actionStateV \in \actionStates$: if $\pointedActionModel[\agentA,\actionStateT]{\actionStateU} \bisimilar_i \pointedActionModel{\actionStateV}$ then $\pointedActionModel[\prime]{\actionStateU} \bisimilar_i \pointedActionModel{\actionStateV}$.

              Suppose that $\pointedActionModel[\agentA,\actionStateT]{\actionStateU} \bisimilar_i \pointedActionModel{\actionStateV}$. 

              \paragraph{atoms} As $\pointedActionModel[\agentA,\actionStateT]{\actionStateU} \bisimilar_i \pointedActionModel{\actionStateV}$ 
              then $\proves \actionPrecondition[\agentA,\actionStateT](\actionStateU) \iff \actionPrecondition(\actionStateV)$. 
              By construction $\actionPrecondition[\prime](\actionStateU) = \actionPrecondition[\agentA,\actionStateT](\actionStateU)$ 
              and therefore $\proves \actionPrecondition[\prime](\actionStateU) \iff \actionPrecondition(\actionStateV)$.

              \paragraph{forth-$i$-$\agentB$} Suppose that $0 < i \leq n - 1$.
              Let $\actionStateW \in \actionStateU \actionAccessibilityAgent[\prime]{\agentB}$.

              Suppose that $\actionStateU \neq \actionStateS[\agentA,\actionStateT]$ or $\agentB = \agentA$.
              By construction $\actionStateU \actionAccessibilityAgent[\prime]{\agentA} = \actionStateU \actionAccessibilityAgent[\agentA,\actionStateT]{\agentA}$
              and so $\actionStateW \in \actionStateU \actionAccessibilityAgent[\agentA,\actionStateT]{\agentA}$. 
              As $\actionStateW \in \actionStateU \actionAccessibilityAgent[\agentA,\actionStateT]{\agentA}$
              then by {\bf forth-$i$-$\agentB$} there exists $\actionStateX \in \actionStateV \actionAccessibilityAgent{\agentB}$ such that
              $\pointedActionModel[\agentA,\actionStateT]{\actionStateW} \bisimilar_{(i - 1)} \pointedActionModel{\actionStateX}$.
              By the induction hypothesis $\pointedActionModel[\prime]{\actionStateW} \bisimilar_{(i - 1)} \pointedActionModel{\actionStateX}$.

              Suppose that $\actionStateU = \actionStateS[\agentA,\actionStateT]$ and $\agentB \neq \agentA$. 
              By construction $\actionStateS[\agentA,\actionStateT] \actionAccessibilityAgent[\prime]{\agentA} = \actionStateS[\agentA,\actionStateT] \actionAccessibilityAgent[\agentA,\actionStateT]{\agentA} \cup \{\proxyStateS[\agentA,\actionStateT]\}$. 
              Suppose that $\actionStateW \in \actionStateS[\agentA,\actionStateT] \actionAccessibilityAgent[\agentA,\actionStateT]{\agentA}$. 
              We proceed using the same reasoning as above, where $\actionStateW \in \actionStateU \actionAccessibilityAgent[\agentA,\actionStateT]{\agentA}$. 
              Suppose that $\actionStateW = \proxyStateS[\agentA,\actionStateT]$.
              By the induction hypothesis $\pointedActionModel[\prime]{\proxyStateS[\agentA,\actionStateT]} \bisimilar_{(i - 1)} \pointedActionModel[\prime]{\actionStateS[\agentA,\actionStateT]}$
              and we proceed using the same reasoning above, where $\actionStateW = \actionStateS[\agentA,\actionStateT] \in \actionStateS[\agentA,\actionStateT] \actionAccessibilityAgent[\agentA,\actionStateT]{\agentA}$.

              \paragraph{back-$i$-$\agentB$} Follows similar reasoning to {\bf forth-$i$-$\agentB$}.
      \end{enumerate}

      Therefore for every $\agentA \in \agents$, 
      $\actionStateT \in \actionStateS \actionAccessibilityAgent{\agentA}$
      we have that $\pointedActionModel[\prime]{\actionStateS[\prime]} \bisimilar_{(n - 1)} \pointedActionModel{\actionStateS}$
      and $\pointedActionModel[\prime]{\proxyStateS[\agentA,\actionStateT]} \bisimilar_{(n - 1)} \pointedActionModel{\actionStateT}$.
      We can now show that $\pointedActionModel{\actionStateS[\prime]} \bisimilar_n \pointedActionModel{\actionStateS}$ 
      by using the same reasoning as the proof for Proposition~\ref{afl-k-correspondence}, 
      using the $(n-1)$-bisimilar $\pointedActionModel[\prime]{\proxyStateS[\agentA,\actionStateT]}$ 
      in place of corresponding $\pointedActionModel[\prime]{\actionStateS[\actionStateT]}$ states.
  \end{proof}

  \begin{corollary}
      Let $\pointedActionModel{\actionStateS} \in \classAM_\classS$.
      Then for every $\phi \in \langAml$
      there exists $\alpha \in \langAflAct$ 
      such that $\entails_\logicAmlS{} \allacts{\pointedActionModel{\actionStateS}} \phi \iff \allacts{\tau(\alpha)} \phi$.
  \end{corollary}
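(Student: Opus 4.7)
The plan is to mirror the proof of Corollary~\ref{afl-k-correspondence-aml-allacts}, substituting Proposition~\ref{afl-s-correspondence} for Proposition~\ref{afl-k-correspondence} and taking care that the results of execution remain inside \classS{}. Given $\phi \in \langAml$, I would first translate $\phi$ to an equivalent $\phi' \in \lang$ using the reduction axioms of \axiomAmlS{}, which is available because \logicAmlS{} is expressively equivalent to \logicS{}. Setting $n = d(\phi')$ and applying Proposition~\ref{afl-s-correspondence} to the given $\pointedActionModel{\actionStateS} \in \classAM_\classS$ yields an action formula $\alpha \in \langAflAct$ with $\pointedActionModel{\actionStateS} \bisimilar_n \tau(\alpha)$.

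Next I would fix an arbitrary $\pointedModel{\stateS} \in \classS$ and invoke the proposition that action model execution preserves $n$-bisimilarity: from $\pointedModel{\stateS} \bisimilar_n \pointedModel{\stateS}$ and $\pointedActionModel{\actionStateS} \bisimilar_n \tau(\alpha)$ I obtain $\pointedModel{\stateS} \exec \pointedActionModel{\actionStateS} \bisimilar_n \pointedModel{\stateS} \exec \tau(\alpha)$. Since $\pointedActionModel{\actionStateS} \in \classAM_\classS$ and $\tau(\alpha) \in \classAM_\classS$ (by Lemma~\ref{afl-s-structure}), both execution results lie in \classS{}, by Proposition~\ref{aml-s-domain} and Lemma~\ref{afl-s-exec} respectively. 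This is the key point that is new compared to the \classK{} case: it ensures the ``execution yields a \classC{} model'' clause in the semantics of $\allacts{\cdot}$ holds on both sides, so the action quantifier does not vacuously succeed on one side while failing on the other.

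Because $d(\phi') \leq n$, the two execution results satisfy $\phi'$ equivalently, hence also $\phi$, giving $\pointedModel{\stateS} \entails_\logicAmlS{} \allacts{\pointedActionModel{\actionStateS}} \phi$ if and only if $\pointedModel{\stateS} \entails_\logicAmlS{} \allacts{\tau(\alpha)} \phi$, which is the desired validity. The main obstacle is exactly the bookkeeping of the \classS{} frame condition on both sides of the equivalence, which the structural lemmas for $\tau$ in the \classS{} setting (Lemmas~\ref{afl-s-structure} and~\ref{afl-s-exec}) together with Proposition~\ref{aml-s-domain} handle cleanly; the rest of the argument is an essentially verbatim transcription of the proof of Corollary~\ref{afl-k-correspondence-aml-allacts}.
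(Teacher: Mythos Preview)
Your proposal is correct and follows the paper's intended approach: the paper gives no explicit proof for this corollary, but the evident intent is to replay the proof of Corollary~\ref{afl-k-correspondence-aml-allacts} with Proposition~\ref{afl-s-correspondence} in place of Proposition~\ref{afl-k-correspondence}, which is exactly what you do. Your explicit reduction of $\phi$ to a modal formula $\phi'$ before taking the depth, and your explicit check via Lemmas~\ref{afl-s-structure}, \ref{afl-s-exec} and Proposition~\ref{aml-s-domain} that both executions land in \classS{}, are genuine clarifications that the paper leaves implicit (the former is arguably already glossed over in the \classK{} case, and the latter is the only point where the \classS{} argument differs from the \classK{} one), but they do not constitute a different route.
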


  \begin{corollary}
      Let $\phi \in \langAml$. 
      Then there exists $\phi' \in \langAfl$ 
      such that for every $\pointedModel{\stateS} \in \classS$: 
      $\pointedModel{\stateS} \entails_\logicAmlS{} \phi$ if and only if
      $\pointedModel{\stateS} \entails_\logicAflS{} \phi'$.
  \end{corollary}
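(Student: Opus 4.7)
The plan is to mirror the brief sketch given for the $\classK$ version (Corollary~\ref{afl-k-correspondence-afl-aml}) and reduce everything to the $\classS$ version of Corollary~\ref{afl-k-correspondence-aml-allacts}. I would proceed by structural induction on $\phi \in \langAml$. The atomic, negation, conjunction and $\necessary[\agentA]$ cases are handled immediately by applying the induction hypothesis to the immediate subformulae and reassembling.

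The only real work lies in the case $\phi = \allacts{\pointedActionModel{\actionStateS}} \psi$. First, I would invoke the induction hypothesis on $\psi$ and on every precondition $\actionPrecondition(\actionStateT)$ appearing in $\pointedActionModel{\actionStateS}$ (each of which is a strictly smaller $\langAml$ formula). This produces $\psi' \in \langAfl$ equivalent to $\psi$ on $\classS$-models, together with an action model $\pointedActionModel[\prime]{\actionStateS} \in \classAM_\classS$ whose preconditions all lie in $\langAfl$ and which is pointwise equivalent to $\pointedActionModel{\actionStateS}$ under the semantics of $\logicS$. I would then apply the preceding corollary to $\pointedActionModel[\prime]{\actionStateS}$ to obtain $\alpha \in \langAflAct$ satisfying $\entails_\logicAmlS{} \allacts{\pointedActionModel[\prime]{\actionStateS}} \psi' \iff \allacts{\tau(\alpha)} \psi'$. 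Because $\logicAflS$ interprets $\allacts{\alpha}$ by translating $\alpha$ via $\tau$ and executing the resulting action model, the formula $\allacts{\alpha} \psi' \in \langAfl$ is truth-equivalent to $\allacts{\pointedActionModel{\actionStateS}} \psi$ at every $\pointedModel{\stateS} \in \classS$, providing the required $\phi'$.

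The main obstacle I anticipate is the bookkeeping around preconditions: the construction in Proposition~\ref{afl-s-correspondence} builds $\alpha$ by embedding the preconditions of the target action model directly inside test operators, so for $\alpha$ to be a genuine element of $\langAflAct$ its preconditions must already be $\langAfl$-formulae at the point the previous corollary is invoked. Structural induction on $\phi$ handles this smoothly, because action-model preconditions occur as proper subformulae in the size order and are therefore covered by the induction hypothesis before the outer $\allacts{\pointedActionModel{\actionStateS}}$ case is processed; no separate well-founded ordering is needed.
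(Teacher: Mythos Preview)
Your proposal is correct and follows essentially the same route as the paper's sketch for the $\classK$ analogue (the paper gives no separate proof in the $\classS$ case): replace each occurrence of an action-model modality by an action-formula modality via the preceding corollary, working from the inside out by structural induction. Your explicit treatment of the preconditions --- translating each $\actionPrecondition(\actionStateT)$ to an $\langAfl$ formula before invoking Proposition~\ref{afl-s-correspondence} so that the resulting $\alpha$ is genuinely in $\langAflAct$ --- fills in a detail that the paper's sketch leaves implicit.
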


  \section{Synthesis}\label{synthesis}

  In the following subsections we give a computational method for synthesising
  action formulae to achieve epistemic goals, whenever those goals are
  achievable.  We note that the notion of when an epistemic goal is achievable
  is captured by the refinement quantifiers of refinement modal
  logic~\cite{vanditmarsch2009,bozzelli2012a}, which are also included in the
  arbitrary action formula logic, and so in this section we will refer to the full
  arbitrary action formula logic, keeping in mind the correspondence with 
  arbitrary action model logic mentioned in Section~\ref{semantics}.

  \subsection{\classK{}}

  \begin{proposition}\label{afl-k-synthesis}
      For every $\phi \in \langAfl$ there exists $\alpha \in \langAflAct$ such that $\proves \allacts{\alpha} \phi$ and $\proves \somerefs \phi \implies \someacts{\alpha} \phi$.
  \end{proposition}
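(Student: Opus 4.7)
The plan is to synthesize $\alpha$ by induction on the structure of $\phi$ after converting it to a suitable normal form. First, using Corollary~\ref{afl-k-correspondence-afl-aml} together with Proposition~\ref{aml-k-expressive-equivalence}, reduce to the case where $\phi$ is a pure modal formula in $\lang$, and then convert $\phi$ to disjunctive normal form $\phi \equiv \bigvee_i \phi_i$ with each disjunct of shape $\phi_i = \pi_i \land \bigwedge_{\agentA \in \agentsB_i} \covers_\agentA \Gamma_\agentA^i$. Since $\somerefs$ distributes over disjunction and each $\phi_i$ implies $\phi$, it suffices to synthesize an action formula $\alpha_i$ for each disjunct and then take $\alpha = \bigchoice_i \alpha_i$.

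The synthesis of each $\alpha_i$ proceeds by a second induction on the modal depth of $\phi_i$. The base case has $\phi_i$ purely propositional; take $\alpha_i = \test{\pi_i}$, so that $\proves \allacts{\test{\pi_i}} \pi_i$ follows from {\bf LT} and $\proves \somerefs \pi_i \implies \someacts{\test{\pi_i}} \pi_i$ follows from {\bf RP} together with the dual of {\bf LT}. For the inductive step, by the IH each $\gamma \in \Gamma_\agentA^i$ admits an action formula $\alpha_\gamma$ satisfying both required properties, so define
\[
    \alpha_i = \test{\pi_i} \compose \bigcompose_{\agentA \in \agentsB_i} \learns_\agentA \left( \bigchoice_{\gamma \in \Gamma_\agentA^i} \alpha_\gamma \right),
\]
handling the corner case $\Gamma_\agentA^i = \emptyset$ by using $\learns_\agentA \test{\bot}$ so that $\agentA$ has no successors in the result, matching $\covers_\agentA \emptyset \equiv \necessary[\agentA] \bot$.

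To verify $\proves \allacts{\alpha_i} \phi_i$, apply the reduction axioms {\bf LT}, {\bf LU}, {\bf LS}, {\bf LC}, {\bf LK1} and {\bf LK2} of \axiomAflK{} to push $\allacts{\alpha_i}$ through $\phi_i$ down to the level of each $\alpha_\gamma$; in particular {\bf LK1} reduces $\allacts{\learns_\agentA (\ldots)} \necessary[\agentA] \psi$ to $\necessary[\agentA] \allacts{\bigchoice_\gamma \alpha_\gamma} \psi$, which combined with {\bf LU} and the IH $\proves \allacts{\alpha_\gamma} \gamma$ yields the $\necessary[\agentA] \bigvee_\gamma \gamma$ conjunct of $\covers_\agentA \Gamma_\agentA^i$, while {\bf LK2} ensures the $\agentB$-structure for $\agentB \notin \agentsB_i$ is untouched. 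For the second implication, expand $\somerefs \phi_i$ via {\bf RP}, {\bf RDist} and {\bf RK} into $\pi_i \land \bigwedge_{\agentA \in \agentsB_i} \bigwedge_{\gamma \in \Gamma_\agentA^i} \possible[\agentA] \somerefs \gamma$, and then argue semantically that the execution of $\alpha_i$ preserves each $\possible[\agentA] \gamma$ witness by selecting an $\agentA$-successor satisfying $\somerefs \gamma$ and invoking the IH $\proves \somerefs \gamma \implies \someacts{\alpha_\gamma} \gamma$ to produce a witness for $\possible[\agentA] \gamma$ in the resulting model.

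The main obstacle will be rigorously verifying that the sequential composition of $\learns_\agentB$ actions for different agents $\agentB$ does not disturb cover modalities previously established for other agents. The key observation is that for $\agentA \neq \agentB$, the action model $\tau(\learns_\agentB (\ldots))$ sends the $\agentA$-relation at the root to a single $\actionStateSkip$ state with self-loops at every agent and precondition $\top$, which behaves as an identity operation on the $\agentA$-subgraph; formalising this via a bisimulation between the connected $\agentA$-subgraph arising after one execution and the corresponding subgraph after subsequent $\learns_\agentB$ operations, together with the associativity of action model execution on Kripke models, should close the argument and complete the inductive step.
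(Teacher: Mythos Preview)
Your overall strategy---disjunctive normal form, non-deterministic choice for disjunction, and $\test{} \compose \bigcompose_\agentA \learns_\agentA(\bigchoice_\gamma \alpha_\gamma)$ for a cover conjunct---is the same as the paper's. However, there is a genuine gap: your choice of $\test{\pi_i}$ as the leading test is too weak, and with it the first claim $\proves \allacts{\alpha_i}\phi_i$ fails.

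The cover $\covers_\agentA \Gamma_\agentA^i$ unfolds into $\necessary[\agentA]\bigvee_{\gamma}\gamma \land \bigwedge_{\gamma}\possible[\agentA]\gamma$. Your argument handles only the $\necessary[\agentA]$-conjunct via {\bf LK1} and the induction hypothesis $\proves \allacts{\alpha_\gamma}\gamma$; it says nothing about the $\possible[\agentA]\gamma$ conjuncts. And indeed these can fail: take a pointed model where $\pi_i$ holds but the designated state has no $\agentA$-successors (or none at which any $\alpha_\gamma$ can fire). The action $\test{\pi_i}\compose\learns_\agentA(\bigchoice_\gamma\alpha_\gamma)$ executes successfully there (the test passes, and the root of the learning action has precondition $\top$), yet in the resulting model the root has no $\agentA$-successor witnessing $\gamma$, so $\possible[\agentA]\gamma$ and hence $\covers_\agentA\Gamma_\agentA^i$ fail. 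Thus $\allacts{\alpha_i}\phi_i$ is not valid.

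The fix, which is what the paper does, is to test for $\somerefs\phi_i$ rather than merely $\pi_i$. From {\bf RDist} and {\bf RK} one has $\proves \somerefs\phi_i \implies \bigwedge_{\agentA,\gamma}\possible[\agentA]\somerefs\gamma$, and then the induction hypothesis $\proves \somerefs\gamma\implies\someacts{\alpha_\gamma}\gamma$ yields $\proves \somerefs\phi_i\implies\someacts{\bigcompose_\agentA\learns_\agentA(\ldots)}\bigwedge_{\agentA,\gamma}\possible[\agentA]\gamma$. Because the leading test is $\test{\somerefs\phi_i}$, {\bf LT} turns this conditional into the unconditional $\proves \allacts{\alpha_i}\bigwedge_{\agentA,\gamma}\possible[\agentA]\gamma$, which together with the $\necessary[\agentA]$-part gives $\proves\allacts{\alpha_i}\phi_i$. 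Note that $\somerefs\phi_i$ is equivalent to a pure modal formula by the expressive-equivalence results, so $\test{\somerefs\phi_i}$ is a legitimate action formula.

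Finally, the ``main obstacle'' you anticipate---that later $\learns_\agentB$ steps might disturb $\covers_\agentA$ for $\agentA\neq\agentB$---is not an obstacle at all: axiom {\bf LK2} says $\allacts{\learns_\agentsB(\ldots)}\necessary[\agentA]\psi\iff\necessary[\agentA]\psi$ for $\agentA\notin\agentsB$, and dually for $\possible[\agentA]$, so commutation through the other learning operators is immediate and no bisimulation argument is needed.
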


  \begin{proof}
      Without loss of generality we assume that $\phi$ is in disjunctive normal
      form. We proceed by induction on the structure of $\phi$.

      Suppose that $\phi = \psi \lor \chi$. By the induction hypothesis 
      there exists $\alpha^\psi, \alpha^\chi \in \langAflAct$ 
      such that $\proves \allacts{\alpha^\psi} \psi$, 
      $\proves \somerefs \psi \implies \someacts{\alpha^\psi} \psi$,
      $\proves \allacts{\alpha^\chi} \chi$ and
      $\proves \somerefs \chi \implies \someacts{\alpha^\chi} \chi$.
      Let $\alpha = \alpha^\psi \choice \alpha^\chi$.
      Then:
      \begin{eqnarray}
          &\proves& \allacts{\alpha^\psi} (\psi \lor \chi) \land \allacts{\alpha^\chi} (\psi \lor \chi)\label{afl-k-synthesis-or-1}\\
          &\proves& \allacts{\alpha^\psi \choice \alpha^\chi} (\psi \lor \chi)\label{afl-k-synthesis-or-2}
      \end{eqnarray}
      (\ref{afl-k-synthesis-or-1}) follows from the induction hypothesis and
      (\ref{afl-k-synthesis-or-2}) follows from {\bf LU}.

      Further:
      \begin{eqnarray}
          &\proves& (\somerefs \psi \lor \somerefs \chi) \implies (\someacts{\alpha^\psi} (\psi \lor \chi) \lor \someacts{\alpha^\chi} (\psi \lor \chi))\label{afl-k-synthesis-or-3}\\
          &\proves& (\somerefs \psi \lor \somerefs \chi) \implies \someacts{\alpha^\psi \choice \alpha^\chi} (\psi \lor \chi)\label{afl-k-synthesis-or-4}\\
          &\proves& \somerefs (\psi \lor \chi) \implies \someacts{\alpha^\psi \choice \alpha^\chi} (\psi \lor \chi)\label{afl-k-synthesis-or-5}
      \end{eqnarray}
      (\ref{afl-k-synthesis-or-3}) follows from the induction hypothesis,
      (\ref{afl-k-synthesis-or-4}) follows from {\bf LU} and
      (\ref{afl-k-synthesis-or-5}) follows from {\bf R}.

      Suppose that $\phi = \pi \land \bigwedge_{\agentB \in \agentsB \subseteq \agents} \covers_\agentB \Gamma_\agentB$.
      By the induction hypothesis for every $\agentB \in \agentsB$, $\gamma \in \Gamma_\agentB$
      there exists $\alpha^\gamma \in \langAflAct$ such that 
      $\proves \allacts{\alpha^\gamma} \gamma$ and 
      $\proves \somerefs \gamma \implies \someacts{\alpha^\gamma} \gamma$.
      Let $\alpha = \test{\somerefs \phi} \compose \bigcompose_{\agentB \in \agentsB} \learns_\agentB (\bigchoice_{\gamma \in \Gamma_\agentB} \alpha^\gamma)$.

      Then for every $\agentB \in \agentsB$: 
      \begin{eqnarray}
          &\proves& \allacts{\bigchoice_{\gamma \in \Gamma_\agentB} \alpha^\gamma} \bigvee_{\gamma \in \Gamma} \gamma\label{afl-k-synthesis-covers-1}\\
          &\proves& \necessary_\agentB \allacts{\bigchoice_{\gamma \in \Gamma_\agentB} \alpha^\gamma} \bigvee_{\gamma \in \Gamma} \gamma\label{afl-k-synthesis-covers-2}\\
          &\proves& \allacts{\learns_\agentB (\bigchoice_{\gamma \in \Gamma_\agentB} \alpha^\gamma)} \necessary_\agentB \bigvee_{\gamma \in \Gamma} \gamma\label{afl-k-synthesis-covers-3}\\
          &\proves& \allacts{\bigcompose_{\agentC \in \agentsB} \learns_\agentC (\bigchoice_{\gamma \in \Gamma_\agentC} \alpha^\gamma)} \necessary_\agentB \bigvee_{\gamma \in \Gamma} \gamma\label{afl-k-synthesis-covers-4}\\
          &\proves& \allacts{\test{\somerefs \phi}} \allacts{\bigcompose_{\agentC \in \agentsB} \learns_\agentC (\bigchoice_{\gamma \in \Gamma_\agentC} \alpha^\gamma)} \necessary_\agentB \bigvee_{\gamma \in \Gamma} \gamma\label{afl-k-synthesis-covers-5}\\
          &\proves& \allacts{\test{\somerefs \phi} \compose \bigcompose_{\agentC \in \agentsB} \learns_\agentC (\bigchoice_{\gamma \in \Gamma_\agentC} \alpha^\gamma)} \necessary_\agentB \bigvee_{\gamma \in \Gamma} \gamma\label{afl-k-synthesis-covers-6}
      \end{eqnarray}
      (\ref{afl-k-synthesis-covers-1}) follows from the induction hypothesis and {\bf LU},
      (\ref{afl-k-synthesis-covers-2}) follows from {\bf NecK},
      (\ref{afl-k-synthesis-covers-3}) follows from {\bf LK1},
      (\ref{afl-k-synthesis-covers-4}) follows from {\bf LK2} and {\bf LS},
      (\ref{afl-k-synthesis-covers-5}) follows from {\bf NecL} and
      (\ref{afl-k-synthesis-covers-6}) follows from {\bf LS}.

      Further:
      \begin{eqnarray}
          &\proves& \somerefs \phi \implies \bigwedge_{\agentB \in \agentsB, \gamma \in \Gamma_\agentB} \possible_\agentB \somerefs \gamma\label{afl-k-synthesis-covers-7}\\
          &\proves& \somerefs \phi \implies \bigwedge_{\agentB \in \agentsB, \gamma \in \Gamma_\agentB} \possible_\agentB \someacts{\alpha^{\gamma}} \gamma\label{afl-k-synthesis-covers-8}\\
          &\proves& \somerefs \phi \implies \bigwedge_{\agentB \in \agentsB, \gamma \in \Gamma_\agentB} \possible_\agentB \someacts{\bigchoice_{\gamma' \in \Gamma_\agentB} \alpha^{\gamma'}} \gamma\label{afl-k-synthesis-covers-9}\\
          &\proves& \somerefs \phi \implies \someacts{\bigcompose_{\agentC \in \agentsB} \learns_\agentC (\bigchoice_{\gamma \in \Gamma_\agentC} \alpha^\gamma)} \bigwedge_{\agentB \in \agentsB, \gamma \in \Gamma_\agentB} \possible_\agentB \gamma\label{afl-k-synthesis-covers-10}\\
          &\proves& \somerefs \phi \implies \someacts{\test{\somerefs \phi} \compose \bigcompose_{\agentC \in \agentsB} \learns_\agentC (\bigchoice_{\gamma \in \Gamma_\agentC} \alpha^\gamma)} \bigwedge_{\agentB \in \agentsB, \gamma \in \Gamma_\agentB} \possible_\agentB \gamma\label{afl-k-synthesis-covers-11}\\
          &\proves& \allacts{\test{\somerefs \phi} \compose \bigcompose_{\agentC \in \agentsB} \learns_\agentC (\bigchoice_{\gamma \in \Gamma_\agentC} \alpha^\gamma)} \bigwedge_{\agentB \in \agentsB, \gamma \in \Gamma_\agentB} \possible_\agentB \gamma\label{afl-k-synthesis-covers-12}\\
          &\proves& \allacts{\test{\somerefs \phi} \compose \bigcompose_{\agentC \in \agentsB} \learns_\agentC (\bigchoice_{\gamma \in \Gamma_\agentC} \alpha^\gamma)} (\pi \land \bigwedge_{\agentB \in \agentsB} \covers_\agentB \Gamma_\agentB)\label{afl-k-synthesis-covers-13}
      \end{eqnarray}
      (\ref{afl-k-synthesis-covers-7}) follows from {\bf RK},
      (\ref{afl-k-synthesis-covers-8}) follows from the induction hypothesis,
      (\ref{afl-k-synthesis-covers-9}) follows from {\bf LU},
      (\ref{afl-k-synthesis-covers-10}) follows from {\bf LK1}, {\bf LK2} and {\bf LS},
      (\ref{afl-k-synthesis-covers-11}) and (\ref{afl-k-synthesis-covers-12}) follow from {\bf LT}, and
      (\ref{afl-k-synthesis-covers-13}) follows from (\ref{afl-k-synthesis-covers-6}), {\bf RP} {\bf LC} and the definition of the cover operator.

      Therefore $\proves \allacts{\alpha} \phi$.

      Finally:
      \begin{eqnarray}
      &\proves& \someacts{\bigcompose_{\agentC \in \agentsB} \learns_\agentC (\bigchoice_{\gamma \in \Gamma_\agentC} \alpha^\gamma)} \top \iff \top\label{afl-k-synthesis-covers-14}\\
      &\proves& \someacts{\test{\somerefs \phi} \compose \bigcompose_{\agentC \in \agentsB} \learns_\agentC (\bigchoice_{\gamma \in \Gamma_\agentC} \alpha^\gamma)} \top \iff \somerefs \phi\label{afl-k-synthesis-covers-15}\\
      &\proves& \somerefs \phi \implies \someacts{\alpha} \top\label{afl-k-synthesis-covers-16}\\
      &\proves& \somerefs \phi \implies \someacts{\alpha} \phi\label{afl-k-synthesis-covers-17}
      \end{eqnarray}
      (\ref{afl-k-synthesis-covers-14}) follows from {\bf LS} and {\bf LP},
      (\ref{afl-k-synthesis-covers-15}) follows from {\bf LS} and {\bf LT},
      (\ref{afl-k-synthesis-covers-16}) follows from (\ref{afl-k-synthesis-covers-15}),
      (\ref{afl-k-synthesis-covers-17}) follows from (\ref{afl-k-synthesis-covers-13}) and (\ref{afl-k-synthesis-covers-16}),

      Therefore $\proves \somerefs \phi \implies \someacts{\alpha} \phi$.
  \end{proof}

  \begin{corollary}
      For every $\pointedModel{\stateS} \in \classK$ and $\phi \in \langAaml$: 
      $\pointedModel{\stateS} \entails \somerefs \phi$ if and only if 
      there exists $\pointedActionModel{\actionStateS} \in \classAM$ 
      such that $\pointedModel{\stateS} \entails \someacts{\pointedActionModel{\actionStateS}} \phi$.
  \end{corollary}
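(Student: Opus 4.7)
The plan is to prove both directions by combining Proposition~\ref{afl-k-synthesis} with the fact that action model execution always produces a refinement, bridging between $\langAaml$ and $\langAfl$ via expressive equivalence.

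For the $(\Leftarrow)$ direction, suppose $\pointedModel{\stateS} \entails \someacts{\pointedActionModel{\actionStateS}} \phi$. Unfolding the semantics of $\someacts{}$ in the class \classK{} (where $\model \exec \actionModel \in \classK$ is automatic), this simply says that $\pointedModel{\stateS} \exec \pointedActionModel{\actionStateS} \entails \phi$. By the earlier proposition that $\pointedModel{\stateS} \exec \pointedActionModel{\actionStateS} \refinement \pointedModel{\stateS}$, the executed model is a refinement of $\pointedModel{\stateS}$, and therefore witnesses $\pointedModel{\stateS} \entails \somerefs \phi$.

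For the $(\Rightarrow)$ direction, the only mismatch is that Proposition~\ref{afl-k-synthesis} is stated for $\phi \in \langAfl$, whereas here $\phi \in \langAaml$. I would first invoke the expressive equivalence of \logicAamlK{} with \logicK{} to obtain $\phi' \in \lang$ with $\phi \iff \phi'$ valid; since $\lang \subseteq \langAfl$, we can feed $\phi'$ into the synthesis proposition to produce an action formula $\alpha \in \langAflAct$ with $\proves \somerefs \phi' \implies \someacts{\alpha} \phi'$. Assuming $\pointedModel{\stateS} \entails \somerefs \phi$, the equivalence gives $\pointedModel{\stateS} \entails \somerefs \phi'$, hence $\pointedModel{\stateS} \entails \someacts{\alpha} \phi'$, and again by equivalence $\pointedModel{\stateS} \entails \someacts{\alpha} \phi$. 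Taking $\pointedActionModel{\actionStateS} = \tau_\classK(\alpha) \in \classAM$ then yields $\pointedModel{\stateS} \entails \someacts{\pointedActionModel{\actionStateS}} \phi$ by the semantics of \logicAaflK{}.

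There is no real obstacle: both directions are essentially immediate packaging of results already established. The only point requiring a little care is the routing through expressive equivalence so that the synthesis result (stated for \langAfl{}) applies to $\phi \in \langAaml$; this is legitimate because \somerefs{} behaves identically in the two logics and the target formula in the synthesis bound is just $\phi$ itself, preserved under semantic equivalence.
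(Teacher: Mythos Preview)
Your proposal is correct and follows the natural route the paper leaves implicit: the paper states this corollary without proof, expecting the reader to combine Proposition~\ref{afl-k-synthesis} with the earlier fact that action model execution yields a refinement. Your explicit detour through the expressive equivalence of \logicAamlK{} with \logicK{} to bridge the gap between $\phi \in \langAaml$ and the hypothesis $\phi \in \langAfl$ of Proposition~\ref{afl-k-synthesis} is exactly the missing bookkeeping step, and it goes through without issue.
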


  \subsection{\classKFF{}}

  \begin{proposition}\label{afl-kff-synthesis}
      For every $\phi \in \langAfl$ there exists $\alpha \in \langAflAct$ such that $\proves \allacts{\alpha} \phi$ and $\proves \somerefs \phi \implies \someacts{\alpha} \phi$.
  \end{proposition}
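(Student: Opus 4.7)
The plan is to adapt the proof of Proposition~\ref{afl-k-synthesis} to the \classKFF{} setting with two key modifications: first, assume without loss of generality that $\phi$ is in alternating disjunctive normal form (Definition~\ref{adnf}) rather than plain disjunctive normal form; second, use the rules and axioms of \axiomAflKFF{} (in particular the restricted {\bf LK1} and {\bf RK45}) in place of their \axiomAflK{} counterparts. The overall structure, an induction on the structure of $\phi$ in this normal form, is identical to the earlier proof.

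For the disjunction case $\phi = \psi \lor \chi$, I would apply the induction hypothesis to obtain $\alpha^\psi, \alpha^\chi \in \langAflAct$, set $\alpha = \alpha^\psi \choice \alpha^\chi$, and replay equations (\ref{afl-k-synthesis-or-1})--(\ref{afl-k-synthesis-or-5}) verbatim: these use only {\bf LU} and {\bf R}, both of which are present in \axiomAflKFF{}. For the main case $\phi = \pi \land \bigwedge_{\agentB \in \agentsB} \covers_\agentB \Gamma_\agentB$, I would apply the induction hypothesis to each $\gamma \in \Gamma_\agentB$ (which is in $\agentB$-alternating disjunctive normal form) to obtain $\alpha^\gamma$, and take $\alpha = \test{\somerefs \phi} \compose \bigcompose_{\agentB \in \agentsB} \learns_\agentB (\bigchoice_{\gamma \in \Gamma_\agentB} \alpha^\gamma)$. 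The derivation of $\proves \allacts{\alpha} \phi$ and $\proves \somerefs \phi \implies \someacts{\alpha} \phi$ then follows the chains (\ref{afl-k-synthesis-covers-1})--(\ref{afl-k-synthesis-covers-17}), with {\bf RK} replaced by {\bf RK45} at the step analogous to (\ref{afl-k-synthesis-covers-7}).

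The main obstacle is justifying the two uses of {\bf LK1}, since the \axiomAflKFF{} version requires the formula pushed under $\necessary_\agentB$ to be $(\agents \setminus \{\agentB\})$-restricted. This is precisely where the alternating normal form assumption pays off: since each $\gamma \in \Gamma_\agentB$ is in $\agentB$-alternating disjunctive normal form, the only modal operators appearing at the top level of $\gamma$ are $\covers_\agentC$ (and hence $\necessary_\agentC$) for $\agentC \in \agents \setminus \{\agentB\}$. Consequently $\bigvee_{\gamma \in \Gamma_\agentB} \gamma$ is $(\agents \setminus \{\agentB\})$-restricted, licensing the use of {\bf LK1} in the steps corresponding to (\ref{afl-k-synthesis-covers-3}) and (\ref{afl-k-synthesis-covers-10}). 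With this point discharged, the remaining appeals to {\bf LK2}, {\bf LS}, {\bf LT}, {\bf LP}, {\bf LC}, {\bf NecK}, {\bf NecL} and {\bf RP} carry across unchanged, and the conclusion follows.
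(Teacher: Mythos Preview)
Your proposal is correct and matches the paper's own proof essentially line for line: the paper also assumes $\phi$ is in alternating disjunctive normal form, reruns the argument of Proposition~\ref{afl-k-synthesis} with the \axiomAflKFF{} axioms in place of the \axiomAflK{} ones, and invokes the alternating form precisely to guarantee the $(\agents \setminus \{\agentB\})$-restriction needed for {\bf LK1} (and for {\bf RK45}/{\bf RDist}). Your write-up is in fact more explicit than the paper's, which compresses all of this into a single sentence.
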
 

  \begin{proof}
      Without loss of generality we assume that $\phi$ is in alternating
      disjunctive normal form. We use the same reasoning as in the proof of
      Proposition~\ref{afl-k-synthesis}, substituting \axiomAflKFF{} axioms for
      the corresponding \axiomAflK{} axioms, noting that the alternating
      disjunctive normal form gives the $(\agents \setminus
      \{\agentA\})$-restricted properties required for {\bf LK1} and the
      \axiomRmlKFF{} axioms {\bf RK45}, {\bf RComm} and  {\bf RDist} to be
      applicable.
  \end{proof}

  \begin{corollary}
      For every $\pointedModel{\stateS} \in \classKFF$ and $\phi \in \langAaml$: 
      $\pointedModel{\stateS} \entails \somerefs \phi$ if and only if 
      there exists $\pointedActionModel{\actionStateS} \in \classAM_\classKFF$ 
      such that $\pointedModel{\stateS} \entails \someacts{\pointedActionModel{\actionStateS}} \phi$.
  \end{corollary}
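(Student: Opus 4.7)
The plan is to reduce to Proposition~\ref{afl-kff-synthesis} by first eliminating the refinement quantifier from the goal formula and then reading off the witness action model from the synthesised action formula via $\tau_\classKFF$. The right-to-left direction drops out of the semantics and previously established results.

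First I would handle the easy direction. Suppose $\pointedActionModel{\actionStateS} \in \classAM_\classKFF$ with $\pointedModel{\stateS} \entails \someacts{\pointedActionModel{\actionStateS}} \phi$. The semantics of $\someacts{\cdot}$ in $\logicAamlKFF$ then gives that $\pointedModel{\stateS} \exec \pointedActionModel{\actionStateS}$ lies in $\classKFF$ (which is also guaranteed by Proposition~\ref{aml-kff-domain}) and satisfies $\phi$. Combining this with the earlier proposition that $\pointedModel{\stateS} \exec \pointedActionModel{\actionStateS} \refinement \pointedModel{\stateS}$, the executed model is a $\classKFF$-refinement of $\pointedModel{\stateS}$ witnessing $\phi$, hence $\pointedModel{\stateS} \entails \somerefs \phi$.

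For the forward direction I would start by invoking the expressive equivalence of $\logicAamlKFF$ with $\logicKFF$ to replace $\phi \in \langAaml$ by an equivalent formula $\phi^\ast \in \lang$; this equivalence persists across refinements, so $\pointedModel{\stateS} \entails \somerefs \phi$ iff $\pointedModel{\stateS} \entails \somerefs \phi^\ast$. Since $\lang \subseteq \langAfl$, Proposition~\ref{afl-kff-synthesis} supplies some $\alpha \in \langAflAct$ with $\proves \somerefs \phi^\ast \implies \someacts{\alpha} \phi^\ast$. By soundness of the axiomatisation in the (combined) logic $\logicAaflKFF$, this yields $\pointedModel{\stateS} \entails \someacts{\alpha} \phi^\ast$. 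Unpacking the semantics of $\someacts{\alpha}$, we obtain that $\pointedModel{\stateS} \exec \tau_\classKFF(\alpha) \in \classKFF$ and $\pointedModel{\stateS} \exec \tau_\classKFF(\alpha) \entails \phi^\ast$, i.e.\ $\entails \phi$. Setting $\pointedActionModel{\actionStateS} := \tau_\classKFF(\alpha)$, which lies in $\classAM_\classKFF$ by Lemma~\ref{afl-kff-structure}, gives the required witness.

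The main obstacle is essentially bookkeeping: carefully moving between $\langAaml$, $\langAfl$ and $\lang$, and between the action model semantics of $\someacts{\pointedActionModel{\actionStateS}}$ and the action formula semantics of $\someacts{\alpha}$ mediated by $\tau_\classKFF$. Everything substantive has already been proved---Proposition~\ref{afl-kff-synthesis} does the synthesis, Lemma~\ref{afl-kff-structure} keeps the translated action model inside $\classAM_\classKFF$, Proposition~\ref{aml-kff-domain} keeps executions inside $\classKFF$, the refinement-from-execution proposition supplies the converse, and the expressive-equivalence result for $\logicAamlKFF$ allows us to feed an arbitrary $\langAaml$ formula into the $\langAfl$ synthesis theorem---so no genuinely new argument is required.
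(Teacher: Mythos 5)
Your argument is correct and is essentially the derivation the paper intends: it states this corollary without proof, as an immediate consequence of Proposition~\ref{afl-kff-synthesis} together with soundness, the proposition that $\pointedModel{\stateS} \exec \pointedActionModel{\actionStateS} \refinement \pointedModel{\stateS}$, Proposition~\ref{aml-kff-domain}, and the expressive equivalence of \logicAamlKFF{} with \logicKFF{} to reduce $\phi \in \langAaml$ to a formula of \lang{}. The only cosmetic point is that $\tau_\classKFF(\alpha)$ may be multi-pointed, so to produce the stated single-pointed witness $\pointedActionModel{\actionStateS}$ you should pass to one of its designated points (via {\bf AU} or directly from the multi-pointed semantics), which is routine.
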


  \subsection{\classS{}}

  \begin{proposition}\label{afl-s-synthesis}
      For every $\phi \in \langAfl$ there exists $\alpha \in \langAflAct$ such that $\proves \allacts{\alpha} \phi$ and $\proves \somerefs \phi \implies \someacts{\alpha} \phi$.
  \end{proposition}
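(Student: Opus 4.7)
The plan is to mirror the proof of Proposition~\ref{afl-k-synthesis}, but using the disjunctive normal form for \classS{}---disjunctions of explicit formulae (Definition~\ref{explicit})---in place of the ordinary disjunctive normal form. Structural induction on $\phi$ handles disjunctions trivially via {\bf LU}, so the main case is a single explicit formula $\phi = \pi \land \gamma^0 \land \bigwedge_{\agentA \in \agentsB} \covers_\agentA \Gamma_\agentA$ with $\gamma^0 \in \Gamma_\agentA$ for every $\agentA \in \agentsB$.

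By the induction hypothesis, for each $\gamma \in \{\gamma^0\} \cup \bigcup_{\agentA \in \agentsB} \Gamma_\agentA$ there is some $\alpha^\gamma \in \langAflAct$ with $\proves \allacts{\alpha^\gamma} \gamma$ and $\proves \somerefs \gamma \implies \someacts{\alpha^\gamma} \gamma$. I propose the witness
$$
\alpha = \test{\somerefs \phi} \compose \bigcompose_{\agentA \in \agentsB} \learns_\agentA \bigl( \alpha^{\gamma^0}, \bigchoice_{\gamma \in \Gamma_\agentA} \alpha^\gamma \bigr).
$$
The asymmetry of the \classS{} learning operator is essential here: its first argument specifies the action that actually occurs (ensuring $\gamma^0$, and hence $\pi$, really holds after execution), while its second argument records the alternatives agent $\agentA$ is prepared to consider, yielding $\covers_\agentA \Gamma_\agentA$.

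The main obstacle is that \axiomAflS{} lacks analogues of {\bf LK1} and {\bf LK2}, as noted in Section~\ref{axiomatisation}, so the direct axiomatic calculation used for \classK{} and \classKFF{} is unavailable. I would circumvent this by translating $\alpha$ into the action model $\tau(\alpha) \in \classAM_\classS$ and reasoning in \axiomAmlS{}, via the semantically correct translation from \langAfl{} to \langAml{} discussed in Section~\ref{semantics}. From Definition~\ref{afl-s-learning}, the $\agentA$-successors of the designated proxy states in $\tau(\learns_\agentA(\alpha^{\gamma^0}, \bigchoice_{\gamma \in \Gamma_\agentA} \alpha^\gamma))$ are bisimilar to the proxy states for the designated states of each $\tau(\alpha^\gamma)$ with $\gamma \in \Gamma_\agentA$, so {\bf AK} together with the induction hypothesis gives $\necessary_\agentA \bigvee_{\gamma \in \Gamma_\agentA} \gamma$ after execution; the individual $\possible_\agentA \gamma$ conjuncts of $\covers_\agentA \Gamma_\agentA$ follow from decomposing $\somerefs \phi$ by {\bf RS5} and {\bf RDist} and invoking the induction hypothesis on each $\somerefs \gamma$. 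The test $\test{\somerefs \phi}$ then guarantees $\someacts{\alpha} \top \iff \somerefs \phi$ via {\bf LT} and {\bf LS}, closing both required derivations.
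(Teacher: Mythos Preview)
Your overall plan---induction on a disjunction of explicit formulae, handling disjunction via {\bf LU}, and for a single explicit formula translating the witness action to an action model and arguing in \axiomAmlS{}---is the same as the paper's. But the heart of the argument is missing, and one of your choices makes it harder rather than easier.

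First, the paper's witness uses $\learns_\agentA(\test{\top},\,\bigchoice_{\gamma\in\Gamma_\agentA}\alpha^{\agentA,\gamma})$ rather than your $\learns_\agentA(\alpha^{\gamma^0},\dots)$. The point of $\test{\top}$ in the first slot is that its designated state has precondition $\top$ and a single reflexive $\agentA$-loop, which is exactly the hypothesis of Lemma~\ref{afl-s-construction-learning}; this is what allows the sequential composition $\bigcompose_{\agentA}$ to be unfolded into a single explicitly described action model $\pointedActionModel{\actionStateS}$. With $\alpha^{\gamma^0}$ in the first slot the designated set after one learning step is already a family of proxy states carrying non-trivial $\agentB$-structure for $\agentB\neq\agentA$, and the composition lemmas no longer apply directly.

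Second, and more importantly, your sentence ``the $\agentA$-successors of the designated proxy states \dots\ are bisimilar to the proxy states for the designated states of each $\tau(\alpha^\gamma)$, so {\bf AK} together with the induction hypothesis gives $\necessary_\agentA\bigvee_{\gamma}\gamma$'' is precisely where the \classS{} proof is hard, and the claimed bisimilarity is false. As already observed in the proof of Proposition~\ref{afl-s-correspondence}, in the \classS{} construction the proxy states $\proxyStateS[\agentA,\gamma]$ are \emph{not} bisimilar to the roots $\actionStateS[\agentA,\gamma]$ of the sub-action-models: each $\proxyStateS[\agentA,\gamma]$ sees, along its $\agentA$-edges, \emph{all} the other proxy states and the root $\actionStateS$, not the $\agentA$-successors inside $\tau(\alpha^{\agentA,\gamma})$. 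So the induction hypothesis $\proves\allacts{\tau(\alpha^{\agentA,\gamma})}\gamma$ does not transfer to $\proves\allacts{\pointedActionModel{\proxyStateS[\agentA,\gamma]}}\gamma$ by any bisimulation argument. The paper closes this gap with an inner induction on all subformulae $\psi\in\Psi$, proving simultaneously three statements: that execution at $\actionStateS$ agrees with execution at each $\proxyStateS[\agentA,\gamma^0]$, that execution at each $\proxyStateS[\agentA,\gamma]$ agrees with execution at $\actionStateS[\agentA,\gamma]$, and that execution inside the embedded copy of $\actionModel^{\agentA,\gamma}$ agrees with execution in the original $\actionModel^{\agentA,\gamma}$. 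The cases $\psi=\necessary_\agentB\chi$ in this inner induction make essential use of \emph{both} conditions in Definition~\ref{explicit}: condition~(1) to split into $\gamma\Rightarrow\necessary_\agentB\chi$ versus $\gamma\Rightarrow\neg\necessary_\agentB\chi$, and condition~(2) to propagate the former across all $\gamma'\in\Gamma_\agentB$. Your proposal never invokes either condition, so as written it cannot go through.
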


  \begin{proof}
      Without loss of generality, assume that $\phi$ is a disjunction of
      explicit formulae. We proceed by induction on the structure of $\phi$.

      Suppose that $\phi = \psi \lor \chi$. We use the same reasoning as in the
      proof of Proposition~\ref{afl-k-synthesis}.

      Suppose that $\phi = \pi \land \gamma^0 \land \bigwedge_{\agentA \in \agents} \covers_\agentA \Gamma_\agentA$ is an explicit formula.
      By the induction hypothesis for every $\agentA \in \agents$, $\gamma \in \Gamma_\agentA$ 
      there exists $\alpha^{\agentA,\gamma} \in \langAflAct$
      such that $\proves \allacts{\alpha^{\agentA,\gamma}} \gamma$ and $\proves \somerefs \gamma \implies \someacts{\alpha^{\agentA,\gamma}} \gamma$,
      where $\tau(\alpha^{\agentA,\gamma}) = \pointedActionModel[\agentA,\gamma]{\actionStateS[\agentA,\gamma]} = \pointedActionModelTuple[\agentA,\gamma]{\actionStateS[\agentA,\gamma]}$.

      Let $\alpha = \test{\somerefs \gamma^0} \compose \bigcompose_{\agentA \in \agents} \learns_\agentA (\test{\top}, \bigchoice_{\gamma \in \Gamma_\agentA} \alpha^{\agentA,\gamma})$.
      Then from Lemmas~\ref{afl-s-construction-test} and~\ref{afl-s-construction-learning}: $\tau(\alpha) \bisimilar \pointedActionModel{\actionStateS} = \pointedActionModelTuple{\actionStateS}$ where:
      \begin{eqnarray*}
          \actionStates &=& \bigcup_{\agentA \in \agents, \gamma \in \Gamma_\agentA} \actionStates[\agentA,\gamma] \cup \{\proxyStateS[\agentA,\gamma] \mid \agentA \in \agents, \gamma \in \Gamma_\agentA\} \cup \{\actionStateS\}\\
          \actionAccessibilityAgent{\agentA} &=& \bigcup_{\agentB \in \agents, \gamma \in \Gamma_\agentB} \actionAccessibilityAgent[\agentB,\gamma]{\agentA} \cup (\{\actionStateS\} \cup \{\proxyStateS[\agentA,\gamma] \mid \gamma \in \Gamma_\agentA\})^2 \cup \bigcup_{\agentB \in \agents \setminus \{\agentA\}, \gamma \in \Gamma_\agentB} (\{\proxyStateS[\agentB,\gamma]\} \cup \actionStateS[\agentB,\gamma] \actionAccessibilityAgent[\agentB,\gamma]{\agentA})^2 \text{ for } \agentA \in \agents\\
          \actionPrecondition &=& \bigcup_{\agentA \in \agents, \gamma \in \Gamma_\agentA} \actionPrecondition[\agentA,\gamma] \cup \{(\proxyStateS[\agentA,\gamma], \actionPrecondition[\agentA,\gamma](\actionStateS[\agentA<,\gamma])) \mid \agentA \in \agents, \gamma \in \gamma_\agentA\} \cup \{(\actionStateS, \somerefs \gamma^0)\}
      \end{eqnarray*}

      Let $\Psi = \{\psi \leq \gamma \mid \agentA \in \agents, \gamma \in \Gamma_\agentA\}$. We need to show for every $\psi \in \Psi$:

      \begin{enumerate}
          \item For every $\agentA \in \agents$: $\proves \allacts{\pointedActionModel{\actionStateS}} \psi \iff \allacts{\pointedActionModel{\actionStateS[\agentA,\gamma^0]}} \psi$.
          \item For every $\agentA \in \agents$, $\gamma \in \Gamma_\agentA$: $\proves \allacts{\pointedActionModel{\proxyStateS[\agentA,\gamma]}} \psi \iff \allacts{\pointedActionModel{\actionStateS[\agentA,\gamma]}} \psi$.
          \item For every $\agentA \in \agents$, $\gamma \in \Gamma_\agentA$, $\actionStateU \in \actionStates[\agentA,\gamma]$: $\proves \allacts{\pointedActionModel{\actionStateU}} \psi \iff \allacts{\pointedActionModel[\agentA,\gamma]{\actionStateU}} \psi$.
      \end{enumerate}

      We proceed by induction on $\psi$.

      \begin{enumerate}
          \item For every $\agentA \in \agents$: $\proves \allacts{\pointedActionModel{\actionStateS}} \psi \iff \allacts{\pointedActionModel{\actionStateS[\agentA,\gamma^0]}} \psi$.

              Suppose that $\psi = \atomP$ where $\atomP \in \atoms$. 
              This follows trivially from {\bf AP}.

              Suppose that $\psi = \neg \chi$ or that $\psi = \chi_1 \land \chi_2$. These cases follow trivially from the induction hypothesis.

              Suppose that $\psi = \necessary[\agentA] \chi$.
              By construction $\actionStateS \actionAccessibilityAgent{\agentA} = \proxyStateS[\agentA,\gamma^0] \actionAccessibilityAgent{\agentA}$ 
              and $\actionPrecondition(\actionStateS) = \actionPrecondition(\proxyStateS[\agentA,\gamma^0])$ 
              and so $\proves \allacts{\pointedActionModel{\actionStateS}} \necessary[\agentA] \chi \iff \allacts{\pointedActionModel{\proxyStateS[\agentA,\gamma^0]}} \necessary[\agentA] \chi$
              follows from {\bf AK} trivially.

              Suppose that $\psi = \necessary[\agentB] \chi$ where $\agentB \neq \agentA$. 
              By construction $\actionStateS \actionAccessibilityAgent{\agentB} = \{\actionStateS\} \cup \actionStateS[\agentB,\gamma^0] \actionAccessibilityAgent{\agentB}$ 
              and $\proxyStateS[\agentA,\gamma^0] \actionAccessibilityAgent{\agentB} = \{\proxyStateS[\agentA,\gamma^0]\} \cup \actionStateS[\agentA,\gamma^0] \actionAccessibilityAgent[\agentA,\gamma^0]{\agentB}$.
              As $\phi$ is an explicit formula and $\necessary[\agentB] \chi \in \Psi$ 
              then either $\proves \gamma^0 \implies \necessary[\agentB] \chi$ 
              or $\proves \gamma^0 \implies \neg \necessary[\agentB] \chi$.
              Suppose that $\proves \gamma^0 \implies \necessary[\agentB] \chi$.
              Then for every $\gamma \in \Gamma_\agentB$ we have $\proves \gamma \implies \necessary[\agentB] \chi$.
              By the outer induction hypothesis $\proves \allacts{\pointedActionModel[\agentB,\gamma]{\actionStateS[\agentB,\gamma]}} \gamma$
              and so $\proves \allacts{\pointedActionModel[\agentB,\gamma]{\actionStateS[\agentB,\gamma]}} \chi$.
              By the inner induction hypothesis $\proves \allacts{\pointedActionModel{\actionStateS[\agentB,\gamma]}} \chi$.
              As $\gamma^0 \in \Gamma_\agentB$ then $\proves \allacts{\pointedActionModel{\actionStateS[\agentB,\gamma^0]}} \chi$
              and so by the inner induction hypothesis $\proves \allacts{\pointedActionModel{\actionStateS}} \chi$.
              So $\proves \allacts{\pointedActionModel{\actionStateS \actionAccessibilityAgent{\agentB}}} \chi$ 
              and therefore $\proves \allacts{\pointedActionModel{\actionStateS}} \necessary[\agentB] \chi$ follows from {\bf AK}.
              By the outer induction hypothesis $\proves \allacts{\pointedActionModel[\agentA,\gamma^0]{\actionStateS[\agentA,\gamma^0]}} \gamma^0$
              and so $\proves \allacts{\pointedActionModel[\agentA,\gamma^0]{\actionStateS[\agentA,\gamma^0]}} \necessary[\agentB] \chi$.
              From {\bf AK} we have $\proves \somerefs{\gamma^0} \implies \necessary[\agentB] \allacts{\pointedActionModel[\agentA,\gamma^0]{\actionStateS[\agentA,\gamma^0] \actionAccessibilityAgent[\agentA,\gamma^0]{\agentB}}} \chi$.
              By the inner induction hypothesis $\proves \allacts{\pointedActionModel[\agentA,\gamma^0]{\actionStateS[\agentA,\gamma^0] \actionAccessibilityAgent[\agentA,\gamma^0]{\agentB}}} \chi \iff \allacts{\pointedActionModel{\actionStateS[\agentA,\gamma^0] \actionAccessibilityAgent[\agentA,\gamma^0]{\agentB}}} \chi$.
              and as $\proves \allacts{\pointedActionModel{\actionStateS}} \chi$
              then $\proves \allacts{\pointedActionModel{\proxyStateS[\agentA,\gamma^0]}} \chi$.
              So we have $\proves \allacts{\pointedActionModel[\agentA,\gamma^0]{\actionStateS[\agentA,\gamma^0] \actionAccessibilityAgent[\agentA,\gamma^0]{\agentB}}} \chi \iff \allacts{\pointedActionModel{\proxyStateS[\agentA,\gamma^0] \actionAccessibilityAgent{\agentB}}} \chi$
              and $\proves \somerefs{\gamma^0} \implies \necessary[\agentB] \allacts{\pointedActionModel{\proxyStateS[\agentA,\gamma^0] \actionAccessibilityAgent{\agentB}}} \chi$
              and so $\proves \allacts{\pointedActionModel{\proxyStateS[\agentA,\gamma^0]}} \necessary[\agentB]$ follows from {\bf AK}.
              Therefore $\proves \allacts{\pointedActionModel{\actionStateS}} \necessary[\agentB] \chi \iff \allacts{\pointedActionModel{\proxyStateS[\agentA,\gamma^0]}} \necessary[\agentB] \chi$.
              Suppose that $\proves \gamma^0 \implies \neg \necessary[\agentB] \chi$.
              A dual argument can be used to show that $\proves \neg \allacts{\pointedActionModel{\actionStateS}} \necessary[\agentB] \chi$ 
              and $\proves \neg \allacts{\pointedActionModel{\proxyStateS[\agentA,\gamma^0]}} \necessary[\agentB] \chi$
              and therefore $\proves \allacts{\pointedActionModel{\actionStateS}} \necessary[\agentB] \chi \iff \allacts{\pointedActionModel{\proxyStateS[\agentA,\gamma^0]}} \necessary[\agentB] \chi$.

          \item For every $\agentA \in \agents$, $\gamma \in \Gamma_\agentA$: $\proves \allacts{\pointedActionModel{\proxyStateS[\agentA,\gamma]}} \psi \iff \allacts{\pointedActionModel{\actionStateS[\agentA,\gamma]}} \psi$.

              Suppose that $\psi = \atomP$ where $\atomP \in \atoms$. 
              This follows trivially from {\bf AP}.

              Suppose that $\psi = \neg \chi$ or that $\psi = \chi_1 \land \chi_2$. These cases follow trivially from the induction hypothesis.

              Suppose that $\psi = \necessary[\agentA] \chi$.
              By construction $\proxyStateS[\agentA,\gamma] \actionAccessibilityAgent{\agentA} = \{\actionStateS\} \cup \{\proxyStateS[\agentA,\gamma] \mid \delta \in \Gamma_\agentA\}$ 
              and $\actionStateS[\agentA,\gamma] \actionAccessibilityAgent{\agentA} = \actionStateS[\agentA,\gamma] \actionAccessibilityAgent[\agentA,\gamma]{\agentA}$.
              As $\phi$ is an explicit formula and $\necessary[\agentA] \chi \in \Psi$ 
              then either $\proves \gamma \implies \necessary[\agentA] \chi$ 
              or $\proves \gamma \implies \neg \necessary[\agentA] \chi$.
              Suppose that $\proves \gamma \implies \necessary[\agentA] \chi$.
              Then for every $\delta \in \Gamma_\agentA$ 
              we have $\proves \delta \implies \necessary[\agentA] \chi$.
              By the outer induction hypothesis $\proves \allacts{\pointedActionModel[\agentA,\delta]{\actionStateS[\agentA,\delta]}} \delta$ 
              and so $\proves \allacts{\pointedActionModel[\agentA,\delta]{\actionStateS[\agentA,\delta]}} \chi$.
              By the inner induction hypothesis $\proves \allacts{\pointedActionModel{\actionStateS[\agentA,\delta]}} \chi$
              and $\proves \allacts{\pointedActionModel{\proxyStateS[\agentA,\delta]}} \chi$.
              As $\gamma^0 \in \Gamma_\agentA$ then $\proves \allacts{\pointedActionModel{\proxyStateS[\agentA,\gamma^0]}} \chi$ 
              and by the inner induction hypothesis $\proves \allacts{\pointedActionModel{\actionStateS}} \chi$.
              So $\proves \allacts{\pointedActionModel{\proxyStateS[\agentA,\gamma] \actionAccessibilityAgent{\agentA}}} \chi$ 
              and therefore $\proves \allacts{\pointedActionModel{\proxyStateS[\agentA,\gamma]}} \necessary[\agentA] \chi$
              follows from {\bf AK}.
              By the outer induction hypothesis $\proves \allacts{\pointedActionModel[\agentA,\gamma]{\actionStateS[\agentA,\gamma]}} \gamma$ 
              and so $\proves \allacts{\pointedActionModel[\agentA,\gamma]{\actionStateS[\agentA,\gamma]}} \necessary[\agentA] \chi$.
              From {\bf AK} we have $\proves \somerefs \gamma \implies \necessary[\agentA] \allacts{\pointedActionModel[\agentA,\gamma]{\actionStateS[\agentA,\gamma] \actionAccessibilityAgent[\agentA,\gamma]{\agentA}}} \chi$.
              By the inner induction hypothesis $\proves \allacts{\pointedActionModel[\agentA,\gamma]{\actionStateS[\agentA,\gamma] \actionAccessibilityAgent[\agentA,\gamma]{\agentA}}} \chi \iff \allacts{\pointedActionModel{\actionStateS[\agentA,\gamma] \actionAccessibilityAgent[\agentA,\gamma]{\agentA}}} \chi$ 
              so $\proves \somerefs \gamma \implies \necessary[\agentA] \allacts{\pointedActionModel{\actionStateS[\agentA,\gamma] \actionAccessibilityAgent{\agentA}}} \chi$
              and so $\proves \allacts{\pointedActionModel{\actionStateS[\agentA,\gamma]}} \necessary[\agentA] \chi$
              follows from {\bf AK}.

              Suppose that $\proves \gamma \implies \neg \necessary[\agentB] \chi$ where $\agentB \neq \agentA$.
              Therefore $\proves \allacts{\pointedActionModel{\proxyStateS[\agentA,\gamma]}} \necessary[\agentA] \chi \iff \allacts{\pointedActionModel{\actionStateS[\agentA,\gamma]}} \necessary[\agentA] \chi$.
              A dual argument can be used to show that $\proves \neg \allacts{\pointedActionModel{\proxyStateS[\agentA,\gamma]}} \necessary[\agentA] \chi$
              and $\proves \neg \allacts{\pointedActionModel{\actionStateS[\agentA,\gamma]}} \necessary[\agentA] \chi$
              and therefore $\proves \allacts{\pointedActionModel{\proxyStateS[\agentA,\gamma]}} \necessary[\agentA] \chi \iff \allacts{\pointedActionModel{\actionStateS[\agentA,\gamma]}} \necessary[\agentA] \chi$.

              Suppose that $\psi = \necessary[\agentB] \chi$ where $\agentB \neq \agentA$.
              By construction $\proxyStateS[\agentA,\gamma] \actionAccessibilityAgent{\agentB} = \actionStateS[\agentA,\gamma] \actionAccessibilityAgent{\agentB}$
              and $\actionPrecondition(\proxyStateS[\agentA,\gamma]) = \actionPrecondition(\actionStateS[\agentA,\gamma])$ 
              and so $\proves \allacts{\pointedActionModel{\proxyStateS[\agentA,\gamma]}} \necessary[\agentB] \chi \iff \allacts{\pointedActionModel{\actionStateS[\agentA,\gamma]}} \necessary[\agentB] \chi$
              follows from {\bf AK} trivially.

          \item For every $\agentA \in \agents$, $\gamma \in \Gamma_\agentA$, $\actionStateU \in \actionStates[\agentA,\gamma]$: $\proves \allacts{\pointedActionModel{\actionStateU}} \psi \iff \allacts{\pointedActionModel[\agentA,\gamma]{\actionStateU}} \psi$.

              Suppose that $\psi = \atomP$ where $\atomP \in \atoms$. 
              This follows trivially from {\bf AP}.

              Suppose that $\psi = \neg \chi$ or that $\psi = \chi_1 \land \chi_2$. These cases follow trivially from the induction hypothesis.

              Suppose that $\psi = \necessary[\agentA] \chi$.
              By construction $\actionStateU \actionAccessibilityAgent{\agentA} = \actionStateU \actionAccessibilityAgent[\agentA,\gamma]{\agentA}$
              and $\actionPrecondition(\actionStateU) = \actionPrecondition[\agentA,\gamma](\actionStateU)$
              and so $\proves \allacts{\pointedActionModel{\actionStateU}} \necessary[\agentA] \chi \iff \allacts{\pointedActionModel[\agentA,\gamma]{\actionStateU}} \necessary[\agentA] \chi$
              follows from {\bf AK} and the induction hypothesis trivially.

              Suppose that $\psi = \necessary[\agentB] \chi$ where $\agentB \neq \agentA$.
              By construction $\actionStateU \actionAccessibilityAgent{\agentA} = \actionStateU \actionAccessibilityAgent[\agentA,\gamma]{\agentA}$
              or $\actionStateU \actionAccessibilityAgent{\agentA} = \{\proxyStateS[\agentA,\gamma]\} \cup \actionStateU \actionAccessibilityAgent[\agentA,\gamma]{\agentA}$
              and $\actionPrecondition(\actionStateU) = \actionPrecondition[\agentA,\gamma](\actionStateU)$
              and so $\proves \allacts{\pointedActionModel{\actionStateU}} \necessary[\agentB] \chi \iff \allacts{\pointedActionModel[\agentA,\gamma]{\actionStateU}} \necessary[\agentB] \chi$
              follows from {\bf AK} and the induction hypothesis trivially.
      \end{enumerate}

      Therefore for every $\agentA \in \agents$, $\gamma \in \Gamma_\agentA$ 
      we have that $\proves \allacts{\pointedActionModel{\actionStateS[\agentA,\gamma]}} \gamma$ 
      and $\proves \allacts{\pointedActionModel{\actionStateS}} \gamma^0$.
      Therefore for every $\agentA \in \agents$ 
      we have $\proves \allacts{\pointedActionModel{\actionStateS \actionAccessibilityAgent{\agentA}}} \bigvee_{\gamma \in \Gamma_\agentA} \gamma$
      and so from {\bf AK} we have that $\proves \allacts{\pointedActionModel{\actionStateS}} \necessary[\agentA] \bigvee_{\gamma \in \Gamma_\agentA} \gamma$.

      As $\phi$ is an explicit formula, from {\bf RDist}, {\bf RS5} and {\bf RComm} we have that
      $\somerefs \phi \implies \pi \land \bigwedge_{\agentA \in \agents, \gamma \in \Gamma_\agentA} \possible[\agentA] \somerefs \gamma$.
      By construction for every $\agentA \in \agents$, $\gamma \in \Gamma_\agentA$ 
      we have $\actionPrecondition(\proxyStateS[\agentA,\gamma]) = \somerefs \gamma$ 
      and from above we have $\proves \allacts{\pointedActionModel{\actionStateS[\agentA,\gamma]}} \gamma$ 
      therefore $\proves \somerefs \phi \implies \pi \land \bigwedge_{\agentA \in \agents, \gamma \in \Gamma_\agentA} \possible[\agentA] \someacts{\pointedActionModel{\proxyStateS[\agentA,\gamma]}} \gamma$.
      Therefore by {\bf AK} we have $\proves \somerefs \phi \implies \someacts{\pointedActionModel{\actionStateS}} (\pi \land \bigwedge_{\agentA \in \agents, \gamma \in \Gamma_\agentA} \possible[\agentA] \gamma)$.
      From above we have $\proves \allacts{\pointedActionModel{\actionStateS}} \necessary[\agentA] \bigvee_{\gamma \in \Gamma_\agentA} \gamma$
      and therefore $\proves \somerefs \phi \implies \allacts{\pointedActionModel{\actionStateS}} \phi$.
      As $\proves \phi \implies \gamma^0$ then $\proves \somerefs \phi \implies \somerefs \gamma^0$ 
      and so $\proves \somerefs \phi \implies \someacts{\pointedActionModel{\actionStateS}} \phi$.

      Let $\alpha' = \test{\somerefs \phi} \compose \alpha$.
      By {\bf LS} we have $\proves \allacts{\alpha'} \phi \iff \allacts{\test{\somerefs \phi}} \allacts{\alpha} \phi$.
      By {\bf LT} we have $\proves \allacts{\alpha'} \phi \iff (\somerefs \phi \implies \allacts{\alpha} \phi)$.
      From above we have $\proves \somerefs \phi \implies \allacts{\alpha} \phi$
      and therefore $\proves \allacts{\alpha'} \phi$.
      By {\bf LS} we have $\proves \someacts{\alpha'} \phi \iff \someacts{\test{\somerefs \phi}} \someacts{\alpha} \phi$.
      By {\bf LT} we have $\proves \someacts{\alpha'} \phi \iff (\somerefs \phi \land \someacts{\alpha} \phi)$.
      From above we have $\proves \somerefs \phi \implies \someacts{\alpha} \phi$
      and therefore $\proves \somerefs \phi \implies \someacts{\alpha'} \phi$.
  \end{proof}

  \begin{corollary}
      For every $\pointedModel{\stateS} \in \classS$ and $\phi \in \langAaml$: 
      $\pointedModel{\stateS} \entails \somerefs \phi$ if and only if 
      there exists $\pointedActionModel{\actionStateS} \in \classAM_\classS$ 
      such that $\pointedModel{\stateS} \entails \someacts{\pointedActionModel{\actionStateS}} \phi$.
  \end{corollary}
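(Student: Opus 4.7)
The plan is to prove both directions of the biconditional separately, leveraging the synthesis result Proposition~\ref{afl-s-synthesis} for the forward direction and the refinement property of action model execution for the reverse direction. Since $\phi \in \langAaml$, I will use the semantic correspondence between \langAafl{} and \langAaml{} noted in Section~\ref{semantics} to freely move between the two languages; concretely, action formulae $\alpha$ are interpreted via their translations $\tau(\alpha) \in \classAM_\classS$ (by Lemma~\ref{afl-s-structure}), so statements about $\someacts{\alpha} \phi$ transfer directly to statements about $\someacts{\tau(\alpha)} \phi$ with $\tau(\alpha) \in \classAM_\classS$.

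For the reverse direction, suppose there exists $\pointedActionModel{\actionStateS} \in \classAM_\classS$ with $\pointedModel{\stateS} \entails \someacts{\pointedActionModel{\actionStateS}} \phi$. By the semantics of action model logic this means $\pointedModel{\stateS} \exec \pointedActionModel{\actionStateS} \in \classS$ and $\pointedModel{\stateS} \exec \pointedActionModel{\actionStateS} \entails \phi$. By the proposition in Section~\ref{technical-preliminaries} stating that $\pointedModel{\stateS} \exec \pointedActionModel{\actionStateS} \refinement \pointedModel{\stateS}$, the updated model is a refinement of $\pointedModel{\stateS}$, so it witnesses $\pointedModel{\stateS} \entails \somerefs \phi$ according to the semantic clause for $\somerefs$.

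For the forward direction, suppose $\pointedModel{\stateS} \entails \somerefs \phi$. By Proposition~\ref{afl-s-synthesis}, there exists $\alpha \in \langAflAct$ such that $\proves \somerefs \phi \implies \someacts{\alpha} \phi$. By soundness of the relevant axioms in \logicAflS{} (more precisely, by the semantic correspondence with \logicAmlS{}), we obtain $\pointedModel{\stateS} \entails \someacts{\alpha} \phi$, which unfolds to $\pointedModel{\stateS} \exec \tau(\alpha) \in \classS$ and $\pointedModel{\stateS} \exec \tau(\alpha) \entails \phi$. Setting $\pointedActionModel{\actionStateS} = \tau(\alpha)$, which lies in $\classAM_\classS$ by Lemma~\ref{afl-s-structure}, yields the required witness.

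The main obstacle is not located in this corollary itself but in Proposition~\ref{afl-s-synthesis}, which does the real work; once that is in hand the corollary is essentially a repackaging that trades the constructed action formula $\alpha$ for its semantic counterpart $\tau(\alpha)$. The only subtlety I would be careful about is ensuring that the $\somerefs$ in the hypothesis is interpreted in the \classS{}-restricted semantics (so that refinements are themselves required to lie in \classS{}), which matches the quantification over $\classAM_\classS$ on the right-hand side and the \classS{}-closure property of $\tau(\alpha)$ furnished by Lemma~\ref{afl-s-exec}.
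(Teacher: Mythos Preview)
Your argument is correct and is exactly the intended one; the paper states this result as a bare corollary with no explicit proof, so you have simply written out what is implicit. The one point to tighten is the language mismatch: Proposition~\ref{afl-s-synthesis} is stated for $\phi \in \langAfl$ while the corollary takes $\phi \in \langAaml$, and the bridge you need is the expressive equivalence of \logicAamlS{} with \logicS{} (so that $\phi$ may be replaced by an equivalent $\lang$-formula, hence a $\langAfl$-formula), rather than the $\langAafl \to \langAaml$ translation of Section~\ref{semantics}, which runs in the opposite direction.
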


  \section{Related work}\label{related-work}

  Several other papers have addressed the problem of describing and reasoning
  about epistemic actions.  One of the most important works in this area is the
  work of Baltag, Moss and Solecki~\cite{baltag1998} which introduced the
  notion of action model logic, building on the earlier work of Gerbrandy and
  Groeneveld~\cite{gerbrandy1997}. In later work Baltag and Moss extended
  action model logic to consider epistemic programs~\cite{baltag2005} which are
  expressions built from action models using such operators as sequential
  composition, non-deterministic choice and iteration. The atoms of these
  programs are action models, so the approach is still inherently semantic in
  nature. The logic is unable to decompose the program beyond the level of the
  atoms, which themselves may be complex semantic objects.
  
  The relational actions of van Ditmarsch~\cite{vanditmarsch2001} provides a
  syntactic mechanism for describing an epistemic action, and provides the
  foundation for a lot of the work presented in this paper. The relational
  actions are constructed using essentially the same operators as in the
  language of action formulae. While the language is very similar, the
  semantics given are quite different~\cite{vanditmarsch2007}. In the logic of
  epistemic actions the semantics are given in such a way that worlds in a
  model are specified with respect to subsets of agents, so that the model is
  restricted to agents for whom the epistemic action was applied. The semantics
  were also specific to \classS{}, and non-trivial to generalise to other
  epistemic logics. A version of relational actions with concurrency is able to
  describe any \classS{} action model, although it is unknown whether the
  expressivity of concurrent relational actions is greater than that of action
  models~\cite{baltag2006}. Here we have generalised the approach and
  provided a correspondence theorem for action model logic. This has allowed us
  to retain the more familiar semantics of epistemic logic, generalise the
  logic to \classK{} and \classKFF{} as well as access existing
  synthesis results for dynamic epistemic logic~\cite{hales2013}.
  
  The synthesis result presented here is built on the work of
  Hales~\cite{hales2013} which gave a method to build an action model to
  satisfy a given epistemic goal. This construction inspired the syntactic
  description of epistemic actions and approach that we have used in this
  paper.
  
  Related synthesis results have been given by Aucher, et
  al.~\cite{aucher2011,aucher2012,aucher2013} which presents an event model
  language and uses it to give a thorough exploration of the relationship
  between epistemic models, action models and epistemic goals. Aucher defines a
  logic for action models and provides calculi to describe epistemic
  progression (what is true after executing a given action in a given model)
  epistemic regression (what is the most general precondition for an epistemic
  action given an epistemic goal) and epistemic planning (what action is
  sufficient to achieve an epistemic goal given some precondition). In future
  work we hope to extend the correspondence between action formula logic and
  action models to include Aucher's event model language.

%% Bibliography
%% Make sure to use the bibliographystyle aiml14.
\bibliographystyle{aiml14}
\bibliography{aiml14}

\providecommand{\noopsort}[1]{}
\begin{thebibliography}{10}
\expandafter\ifx\csname url\endcsname\relax
  \def\url#1{\texttt{#1}}\fi
\expandafter\ifx\csname urlprefix\endcsname\relax\def\urlprefix{URL }\fi
\newcommand{\enquote}[1]{``#1''}

\bibitem{agotnes2010}
{\AA}gotnes, T., P.~Balbiani, H.~{\noopsort{Ditmarsch}}{van Ditmarsch} and
  P.~Seban, \emph{Group announcement logic}, Journal of Applied Logic
  \textbf{8} (2010), pp.~62--81.

\bibitem{aucher2011}
Aucher, G., \emph{Del-squents for progression}, Journal of Applied
  Non-classical Logics \textbf{21} (2011), pp.~289--321.

\bibitem{aucher2012}
Aucher, G., \emph{Del-squents for regression and epistemic planning}, Journal
  of Applied Non-classical Logics \textbf{22} (2012), pp.~337--367.

\bibitem{aucher2013}
Aucher, G. and T.~Bolander, \emph{Undecidability in epistemic planning}, in:
  \emph{Proceedings of the Twenty-Third International Joint Conference on
  Artificial Intelligence}, 2013, pp. 27--33.

\bibitem{balbiani2012}
Balbiani, P., H.~{\noopsort{Ditmarsch}}{van Ditmarsch}, A.~Herzig and
  T.~{\noopsort{Lima}}{de Lima}, \emph{Some truths are best left unsaid.},
  Advances in Modal Logic \textbf{9} (2012), pp.~36--54.

\bibitem{baltag2006}
Baltag, A. and H.~{\noopsort{Ditmarsch}}{van Ditmarsch}, \emph{Relation between
  two dynamic epistemic logics} (2006), annual Conference of the Australasian
  Association for Logic.

\bibitem{baltag2005}
Baltag, A. and L.~Moss, \emph{Logics for epistemic programs}, Information,
  Interaction and Agency  (2005), pp.~1--60.

\bibitem{baltag1998}
Baltag, A., L.~S. Moss and S.~Solecki, \emph{The logic of common knowledge,
  public announcements and private suspicions}, in: \emph{Proceedings of the
  7th conference on theoretical aspects of rationality and knowledge}, 1998,
  pp. 43--56.

\bibitem{bilkova2008}
B{\'i}lkov{\'a}, M., A.~Palmigiano and Y.~Venema, \emph{Proof systems for the
  coalgebraic cover modality}, Advances in Modal Logic \textbf{7} (2008),
  pp.~1--21.

\bibitem{bozzelli2012a}
Bozzelli, L., H.~{\noopsort{Ditmarsch}}{van Ditmarsch}, T.~French, J.~Hales and
  S.~Pinchinat, \emph{Refinement modal logic}, Arxiv preprint arXiv:1202.3538
  (2012).

\bibitem{vanditmarsch1999}
{\noopsort{Ditmarsch}}{van Ditmarsch}, H., \emph{The logic of knowledge games:
  showing a card}, in: \emph{Proceedings of the BNAIC}, 1999, pp. 35--42.

\bibitem{vanditmarsch2001}
{\noopsort{Ditmarsch}}{van Ditmarsch}, H., \emph{Knowledge games}, Bulletin of
  Economic Research \textbf{53} (2001), pp.~249--273.

\bibitem{vanditmarsch2002}
{\noopsort{Ditmarsch}}{van Ditmarsch}, H., \emph{Descriptions of game actions},
  Journal of Logic, Language and Information \textbf{11} (2002), pp.~349--365.

\bibitem{vanditmarsch2009}
{\noopsort{Ditmarsch}}{van Ditmarsch}, H. and T.~French, \emph{Simulation and
  information: Quantifying over epistemic events}, Knowledge Representation for
  Agents and Multi-Agent Systems  (2009), pp.~51--65.

\bibitem{vanditmarsch2010}
{\noopsort{Ditmarsch}}{van Ditmarsch}, H., T.~French and S.~Pinchinat,
  \emph{{Future event logic: axioms and complexity}}, in: \emph{Advances in
  Modal Logic}, 2010, pp. 24--27.

\bibitem{vanditmarsch2007}
{\noopsort{Ditmarsch}}{van Ditmarsch}, H., W.~{\noopsort{Hoek}}{van der Hoek}
  and B.~Kooi, \enquote{Dynamic epistemic logic,} Springer Verlag, 2007.

\bibitem{gerbrandy1997}
Gerbrandy, J. and W.~Groeneveld, \emph{Reasoning about information change},
  Journal of logic, language and information \textbf{6} (1997), pp.~147--169.

\bibitem{hales2011b}
Hales, J., \emph{Refinement quantifiers for logics of belief and knowledge}
  (2011), honours Thesis, University of Western Australia.

\bibitem{hales2013}
Hales, J., \emph{Arbitrary action model logic and the synthesis of action
  models}, in: \emph{Proceedings of the 2013 28th Annual IEEE/ACM Symposium on
  Logic in Computer Science}, IEEE Computer Society, 2013, pp. 253--262.

\bibitem{hales2011a}
Hales, J., T.~French and R.~Davies, \emph{Refinement quantified logics of
  knowledge}, Electronic Notes in Theoretical Computer Science \textbf{278}
  (2011), pp.~85--98.

\bibitem{hales2012}
Hales, J., T.~French and R.~Davies, \emph{Refinement quantified logics of
  knowledge and belief for multiple agents}, Advances in Modal Logic \textbf{9}
  (2012), pp.~317--338.

\bibitem{janin1995}
Janin, D. and I.~Walukiewicz, \emph{Automata for the modal $\mu$-calculus and
  related results}, Mathematical Foundations of Computer Science 1995  (1995),
  pp.~552--562.

\bibitem{plaza1989}
Plaza, J., \emph{Logics of public communications}, in: \emph{Proceeding of the
  4th International Symposium on Methodologies for Intelligent Systems}, 1989,
  pp. 102--216.

\end{thebibliography}

\end{document}